\tikzstyle{inline text}=[text height=1.5ex, text depth=0.25ex, yshift=0.5mm]
\tikzstyle{upground}=[circuit ee IEC, thick, ground, rotate=90, scale=2]
\tikzstyle{downground}=[circuit ee IEC, thick, ground, rotate=-90, scale=1.5]
\tikzstyle{point}=[regular polygon, regular polygon sides=3, draw, scale=0.75, inner sep=-0.5pt, minimum width=9mm, fill=white, regular polygon rotate=180, tikzit fill={rgb,255: red,242; green,255; blue,92}]
\tikzstyle{wide copoint}=[fill=white, draw, shape=isosceles triangle, shape border rotate=90, isosceles triangle stretches=true, inner sep=0pt, minimum width=1.5cm, minimum height=6.12mm]
\tikzstyle{wide point}=[fill=white, draw, shape=isosceles triangle, shape border rotate=-90, isosceles triangle stretches=true, inner sep=0pt, minimum width=1.5cm, minimum height=6.12mm, yshift=-0.0mm]
\tikzstyle{wide dpoint}=[wide point, doubled]
\tikzstyle{copoint}=[regular polygon, regular polygon sides=3, draw, scale=0.75, inner sep=-0.5pt, minimum width=9mm, fill=white, tikzit fill={rgb,255: red,255; green,128; blue,0}, tikzit draw={rgb,255: red,255; green,128; blue,0}]
\tikzstyle{dot}=[inner sep=0mm, minimum width=2mm, minimum height=2mm, draw, shape=circle]
\tikzstyle{black dot}=[dot, fill={gray!30}, text depth=-0.2mm]
\tikzstyle{white dot}=[dot, fill=white, text depth=-0.2mm]
\tikzstyle{small box}=[rectangle, inline text, fill=white, draw, minimum height=5mm, yshift=-0.5mm, minimum width=5mm, font={\small}]
\tikzstyle{small gray box}=[small box, fill={gray!30}]
\tikzstyle{medium box}=[rectangle, inline text, fill=white, draw, minimum height=5mm, yshift=-0.5mm, minimum width=10mm, font={\small}]
\tikzstyle{square box}=[small box]
\tikzstyle{medium gray box}=[small box, fill={gray!30}]
\tikzstyle{semilarge box}=[rectangle, inline text, fill=white, draw, minimum height=5mm, yshift=-0.5mm, minimum width=12.5mm, font={\small}]
\tikzstyle{large box}=[rectangle, inline text, fill=white, draw, minimum height=5mm, yshift=-0.5mm, minimum width=15mm, font={\small}]
\tikzstyle{large gray box}=[small box, fill={gray!30}]
\tikzstyle{dpoint}=[point, doubled]
\tikzstyle{dcopoint}=[copoint, doubled]
\tikzstyle{boldedge}=[doubled, shorten <=-0.17mm, shorten >=-0.17mm]
\tikzstyle{normal}=[line width=0.9pt]
\tikzstyle{doubled}=[line width=1pt]
\tikzstyle{boldedge}=[doubled, shorten <=-0.17mm, shorten >=-0.17mm]
\tikzstyle{small dbox}=[small box, doubled]
\tikzstyle{white ddot}=[white dot, doubled]
\tikzstyle{black ddot}=[black dot, doubled, tikzit fill=black]
\tikzstyle{map}=[draw, shape=NEbox, inner sep=2pt, minimum height=6mm, fill=white]
\tikzstyle{box}=[draw, shape=rectangle, inner sep=2pt, minimum height=6mm, minimum width=6mm, fill=white]
\tikzstyle{dbox}=[draw, doubled, shape=rectangle, inner sep=2pt, minimum height=6mm, minimum width=6mm, fill=white]
\tikzstyle{dmap}=[draw, doubled, shape=NEbox, inner sep=2pt, minimum height=6mm, fill=white]
\tikzstyle{dmapdag}=[draw, doubled, shape=SEbox, inner sep=2pt, minimum height=6mm, fill=white]
\tikzstyle{dmapadj}=[draw, doubled, shape=SEbox, inner sep=2pt, minimum height=6mm, fill=white]
\tikzstyle{dmaptrans}=[draw, doubled, shape=SWbox, inner sep=2pt, minimum height=6mm, fill=white]
\tikzstyle{dmapconj}=[draw, doubled, shape=NWbox, inner sep=2pt, minimum height=6mm, fill=white]
\tikzstyle{map}=[draw, shape=NEbox, inner sep=2pt, minimum height=6mm, fill=white]
\tikzstyle{dashedmap}=[draw, dashed, shape=NEbox, inner sep=2pt, minimum height=6mm, fill=white]
\tikzstyle{mapdag}=[draw, shape=SEbox, inner sep=2pt, minimum height=6mm, fill=white]
\tikzstyle{mapadj}=[draw, shape=SEbox, inner sep=2pt, minimum height=6mm, fill=white]
\tikzstyle{maptrans}=[draw, shape=SWbox, inner sep=2pt, minimum height=6mm, fill=white]
\tikzstyle{mapconj}=[draw, shape=NWbox, inner sep=2pt, minimum height=6mm, fill=white]
\tikzstyle{semilarge map}=[draw, shape=NEbox, inner sep=2pt, minimum height=6mm, fill=white, minimum width=9.5mm]
\tikzstyle{semilarge dmap}=[draw, doubled, shape=NEbox, inner sep=2pt, minimum height=6mm, fill=white, minimum width=9.5mm]
\tikzstyle{kpointdag}=[kpoint adjoint]
\tikzstyle{kpointadj}=[kpoint adjoint]
\tikzstyle{kpointconj}=[kpoint conjugate]
\tikzstyle{kpointtrans}=[kpoint transpose]
\tikzstyle{kpoint common}=[draw, fill=white, inner sep=1pt, minimum height=4mm]
\tikzstyle{kpoint sc}=[shape=cornerpoint, kpoint common]
\tikzstyle{kpoint adjoint sc}=[shape=cornercopoint, kpoint common]
\tikzstyle{kpoint}=[shape=cornerpoint, shorten left=5pt, kpoint common, tikzit fill={rgb,255: red,255; green,128; blue,0}]
\tikzstyle{kpoint adjoint}=[shape=cornercopoint, shorten left=5pt, kpoint common, tikzit fill={rgb,255: red,255; green,128; blue,0}]
\tikzstyle{kpoint conjugate}=[shape=cornerpoint, shorten right=5pt, kpoint common]
\tikzstyle{kpoint transpose}=[shape=cornercopoint, shorten right=5pt, kpoint common]
\tikzstyle{kpoint symm}=[shape=cornerpoint, shorten left=5pt, shorten right=5pt, kpoint common]
\tikzstyle{wide kpoint}=[kpoint, minimum width=1 cm, inner sep=2pt]
\tikzstyle{wide kpointdag}=[kpointdag, minimum width=1 cm, inner sep=2pt]
\tikzstyle{wide kpointconj}=[kpointconj, minimum width=1 cm, inner sep=2pt]
\tikzstyle{wide kpointtrans}=[kpointtrans, minimum width=1 cm, inner sep=2pt]
\tikzstyle{wider kpoint}=[kpoint, minimum width=1.25 cm, inner sep=2pt]
\tikzstyle{wider kpointdag}=[kpointdag, minimum width=1.25 cm, inner sep=2pt]
\tikzstyle{wider kpointconj}=[kpointconj, minimum width=1.25 cm, inner sep=2pt]
\tikzstyle{wider kpointtrans}=[kpointtrans, minimum width=1.25 cm, inner sep=2pt]
\tikzstyle{dkpoint}=[kpoint, doubled, tikzit fill={rgb,255: red,255; green,85; blue,210}]
\tikzstyle{wide dkpoint}=[wide kpoint, doubled, tikzit fill={rgb,255: red,68; green,255; blue,0}]
\tikzstyle{dkpointdag}=[kpoint adjoint, doubled]
\tikzstyle{wide dkpointdag}=[wide kpointdag, doubled]
\tikzstyle{label}=[fill=white, draw=white, shape=circle, tikzit draw={rgb,255: red,10; green,26; blue,255}, tikzit fill={rgb,255: red,0; green,12; blue,255}, font={\small}]
\tikzstyle{squarelabel}=[fill=white, draw=white, shape=rectangle, tikzit draw=black]
\tikzstyle{eslabel}=[tikzit draw={rgb,255: red,255; green,191; blue,191}, tikzit fill={rgb,255: red,255; green,191; blue,191}, font={\tiny}]
\tikzstyle{large dmap}=[draw, doubled, shape=NEbox, inner sep=2pt, minimum height=6mm, fill=white, minimum width=12mm]
\tikzstyle{gray point}=[point, fill={gray!40!white}]
\tikzstyle{gray dpoint}=[gray point, doubled, tikzit draw={rgb,255: red,128; green,128; blue,128}, tikzit fill={rgb,255: red,128; green,128; blue,128}]
\tikzstyle{gray copoint}=[copoint, fill={gray!40!white}, tikzit fill={rgb,255: red,128; green,128; blue,128}]
\tikzstyle{gray dcopoint}=[gray copoint, doubled, tikzit fill={rgb,255: red,128; green,128; blue,128}]
\tikzstyle{circlenew}=[draw=black, shape=circle, inner sep=1pt]
\tikzstyle{blue label}=[text=NavyBlue, tikzit draw={rgb,255: red,0; green,96; blue,167}, tikzit fill={rgb,255: red,35; green,68; blue,255}]
\tikzstyle{big dot}=[fill=white, draw=black, shape=circle, minimum width=6mm, minimum height=6mm]
\tikzstyle{3d box}=[fill=white, draw=black, shape=trapezium, trapezium left angle=-70, trapezium right angle=70, rotate=10]
\tikzstyle{slant red box}=[fill={rgb,255: red,191; green,0; blue,64}, draw={rgb,255: red,191; green,0; blue,64}, shape=rectangle, xslant=0.5, font={\tiny}, text={rgb,255: red,191; green,0; blue,64}, fill opacity=0.5, line width=1pt]
\tikzstyle{slant point}=[regular polygon, regular polygon sides=3, draw, scale=0.75, inner sep=-0.5pt, minimum width=9mm, fill white, regular polygon rotate=180, yslant=-0.3]
\tikzstyle{tiny orange label}=[font={\tiny}, text={rgb,255: red,255; green,128; blue,0}, tikzit draw={rgb,255: red,255; green,128; blue,0}]
\tikzstyle{tiny red label}=[font={\tiny}, text={rgb,255: red,191; green,0; blue,64}, tikzit draw={rgb,255: red,191; green,0; blue,64}, draw=none]
\tikzstyle{red label}=[text={rgb,255: red,191; green,0; blue,64}, tikzit draw={rgb,255: red,191; green,0; blue,64}]
\tikzstyle{slant label black}=[font={\tiny}, xslant=0.5, tikzit draw=black]
\tikzstyle{slant label red}=[font={\tiny}, xslant=0.5, text={rgb,255: red,191; green,0; blue,64}, tikzit draw={rgb,255: red,191; green,0; blue,64}]
\tikzstyle{slant label orange}=[font={\tiny}, xslant=0.5, text={rgb,255: red,255; green,128; blue,0}, tikzit draw={rgb,255: red,255; green,128; blue,0}]
\tikzstyle{slanted point}=[fill={rgb,255: red,191; green,0; blue,64}, draw={rgb,255: red,191; green,0; blue,64}, shape=triangle, regular polygon, regular polygon sides=3, scale=0.75, inner sep=-0.5pt, minimum width=5mm, regular polygon rotate=90, xslant=0.5, fill opacity=0.5, font={\tiny}, line width=1pt, text={rgb,255: red,191; green,0; blue,64}]
\tikzstyle{slanted point black}=[draw=black, shape=triangle, regular polygon, regular polygon sides=3, scale=0.75, inner sep=-0.5pt, minimum width=5mm, regular polygon rotate=90, xslant=0.5, font={\tiny}, line width=0.2pt, text=black, fill=white, tikzit fill=white]
\tikzstyle{red dot}=[fill={rgb,255: red,191; green,0; blue,64}, draw={rgb,255: red,191; green,0; blue,64}, shape=circle, inner sep=0, minimum width=1.5mm, minimum height=1.5mm]
\tikzstyle{black dot}=[fill=black, draw=black, shape=circle, inner sep=0, minimum width=1.5mm, minimum height=1.5mm]
\tikzstyle{orange dot}=[fill={rgb,255: red,255; green,128; blue,0}, draw={rgb,255: red,255; green,128; blue,0}, shape=circle, inner sep=0, minimum width=1.5mm, minimum height=1.5mm]
\tikzstyle{blue dot}=[fill={rgb,255: red,0; green,0; blue,228}, draw={rgb,255: red,0; green,0; blue,228}, shape=circle, inner sep=0, minimum width=1.5mm, minimum height=1.5mm]
\tikzstyle{slant white}=[fill=white, draw=black, shape=rectangle, xslant=0.5, font={\tiny}, line width=1pt]
\tikzstyle{slant small map}=[fill=white, draw=black, xslant=0.5, shape=rectangle, font={\tiny}, line width=1pt, inner sep=0.6mm]
\tikzstyle{slanted copoint black}=[draw=black, shape=triangle, regular polygon, regular polygon sides=3, scale=0.75, inner sep=-0.5pt, minimum width=5mm, regular polygon rotate=-90, xslant=0.5, font={\tiny}, line width=0.2pt, text=black, fill=white, tikzit fill=white]
\tikzstyle{purple dot}=[fill={rgb,255: red,128; green,0; blue,128}, draw={rgb,255: red,128; green,0; blue,128}, shape=circle, inner sep=0, minimum width=1.5mm, minimum height=1.5mm]
\tikzstyle{white dot 2}=[fill=white, draw=black, shape=circle]
\tikzstyle{horizontal point}=[style=point, rotate=-90, tikzit shape=rectangle, tikzit fill={rgb,255: red,191; green,128; blue,64}]
\tikzstyle{pslant orange}=[style=slanted point black, fill={rgb,255: red,255; green,128; blue,0}, draw={rgb,255: red,255; green,128; blue,0}, tikzit fill={rgb,255: red,255; green,128; blue,0}, tikzit draw={rgb,255: red,255; green,128; blue,0}]
\tikzstyle{upground horizontal}=[style=upground, rotate=-90]
\tikzstyle{double horizontal point}=[style=horizontal point, line width=1pt]
\tikzstyle{double point}=[style=point, line width=1pt]
\tikzstyle{double copoint}=[style=copoint, line width=1pt]
\tikzstyle{horizontal copoint}=[style=double copoint, rotate=-90]
\tikzstyle{slant label purple}=[style=slant label black, tikzit draw={rgb,255: red,128; green,0; blue,128}, text={rgb,255: red,128; green,0; blue,128}]
\tikzstyle{orange copoint}=[style=pslant orange, rotate=-180, tikzit fill={rgb,255: red,255; green,128; blue,0}]
\tikzstyle{new style 0}=[style=slant white, draw={rgb,255: red,0; green,0; blue,228}, fill={rgb,255: red,0; green,0; blue,228}, fill opacity=0.5, shape=rectangle]
\tikzstyle{wide slanted point}=[style=wide point, xslant=0.5, fill=white, rotate=-90, minimum width=0.8cm, fill={rgb,255: red,128; green,128; blue,128}, fill opacity=0.5, line width=1pt]
\tikzstyle{black dot white}=[style=black dot, text=white, draw=none, tikzit draw={rgb,255: red,191; green,255; blue,0}, shape=circle]
\tikzstyle{small point}=[style=point, draw=black, minimum width=3mm]
\tikzstyle{new edge style 1}=[-, line width=1pt, shorten <=-0.17mm, shorten >=-0.17mm, tikzit draw={rgb,255: red,204; green,0; blue,3}]
\tikzstyle{diredge}=[-, postaction=decorate, decoration={markings, mark=at position 0.55 with \edgearrow}]
\tikzstyle{bold diredge}=[-, diredge, line width=1pt, tikzit draw={rgb,255: red,128; green,0; blue,128}]
\tikzstyle{grey}=[-, draw={rgb,255: red,188; green,188; blue,188}]
\tikzstyle{classical}=[-, dashed, tikzit draw={rgb,255: red,255; green,128; blue,0}]
\tikzstyle{reddashed}=[-, dashed, draw={rgb,255: red,0; green,128; blue,128}, postaction=decorate, decoration={markings, mark=at position 0.55 with \edgearrow}]
\tikzstyle{reddahednoarrow}=[-, dashed, draw={rgb,255: red,179; green,40; blue,40}]
\tikzstyle{arrow edge}=[-, ->, draw={rgb,255: red,191; green,191; blue,191}, tikzit draw={rgb,255: red,191; green,191; blue,191}, ultra thick]
\tikzstyle{tarrow edge}=[-, ->, draw={rgb,255: red,191; green,191; blue,191}, tikzit draw={rgb,255: red,191; green,191; blue,191}]
\tikzstyle{gray edge}=[-, draw={rgb,255: red,191; green,191; blue,191}, tikzit draw={rgb,255: red,191; green,191; blue,191}, ultra thick]
\tikzstyle{lightgrayedge}=[-, draw={rgb,255: red,207; green,207; blue,207}]
\tikzstyle{green edge}=[-, tikzit draw={rgb,255: red,128; green,128; blue,0}, draw={rgb,255: red,128; green,128; blue,0}]
\tikzstyle{red edge}=[-, draw={rgb,255: red,191; green,0; blue,64}, tikzit draw={rgb,255: red,191; green,0; blue,64}]
\tikzstyle{arrow edge black}=[-, ->]
\tikzstyle{solid blue}=[-, draw={rgb,255: red,0; green,96; blue,167}, tikzit draw={rgb,255: red,0; green,96; blue,167}]
\tikzstyle{classical blue}=[-, draw={rgb,255: red,0; green,96; blue,167}, tikzit draw={rgb,255: red,0; green,96; blue,167}, dashed]
\tikzstyle{fill gray}=[-, fill=gray]
\tikzstyle{bold gray}=[-, line width=1pt, tikzit draw={rgb,255: red,128; green,128; blue,128}]
\tikzstyle{fill pink}=[-, fill={rgb,255: red,193; green,100; blue,94}, fill opacity=0.5, draw={rgb,255: red,134; green,68; blue,65}, line width=1pt, tikzit draw={rgb,255: red,134; green,68; blue,65}, tikzit fill={rgb,255: red,193; green,100; blue,94}]
\tikzstyle{fill carta da zucchero}=[-, fill={rgb,255: red,129; green,158; blue,219}, fill opacity=0.5, line width=0.4mm]
\tikzstyle{fill white}=[-, fill=white]
\tikzstyle{fill purple}=[-, fill={rgb,255: red,113; green,69; blue,128}, fill opacity=0.5, draw={rgb,255: red,79; green,48; blue,90}, tikzit fill={rgb,255: red,113; green,69; blue,128}, tikzit draw={rgb,255: red,79; green,48; blue,90}, line width=1pt]
\tikzstyle{fill green}=[-, fill={rgb,255: red,62; green,128; blue,120}, fill opacity=0.5, draw={rgb,255: red,33; green,68; blue,63}, tikzit fill={rgb,255: red,62; green,128; blue,120}, tikzit draw={rgb,255: red,33; green,68; blue,63}, line width=1pt]
\tikzstyle{bold orange}=[-, draw={rgb,255: red,255; green,128; blue,0}, fill=none, line width=1pt]
\tikzstyle{bold black}=[-, line width=1pt, draw=black, fill=none, tikzit draw=black]
\tikzstyle{bold red}=[-, draw={rgb,255: red,191; green,0; blue,64}, fill=none, line width=1pt]
\tikzstyle{fill light green}=[-, fill={rgb,255: red,166; green,166; blue,112}, fill opacity=0.5, draw={rgb,255: red,121; green,121; blue,81}, line width=1pt]
\tikzstyle{new edge style 0}=[-, fill=yellow, fill opacity=0.5, draw={rgb,255: red,146; green,146; blue,0}, tikzit fill=yellow, tikzit draw={rgb,255: red,146; green,146; blue,0}]
\tikzstyle{bold dashed red}=[-, draw={rgb,255: red,191; green,0; blue,64}, fill=none, line width=1pt, dashed]
\tikzstyle{bold dashed orange}=[-, draw={rgb,255: red,255; green,128; blue,0}, dashed, line width=1pt]
\tikzstyle{bold blue}=[-, draw={rgb,255: red,0; green,0; blue,228}, line width=1pt]
\tikzstyle{arrow red}=[draw={rgb,255: red,191; green,0; blue,64}, ->, line width=1pt]
\tikzstyle{new edge style 2}=[-, draw={rgb,255: red,191; green,0; blue,64}, line width=1pt]
\tikzstyle{boldish}=[-, line width=0.6mm, fill=cyan]
\tikzstyle{white edge}=[-, draw=white]
\tikzstyle{purple edge}=[-, draw={rgb,255: red,128; green,0; blue,128}, line width=1pt]
\tikzstyle{light gray}=[-, fill={rgb,255: red,191; green,191; blue,191}, draw={rgb,255: red,191; green,191; blue,191}, tikzit fill={rgb,255: red,191; green,191; blue,191}, tikzit draw={rgb,255: red,191; green,191; blue,191}, fill opacity=0.3]
\tikzstyle{invisible edge}=[-, fill opacity=0, fill=none]
\tikzstyle{carta da zucchero thin}=[-, style=fill carta da zucchero, line width=0.1pt, fill={rgb,255: red,129; green,158; blue,219}, tikzit fill={rgb,255: red,129; green,158; blue,219}]
\tikzstyle{pink thin}=[-, style=fill pink, line width=0.1pt, fill={rgb,255: red,193; green,100; blue,94}]
\tikzstyle{fill green thin edge}=[-, style=fill green, tikzit fill={rgb,255: red,62; green,128; blue,120}, line width=0.1pt]
\definecolor{evred}{rgb}{0.996, 0.403, 0.537}
\definecolor{evgreen}{rgb}{0.501, 1.0, 0.505}
\definecolor{evblue}{rgb}{0.2, 0.588, 1.0}
\theoremstyle{definition}
\newtheorem{definition}{Definition}[section]
\newtheorem{remark}{Remark}[section]
\newtheorem{theorem}{Theorem}[section]
\newtheorem{proposition}[theorem]{Proposition}
\newtheorem{lemma}[theorem]{Lemma}
\newtheorem{observation}[theorem]{Observation}
\newtheorem{corollary}[theorem]{Corollary}
\newtheorem{conjecture}[theorem]{Conjecture}
\newcommand{\opapp}[2]{\ensuremath{#1\left(#2\right)}} % math mode operators
\newcommand{\opapptxt}[2]{\ensuremath{\text{#1}\left(#2\right)}} % text operators
\newcommand{\nats}{\ensuremath{\mathbb{N}}}
\newcommand{\reals}{\ensuremath{\mathbb{R}}}
\newcommand{\tsuchthat}[2]{\ensuremath{\left\{#1\middle|#2\right\}}} % tight \suchthat
\newcommand{\suchthat}[2]{\tsuchthat{\,#1\,}{\,#2\,}} % set definition
\newcommand{\downset}[1]{\ensuremath{#1\!\downarrow}}
\newcommand{\upset}[1]{\ensuremath{#1\!\uparrow}}
\newcommand{\domSym}{\text{dom}}
\newcommand{\dom}[1]{\opapp{\domSym}{#1}}
\newcommand{\restrict}[2]{#1|_{#2}}
\newcommand{\Subsets}[1]{\opapp{\mathcal{P}\!}{#1}} % powerset
\newcommand{\ev}[1]{\text{#1}} % plain-text for events
\newcommand{\discrete}[1]{\opapptxt{discrete}{#1}} % discrete causal order
\newcommand{\indiscrete}[1]{\opapptxt{indiscrete}{#1}} % indiscrete causal order
\newcommand{\total}[1]{\opapptxt{total}{#1}} % total causal order
\newcommand{\seqcomposeSym}{\rightsquigarrow}
\newcommand{\LsetsSym}{\Lambda} % lowersets
\newcommand{\Lsets}[1]{\opapp{\LsetsSym}{#1}} % lowersets
\newcommand{\Hist}[1]{\opapptxt{Hist}{#1}} % space of histories
\newcommand{\ExtHist}[1]{\opapptxt{ExtHist}{#1}} % space of extended histories
\newcommand{\Ext}[1]{\opapptxt{Ext}{#1}} % Ext function
\newcommand{\tips}[2]{\opapp{\text{tips}_{#1}}{#2}} % tip events
\newcommand{\Events}[1]{{E}^{#1}} % events of a space of input histories
\newcommand{\Inputs}[1]{{I}^{#1}} % events of a space of input histories
\newcommand{\ExtCausFun}[1]{\opapptxt{ExtCausFun}{#1}} % extended causal function
\newcommand{\TipHists}[2]{\opapp{\text{TipHists}_{#1}}{#2}} % Input histories with given event as a tip event
\newcommand{\histconstrSym}[1]{\sim_{#1}}
\newcommand{\histconstr}[3]{#2\!\histconstrSym{#1}\!\!#3}
\newcommand{\histconstreqcls}[2]{\ensuremath{\left[#1\right]_{\histconstrSym{#2}}}}
\newcommand{\TipEqCls}[2]{\opapp{\text{TipEq}_{#1}}{#2}}
\newcommand{\DistSym}{\mathcal{D}}
\newcommand{\Dist}[1]{\opapp{\DistSym}{#1}}
\newcommand{\CausDist}[1]{\opapptxt{CausDist}{#1}} % causal functions
\newcommand{\topdist}[1]{\left\lceil #1 \right\rceil} % top element distribution
\newcommand{\extdist}[2]{\left\lfloor #1 \right\rfloor_{#2}} % extended distribution
\newcommand{\StdCov}[1]{\opapptxt{StdCov}{#1}} % standard cover
\newcommand{\Covers}[1]{\opapptxt{Covers}{#1}} % all covers
\newcommand{\EmpModels}[1]{\opapptxt{EmpMod}{#1}} % empirical models
\newcommand{\Vertices}[1]{\opapp{\mathcal{V}}{#1}} % vertices of a polytope
\newcommand{\Faces}[1]{\opapp{\mathcal{F}}{#1}} % faces of a polytope
\newcommand{\AffSubsp}[1]{\opapp{\mathbb{A}}{#1}} % minimal affine subspace containing a polytope
\newcommand{\boundary}[1]{\opapp{\partial}{#1}} % topological boundary
\newcommand{\interior}[1]{#1\backslash\boundary{#1}} % topological boundary
\newcommand{\Slice}[2]{\opapp{\text{Slice}_{#1}}{#2}}
\newcommand{\NormEqs}[1]{\opapptxt{NormEqs}{#1}} % Affine subspace defined by the normalisation equations
\newcommand{\QNormEqs}[1]{\opapptxt{QNormEqs}{#1}} % Linear subspace defined by the quasi-normalisation equations
\newcommand{\mass}[1]{\opapptxt{mass}{#1}} % mass of a sub-normalised conditional distribution
\newcommand{\CCPD}[1]{\opapptxt{CCPD}{#1}} % constrained conditional probability distributions
\newcommand{\QNCCPD}[1]{\opapp{\text{CCPD}_{\text{QNorm}}}{#1}} % quasi-normalised constrained conditional probability distributions
\newcommand{\PsEmpModels}[1]{\opapptxt{PEmpMods}{#1}} % empirical models
\newcommand{\PsEmpModelsVec}[1]{\left\langle \PsEmpModels{#1} \right\rangle} % empirical models vec space
\newcommand{\CausEqs}[1]{\opapptxt{CausEqs}{#1}} % Causality equations
\newcommand{\StdCausEqs}[1]{\opapp{\text{CausEqs}_{std}}{#1}} % Causality equations on standard cover
\newcommand{\SolCausEqs}[1]{\opapp{\text{CausEqs}_{sol}}{#1}} % Causality equations on solipsistic cover
\newcommand{\Causaltope}[1]{\opapp{\text{Caus}}{#1}} % causaltope on standard cover
\newcommand{\StdCausaltope}[1]{\opapp{\text{Caus}_{std}}{#1}} % causaltope on standard cover
\newcommand{\SolCausaltope}[1]{\opapp{\text{Caus}_{sol}}{#1}} % causaltope on solipsistic cover
\newcommand{\hist}[1]{% macro for a history, e.g. \hist{A/0,B/0,C/1}
    \ensuremath{
        \left\{
            \foreach \i\j [count=\idx] in {#1}{%
                \ifnum\idx=1%
                    \ev{\i}\!:\!\j%
                \else%
                    ,\,\ev{\i}\!:\!\j%
                \fi%
            }
        \right\}
    }
}
\newcommand{\evset}[1]{% macro for a set of events, e.g. \evset{A,B,C}
    \ensuremath{
        \left\{
            \foreach \i [count=\idx] in {#1}{%
                \ifnum\idx=1%
                    \ev{\i}%
                \else%
                    ,\ev{\i}%
                \fi%
            }
        \right\}
    }
}
\begin{document}

\title{The Geometry of Causality}

\author{Stefano Gogioso$^{1,2}$ and Nicola Pinzani$^{1,3}$}

\address{$^1$Hashberg Ltd, London, UK}
\address{$^2$Department of Computer Science, University of Oxford, Oxford, UK}
\address{$^3$QuIC, Universit\'{e} Libre de Bruxelles, Brussels, BE}
\ead{$^1$stefano.gogioso@cs.ox.ac.uk, $^2$nicola.pinzani@ulb.be}
\vspace{10pt}

\begin{abstract}
    We provide a unified operational framework for the study of causality, non-locality and contextuality, in a fully device-independent and theory-independent setting.
    This paper is the final instalment in a trilogy: spaces of input histories, our dynamical generalisation of causal orders, were introduced in ``The Combinatorics of Causality''; the sheaf-theoretic treatment of causal distributions, known to us as empirical models, was detailed in ``The Topology of Causality''.

    We define causaltopes---our chosen portmanteau of ``causal polytopes''---for arbitrary spaces of input histories and arbitrary choices of input contexts.
    We show that causaltopes are obtained by slicing simpler polytopes of conditional probability distributions with a set of causality equations, which we fully characterise.
    We provide efficient linear programs to compute the maximal component of an empirical model supported by any given sub-causaltope, as well as the associated causal fraction.

    We introduce a notion of causal separability relative to arbitrary causal constraints.
    We provide efficient linear programs to compute the maximal causally separable component of an empirical model, and hence its causally separable fraction, as the component jointly supported by certain sub-causaltopes.

    We study causal fractions and causal separability for several novel examples, including a selection of quantum switches with entangled or contextual control.
    In the process, we demonstrate the existence of ``causal contextuality'', a phenomenon where causal inseparability is clearly correlated to, or even directly implied by, non-locality and contextuality.
\end{abstract}

\maketitle

% % !TEX options=--shell-escape
% \documentclass[12pt]{iopart}
% \pdfoutput=1 % go for pdflatex

% % == Standard packages ==
% \usepackage{iopams}
% \input{packages}

% % == Tikz styles ==
% \input{thesis.tikzstyles}

% % == Theorem environments ==
% \theoremstyle{definition}
% % A definition introduces a new concept rigorously:
% \newtheorem{definition}{Definition}[section]
% % A remark introduces tangential considerations: 
% \newtheorem{remark}{Remark}[section]
% % A theorem is a key result:
% \newtheorem{theorem}{Theorem}[section]
% % A proposition is a result requiring explicit proof:
% \newtheorem{proposition}[theorem]{Proposition}
% % An observation is a result not requiring explicit proof:
% \newtheorem{observation}[theorem]{Observation}
% % A corollary is a result directly following from a previous result:
% \newtheorem{corollary}[theorem]{Corollary}

% % == Macros ==
% \input{macros}
% \newcommand{\TODO}[1]{{\marginpar{\color{gray}\textsf{[TODO]}} \color{gray}\textsf{[#1]}}}
% \newcommand{\NOTE}[1]{{\marginpar{\color{gray}\textsf{[NOTE]}} \color{gray}\textsf{[#1]}}}

% \begin{document}

% === COMMENT ABOVE BEFORE COMPILING MAIN FILE ===

\section{Introduction}
\label{section:introduction}

The study of quantum correlations has a long history, arguably originating with Bell's demonstration that such correlations violated the assumptions of local realism \cite{bell1964on}.
While the influence of Bell's work in shaping quantum foundations is undeniable, it is also worth noting that---as explained by Pitowsky \cite{pitowsky1989bell,Pitowsky1994}---Bell's inequalities can be considered a special case of Boole's formulation of the ``conditions for possible experience'' \cite{boole1862theory}, dating more than a century earlier.
More specifically, Boole sought to derive conditions imposed by logical dependencies on relative frequencies, in order to provide a description of compatible probabilistic behaviours.

Of course, back in 1862, Boole could not have foreseen that naturally occurring correlations---such as those demonstrated by Bell---would end up violating his ``conditions for possible experience'', requiring the subsequent paradigm shift to non-locality.
In this paper, we follow a line of reasoning non unlike that of Boole, but one which accommodates the possibility of non-local and contextual correlations, by design.
Our principal mathematical achievement is then the explicit description of ``causaltopes'' (portmanteau of ``causal polytopes''), the convex spaces that capture all possible probabilistic behaviours compatible, in much the same sense that Boole might have used, with a given causal structure.

This work is the third instalment in a trilogy.
In the first instalment, titled ``The Combinatorics of Causality'' \cite{gogioso2022combinatorics}, we introduce and investigate spaces of input histories, a family of combinatorial objects which can be used to model an exceptional variety of causal structures, including all definite, indefinite and dynamical causal orders investigated by previous literature on quantum causality.
In the second instalment, titled ``The Topology of Causality'' \cite{gogioso2022topology}, we study the combinatorial aspects of causal functions and we develop a topological, sheaf-theoretic framework to describe causal correlation, in a way which is compatible with---and an extension of---the Abramsky-Brandenburger framework for non-locality and contextuality \cite{abramsky2011sheaf}.

In this third and final instalment, we develop a geometric description complementary to the topological one: causal correlations become the points of causaltopes, convex polytopes obtained by slicing polytopes of conditional probability distributions with certain causality equations.
Our methods can be seen as a generalisation of the geometric techniques used in the device independent study of no-signalling correlations, and they present a finer-grained picture of causal separability than the one painted by the literature on causal inequalities \cite{oreshkov2016causal,oreshkov2012quantum}.
Specifically, we are able to quantify the device-independent explainability of conditional probability distributions relative to arbitrary putative causal structures, incorporating constraints such as space-like separation of parties, or dynamical no-signalling.
The more general, relative nature of our definition of causal separability allows us to define new witnesses for indefinite causal order, by exploiting the experimental legitimacy of imposing some causal constraints even in the presence of indefinite causality.

The advantage of a geometrical perspective is not limited to causal inference: combined with geometric tools from Abramsky-Brandenburger \cite{abramsky2017contextual}, it allows us to quantitatively investigate the correlation between indefinite causality and non-locality/contextuality.
This gives rise to novel methods to certify the non-classicality of causation, of particular interests in scenarios where quantum theory is endowed with the possibility of superposing the causal order of quantum channels.
Differently from previous literature on the topic \cite{araujo2015witnessing,dourdent2021semidevice,zych2019bell}, however, the phenomenology involved in our certification of indefinite causality is entirely theory independent.

\subsection{Background on no-signalling and (indefinite) causality}

The impossibility of superluminal signalling---required for compatibility with special and general relativity---can abstracted away from quantum theory and studied as a theory-independent principle.
Early examples of this abstraction can be found as early as 1990s work by Popescu, Rohrilch and co-authors \cite{elitzur1992quantum,popescu1994quantum,popescu1998causality}, and it has since been the topic of innumerable works in quantum foundations, including (but by no means limited to) \cite{chiribella2010probabilistic,chiribella2011informational,abramsky2011sheaf,abramsky2011cohomology,aolita2012fully,abramsky2014no,coecke2014terminality,chiribella2016quantum,coecke2016terminality}.

The connection between no-signalling correlations and polytopes is first characterised in mid-2000s work by Barrett and co-authors \cite{barrett2005nonlocal,barrett2006maximally}, which paved the way for systematic study of the field.
As the authors of \cite{barrett2005nonlocal} point out, the generalisation of no-signalling condition from bipartite to multipartite is not quite straightforward.
For example, consider the three equations below, imposing tripartite no-signalling constraints on a conditional probability distribution (in the case of binary inputs and outputs):
\begin{equation}
\label{eqn:nosignalling}
\begin{array}{rcl}
\sum_{a}p(a,b,c|x,y,z) &=& \sum_a p(a,b,c|x',y,z) \ \  \forall \ b,c,y,z,x,x' \in \{0,1\} \\
\sum_{b}p(a,b,c|x,y,z) &=& \sum_b p(a,b,c|x,y',z) \ \  \forall \ a,c,x,z,y,y' \in \{0,1\} \\
\sum_{c}p(a,b,c|x,y,z) &=& \sum_c p(a,b,c|x,y,z') \ \  \forall \ a,b,x,y,z,z' \in \{0,1\}
\end{array}
\end{equation}
One could attempt to operationally describe these no-signalling constraints by saying that none of the three parties involved can signal to any other, but this turns out to be incorrect.
Indeed, we could consider the following conditional probability distribution:
\begin{center}
\scalebox{0.8}{
    \begin{tabular}{c|cccc|cccc}
    CAB  & 000 & 001 & 010 & 011 & 100 & 101 & 110 & 111 \\ \hline
    000  & 1/4 & 0   & 0   & 1/4 & 1/4 & 0   & 0   & 1/4 \\
    001  & 1/8 & 1/8 & 1/8 & 1/8 & 1/8 & 1/8 & 1/8 & 1/8 \\
    010  & 1/8 & 1/8 & 1/8 & 1/8 & 1/8 & 1/8 & 1/8 & 1/8 \\
    011  & 1/4 & 0   & 0   & 1/4 & 0   & 1/4 & 1/4 & 0   \\ \hline
    100  & 0   & 1/4 & 1/4 & 0   & 0   & 1/4 & 1/4 & 0   \\
    101  & 1/8 & 1/8 & 1/8 & 1/8 & 1/8 & 1/8 & 1/8 & 1/8 \\
    110  & 1/8 & 1/8 & 1/8 & 1/8 & 1/8 & 1/8 & 1/8 & 1/8 \\
    111  & 1/4 & 0   & 0   & 1/4 & 0   & 1/4 & 1/4 & 0   \\
    \end{tabular}
}
\end{center}
The distribution above is consistent with the requirement of no signalling from any party to any other \textit{individual} party, but it nonetheless fails to satisfy the tripartite no-signalling constraints from Equations \ref{eqn:nosignalling}.
Indeed, while there is no signalling from Charlie to Alice or from Charlie to Bob---a fact that can be easily established by looking at the Charlie-Alice and Charlie-Bob marginals---there is signalling from Charlie to Alice \textit{and} Bob, jointly: the outcomes of the latter are perfectly correlated or anti-correlated, depending on the input of the former.
One of the objectives of our work is to investigate the fine-grained structure of causality intervening in examples such as the above, featuring fewer restrictions than genuine tripartite no-signalling, but more restrictions than any other fixed causal order between tree events.

No-signalling constraints are a theory-independent tool: they are linear constraints prescribing well-definition of conditional probability distribution marginals for independent subsystems.
Of special interest, for practical reasons, is the subset of no-signalling distributions which are quantum-realisable, i.e. those which could be (at least in principle) obtained from quantum experiments.
Quantum-realisable distributions still form a convex spaces, but one which is no longer a polytope \cite{allcock2009recovering,pitowsky2014quantum,de2015simple,koontong2018geometry} and is significantly more difficult to characterise \cite{navascues2007bounding,navascues2008convergent}.
Because our approach is theory-independent, we will not explicitly concern ourselves with the question of quantum-realisability, which we leave to future extensions of the semidefinite programming work by Navascues and co-authors \cite{navascues2007bounding,navascues2008convergent} to characterise.

Much like no-signalling conditions gain finer gradations in the passage from bipartite to multipartite, so does the notion of non-locality.
Indeed, Barrett and co-authors \cite{barrett2005nonlocal} already make the observation---originally by Svetlichny \cite{svetlichny1987distinguishing}, then generalised in \cite{collins2002bell}---that full tripartite locality is too restrictive too capture all examples of interest.
Several proposals for multipartite generalisation of locality have been formulated since, where common ``sources'' are distributed to specific subsets of agents \cite{fritz2012beyond,tavakoli2015quantum,wolfe2016inflation,renou2019genuine,gisin2020constraints,pozaskerstjens2021network}, and it has been shown \cite{fritz2012beyond,branciard2012bilocal,fraser2018causal,renou2019genuine} that---subject to the extra assumption of factorisability of such common sources---quantum non-locality is detectable without reference to free choice of inputs.
This is different from the approach presented in our work, where free choice of inputs is an essential ingredient, a necessary requirement for the witnessing of non-local behaviour.
Furthermore, it is known \cite{renou2019genuine,fritz2012beyond} that the requirement of factorisability for common sources can result in failure for the space of correlations to be convex: instead, our generalisation of non-locality is always captured by convex polytopes, enabling the use of convex geometry and linear programming in its study.

Proposals for a broader causal generalisation of non-locality have also been put forward, mainly in the context of causal Bayesian networks \cite{henson2014theory,fritz2016beyond}.
Fritz \cite{fritz2016beyond}, for example, shows that the relevant assumptions of a Bell-like scenario can be directly derived from the following network:
\begin{center}
\tikzfig{bell-scenario}
\end{center}
In this particular case, an assignment of value distributions to each vertex factors over subsets with disjoint causal past if and only if it satisfies the traditional no-signalling conditions.
We don't use causal Bayesian networks in our work, instead relying on a combinatorial generalisation of causal orders known as ``spaces of input histories'', introduced in \cite{gogioso2022combinatorics}.

Our generalisation of causality, non-locality and contextuality is not limited to definite causal order, but instead extends to indefinite causal order and dynamical causal constraints.
The theory-independent and device-independent certification of indefinite causal order is the subject of \textit{causal inequalities}, originally developed in the study of process matrices \cite{oreshkov2012quantum}.
Causal inequalities are linear bounds obeyed by distributed protocol supported by a single definite causal order, or more generally by a probabilistic mixture thereof, and several process matrices violating them have been documented in the literature \cite{oreshkov2012quantum,baumeler2014perfect,baumeler2016the}.
For example, the GYNI (Guess Your Neighbour's input) causal inequality \cite{oreshkov2012quantum} can be derived in the context of a bipartite game, where Bob is tasked to either communicate a bit to Alice, or to guess Alice's bit, depending on the value of a classical random variable $\lambda \in \{0,1\}$.
If we assume that Alice causally precedes Bob, or that Bob causally precedes Alice, it can be shown that the probability of success for this game is always bounded by $3/4$.
A recipe to violate this bound is given by the following OCB protocol \cite{oreshkov2012quantum}:
\begin{itemize}
    \item Alice measures the qubit she receives in the Z basis, using the measurement outcome as her classical output $o_\ev{A} \in \{0,1\}$. She then encodes her (freely chosen) classical input $i_\ev{A} \in \{0,1\}$ into the Z basis of a qubit as $i_\ev{A} \mapsto |i_\ev{A}\rangle$, a qubit which she then forwards.
    \item If $\lambda = 0$, Bob measures the qubit he receives in the X basis, using the measurement outcome as his classical output $o_\ev{B} \in \{0,1\}$ (mapping $\langle\!+\!| \mapsto 0$ and $\langle\!-\!| \mapsto 1$). He then encodes his (freely chosen) classical input $i_\ev{B} \in \{0,1\}$ into the Z basis of a qubit as $i_\ev{B} \mapsto |i_\ev{B}\oplus o_\ev{B}\rangle$, a qubit which he then forwards.
    \item If $\lambda = 1$, Bob measures the qubit he receives in the Z basis, using the measurement outcome as his classical output $o_\ev{B} \in \{0,1\}$. He prepares a qubit in the state $|0\rangle$, independently of his classical input, and forward the qubit.
\end{itemize}
This recipe above is turned into a pair of quantum instruments, which are then fed into the following process matrix:
\begin{equation}
W^{A_1 A_2 B_1 B_2} = 1/4 \left[ 1^{A_1 A_2 B_1 B_2} + 1/\sqrt{2}\left(\sigma_z^{A_2}\sigma_z^{B_1} +  \sigma_z^{A_1}\sigma_x^{B_1}\sigma_z^{B_2}\right) \right]
\end{equation}
The probability of success for the GINY game using the setup above is $\left(2+ \sqrt{2}\right)/4 > 3/4$, proving violation of the GINY causal inequality.

Aside from a brief foray into the OCB process discussed above and the BFW empirical model from \cite{baumeler2016the}, the examples presented in this work are all quantum-realisable (at least in the sense of quantum indefinite causality).
More specifically, the examples used to demonstrate the phenomenon of ``contextual causality'' are all based on quantum switches \cite{oreshkov2012quantum,chiribella2013quantum}, with either entangled control or contextual control.
A version of the single switch with entangled control (cf. Subsubsection \ref{subsubsection:ghzswitch-2-1}, p.\pageref{subsubsection:ghzswitch-2-1}) was also recently investigated in \cite{tein2022device}, where a CHSH-like argument is used---in conjunction with the software PANDA \cite{mckinney2010data}---to derive an explicit characterisation for some of the faces of the corresponding causaltope.
Indefinite causality is certified via the inequalities associated to the causaltope faces: this differs from our equation-based approach, and is instead analogous to previous literature on causal inequalities and non-locality.

\section{The geometry of causality}
\label{section:geometry-causality}

The convex space $\Dist{X}$ formed by probability distributions on a finite set $X$ is known as a simplex, and it is arguably the simplest (pun not intended) example of a polytope.
Almost as simple are the polytopes obtained by finite products of simplices (a.k.a. simplexes): these are the convex spaces $\prod_{j \in J}\Dist{X_j}$ describing probability distributions conditional on a finite sets $J$.
These polytopes will---to first approximation---host our empirical models: the main result of this Section will be the explicit characterisation of the exact convex subspace spanned by the empirical models, on arbitrary covers, for arbitrary spaces of input histories.
As it turns out, this convex subspace is always a polytope, which we will refer to as a ``causaltope'', a portmanteau of ``causal polytope''.

A standard way to describe polytopes is to provide inequalities corresponding to their faces.
Unfortunately, the number of faces and vertices for a polytope can grow exponentially in the embedding dimension, making this description hard to handle in the general case: in this sense, our causaltopes are no exception.
Our key observation, however, will be that causaltopes admit an equivalent, more compact characterisation: instead of describing them via the ``causal inequalities'' that define their faces, we obtain them by slicing simpler polytopes in higher dimensions, using ``causal equations'' which we derive from the topology of the underlying spaces of input histories.
This simpler, explicit characterisation allows us to systematically derive interesting properties of causaltopes and to efficiently compute the causal fractions of arbitrary empirical models, both over individual causaltopes and over convex hulls of multiple causaltopes.

\subsection{Polytopes}
\label{subsection:geometry-causality-polytopes}

We start this by recalling definitions and basic facts about polytopes.
Different mathematical literature uses ``polytope'' to refer to slightly different families of objects: here, we use ``polytope'' to refer to a bounded, closed and convex polytope, embedded in a real vector space.

\begin{definition}
For any finite set $J$, we define $\reals^J$ to be the finite-dimensional real vector space formed by functions $J \rightarrow \reals$ under pointwise addition and scalar multiplication.
We adopt the Kronecker delta functions as the standard basis for this space:
\[
    \underline{\delta}_i
    := j \mapsto \left\{\begin{array}{rl}
        1 & \text{ if } i = j\\
        0 & \text{ if } i \neq j
    \end{array}\right.
\]
If $\underline{x} \in \reals^J$, we write $x_j$ for the $j$-th component of $\underline{x}$ in the standard basis, for every $j \in J$:
\[
    x_j := \underline{x}(j) \in \reals
\]
We take $\reals^J$ to be equipped with the inner product for the standard basis:
\[
    \underline{x}^T \underline{y}
    := \sum_{j \in J} x_j y_j
\]
We also take $\reals^J$ to be equipped with the product order:
\[
    \underline{x} \leq \underline{y}
    \stackrel{def}{\Leftrightarrow}
    \forall j \in J.\; x_j \leq y_j
\]
For every $n \in \nats$, we write $\reals^n$ to denote $J := \{1,...,n\}$, where $n=0$ means $J=\emptyset$.
\end{definition}

\begin{remark}
The choice of inner product allows us to define linear equations using matrices, as $A\underline{x} = \underline{b}$, where $\underline{b} \in \reals^R$ and $R$ is a finite set indexing the rows of $A$.
This is equivalent to the following, more verbose formulation:
\[
    A\underline{x} = \underline{b}
    \Leftrightarrow
    \forall r \in R. \sum_{j \in J} A_{rj} x_j = b_r
\]
Furthermore, the choice of product order allows us to define linear inequalities using matrices, as $A\underline{x} \leq \underline{b}$.
This is equivalent to the following, more verbose formulation:
\[
    A\underline{x} \leq \underline{b}
    \Leftrightarrow
    \forall r \in R. \sum_{j \in J} A_{rj} x_j \leq b_r
\]
Sometimes, for additional clarity, it will be convenient to write such systems explicitly, such as:
\[
    \left(
    \begin{array}{ccc}
    A_{11}&\dots &A_{1n}\\
    \vdots&\ddots&\vdots\\
    A_{m1}&\dots &A_{mn}
    \end{array}
    \right)
    \left(
    \begin{array}{c}
    x_1\\\vdots\\x_n
    \end{array}
    \right)
    \leq
    \left(
    \begin{array}{c}
    b_1\\\vdots\\b_m
    \end{array}
    \right)
\]
When doing so, we will implicitly assume that bijections $J \leftrightarrow \{1,...,n\}$ and $R \leftrightarrow \{1,...,m\}$ have been fixed, where $n:=|J|$ and $m:=|R|$ denote number of elements in $J$ and $R$, respectively (i.e. the number of columns and rows, respectively).
\end{remark}

\begin{definition}
A \emph{polytope} is any bounded subset $K \subset \reals^J$ defined by the joint solutions $\underline{x} \in \reals^J$ to a system $A \underline{x} = \underline{b}$ of linear equations and a system $C \underline{x} \leq \underline{d}$ of linear inequalities:
\begin{equation}
    K =
    \suchthat{
        \underline{x} \in \reals^J
    }{
        A \underline{x} = \underline{b}
        \text{ and }
        C \underline{x} \leq \underline{d}
    }
\end{equation}
We refer to $\reals^J$ as the \emph{embedding space} and say that $K$ is \emph{embedded} in $\reals^J$.
\end{definition}

\begin{remark}
The choice of direction $\leq$ for the inequalities is merely a matter of convention, and inequalities $\underline{c}^T\underline{x} \geq d$ in the other direction can be expressed as $-\underline{c}^T\underline{x} \leq -d$.
It is also possible for the system of equations to be empty, i.e. for $A$ to have no rows.
\end{remark}

As our first example, we consider the case for the \emph{standard hypercube} $[0,1]^J \subset \reals^J$, defined by the inequalities $x_j \leq 1$ and $x_j \geq 0$ for all $j \in J$:
\[
    \left(\begin{array}{ccc}
    1 &&\\
    &\ddots&\\
    &&1\\
    -1&&\\
    &\ddots&\\
    &&-1
    \end{array}\right)
    \left(
    \begin{array}{c}
    x_1\\\vdots\\x_n
    \end{array}
    \right)
    \leq
    \left(\begin{array}{c}
    1\\\vdots\\1\\0\\\vdots\\0
    \end{array}\right)
\]
More generally, for every $\underline{u} \in \reals^J$ with $u_j > 0$ for all $j \in J$, we can define the \emph{standard hypercuboid} $\prod_j[0,u_j] \subset \reals^J$, where the upper-bounding inequalities $x_j \leq 1$ for the unit hypercube are replaced by $x_j \leq u_j$:
\[
    \left(\begin{array}{ccc}
    1 &&\\
    &\ddots&\\
    &&1\\
    -1&&\\
    &\ddots&\\
    &&-1
    \end{array}\right)
    \left(
    \begin{array}{c}
    x_1\\\vdots\\x_n
    \end{array}
    \right)
    \leq
    \left(\begin{array}{c}
    u_1\\\vdots\\u_n\\0\\\vdots\\0
    \end{array}\right)
\]
Finally, we can consider the \emph{standard simplex} $\Delta^J \subset \reals^J$, which is defined by the same lower-bounds $x_j \geq 0$ as the previous examples, but with the single upper-bound $\sum_{j \in J} x_j \leq 1$:
\[
    \left(\begin{array}{ccc}
    1&\dots&1\\
    -1&&\\
    &\ddots&\\
    &&-1\\
    \end{array}\right)
    \left(
    \begin{array}{c}
    x_1\\\vdots\\x_n
    \end{array}
    \right)
    \leq
    \left(\begin{array}{c}
    1\\0\\\vdots\\0
    \end{array}\right)
\]
The convex sub-space of $\reals^J$ defined by $x_j \geq 0$ for all $j \in J$ is known as the \emph{positive cone} of $\reals^J$ and denoted by $(\reals^+)^J$. It is an example of an ``unbounded'' polytope.

In the three examples above, no equations were involved, the inequalities were consistent and no pairs of inequalities combined to form an equation: the polytopes are non-empty, bounded, closed regular subsets of $\reals^n$.
This is the case covered by the more common ``half-space description'' of a polytope, as the intersection of a unique ``essential'' family of half-spaces, defined by linear inequalities.
The half-space description has the advantage of being canonical, but it is too restrictive for our purposes: if we wish to ``slice'' our polytopes by imposing additional equations in this description, we have to explicitly transform the defining inequalities to ones on the resulting affine subspace, losing all information about the original embedding.

The description we adopted is closer to the formulation of convex polytopes used by linear programming, and it affords us the freedom of imposing equations without changing the existing inequalities.
However, we pay for this additional freedom with lack of canonicity:
\begin{itemize}
    \item Some of the equations or inequalities could be redundant.
    \item Equations are not necessary: $\underline{a}^T\underline{x} = b$ can be replaced by $\underline{a}^T\underline{x} \leq b$ and $\underline{a}^T\underline{x} \geq b$.
    \item Inequalities can pair up into equations, as above.
\end{itemize}
That said, all interesting examples of polytopes in this work are defined by a mix of equations and inequalities, making the linear programming formulation more appealing.
Indeed, our calculation of causal fractions will take the form of linear programs.

As an example of a polytope which is naturally defined by both equations and inequalities, we consider the space $\Dist{J} \subset \reals^J$ of probability distributions over the set $J$.
The probability $x_j$ of each $j \in J$ must be a non-negative real number (i.e. $x_j \geq 0$) and probabilities must sum to unity (i.e. $\sum_{j \in J} x_j = 1$):
\[
    \Big(\;1 \;\;\dots\;\; 1\;\Big)
    \left(
    \begin{array}{c}
    x_1\\\vdots\\x_n
    \end{array}
    \right)
    =
    1
    \hspace{2cm}
    \left(\begin{array}{ccc}
    -1&&\\
    &\ddots&\\
    &&-1\\
    \end{array}\right)
    \left(
    \begin{array}{c}
    x_1\\\vdots\\x_n
    \end{array}
    \right)
    \leq
    \left(\begin{array}{c}
    0\\\vdots\\0
    \end{array}\right)
\]
More generally, we wish to consider the space $\prod_{y \in Y} \Dist{J^{(y)}}$ of probability distributions conditional on some finite set $Y$, where we allow $J^{(y)}$ to also depend on the choice of $y \in Y$.
The construction of this space is obtained from the following result.

\begin{observation}
Let $Y$ be a non-empty finite set and let $K^{(y)} \subset R^{J^{(y)}}$ be a family of polytopes indexed by $y \in Y$, each defined by its own system of linear equations $A^{(y)}\underline{x}^{(y)}=\underline{b}^{(y)}$ and linear inequalities $C^{(y)}\underline{x}^{(y)}\leq\underline{d}^{(y)}$.
The \emph{product polytope} $\prod_{y \in Y} K^{(y)}$ is embedded in $\prod_{y \in Y}\reals^{J^{(y)}} = \reals^{\sqcup_{y \in Y} J^{(y)}}$, where the disjoint union $\sqcup_{y \in Y} J^{(y)}$ is formally defined as follows:
\begin{equation}
    \bigsqcup_{y \in Y} J^{(y)}
    :=\suchthat{(y,j)}{y \in Y, j \in J^{(y)}}
\end{equation}
The product polytope $\prod_{y \in Y} K^{(y)}$ is defined by the following equations and inequalities:
\begin{equation}
    \forall y \in Y.\;
    % \forall r \in R^{(y)}.\;
    % \sum_{j \in J^{(y)}} A^{(y)}_{rj} x^{(y)}_j = b^{(y)}_r
    A^{(y)} \underline{x}^{(y)} = \underline{b}^{(y)}
\end{equation}
\begin{equation}
    \forall y \in Y.\;
    % \forall s \in S^{(y)}.\;
    % \sum_{j \in J^{(y)}} C^{(y)}_{sj} x^{(y)}_j \leq d^{(y)}_s
    C^{(y)} \underline{x}^{(y)} \leq \underline{d}^{(y)}
\end{equation}
Above, we have indexed the coordinates of vectors $\underline{x} \in \reals^{\sqcup_{y \in Y} J^{(y)}}$ as $x^{(y)}_{j}$, for $y \in Y$ and $j \in J^{(y)}$, and we have defined:
\[ 
    \underline{x}^{(y)}
    := \left(x^{(y)}_{j}\right)_{j \in J^{(y)}}
    \in \reals^{J^{(y)}}
\]
\end{observation}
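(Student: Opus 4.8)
The plan is to check that the Cartesian product set $\prod_{y \in Y} K^{(y)}$, viewed inside $\prod_{y \in Y}\reals^{J^{(y)}}$, is a polytope in the sense of the preceding definition, and that the displayed equations and inequalities form a defining system for it. Everything reduces to bookkeeping once the canonical identification $\prod_{y \in Y}\reals^{J^{(y)}} = \reals^{\sqcup_{y \in Y} J^{(y)}}$ is made precise: a tuple $(\underline{x}^{(y)})_{y \in Y}$, with $\underline{x}^{(y)} \in \reals^{J^{(y)}}$, corresponds to the function $(y,j) \mapsto x^{(y)}_j$ on $\sqcup_{y \in Y} J^{(y)}$, and conversely any $\underline{x} \in \reals^{\sqcup_{y \in Y} J^{(y)}}$ restricts, for each $y$, to $\underline{x}^{(y)} := (x^{(y)}_j)_{j \in J^{(y)}}$. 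This is a linear bijection, so it suffices to argue within $\reals^{\sqcup_{y \in Y} J^{(y)}}$.

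First I would establish the set equality. By definition of the Cartesian product, $\underline{x}$ lies in $\prod_{y \in Y} K^{(y)}$ if and only if $\underline{x}^{(y)} \in K^{(y)}$ for every $y \in Y$, which by the defining systems of the factors is equivalent to requiring $A^{(y)}\underline{x}^{(y)} = \underline{b}^{(y)}$ and $C^{(y)}\underline{x}^{(y)} \leq \underline{d}^{(y)}$ for all $y \in Y$ --- exactly the displayed system. To see that this is a legitimate linear system over $\reals^{\sqcup_{y \in Y} J^{(y)}}$ in the sense of the earlier Remark, I would note that each row of $A^{(y)}$ involves only the coordinates indexed by $j \in J^{(y)}$; extending each such row by zeros on the coordinates $(y',j)$ with $y' \neq y$ assembles the whole collection into a single matrix equation $A\underline{x} = \underline{b}$, with $A$ the block-diagonal matrix built from the $A^{(y)}$, row index set the disjoint union of the individual row index sets, and $\underline{b}$ the concatenation of the $\underline{b}^{(y)}$; likewise for the inequalities $C\underline{x} \leq \underline{d}$. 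These row index sets are finite because $Y$ is finite and each factor has finitely many defining rows, so $A$ and $C$ are honest matrices and the system has the required form.

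It then remains to verify boundedness, the only condition in the definition of polytope that is not pure bookkeeping. Since each $K^{(y)}$ is bounded, pick $M_y$ with $K^{(y)} \subseteq [-M_y, M_y]^{J^{(y)}}$; then $\prod_{y \in Y} K^{(y)} \subseteq \prod_{y \in Y} [-M_y, M_y]^{J^{(y)}}$, which is a bounded subset of $\reals^{\sqcup_{y \in Y} J^{(y)}}$ because $Y$ is finite and hence the finitely many bounds $M_y$ admit a common upper bound. Thus $\prod_{y \in Y} K^{(y)}$ is a bounded subset cut out by a finite system of linear equations and inequalities, i.e. a polytope, with the displayed system as a defining one. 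I do not expect any genuine obstacle here; the only point requiring a modicum of care is keeping the row-and-column indexing straight when assembling the block-diagonal matrices $A$ and $C$ over the disjoint union $\sqcup_{y \in Y} J^{(y)}$.
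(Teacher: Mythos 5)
Your argument is correct and matches the paper's treatment: the Observation is stated without a formal proof, but the block-diagonal assembly of the matrices $A^{(y)}$ and $C^{(y)}$ over the disjoint union $\bigsqcup_{y \in Y} J^{(y)}$ that you describe is exactly the concrete description the paper gives immediately after the statement. Your additional care about boundedness (which holds because $Y$ is finite and each factor is bounded) and about the canonical linear identification $\prod_{y \in Y}\reals^{J^{(y)}} = \reals^{\sqcup_{y \in Y} J^{(y)}}$ fills in precisely the routine details the paper leaves implicit.
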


According to the observation above, and compatibly with the definition of conditional probability distributions, the product polytope $\prod_{y \in Y} \Dist{J^{(y)}} \subset \reals^{\sqcup_{y \in Y} J^{(y)}}$ is defined by the following families of equations and inequalities:
\[
    \forall y \in Y.\;
    \sum_{j \in J^{(y)}} x^{(y)}_j = 1
    \hspace{2cm}
    \forall y \in Y.\;
    \forall j \in J^{(y)}.\;
    x^{(y)}_j \geq 0
\]
Thinking concretely in terms of vectors and matrices, the product polytope is defined by combining the equations and inequalities of its factors in a block-diagonal way:
\[
    \scalebox{0.75}{$
        \left(\begin{array}{cccc}
        A^{(1)} & 0 & \dots & 0\\
        0 & A^{(2)} & \dots & 0\\
        0 & 0 & \ddots & 0 \\
        0 & 0 & \dots & A^{(m)}
        \end{array}\right)
        \left(\begin{array}{c}
        \underline{x}^{(1)}\\\vdots\\\underline{x}^{(m)}
        \end{array}\right)
        =
        \left(\begin{array}{c}
        \underline{b}^{(1)}\\\vdots\\\underline{b}^{(m)}
        \end{array}\right)
    $}
    \hspace{1cm}
    \scalebox{0.75}{$
        \left(\begin{array}{cccc}
        C^{(1)} & 0 & \dots & 0\\
        0 & C^{(2)} & \dots & 0\\
        0 & 0 & \ddots & 0 \\
        0 & 0 & \dots & C^{(m)}
        \end{array}\right)
        \left(\begin{array}{c}
        \underline{x}^{(1)}\\\vdots\\\underline{x}^{(m)}
        \end{array}\right)
        \leq
        \left(\begin{array}{c}
        \underline{d}^{(1)}\\\vdots\\\underline{d}^{(m)}
        \end{array}\right)
    $}
\]
Above, we have implicitly fixed a bijection $Y \leftrightarrow \{1,...,m\}$.
As a special case, we obtain an explicit description for the space $\prod_{y \in Y} \Dist{J^{(y)}} \subset \reals^{\sqcup_{y \in Y} J^{(y)}}$, where $I$ is the identity matrix on $\reals^J$ and we write $\underline{1}^T := (1,\;\dots\;,1)$:
\[
    \scalebox{0.75}{$
        \left(\begin{array}{cccc}
        \underline{1}^T & 0 & \dots & 0\\
        0 & \underline{1}^T & \dots & 0\\
        0 & 0 & \ddots & 0 \\
        0 & 0 & \dots & \underline{1}^T
        \end{array}\right)
        \left(\begin{array}{c}
        \underline{x}^{(1)}\\\vdots\\\underline{x}^{(m)}
        \end{array}\right)
        =
        \left(\begin{array}{c}
        1\\\vdots\\1
        \end{array}\right)
    $}
    \hspace{1cm}
    \scalebox{0.75}{$
        \left(\begin{array}{cccc}
        -I & 0 & \dots & 0\\
        0 & -I & \dots & 0\\
        0 & 0 & \ddots & 0 \\
        0 & 0 & \dots & -I
        \end{array}\right)
        \left(\begin{array}{c}
        \underline{x}^{(1)}\\\vdots\\\underline{x}^{(m)}
        \end{array}\right)
        \leq
        \left(\begin{array}{c}
        0\\\vdots\\0
        \end{array}\right)
    $}
\]

Every polytope $K \subset \reals^J$ admits an alternative, equivalent characterisation as a convex hull, consisting of all convex-linear combinations of a finite set of points in $\reals^J$.
If the points are required to be convex-linearly independent, then the set of points is unique, defining the vertices of the polytope.

\begin{observation}
Let $K \subset \reals^J$ be a polytope.
There exists a unique (necessarily finite) set of convex-linearly independent points $\Vertices{K} \subset \reals^J$ for which $K$ is the convex hull, i.e. such that the points in $K$ are exactly all possible convex combinations of points in $\Vertices{K}$:
\begin{equation}
    K = \suchthat{
        \sum_{\underline{v} \in \Vertices{K}} d\left(\underline{v}\right) \underline{v} 
    }{
        d \in \Dist{\Vertices{K}}
    }
\end{equation}
We refer to the points in $\Vertices{K}$ as the \emph{vertices} of $K$.
By convex-linear independence of $\Vertices{K}$ we mean that no $\underline{v} \in \Vertices{K}$ can be obtained as a convex combination of points in $\Vertices{K}\backslash\{\underline{v}\}$.
\end{observation}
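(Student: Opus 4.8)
The plan is to identify $\Vertices{K}$ with the set of \emph{extreme points} of $K$ --- those $\underline{v} \in K$ admitting no expression $\underline{v} = \lambda \underline{p} + (1-\lambda)\underline{q}$ with $\underline{p},\underline{q} \in K$, $\underline{p} \neq \underline{q}$ and $\lambda \in (0,1)$ --- and then to prove, in order: (i) there are only finitely many extreme points; (ii) $K$ equals the convex hull of its extreme points; (iii) the set of extreme points is convex-linearly independent; (iv) it is the \emph{only} convex-linearly independent set whose convex hull is $K$. Throughout we may assume $K \neq \emptyset$, the empty case being vacuous.

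For (i), I would use the presentation $K = \suchthat{\underline{x} \in \reals^J}{A\underline{x} = \underline{b},\ C\underline{x} \leq \underline{d}}$. Given an extreme point $\underline{v}$, let $R_{\underline{v}}$ be the set of inequality rows active at $\underline{v}$, i.e. those $r$ with $\sum_{j} C_{rj} v_j = d_r$. If the linear system consisting of all rows of $A$ together with the rows of $C$ indexed by $R_{\underline{v}}$ did not determine $\underline{v}$ uniquely, there would be a nonzero $\underline{u}$ annihilated by all of them; then $\underline{v} \pm \epsilon \underline{u} \in K$ for $\epsilon > 0$ small enough --- the equalities and the active inequalities survive exactly, the inactive ones by continuity --- contradicting extremality. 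Hence each extreme point is the unique solution of a system determined by a subset of the finitely many constraint rows, so there are at most $2^{|R|}$ of them, and in particular $\Vertices{K}$ is finite.

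For (ii), the inclusion $\mathrm{conv}(\Vertices{K}) \subseteq K$ is immediate since $K$ is convex and contains $\Vertices{K}$. For the reverse inclusion I would induct on $d := \dim \mathrm{aff}(K)$. When $d = 0$, $K$ is a single point, its own unique extreme point. For $d \geq 1$: if $\underline{x} \in K$ lies on the relative boundary, a supporting hyperplane $H$ through $\underline{x}$ inside $\mathrm{aff}(K)$ cuts out $F := K \cap H$, which is again of the form ``the constraints of $K$ plus one extra equation'', hence a polytope, of dimension $\leq d-1$, with $\underline{x} \in F$; a short argument shows every extreme point of $F$ is extreme in $K$, so by induction $\underline{x} \in \mathrm{conv}(\Vertices{F}) \subseteq \mathrm{conv}(\Vertices{K})$. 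If instead $\underline{x}$ is in the relative interior, boundedness of $K$ forces any line through $\underline{x}$ in $\mathrm{aff}(K)$ to meet $K$ in a segment $[\underline{x}_1,\underline{x}_2]$ with both endpoints on the relative boundary; by the boundary case $\underline{x}_1,\underline{x}_2 \in \mathrm{conv}(\Vertices{K})$, and $\underline{x}$ is a convex combination of them. The main obstacle is concentrated here: this is the one genuinely geometric step, relying on boundedness (to produce finite chords), on the supporting-hyperplane theorem, and on the fact that a proper exposed face is a lower-dimensional polytope whose extreme points remain extreme in the ambient polytope.

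Finally, (iii) and (iv) are bookkeeping on top of the extreme-point characterisation. If some $\underline{v} \in \Vertices{K}$ were a convex combination of the remaining vertices, then --- a one-term combination would force $\underline{v}$ to equal another vertex, while a genuinely multi-term combination exhibits $\underline{v}$ as a nontrivial convex combination of two distinct points of $K$ each different from $\underline{v}$ --- extremality would be violated; this gives (iii). For (iv), let $S$ be any finite convex-linearly independent set with $\mathrm{conv}(S) = K$. Each $\underline{v} \in \Vertices{K}$ lies in $K = \mathrm{conv}(S)$, hence is a convex combination of points of $S$; extremality again forces that combination to be trivial, so $\underline{v} \in S$, i.e. $\Vertices{K} \subseteq S$. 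Conversely, any $\underline{s} \in S \setminus \Vertices{K}$ would lie in $K = \mathrm{conv}(\Vertices{K}) \subseteq \mathrm{conv}(S \setminus \{\underline{s}\})$, contradicting convex-linear independence of $S$; hence $S = \Vertices{K}$, which by (i) is finite, completing the proof.
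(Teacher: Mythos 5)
The paper offers no proof of this Observation: it is one of the standard facts about polytopes recalled without argument in Subsection 2.1 (the Observations in this paper, unlike the Propositions and Theorems, carry no proof appendix), so there is nothing in the paper to compare your route against. Your proposal is a correct, self-contained proof of the underlying standard result (Minkowski's theorem for polytopes given by equations and inequalities): identifying $\Vertices{K}$ with the extreme points, you obtain finiteness from the active-constraint argument, the representation $K = \mathrm{conv}(\Vertices{K})$ by induction on $\dim \mathrm{aff}(K)$ via supporting hyperplanes on the relative boundary and finite chords through relative-interior points, and convex-linear independence together with uniqueness from extremality. Two cosmetic remarks. First, in step (iv) you restrict to \emph{finite} competing sets $S$, but your argument never uses finiteness of $S$ (each extreme point is a finite convex combination of elements of $S$ regardless), so you in fact prove uniqueness among all convex-linearly independent generating sets, which is what the parenthetical ``(necessarily finite)'' in the statement calls for. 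Second, in step (iii) the two points witnessing non-extremality need not both differ from $\underline{v}$; but if the split $\underline{v} = \lambda_1 \underline{w}_1 + (1-\lambda_1)\underline{z}$ degenerates to $\underline{w}_1 = \underline{z}$, then $\underline{v}$ coincides with another vertex outright, which is exactly the other branch of your case analysis, so the argument closes either way.
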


For the standard hypercube $[0,1]^J$, the vertices are the $2^{|J|}$ vectors in $\reals^J$ with coordinates given by all possible combinations of 0s and 1s:
\[
    \Vertices{[0,1]^J}
    = \{0,1\}^J
    = \suchthat{\underline{x} \in \reals^J}{\forall j \in J.\; x_j \in \{0,1\}}
\]
Analogously, the vertices of the standard hypercuboid $\prod_{j\in J}[0, u_j]$ are the $2^{|J|}$ vectors in $\reals^J$ with coordinates given by all possible combinations of 0s and $u_j$s:
\[
    \Vertices{\prod_{j \in J} [0, u_j]}
    = \prod_{j \in J} \{0, u_j\}
    = \suchthat{\underline{x} \in \reals^J}{\forall j \in J.\; x_j \in \{0, u_j\}}
\]
The vertices of the standard simplex $\Delta^J$ are a subset of those for the standard hypercube $[0,1]^J$, consisting of the $|J|+1$ vertices where at most one coordinate is 1:
\[
    \Vertices{\Delta^{J}}
    = \suchthat{\underline{x} \in \Vertices{[0,1]^J}}{\sum_{j \in J} x_j \leq 1}
\]
In other words, the vertices of the standard simplex $\Delta^J$ are the Kronecker delta functions together with the zero vector:
\[
    \Vertices{\Delta^{J}}
    = \{\underline{0}\}\cup\suchthat{\underline{\delta}_i}{i \in J}
\]
The vertices of the polytope of probability distributions $\Dist{J}$ are the $|J|$ non-zero vertices of $\Delta^J$, i.e. the Kronecker delta functions:
\[
    \Vertices{\Dist{J}}
    = \suchthat{\underline{\delta}_i}{i \in J}
\]
We can use Kronecker deltas to embed functions $f \in \prod_{y \in Y} J^{(y)}$ into the corresponding indicator vectors $\underline{\delta}_f \in \reals^{\sqcup_{y \in Y} J^{(y)}}$:
\begin{equation}
    \label{equation:kronecker-delta-extended}
    \left(\underline{\delta}_f\right)^{(y)}_j
    :=
    \left\{\begin{array}{rl}
    1 & \text{ if } f(y) = j\\
    0 & \text{ if } f(y) \neq j
    \end{array}\right.    
\end{equation}
The $\prod_{y \in Y} |J^{(y)}|$ indicator vectors defined above are exactly the vertices of the polytope $\prod_{y \in Y}\Dist{J^{(y)}}$ of conditional probability distributions:
\[
    \Vertices{\prod_{y \in Y}\Dist{J^{(y)}}}
    = \suchthat{\underline{\delta}_f}{f \in  \prod_{y \in Y} J^{(y)}}
\]
The unconditional case of $\Dist{J}$ is recovered by taking a singleton $Y = \{\ast\}$ and identifying functions $f: Y \rightarrow J$ with the corresponding elements $f(\ast) \in J$, so that the vectors $\underline{\delta}_{\ast \mapsto i}$ from Equation \ref{equation:kronecker-delta-extended} are identified with the Kronecker delta functions $\underline{\delta}_{i}$.

Polytopes $K \subset \reals^J$ are compact manifolds with boundary, embedded into a an affine subspace of $\reals^J$: when we talk about an $n$-dimensional polytope, we mean a polytope which is an $n$-dimensional manifold with boundary.
Because the topological dimension of $K$ might be strictly lower than that of the embedding space $\reals^J$, topological notions such as boundary and interior must be defined within a suitable subspace.
Luckily, there is a canonical choice for this subspace, obtainable from the equations and inequalities defining the polytope.

\begin{observation}
There is a unique minimal affine subspace $\AffSubsp{K} \subset \reals^J$ which contains a given polytope $K \subset \reals^J$, and it can be explicitly identified by the following procedure:
\begin{enumerate}
    \item Replace every pair of inequalities in the form $\underline{c}^T\underline{x} \leq d$ and $-\underline{c}^T\underline{x} \leq -d$ with the corresponding equation $\underline{c}^T\underline{x} = d$.
    \item Turn the system of equations into reduced row echelon form (RREF), removing zero rows.
\end{enumerate}
The affine subspace $\AffSubsp{K}$ is the space of solutions for the resulting system of equations, and there is a bijection between affine subspaces and systems of equation in RREF without zero rows.
The polytope $K$ is a regular closed subset of $\AffSubsp{K}$: the topological dimension of $K$ is the dimension of $\AffSubsp{K}$, which is equal to $|J|$ minus the number of non-zero rows in the system of equations in RREF.
\end{observation}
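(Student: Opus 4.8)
The plan is to split the statement into four pieces: existence and uniqueness of the minimal affine subspace; the bijection between affine subspaces and RREF systems with no zero rows; correctness of the displayed procedure; and the ``regular closed'' and dimension claims. Only the third piece requires real work.

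First I would establish existence and uniqueness. An arbitrary intersection of affine subspaces of $\reals^J$ is again affine, so $\AffSubsp{K} := \bigcap\suchthat{A}{A \subseteq \reals^J \text{ affine}, K \subseteq A}$ is a well-defined affine subspace containing $K$, minimal by construction; equivalently it is the set of affine combinations of points of $K$. Next, the bijection: every affine subspace is the solution set of some consistent linear system, two consistent systems have the same solution set iff they have the same RREF after deleting zero rows, and the RREF of a consistent system contains no contradictory row; uniqueness of RREF then yields the bijection, and a codimension count shows an affine subspace of dimension $k$ corresponds to an RREF system with exactly $|J|-k$ nonzero rows. Since Step 2 of the procedure leaves solution sets unchanged, I may henceforth work with the system $E$ produced by Step 1 alone --- the original equations together with one equation $\underline{c}^T\underline{x} = d$ per matched pair $\underline{c}^T\underline{x} \leq d$, $-\underline{c}^T\underline{x} \leq -d$ --- and write $C'\underline{x} \leq \underline{d}'$ for the unmatched inequalities that remain.

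The inclusion $\AffSubsp{K} \subseteq \mathrm{Sol}(E)$ is immediate: each point of $K$ satisfies the original equations, and for a matched pair it satisfies both $\underline{c}^T\underline{x} \leq d$ and $\underline{c}^T\underline{x} \geq d$, hence $\underline{c}^T\underline{x} = d$; so $K \subseteq \mathrm{Sol}(E)$ and minimality gives the claim. The content is the reverse inclusion $\mathrm{Sol}(E) \subseteq \AffSubsp{K}$, i.e.\ that $K$ is full-dimensional inside $\mathrm{Sol}(E)$. I would prove this by showing that no row of $C'\underline{x} \leq \underline{d}'$ is tight on all of $K$: if $\underline{c}^T\underline{x} \leq d$ were, then $\underline{c}^T\underline{x} = d$ is valid on $K$, the reverse inequality $-\underline{c}^T\underline{x} \leq -d$ occurs in the presentation, and this matched pair would already have been converted to an equation in Step 1 --- contradiction. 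Given this, a standard averaging argument (take, for each row $r$ of $C'$, a point of $K$ strict at $r$, and form a convex combination of these with all weights positive; it lies in $K$ and is strict at every row) exhibits a point of $K$ in the relative interior of $\mathrm{Sol}(E)$, so $\AffSubsp{K} = \mathrm{Sol}(E)$, and the dimension formula follows from the RREF count. The main obstacle is precisely the clause ``the reverse inequality occurs in the presentation'': this is the one point where the purely syntactic procedure can in principle undershoot the true affine hull, because a non-matched combination of several inequalities can collapse onto an equality without any single reversed inequality being listed. I would handle this by noting that the presentations actually used later --- products $\prod_{y}\Dist{J^{(y)}}$ sliced by causal equations, whose only inequalities are the nonnegativity constraints $x^{(y)}_j \geq 0$ --- are \emph{saturated}, in that such a constraint is tight on all of $K$ only when an explicit equation forces it, not through a hidden combination; for a general presentation one first normalises by adjoining $-\underline{c}^T\underline{x} \leq -d$ whenever $\underline{c}^T\underline{x} \leq d$ is an implicit equality.

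Finally, the topological claims follow from general convexity: $K$ is closed (an intersection of closed half-spaces and hyperplanes) and bounded, and any nonempty convex subset of a Euclidean space has nonempty interior relative to its affine hull and equals the closure of that relative interior; hence $K$ is a regular closed subset of $\AffSubsp{K}$, in particular a manifold with boundary whose dimension is $\dim \AffSubsp{K}$, which the RREF count identifies with $|J|$ minus the number of nonzero rows of the reduced system. (The empty polytope, tacitly excluded, is treated by the usual convention for $\AffSubsp{\emptyset}$.)
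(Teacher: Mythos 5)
The paper offers no proof of this Observation---it is recalled as a standard fact from polyhedral geometry---so there is nothing to compare against; judged on its own terms, your argument is sound and follows the standard route (affine hull as intersection of affine supersets, uniqueness of RREF for the bijection, and the averaging/relative-interior argument for the reverse inclusion $\mathrm{Sol}(E)\subseteq\AffSubsp{K}$). The most valuable part of your write-up is the caveat you flag yourself: the purely syntactic Step 1 only detects implicit equalities that appear as \emph{matched pairs}, whereas a combination of several unmatched inequalities can be tight on all of $K$ (e.g.\ $x\leq 1$, $y\leq 1$, $-x-y\leq -2$ forces $x=y=1$ with no reversed pair present), in which case the procedure overshoots the solution set and undershoots nothing---it returns a strictly larger affine subspace than the true affine hull. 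This is a genuine limitation of the Observation as literally stated, not of your proof, and your proposed repair (adjoin $-\underline{c}^T\underline{x}\leq -d$ for every implicit equality before running Step 1, i.e.\ pass to the standard decomposition into implicit-equality rows and the rest) is the correct one. I would only caution that your claim that the presentations actually used later in the paper are automatically ``saturated'' is asserted rather than proved: slicing $\prod_{y}\Dist{J^{(y)}}$ with an arbitrary linear subspace $W$ can create implicit equalities among the nonnegativity constraints (e.g.\ if $W$ forces $x^{(y)}_1+x^{(y)}_2=0$), so that clause either needs its own argument or should be replaced by the general normalisation you describe. The topological claims (regular closedness, dimension count via RREF) are handled correctly by the standard convexity facts you cite.
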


The standard hypercube $K=[0,1]^{J}$, the standard hypercuboids $K=\prod_{j \in J}[0, u_j]$ and the standard simplex $K=\Delta^J$ all have dimension $|J|$, with $\AffSubsp{K} = \reals^J$.
The polytope $\Dist{J}$ of probability distributions has dimension $|J|-1$, with the following minimal affine subspace:
\[
\AffSubsp{\Dist{J}}
= \suchthat{\underline{x} \in \reals^J}{\sum_{j \in J} x_j = 1}
\]
The polytope $\prod_{y \in Y}\Dist{J^{(y)}}$ of conditional probability distributions has dimension $\sum_{y \in Y}(|J^{(y)}|-1) = \left(\sum_{y \in Y}|J^{(y)}|\right)-|Y|$, with the following minimal affine subspace:
\[
\AffSubsp{\prod_{y \in Y}\Dist{J^{(y)}}}
= \suchthat{\underline{x} \in \reals^{\sqcup_{y \in Y} J^{(y)}}}{\forall y \in Y.\;\sum_{j \in J^{(y)}} x^{(y)}_j = 1}
\]
Note that the system of equations previously presented for this last example was already in RREF, without any zero rows.

Having established a suitable embedding subspace, we turn our attention to the boundary of polytopes.
It is an important fact that the boundary of an $n$-dimensional polytope $K$ is the union of a finite number of $(n-1)$-dimensional polytopes, its \emph{faces}: these can be obtained combinatorially, as the convex hull of certain subsets of the vertices of $K$, or geometrically, by intersecting $K$ with certain affine subspaces.

\begin{definition}
Let $K \subset \reals^J$ be a polytope and let $\AffSubsp{K} \subset \reals^J$ be the minimal affine subspace which contains $K$.
The \emph{boundary} of polytope $K$ is defined to be the topological boundary $\boundary{K}$ of $K$ within $\AffSubsp{K}$.
Similarly, the \emph{interior} of polytope $K$ is defined to be the topological interior $\interior{K}$ of $K$ within $\AffSubsp{K}$.
\end{definition}

\begin{observation}
The boundary of an $n$-dimensional polytope $K$, where $n \geq 1$, is the union $\boundary{K} = \bigcup \Faces{K}$ of a unique finite set $\Faces{K}$ of $(n-1)$-dimensional polytopes, where the vertices of each $F \in \Faces{K}$ are themselves vertices of $K$, i.e. $\Vertices{F} \subset \Vertices{K}$.
We refer to the polytopes in $\Faces{K}$ as the \emph{faces} of $K$.
% By convention, we set $\Faces{K} := \{K\}$ when $n=0$, i.e. when $K$ is a single point.
\end{observation}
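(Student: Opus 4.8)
The plan is to pass to the minimal affine subspace $\AffSubsp{K}$ provided by the preceding observation, where $K$ is full-dimensional, and to recover the faces there as the boundary pieces cut out by the essential bounding hyperplanes. First I would fix an affine isomorphism $\AffSubsp{K} \cong \reals^n$ (translate a point of $K$ to the origin, then pick a basis of the underlying direction space), under which $K$ becomes a bounded, closed, convex polytope with nonempty topological interior in $\reals^n$, and $\boundary{K}$, $\interior{K}$ of the Definition become the ordinary topological boundary and interior in $\reals^n$; since $n \geq 1$, $K$ is not a single point, so $\boundary{K} \neq \emptyset$. Inside $\reals^n$ I would then invoke the essential half-space description alluded to above: $K = \suchthat{\underline{x} \in \reals^n}{\underline{c}_r^T\underline{x} \leq d_r \text{ for all } r \in R}$ with $R$ finite, each $\underline{c}_r \neq \underline{0}$, each inequality irredundant (removing it strictly enlarges the solution set), and the family unique up to positive rescaling of each pair $(\underline{c}_r, d_r)$.

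Next, for each $r \in R$ I would set $H_r := \suchthat{\underline{x}}{\underline{c}_r^T\underline{x} = d_r}$ and $F_r := K \cap H_r$, and establish three facts. \emph{(a)} A point $\underline{x} \in K$ lies in $\interior{K}$ if and only if $\underline{c}_r^T\underline{x} < d_r$ for every $r$: ``$\Leftarrow$'' holds because finitely many strict linear inequalities cut out an open set, while ``$\Rightarrow$'' uses irredundancy, which for each $r$ supplies a point $\underline{y}_r$ satisfying all inequalities except the $r$-th; moving from a point with the $r$-th constraint tight towards $\underline{y}_r$ strictly increases $\underline{c}_r^T(\cdot)$ and hence immediately leaves $K$, so such a point cannot be interior. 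Thus $\boundary{K} = K \setminus \interior{K} = \bigcup_{r \in R} F_r$. \emph{(b)} Adjoining the equation $\underline{c}_r^T\underline{x} = d_r$ to the system defining $K$ keeps the solution set a bounded polytope (by the Definition), so $F_r$ is a polytope contained in $H_r$ and $\dim F_r \leq n-1$; for the reverse bound, the point $\underline{w}_r$ where the segment from an interior point $\underline{z}$ to $\underline{y}_r$ crosses $H_r$ satisfies every constraint $s \neq r$ strictly (by continuity), so a small $(n-1)$-ball around $\underline{w}_r$ inside $H_r$ lies in $K \cap H_r = F_r$, giving $\dim F_r = n-1$ (and $F_r \neq \emptyset$). \emph{(c)} Since $\underline{c}_r^T(\cdot) \leq d_r$ on all of $K$ and $F_r \neq \emptyset$, we have $F_r = \mathrm{argmax}_{\underline{x} \in K}\,\underline{c}_r^T\underline{x}$, an exposed face of $K$; hence every extreme point of $F_r$ is an extreme point of $K$, and since the vertices of a polytope are exactly its extreme points (as follows from the earlier observation characterising $\Vertices{\,\cdot\,}$), $\Vertices{F_r} \subseteq \Vertices{K}$. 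Taking $\Faces{K} := \suchthat{F_r}{r \in R}$ then yields a finite set with all the asserted properties.

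For uniqueness I would argue that $\Faces{K}$ is intrinsic: the $F_r$ are exactly the $(n-1)$-dimensional faces of $K$ in the convex-geometric sense — intersections of $K$ with its supporting hyperplanes, equivalently the maximal proper faces — which depend on no choice and can be read off from the essential half-space description, itself unique. The point to check is that any $(n-1)$-dimensional polytope $G \subseteq \boundary{K}$ has $\mathrm{aff}(G) = H_r$ for some $r$ and $G \subseteq F_r$: a relative-interior point $p$ of $G$ lies on some supporting hyperplane of $K$, and since $p$ is relative-interior to $G$ and $G$ is full-dimensional in $\mathrm{aff}(G)$, that supporting hyperplane must contain all of $G$ and hence equal $\mathrm{aff}(G)$, forcing it to be one of the $H_r$; combined with the fact that $\mathrm{relint}(F_r) = \suchthat{\underline{x} \in F_r}{\underline{c}_s^T\underline{x} < d_s \text{ for } s \neq r}$ meets no other $F_s$, this pins the decomposition down.

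The step I expect to be the main obstacle is the passage from the algebraic data (the irredundant inequality system) to the correct \emph{topological} statement — parts \emph{(a)} and \emph{(b)} — namely that the topological boundary inside $\AffSubsp{K}$ is precisely the union of the hyperplane slices and that these slices are genuinely of codimension one; both hinge on exploiting irredundancy to manufacture the auxiliary points $\underline{y}_r$ and $\underline{w}_r$. The remaining ingredients are either routine bookkeeping (reduction to the full-dimensional case, finiteness) or direct appeals to standard convex-geometry facts already invoked in the text: existence and uniqueness of the essential half-space description, and the identification of vertices with extreme points.
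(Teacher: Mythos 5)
The paper supplies no proof for this Observation: it appears in the preliminary Subsection 2.1, which the authors introduce as a recap of ``definitions and basic facts about polytopes,'' and unlike the Propositions of Subsections 2.2 onwards it has no entry in the proofs appendix. Your argument is therefore not competing with anything in the paper; judged on its own it is a correct and essentially standard derivation. The reduction to the full-dimensional case inside $\AffSubsp{K}$, the use of the irredundant half-space description, and the three steps (a)--(c) all go through: in (a) the segment from a tight point towards the witness $\underline{y}_r$ does immediately exit $K$, and in (b) the crossing point $\underline{w}_r$ of the segment from an interior point $\underline{z}$ to $\underline{y}_r$ does satisfy $\underline{c}_s^T\underline{w}_r < d_s$ for all $s \neq r$ (strictness at $\underline{z}$ plus weak feasibility at $\underline{y}_r$ gives strictness at any interior point of the segment), so the $(n-1)$-ball argument is sound; (c) is the standard fact that extreme points of an exposed face are extreme points of the ambient polytope. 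The only place to be careful is the uniqueness clause: as literally phrased, the Observation's conditions do not by themselves single out $\Faces{K}$ (e.g.\ for a cube one may replace a square facet by two triangles on the same four vertices and still have a finite set of $(n-1)$-dimensional polytopes with vertices in $\Vertices{K}$ covering $\boundary{K}$). Your reading of $\Faces{K}$ as the \emph{maximal} $(n-1)$-dimensional polytopes in the boundary, i.e.\ the exposed facets $F_r$, is the intended one and your supporting-hyperplane argument correctly shows that any $(n-1)$-dimensional polytope in $\boundary{K}$ is contained in some $F_r$; it would be worth stating that maximality convention explicitly so the uniqueness claim is airtight.
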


\begin{observation}
Let $K \subset \reals^J$ be an $n$-dimensional polytope, where $n \geq 1$.
Every face $F \in \Faces{K}$ can be obtained by intersecting $K$ with a unique affine subspace of $\AffSubsp{K}$.
Equivalently, it can be obtained by turning some inequality $\underline{c}^T\underline{x} \leq d$ defining $K$ into an equation $\underline{c}^T\underline{x} = d$ (excluding inequalities which already pair up to form an equation). 
\end{observation}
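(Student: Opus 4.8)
The plan is to reduce to the full-dimensional case and then exploit the essential half-space description of a polytope. First I would work entirely inside $\AffSubsp{K}$, so that we may assume $\AffSubsp{K} = \reals^J$ and $n = |J|$: by the preceding Observation, once the RREF procedure has been run the polytope $K$ is cut out inside $\AffSubsp{K}$ by inequalities alone (every pair of opposite inequalities has been absorbed into the system of equations that defines $\AffSubsp{K}$), and, since $K$ is a regular closed subset of its minimal affine subspace, the interior $\interior{K}$ is non-empty. Greedily discarding any inequality implied by the remaining ones, I obtain a finite index set $R$ and an irredundant system $\underline{c}_i^T\underline{x} \leq d_i$ ($i \in R$) still defining $K$, each of whose rows is among the originally given defining inequalities. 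A preliminary remark, used repeatedly: irredundancy forces each functional $\underline{c}_i^T$ to be non-constant on $\AffSubsp{K}$, so the set $H_i := \suchthat{\underline{x} \in \AffSubsp{K}}{\underline{c}_i^T\underline{x} = d_i}$ is a hyperplane of $\AffSubsp{K}$, i.e.\ $\dim H_i = n-1$.

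Next I would show that the candidate faces $F_i := K \cap H_i$ are exactly the $(n-1)$-dimensional polytopes comprising $\boundary{K}$. For each $i \in R$, irredundancy yields a point $\underline{x}^\ast$ with $\underline{c}_j^T\underline{x}^\ast \leq d_j$ for all $j \neq i$ but $\underline{c}_i^T\underline{x}^\ast > d_i$; the segment from a point $\underline{x}^\circ \in \interior{K}$ to $\underline{x}^\ast$ meets $H_i$ at a point $\underline{x}'$ where constraint $i$ is the only tight one (all other constraints being satisfied strictly, by convexity of $t \mapsto \underline{c}_j^T((1-t)\underline{x}^\circ + t\underline{x}^\ast)$). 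Hence a neighbourhood of $\underline{x}'$ in $H_i$ lies in $F_i$, so $\dim F_i = \dim H_i = n-1$; and every point of $F_i$ lies in $\boundary{K}$, because there the $i$-th inequality is tight and moving in the direction $+\underline{c}_i$ immediately leaves $K$. Conversely, a point of $K$ at which every inequality is strict lies in $\interior{K}$, so every point of $\boundary{K}$ satisfies $\underline{c}_i^T\underline{x} = d_i$ for some $i \in R$ and therefore lies in some $F_i$; thus $\boundary{K} = \bigcup_{i \in R} F_i$. A standard supporting-hyperplane argument (or the combinatorial description in the preceding Observations) shows each $F_i$ is itself a polytope with $\Vertices{F_i} \subset \Vertices{K}$, and one checks the $F_i$ are pairwise distinct since distinct $H_i$ cannot share an $(n-1)$-dimensional subset. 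Invoking the uniqueness clause of the preceding Observation, $\Faces{K} = \suchthat{F_i}{i \in R}$.

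Finally I would read off both assertions. Each face $F = F_i$ is the intersection of $K$ with the affine subspace $H_i$, and $H_i$ is the \emph{unique} affine subspace with this property: indeed $H_i = \mathrm{aff}(F_i)$ (an $(n-1)$-dimensional set inside the $n$-dimensional space $\AffSubsp{K}$ has a unique affine hull, which is a hyperplane), and any affine subspace $A \subseteq \AffSubsp{K}$ with $K \cap A = F$ must contain $\mathrm{aff}(F)$ while, since $F \subsetneq K$, $A$ cannot be all of $\AffSubsp{K}$, forcing $A = \mathrm{aff}(F) = H_i$. The ``equivalently'' clause is then immediate: $F_i$ is obtained by replacing the defining inequality $\underline{c}_i^T\underline{x} \leq d_i$ with the equation $\underline{c}_i^T\underline{x} = d_i$, and by construction the rows of the irredundant system are among the defining inequalities of $K$ that did not pair up into an equation.

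I expect the main obstacle to be the middle step: establishing the bijection between irredundant inequalities and $(n-1)$-dimensional faces — in particular the convexity argument that each irredundant inequality contributes a genuinely $(n-1)$-dimensional face and that these faces exhaust $\boundary{K}$ — together with the bookkeeping needed to match this up with the uniqueness statement quoted from the preceding Observation. The reduction to $\AffSubsp{K}$, the non-emptiness of $\interior{K}$, and the uniqueness of the affine subspace via affine hulls are all routine.
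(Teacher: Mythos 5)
The paper offers no proof of this Observation: it appears in the block of ``definitions and basic facts about polytopes'' that the authors explicitly recall from the standard literature without argument, so there is nothing to compare your proof against line by line. Judged on its own, your argument is correct and follows the standard route (reduce to the full-dimensional case inside $\AffSubsp{K}$, pass to an irredundant half-space description, show that the irredundant inequalities are in bijection with the $(n-1)$-dimensional pieces of $\boundary{K}$, then get uniqueness of the supporting hyperplane from the affine hull of the face). Your segment argument for why each irredundant inequality contributes a genuinely $(n-1)$-dimensional face, and the converse that a point with all inequalities strict is interior, are both sound.

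Two small points are worth tightening. First, your claim that the $F_i$ are pairwise distinct presupposes that the hyperplanes $H_i$ are distinct; you should note that if two irredundant inequalities cut the same hyperplane of $\AffSubsp{K}$ then they are either the same half-space (contradicting irredundancy) or opposite half-spaces, which would force $K \subseteq H_i$ and contradict $n$-dimensionality of $K$ in $\AffSubsp{K}$ --- this is also exactly where the parenthetical ``excluding inequalities which already pair up to form an equation'' is discharged. Second, the identification $\Faces{K} = \suchthat{F_i}{i \in R}$ leans on two things you only gesture at: that $\Vertices{F_i} \subset \Vertices{K}$, and the uniqueness clause of the preceding Observation (itself unproven in the paper, and true only for the decomposition into maximal $(n-1)$-dimensional faces). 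Given that the paper itself treats all of this as citable background, these are acceptable appeals rather than gaps, but in a self-contained write-up they are the steps that would need the most care.
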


The standard hypercube $[0, 1]^J$ has $2|J|$ faces, each face corresponding to either one of the $|J|$ inequalities $x_j \geq 0$ or one of the $|J|$ inequalities $x_j \leq 1$:
\begin{itemize}
    \item The face corresponding to $x_j \geq 0$ has vertices $\suchthat{\underline{x} \in \Vertices{[0,1]^J}}{x_j = 0}$.
    \item The face corresponding to $x_j \leq 1$ has vertices $\suchthat{\underline{x} \in \Vertices{[0,1]^J}}{x_j = 1}$.
\end{itemize}
Analogously, the standard hypercuboid $\prod_{j \in J}[0, u_j]$ has $2|J|$ faces, each face corresponding to either one of the $|J|$ inequalities $x_j \geq 0$ or one of the $|J|$ inequalities $x_j \leq u_j$:
\begin{itemize}
    \item The face corresponding to $x_j \geq 0$ has vertices $\suchthat{\underline{x} \in \Vertices{\prod_{j' \in J}[0, u_{j'}]}}{x_j = 0}$.
    \item The face corresponding to $x_j \leq u_j$ has vertices $\suchthat{\underline{x} \in \Vertices{\prod_{j' \in J}[0, u_{j'}]}}{x_j = u_j}$.
\end{itemize}
The standard simplex $\Delta^J$ has $|J|+1$ faces, $|J|$ faces corresponding to the inequalities $x_j \geq 0$ and one face corresponding to the inequality $\sum_{j \in J} x_j \leq 1$:
\begin{itemize}
    \item The face corresponding to $x_j \geq 0$ has vertices $\suchthat{\underline{x} \in \Vertices{\Delta^J}}{x_j = 0}$, which can be equivalently written as $\Vertices{\Delta^J}\backslash\{\underline{\delta}_j\}$.
    \item The face corresponding to $\sum_{j \in J} x_j \leq 1$ has vertices $\suchthat{\underline{x} \in \Vertices{\Delta^J}}{\sum_{j \in J} x_j = 1}$, which can be equivalently written as $\Vertices{\Delta^J}\backslash\{\underline{0}\}$.
\end{itemize}
The polytope of probability distributions $\Dist{J}$ has $|J|$ faces.
Each face corresponds to the inequality $x_j \geq 0$, for some $j \in J$, and it has the following vertices:
\[
\suchthat{\underline{x} \in \Vertices{\Dist{J}}}{x_j = 0}
=\suchthat{\underline{\delta}_i}{i \in J, i \neq j}
\]
The polytope of conditional probability distributions $\prod_{y \in Y}\Dist{J^{(y)}}$ has $\sum_{y \in Y}|J^{(y)}|$ faces.
Each face corresponds to the inequality $x^{(y)}_j \geq 0$, for some $y \in Y$ and $j \in J^{(y)}$, and it has the following vertices:
\[
\suchthat{\underline{x} \in \Vertices{\prod_{y \in Y}\Dist{J^{(y)}}}}{x^{(y)}_j = 0}
=\suchthat{\underline{\delta}_f}{f \in \prod_{y' \in Y}J^{(y')}, f(y) \neq j}
\]

\begin{remark}
Because each face $F \in \Faces{K}$ of an $n$-dimensional polytope $K$ is an $(n-1)$-dimensional polytope, we can iterate the construction and look at the faces of $F$, which are $(n-2)$-dimensional polytopes, then at their faces in turn, and so on until we reach the $0$-dimensional faces, i.e. the individual vertices of the original polytope $K$.
Together with $K$ at the top and the empty set at the bottom, this assemblage of faces forms a finite lattice under inclusion, known as the \emph{face poset} of $K$: the meet operation in the lattice is the set-theoretic intersection of faces, while the join operation is the convex closure of their set-theoretic union.
If we replace each polytope in the face poset with the subset of $\Vertices{K}$ of which it is convex hull, we obtain a sub-lattice of the powerset $\Subsets{\Vertices{K}}$ isomorphic to the face poset.
If we further forget the coordinates of the points, we are left with an \emph{abstract simplicial complex}, capturing the combinatorial properties of $K$ without reference to its geometric realisation.
However, neither the face poset nor abstract simplicial complexes play a role in this work, so we will not elaborate on them any further.
\end{remark}

\subsection{Constrained conditional probability distributions}
\label{subsection:geometry-causality-ccpd}

It is an important observation that complicated polytopes, such as the causality polytopes studied by this work, can be obtained by ``slicing'' simple higher-dimensional polytopes, intersecting them with affine subspaces.
In this section we recap a selection of results about polytopes conditional probability distributions under additional linear constraints.
These are all known results from linear programming---or simple consequences thereof---but we nevertheless include brief proofs for the benefit of readers less familiar with the field.

\begin{definition}
Let $K \subset \reals^J$ be a polytope.
We say that a polytope $K' \subset \reals^J$ is obtained by \emph{slicing} from $K$ if it takes the form $K' = K \cap W$ for some affine subspace $W \subset \reals^J$.
If we wish to specify the subspace, we say that $K'$ is obtained by \emph{slicing $K$ with $W$}.
We adopt the following notation for it:
\begin{equation}
    \Slice{W}{K} := K \cap W
\end{equation}
\end{definition}

\begin{proposition}
\label{proposition:slices-are-polytopes}
Let $K \subset \reals^J$ be a polytope, defined by a system of equations $A \underline{x} = \underline{b}$ and inequalities $C \underline{x} = \underline{d}$.
Let $W \subset \reals^J$ be an affine subspace, defined by a system of equations $A' \underline{x} = \underline{b}'$.
Then $\Slice{W}{K} \subset \reals^J$ is a polytope, defined by the equations and inequalities for $K$ together with the equations for $W$:
\[
\left(
\begin{array}{c}
A \\\hline A'
\end{array}
\right)
\underline{x}
=
\left(
\begin{array}{c}
b \\\hline b'
\end{array}
\right)
\hspace{3cm}
C \underline{x} \leq \underline{d}
\]
\end{proposition}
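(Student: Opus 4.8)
The plan is to unwind the definitions and observe that slicing by $W$ simply appends rows to the equation system defining $K$, leaving the inequalities untouched. First I would recall that, by definition of the slicing operation, $\Slice{W}{K} = K \cap W$; hence a vector $\underline{x} \in \reals^J$ lies in $\Slice{W}{K}$ if and only if it simultaneously satisfies the system $A\underline{x}=\underline{b}$, the system $C\underline{x}\leq\underline{d}$ (which together cut out $K$) and the system $A'\underline{x}=\underline{b}'$ (which cuts out $W$). This is just the logical conjunction of membership in $K$ and membership in $W$, so no geometric content is needed at this step.

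Next I would invoke the row-wise reading of matrix (in)equalities from the Remark following the definition of $\reals^J$: the conjunction of $A\underline{x}=\underline{b}$ and $A'\underline{x}=\underline{b}'$ holds precisely when every row of each system is satisfied, which is exactly the single equation system whose coefficient matrix is the vertical block matrix $\left(\begin{array}{c}A \\ \hline A'\end{array}\right)$ and whose right-hand side is the corresponding stacked vector $\left(\begin{array}{c}b \\ \hline b'\end{array}\right)$. The inequality system $C\underline{x}\leq\underline{d}$ is carried over verbatim. Thus $\Slice{W}{K}$ is exactly the joint solution set of the displayed equations and inequalities.

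Finally, to conclude that $\Slice{W}{K}$ is a \emph{polytope} in the sense of the Definition—and not merely the solution set of some mixed linear system—I would verify boundedness: since $\Slice{W}{K} = K \cap W \subseteq K$ and $K$ is bounded by hypothesis, the subset $\Slice{W}{K}$ is bounded as well. Together with the equational and inequality description from the previous paragraph, this shows $\Slice{W}{K}$ satisfies all the requirements of the Definition, with the stated defining system. I do not anticipate a genuine obstacle in this argument; the only point deserving a moment's care is precisely that the notion of polytope used here bakes in boundedness, so this must be explicitly inherited from $K$ rather than inferred from the equational form, which in general could describe an unbounded region.
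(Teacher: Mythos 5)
Your proof is correct and follows essentially the same route as the paper's: unwind the definitions of $K$, $W$ and slicing, and observe that membership in $K \cap W$ is the conjunction of the two systems, giving the stacked equation system with the inequalities carried over unchanged. The only difference is that you explicitly verify boundedness of $\Slice{W}{K}$ as a subset of the bounded $K$, a point the paper's proof leaves implicit; this is a harmless and indeed slightly more careful addition.
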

\begin{proof}
See \ref{proof:proposition:slices-are-polytopes}
\end{proof}

\begin{proposition}
\label{proposition:slicing-closed-under-iteration}
Let $K \subset \reals^J$ be a polytope and let $V, W \subset \reals^J$ be affine subspaces.
Slicing $\Slice{W}{K}$ with $V$ is the same as slicing $K$ with $W\cap V$:
\[
\Slice{V}{\Slice{W}{K}} = \Slice{V\cap W}{K}
\]
We say that ``slicing is closed under iteration''.
\end{proposition}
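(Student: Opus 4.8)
The plan is to reduce both sides to set-theoretic intersections and appeal to associativity and commutativity of $\cap$. By the definition of slicing, $\Slice{W}{K} = K \cap W$, and by Proposition \ref{proposition:slices-are-polytopes} this is again a polytope, so $\Slice{V}{\Slice{W}{K}}$ is a well-defined slice and unfolds to $(K \cap W) \cap V$. On the other side, the intersection $W \cap V$ of two affine subspaces of $\reals^J$ is again an affine subspace (it is either empty or a coset of the intersection of the underlying linear subspaces; equivalently, stacking the two defining systems of equations yields a system whose solution set is $W \cap V$), so $\Slice{W \cap V}{K}$ is also a well-defined slice, unfolding to $K \cap (W \cap V)$.

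The remaining step is then simply the identity $(K \cap W) \cap V = K \cap (W \cap V)$, which holds by associativity of set intersection (and commutativity, to match $W \cap V$ with $V \cap W$ if one wishes to state it symmetrically). This immediately gives $\Slice{V}{\Slice{W}{K}} = \Slice{V \cap W}{K}$.

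I expect no genuine obstacle here: the content of the statement is entirely formal, and the only thing requiring a word of justification is that the intersection of affine subspaces is affine — which one can see concretely by noting that if $W$ is defined by $A'\underline{x} = \underline{b}'$ and $V$ by $A''\underline{x} = \underline{b}''$, then $W \cap V$ is defined by the stacked system $\left(\begin{smallmatrix} A' \\ A'' \end{smallmatrix}\right)\underline{x} = \left(\begin{smallmatrix} \underline{b}' \\ \underline{b}'' \end{smallmatrix}\right)$, matching exactly how Proposition \ref{proposition:slices-are-polytopes} appends the equations of the slicing subspace to those of $K$. So the iterated slice and the single slice by $W \cap V$ literally produce the same defining system (equations of $K$, then equations of $W$, then equations of $V$), confirming the equality at the level of presentations as well as of point sets.
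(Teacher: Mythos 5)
Your proof is correct and is essentially the paper's own argument: both sides unfold to set-theoretic intersections and the equality is just associativity (and commutativity) of $\cap$. The extra remarks on well-definedness --- that the intersection of affine subspaces is affine and that slices of polytopes are polytopes via Proposition \ref{proposition:slices-are-polytopes} --- are sensible but not a departure from the paper's route.
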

\begin{proof}
See \ref{proof:proposition:slicing-closed-under-iteration}
\end{proof}

As a simple example, we consider the polytope of probability distributions $\Dist{J}$.
This polytope is obtained by slicing the standard hypercube $[0,1]^J$ with a single ``normalisation equation'', stating that the entries of the probability distribution must sum to 1:
\[
\Dist{J} = \Slice{W}{[0,1]^J}
\hspace{2cm}
W := \suchthat{\underline{x} \in \reals^J}{\sum_{j \in J} x_j = 1}
\]
More generally, the polytope of conditional probability distributions on some $Y$ is obtained by slicing the standard hypercube with a family of ``normalisation equations'', stating that the entries of the probability distribution conditional to each $y \in Y$ must sum to 1.

\begin{definition}
Let $Y$ be a finite non-empty set and let $\underline{J}=\left(J^{(y)}\right)_{y \in Y}$ be a family of finite non-empty sets.
The corresponding \emph{normalisation equations} are defined as follows:
\[
    \forall y \in Y.\;
    \sum_{j \in J^{(y)}} x^{(y)}_j = 1
\]
We write $\NormEqs{\underline{J}}$ for the affine subspace of $\reals^{\sqcup_{y \in Y} J^{(y)}}$ defined by the equations.
\end{definition}

\begin{proposition}
\label{proposition:probdist-from-slicing-hypercube}
Let $Y$ be a finite non-empty set and let $\underline{J}=\left(J^{(y)}\right)_{y \in Y}$ be a family of finite non-empty sets.
The polytope of conditional probability distributions $\prod_{y \in Y} \Dist{J^{(y)}}$ is obtained by slicing the standard hypercube $[0,1]^{\sqcup_{y \in Y} J^{(y)}} = \prod_{y \in Y} [0,1]^{J^{(y)}}$ with the normalisation equations:
\[
\prod_{y \in Y} \Dist{J^{(y)}}
= [0,1]^{\sqcup_{y \in Y} J^{(y)}} \cap \NormEqs{\underline{J}}
\]
In particular, we have $\AffSubsp{\prod_{y \in Y} \Dist{J^{(y)}}} = \NormEqs{\underline{J}}$.
\end{proposition}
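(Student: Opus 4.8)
The plan is to reduce the statement to a comparison of the explicit systems of equations and inequalities defining the three objects involved, and then to observe that the only discrepancy is a family of redundant inequalities.

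First I would write out these explicit descriptions. Applying the Observation on product polytopes to the factors $\Dist{J^{(y)}}$ (each defined, as in Subsection~\ref{subsection:geometry-causality-polytopes}, by the equation $\sum_{j \in J^{(y)}} x^{(y)}_j = 1$ and the inequalities $-x^{(y)}_j \leq 0$) shows that $\prod_{y \in Y}\Dist{J^{(y)}} \subset \reals^{\sqcup_{y \in Y}J^{(y)}}$ is cut out by the equations $\sum_{j \in J^{(y)}} x^{(y)}_j = 1$ for all $y \in Y$ together with the inequalities $-x^{(y)}_j \leq 0$ for all $y \in Y$ and $j \in J^{(y)}$. Applying the same Observation to the factors $[0,1]^{J^{(y)}}$ shows that $[0,1]^{\sqcup_{y}J^{(y)}} = \prod_{y}[0,1]^{J^{(y)}}$ is cut out by the inequalities $-x^{(y)}_j \leq 0$ and $x^{(y)}_j \leq 1$. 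Finally, $\NormEqs{\underline{J}}$ is by definition the affine subspace of $\reals^{\sqcup_{y}J^{(y)}}$ defined by the equations $\sum_{j \in J^{(y)}} x^{(y)}_j = 1$.

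Next I would apply Proposition~\ref{proposition:slices-are-polytopes} to conclude that $[0,1]^{\sqcup_y J^{(y)}} \cap \NormEqs{\underline{J}}$ is the polytope defined by the inequalities $-x^{(y)}_j \leq 0$ and $x^{(y)}_j \leq 1$ together with the equations $\sum_{j \in J^{(y)}} x^{(y)}_j = 1$. Comparing with the description of $\prod_{y}\Dist{J^{(y)}}$ above, the two systems differ only in that the former also contains the upper bounds $x^{(y)}_j \leq 1$; so it suffices to show these are implied by the remaining constraints. Fixing $y \in Y$ and $j \in J^{(y)}$, any point satisfying $\sum_{j' \in J^{(y)}} x^{(y)}_{j'} = 1$ and $x^{(y)}_{j'} \geq 0$ for all $j' \in J^{(y)}$ also satisfies $x^{(y)}_j = 1 - \sum_{j' \neq j} x^{(y)}_{j'} \leq 1$, since the subtracted sum is non-negative. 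Hence the two polytopes have the same solution set, which gives the displayed equality.

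For the final claim I would invoke the Observation characterising the minimal affine subspace $\AffSubsp{K}$. Starting from the system defining $\prod_y \Dist{J^{(y)}}$ given above, no pair of inequalities combines into an equation (each $-x^{(y)}_j \leq 0$ involves a single coordinate, and no matching $x^{(y)}_j \leq 0$ is present), so step~(1) of the procedure leaves the equations $\sum_{j \in J^{(y)}} x^{(y)}_j = 1$ untouched. These equations are already in reduced row echelon form with no zero rows: since the index sets $J^{(y)}$ are pairwise disjoint inside $\sqcup_{y}J^{(y)}$, each equation has its own private block of pivot columns. Thus the procedure returns exactly $\NormEqs{\underline{J}}$, as claimed. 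The only genuine content is the one-line redundancy argument for $x^{(y)}_j \leq 1$, and I do not anticipate any real obstacle; the main thing to be careful about is the bookkeeping with the disjoint-union indexing and the (easy but nontrivial) justification of the RREF claim via block-disjointness of the variable sets.
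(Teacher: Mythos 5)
Your proposal is correct and follows essentially the same route as the paper's own proof: both compare the explicit defining systems and observe that the upper bounds $x^{(y)}_j \leq 1$ become redundant in the presence of the normalisation equation and the non-negativity constraints. Your additional verification of the $\AffSubsp{\cdot}$ claim via the RREF procedure is a welcome bit of extra care that the paper leaves implicit.
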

\begin{proof}
See \ref{proof:proposition:probdist-from-slicing-hypercube}
\end{proof}

The characterisation of the polytope $\prod_{y \in Y} \Dist{J^{(y)}}$ of conditional probability distributions by slicing of the standard hypercube provides an interesting introductory example for the technique, but it is not directly useful in this work: in defining our ``causaltopes'', we will take the polytope $\prod_{y \in Y} \Dist{J^{(y)}}$ itself as a starting point.
However, an alternative characterisation of $\prod_{y \in Y} \Dist{J^{(y)}}$, by slicing of the same standard hypercube, affords us the opportunity to introduce a new concept of ``quasi-normalisation equations'', which will play an important role in our causal decompositions.

\begin{definition}
Let $Y$ be a finite non-empty set, with a total order $Y = \{y_1,...,y_n\}$ fixed on it.
Let $\underline{J}=\left(J^{(y)}\right)_{y \in Y}$ be a family of finite non-empty sets.
The corresponding \emph{quasi-normalisation equations} are defined as follows:
\[
    \forall i \in \{1,...,n-1\}.\;
    \sum_{j \in J^{(y_i)}} x^{(y_i)}_j = \sum_{j \in J^{(y_{i+1})}} x^{(y_{i+1})}_j
\]
We write $\QNormEqs{\underline{J}}$ for the affine subspace of $\reals^{\sqcup_{y \in Y} J^{(y)}}$ defined by the equations.
The polytope of \emph{quasi-normalised conditional distributions} is defined by slicing the standard hypercube $[0,1]^{\sqcup_{y \in Y}} J^{(y)}$ with the quasi-normalisation equations:
\[
    [0,1]^{\sqcup_{y \in Y} J^{(y)}} \cap \QNormEqs{\underline{J}}
\]
\end{definition}

\begin{proposition}
\label{proposition:qnorm-eqs-independent-of-order}
Let $Y$ be a finite non-empty set and let $\underline{J}=\left(J^{(y)}\right)_{y \in Y}$ be a family of finite non-empty sets.
The affine subspace $\QNormEqs{\underline{J}}$ is independent of the specific choice of total order for $Y$, and hence so is the polytope of quasi-normalised conditional distributions.
\end{proposition}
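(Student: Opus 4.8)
The plan is to show that, regardless of the total order chosen on $Y$, the quasi-normalisation equations cut out exactly the same affine subspace, namely the one on which all ``column sums'' coincide. For each $y \in Y$, introduce the linear functional $s_y \colon \reals^{\sqcup_{y' \in Y} J^{(y')}} \to \reals$ given by $s_y(\underline{x}) := \sum_{j \in J^{(y)}} x^{(y)}_j$. With this notation, for a total order $Y = \{y_1,\dots,y_n\}$ the quasi-normalisation equations read $s_{y_i}(\underline{x}) = s_{y_{i+1}}(\underline{x})$ for $i \in \{1,\dots,n-1\}$, so $\QNormEqs{\underline{J}}$ is the solution set of this system.

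First I would observe that these $n-1$ equations, taken together, have the same solution set as the single condition ``$s_y(\underline{x}) = s_{y'}(\underline{x})$ for all $y, y' \in Y$''. One inclusion is immediate, since the consecutive pairs $(y_i, y_{i+1})$ are among all pairs of elements of $Y$. For the converse, given $y, y' \in Y$, write $y = y_a$ and $y' = y_b$ with $a \leq b$ (without loss of generality); chaining the consecutive equations yields $s_{y_a}(\underline{x}) = s_{y_{a+1}}(\underline{x}) = \cdots = s_{y_b}(\underline{x})$, hence $s_y(\underline{x}) = s_{y'}(\underline{x})$. The degenerate case $|Y| = 1$ is trivial: there are no equations, $\QNormEqs{\underline{J}}$ is the whole space, and the ``all sums equal'' condition is vacuous.

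Consequently $\QNormEqs{\underline{J}} = \suchthat{\underline{x} \in \reals^{\sqcup_{y \in Y} J^{(y)}}}{\forall y, y' \in Y.\; s_y(\underline{x}) = s_{y'}(\underline{x})}$, and the right-hand side refers to no ordering of $Y$ whatsoever; this proves the claimed order-independence of $\QNormEqs{\underline{J}}$. Since the polytope of quasi-normalised conditional distributions is, by definition, obtained by slicing the fixed hypercube $[0,1]^{\sqcup_{y \in Y} J^{(y)}}$ with $\QNormEqs{\underline{J}}$, independence of the subspace immediately gives independence of the polytope.

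I do not expect any genuine obstacle here: the mathematical content is simply the elementary fact that the equivalence relation on $Y$ generated by the edges of a path (the consecutive pairs of a total order) is the total equality relation on $Y$. The only points worth a word of care are the empty-system case $|Y| = 1$, and phrasing the equivalence of the two systems at the level of solution sets rather than as literal equality of equation systems, since distinct orders do produce distinct—though equivalent—systems.
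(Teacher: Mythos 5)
Your proof is correct and takes essentially the same route as the paper, which simply invokes reflexive--transitive closure to replace the chain of consecutive equations by the order-independent system $s_y(\underline{x}) = s_{y'}(\underline{x})$ for all $y, y' \in Y$. Your version just spells out the two inclusions and the degenerate case $|Y|=1$ explicitly.
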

\begin{proof}
See \ref{proof:proposition:qnorm-eqs-independent-of-order}
\end{proof}

\begin{proposition}
\label{proposition:mass-of-qnorm-dist}
For each quasi-normalised conditional distribution $\underline{u}$, there exists a unique $\mass{\underline{u}} \in [0,1]$ and a distribution $\underline{e} \in \prod_{y \in Y} \Dist{J^{(y)}}$ such that:
\[
    \underline{u} = \mass{\underline{u}} \underline{e}
\]
We refer to $\mass{\underline{u}}$ as the \emph{mass} of the quasi-normalised distribution $\underline{u}$.
If $\mass{\underline{u}} > 0$, the distribution $\underline{e}$ is furthermore unique. 
\end{proposition}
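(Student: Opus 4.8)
The plan is to identify the scalar $\mass{\underline{u}}$ with the common value of the block-sums of $\underline{u}$ and then obtain the rest by dividing through. Fix any $y \in Y$ and put $m := \sum_{j \in J^{(y)}} u^{(y)}_j$. The quasi-normalisation equations force the chain $\sum_{j \in J^{(y_1)}} u^{(y_1)}_j = \cdots = \sum_{j \in J^{(y_n)}} u^{(y_n)}_j$, so every block-sum equals $m$; in particular $m$ does not depend on the chosen $y$ (and, by Proposition~\ref{proposition:qnorm-eqs-independent-of-order}, not on the ordering of $Y$ either). Non-negativity of the coordinates of $\underline{u}$ gives $m \geq 0$, while membership of $\underline{u}$ in the polytope of quasi-normalised conditional distributions gives $m \leq 1$; this bound is exactly what places $\mass{\underline{u}}$ inside $[0,1]$. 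Moreover $m$ is the \emph{only} admissible scalar: if $\underline{u} = c\,\underline{e}'$ for some $c \in [0,1]$ and $\underline{e}' \in \prod_{y \in Y}\Dist{J^{(y)}}$, then summing this identity over $j \in J^{(y)}$ and using that $\underline{e}'$ restricts to a probability distribution on each $J^{(y)}$ yields $m = c$. I therefore set $\mass{\underline{u}} := m$, so uniqueness of the scalar is immediate.

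It remains to produce a witnessing distribution $\underline{e}$ and settle its uniqueness. If $m > 0$, define $e^{(y)}_j := u^{(y)}_j/m$: non-negativity is inherited from $\underline{u}$, and $\sum_{j \in J^{(y)}} e^{(y)}_j = m/m = 1$ for each $y$, so $\underline{e} \in \prod_{y \in Y}\Dist{J^{(y)}}$ and $\mass{\underline{u}}\,\underline{e} = \underline{u}$ by construction. Any other $\underline{e}'$ with $\underline{u} = m\,\underline{e}'$ then satisfies $\underline{e}' = \frac{1}{m}\underline{u} = \underline{e}$, which gives uniqueness. If instead $m = 0$, then $\underline{u} = \underline{0}$ (its coordinates are non-negative with vanishing block-sums), and choosing any $\underline{e} \in \prod_{y \in Y}\Dist{J^{(y)}}$ — the product is non-empty because every $J^{(y)}$ is non-empty — gives $\mass{\underline{u}}\,\underline{e} = \underline{0} = \underline{u}$; this also exhibits why $\underline{e}$ is no longer unique precisely when $\mass{\underline{u}} = 0$.

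The whole thing is essentially a single normalisation step once the scalar has been pinned down, so I do not expect a genuine obstacle; the points that need a little care are making $m$ well defined (via the quasi-normalisation equations, with Proposition~\ref{proposition:qnorm-eqs-independent-of-order} for good measure) before dividing by it, isolating the degenerate case $m = 0$ in which $\underline{u}$ collapses to the origin and the witnessing distribution becomes arbitrary, and reading the bound $m \leq 1$ off the polytope's defining constraints so that $\mass{\underline{u}}$ indeed lies in $[0,1]$.
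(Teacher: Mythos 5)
Your proof is correct and takes essentially the same route as the paper's: define $m$ as the common block-sum guaranteed by the quasi-normalisation equations, set $\mass{\underline{u}} := m$, divide by $m$ when it is positive, and treat $m = 0$ as the degenerate case where $\underline{u} = \underline{0}$ and the witnessing distribution is arbitrary. Your write-up is in fact slightly more explicit than the paper's about why $m \leq 1$ and why the scalar itself is unique, but these are refinements of the same argument rather than a different one.
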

\begin{proof}
See \ref{proof:proposition:mass-of-qnorm-dist}
\end{proof}

As previously mentioned, the ``causaltopes'' defined in this work are obtained by slicing polytopes of conditional probability distributions with linear subspaces.
Because slicing is the same as applying constraints, we refer to these as ``constrained'' conditional probability distributions.

\begin{definition}
Let $Y$ be a finite non-empty set and let $\underline{J}=\left(J^{(y)}\right)_{y \in Y}$ be a family of finite non-empty sets.
A polytope of \emph{constrained} conditional probability distributions is one in the following form, for some linear subspace $W \subseteq \reals^{\sqcup_{y \in Y} J^{(y)}}$:
\begin{equation}
\CCPD{W,\underline{J}} :=
\Slice{W}{\prod_{y \in Y} \Dist{J^{(y)}}}  
\subseteq \prod_{y \in Y} \Dist{J^{(y)}}
\end{equation}
% Additionally, we require that $W$ is spanned by vectors in $\CCPD{W,\underline{J}}$, so that the dimensions of $W$ and $\CCPD{W,\underline{J}}$ coincide.
The corresponding polytope of constrained quasi-normalised conditional distributions takes the following form:
\begin{equation}
\begin{array}{rcl}
\QNCCPD{W,\underline{J}} &:=&
\Slice{W}{[0,1]^{\sqcup_{y \in Y} J^{(y)}} \cap \QNormEqs{\underline{J}}}
\\
&=&\Slice{W \cap \QNormEqs{\underline{J}}}{[0,1]^{\sqcup_{y \in Y} J^{(y)}}}
\end{array}
\end{equation}
\end{definition}

Polytopes of constrained (quasi-normalised) conditional probability distributions have the same inclusion hierarchy as the ``minimal'' linear subspaces that define them:

\begin{proposition}
\label{proposition:hierarchy-of-ccpd-polytopes}
Let $Y$ be a finite non-empty set and let $\underline{J}=\left(J^{(y)}\right)_{y \in Y}$ be a family of finite non-empty sets.
Let $\CCPD{V,\underline{J}}$ and $\CCPD{U,\underline{J}}$ be polytopes of constrained conditional probability distributions.
Write $\langle \CCPD{V,\underline{J}} \rangle \subseteq V$ and $\langle \CCPD{U,\underline{J}} \rangle \subseteq U$ for the linear subspaces spanned by linear combinations of vectors in $\CCPD{V,\underline{J}}$ and $\CCPD{U,\underline{J}}$ respectively.
The following statements hold:
\begin{enumerate}
    \item if $V \subset U$ then $\CCPD{V,\underline{J}} \subseteq \CCPD{U,\underline{J}}$
    \item $\CCPD{\langle\CCPD{V,\underline{J}}\rangle,\underline{J}} = \CCPD{V,\underline{J}}$
    \item if $\QNCCPD{V,\underline{J}} \subseteq \QNCCPD{U,\underline{J}}$ then $\CCPD{V,\underline{J}} \subseteq \CCPD{U,\underline{J}}$
    \item if $\CCPD{V,\underline{J}} \subseteq \CCPD{U,\underline{J}}$ then $\langle\CCPD{V,\underline{J}}\rangle \subseteq \langle\CCPD{V,\underline{J}}\rangle$
    \item if $\CCPD{V,\underline{J}} \subseteq \CCPD{U,\underline{J}}$ then $\QNCCPD{V,\underline{J}} \subseteq \QNCCPD{U,\underline{J}}$
\end{enumerate}
\end{proposition}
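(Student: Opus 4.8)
The plan is to dispatch the five statements in the order (1), (4), (2), (3), (5), since only the last two carry any content. The common toolkit is that $\CCPD{W,\underline{J}} = W \cap \prod_{y \in Y}\Dist{J^{(y)}}$ and, using Proposition~\ref{proposition:slicing-closed-under-iteration}, $\QNCCPD{W,\underline{J}} = W \cap \QNormEqs{\underline{J}} \cap [0,1]^{\sqcup_{y\in Y}J^{(y)}}$ are both \emph{set-theoretic intersections}, and that the constraining sets $V$ and $U$ are \emph{linear} subspaces (so each contains $\underline{0}$ and is closed under scalar multiplication). Statement (1) is then monotonicity of intersection. Statement (4) --- whose conclusion should read $\langle\CCPD{V,\underline{J}}\rangle \subseteq \langle\CCPD{U,\underline{J}}\rangle$ --- is monotonicity of linear span. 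For (2), ``$\subseteq$'' follows from (1) since $\langle\CCPD{V,\underline{J}}\rangle \subseteq V$, and ``$\supseteq$'' holds because each $\underline{x} \in \CCPD{V,\underline{J}}$ is trivially a linear combination of elements of $\CCPD{V,\underline{J}}$, hence lies in $\langle\CCPD{V,\underline{J}}\rangle \cap \prod_{y\in Y}\Dist{J^{(y)}}$.

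For (3) the crux is the elementary observation that a normalised conditional distribution is automatically quasi-normalised: every $\underline{x} \in \prod_{y\in Y}\Dist{J^{(y)}}$ has all block sums $\sum_{j\in J^{(y)}}x^{(y)}_j$ equal to $1$, hence equal to each other, so $\prod_{y\in Y}\Dist{J^{(y)}} \subseteq \QNormEqs{\underline{J}} \cap [0,1]^{\sqcup_{y\in Y}J^{(y)}}$. Consequently, if $\underline{x} \in \CCPD{V,\underline{J}}$ then $\underline{x} \in V \cap \QNormEqs{\underline{J}} \cap [0,1]^{\sqcup_{y\in Y}J^{(y)}} = \QNCCPD{V,\underline{J}} \subseteq \QNCCPD{U,\underline{J}} \subseteq U$; since also $\underline{x} \in \prod_{y\in Y}\Dist{J^{(y)}}$, this gives $\underline{x} \in \CCPD{U,\underline{J}}$.

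For (5) I would invoke the mass decomposition of Proposition~\ref{proposition:mass-of-qnorm-dist}: given $\underline{u} \in \QNCCPD{V,\underline{J}}$, write $\underline{u} = \mass{\underline{u}}\,\underline{e}$ with $\mass{\underline{u}} \in [0,1]$ and $\underline{e} \in \prod_{y\in Y}\Dist{J^{(y)}}$. If $\mass{\underline{u}} = 0$ then $\underline{u} = \underline{0}$, which lies in the linear subspace $U$, in $\QNormEqs{\underline{J}}$, and in the box, so $\underline{u} \in \QNCCPD{U,\underline{J}}$. If $\mass{\underline{u}} > 0$ then $\underline{e} = \mass{\underline{u}}^{-1}\underline{u} \in V$ by linearity of $V$, so $\underline{e} \in V \cap \prod_{y\in Y}\Dist{J^{(y)}} = \CCPD{V,\underline{J}} \subseteq \CCPD{U,\underline{J}} \subseteq U$, and then $\underline{u} = \mass{\underline{u}}\,\underline{e} \in U$ by closure of $U$ under scaling; as $\underline{u}$ still lies in $\QNormEqs{\underline{J}}$ and the box, $\underline{u} \in \QNCCPD{U,\underline{J}}$. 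The one genuinely delicate point of the whole proposition is this mass-zero branch: it is precisely here that one needs $V$ and $U$ to be linear rather than merely affine subspaces, guaranteeing $\underline{0} \in \QNCCPD{U,\underline{J}}$ unconditionally --- which in particular makes the argument survive the degenerate case $\CCPD{V,\underline{J}} = \emptyset$, where $\QNCCPD{V,\underline{J}}$ collapses to $\{\underline{0}\}$. Everything else is routine bookkeeping with the intersection descriptions recorded at the start.
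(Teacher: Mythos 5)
Your proof is correct. Items (1), (2) and (4) coincide with the paper's argument (monotonicity of intersection and of linear span, plus the two trivial inclusions for (2)), and you rightly flag that the conclusion of (4) should read $\langle\CCPD{V,\underline{J}}\rangle \subseteq \langle\CCPD{U,\underline{J}}\rangle$. For (3) the paper intersects both sides of the hypothesis with $\NormEqs{\underline{J}}$ and invokes $\QNCCPD{W,\underline{J}}\cap\NormEqs{\underline{J}}=\CCPD{W,\underline{J}}$ from Proposition~\ref{proposition:ccpd-from-norm-of-qnccpd}; your observation that a normalised conditional distribution is automatically quasi-normalised is the same fact unpacked, so the two arguments are essentially identical. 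The genuine divergence is (5). The paper chains $\CCPD{V,\underline{J}}\subseteq\CCPD{U,\underline{J}} \Rightarrow \langle\CCPD{V,\underline{J}}\rangle\subseteq\langle\CCPD{U,\underline{J}}\rangle \Rightarrow \QNCCPD{\langle\CCPD{V,\underline{J}}\rangle,\underline{J}}\subseteq\QNCCPD{\langle\CCPD{U,\underline{J}}\rangle,\underline{J}}$ and then declares this to be the desired inclusion, which silently uses the identity $\QNCCPD{\langle\CCPD{W,\underline{J}}\rangle,\underline{J}}=\QNCCPD{W,\underline{J}}$; the nontrivial direction of that identity is proved by exactly the mass-decomposition argument you give (a quasi-normalised element of $W$ is either $\underline{0}$ or a positive multiple of an element of $\CCPD{W,\underline{J}}$). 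Your direct route via Proposition~\ref{proposition:mass-of-qnorm-dist}, with the explicit zero-mass branch resting on linearity of $U$, is therefore more self-contained and makes visible a step the paper leaves implicit; what the paper's route buys in exchange is a formal symmetry with item (2), at the cost of that hidden lemma.
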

\begin{proof}
See \ref{proof:proposition:hierarchy-of-ccpd-polytopes}
\end{proof}

Polytopes of constrained conditional probability distributions are closed under meet (i.e. under intersection).

\begin{proposition}
\label{proposition:ccpd-polytopes-meet}
Let $Y$ be a finite non-empty set and let $\underline{J}=\left(J^{(y)}\right)_{y \in Y}$ be a family of finite non-empty sets.
Let $\CCPD{V,\underline{J}}$ and $\CCPD{U,\underline{J}}$ be polytopes of constrained conditional probability distributions.
Then:
\[
\CCPD{V,\underline{J}} \cap \CCPD{U,\underline{J}}
= \CCPD{V\cap U,\underline{J}}
\]
\end{proposition}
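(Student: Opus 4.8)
The plan is to unfold the definition of $\CCPD{-,\underline{J}}$ as a slice of the product polytope $\prod_{y \in Y}\Dist{J^{(y)}}$ and then reduce the claimed identity to the idempotency, commutativity and associativity of set-theoretic intersection. First I would observe that, since $V$ and $U$ are linear subspaces of $\reals^{\sqcup_{y\in Y}J^{(y)}}$, their intersection $V\cap U$ is again a linear subspace, so the right-hand side $\CCPD{V\cap U,\underline{J}}$ is indeed a well-defined polytope of constrained conditional probability distributions and the statement makes sense.

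Writing $P := \prod_{y\in Y}\Dist{J^{(y)}}$ for brevity, the computation is then
\[
\CCPD{V,\underline{J}} \cap \CCPD{U,\underline{J}}
= (P \cap V) \cap (P \cap U)
= P \cap (V \cap U)
= \CCPD{V\cap U,\underline{J}},
\]
where the first and last equalities are just the definition of slicing and the middle one collapses the two copies of $P$ using idempotency and commutativity of $\cap$. Equivalently, one can route the argument through Proposition \ref{proposition:slicing-closed-under-iteration}: since $\CCPD{V,\underline{J}} \cap \CCPD{U,\underline{J}} = (P\cap V)\cap(P\cap U) = (P\cap U)\cap V = \Slice{V}{\Slice{U}{P}}$, that proposition gives $\Slice{V}{\Slice{U}{P}} = \Slice{V\cap U}{P} = \CCPD{V\cap U,\underline{J}}$. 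If a presentation in terms of equations and inequalities is preferred, one simply stacks the normalisation equations for $\underline{J}$, the equations defining $V$ and the equations defining $U$, together with the non-negativity inequalities $x^{(y)}_j \geq 0$, exactly as in Proposition \ref{proposition:slices-are-polytopes}; this single system defines both sides of the identity.

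I do not expect any real obstacle here: the only point worth stating explicitly is the closure of linear subspaces under intersection, needed so that the notation $\CCPD{V\cap U,\underline{J}}$ is legitimate; everything else is routine bookkeeping with the defining equations and inequalities.
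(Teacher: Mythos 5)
Your proof is correct and follows essentially the same route as the paper's, which also just unfolds $\CCPD{-,\underline{J}}$ as an intersection with $\prod_{y\in Y}\Dist{J^{(y)}}$ and collapses the repeated factor by basic set algebra. The extra observation that $V\cap U$ is again a linear subspace (so the right-hand side is well formed) is a sensible addition, though the paper leaves it implicit.
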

\begin{proof}
See \ref{proof:proposition:ccpd-polytopes-meet}
\end{proof}

As one would expect, constrained conditional probability distributions can be recovered from constrained quasi-normalised conditional probability distributions by applying the normalisation equations.

\begin{proposition}
\label{proposition:ccpd-from-norm-of-qnccpd}
Let $Y$ be a finite non-empty set and let $\underline{J}=\left(J^{(y)}\right)_{y \in Y}$ be a family of finite non-empty sets.
For every linear subspace $W \subseteq \reals^{\sqcup_{y \in Y} J^{(y)}}$, we always have:
\[
\CCPD{W,\underline{J}}
= \Slice{\NormEqs{\underline{J}}}{\QNCCPD{W, \underline{J}}}
\]
For every $y \in Y$, we define a normalisation equation for the distribution conditional to $y$:
\[
\NormEqs{\underline{J}}^{(y)}
:= \suchthat{
    \underline{x} \in \reals^{\sqcup_{y \in Y} J^{(y)}}
}{
    \sum_{j \in J^{(y)}} x_j^{(y)} = 1
}
\]
Then for any individual choice of $y \in Y$ we also have:
\[
\CCPD{W,\underline{J}}
= \Slice{\NormEqs{\underline{J}}^{(y)}}{\QNCCPD{W, \underline{J}}}
\]
\end{proposition}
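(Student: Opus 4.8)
The proof turns on a single elementary fact about the subspaces involved: the full normalisation subspace $\NormEqs{\underline{J}}$ sits inside the quasi-normalisation subspace $\QNormEqs{\underline{J}}$, and, crucially, intersecting $\QNormEqs{\underline{J}}$ with even one of the individual normalisation equations already forces all of them, that is $\QNormEqs{\underline{J}} \cap \NormEqs{\underline{J}}^{(y)} = \NormEqs{\underline{J}}$ for every $y \in Y$, and in particular (taking all equations) $\QNormEqs{\underline{J}} \cap \NormEqs{\underline{J}} = \NormEqs{\underline{J}}$. To see this, observe that a point $\underline{x} \in \QNormEqs{\underline{J}}$ has all of its conditional masses $\sum_{j \in J^{(y')}} x^{(y')}_j$ equal to one common value: the quasi-normalisation equations chain consecutive masses together, and by Proposition \ref{proposition:qnorm-eqs-independent-of-order} this common value does not depend on the chosen order. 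If, moreover, $\underline{x} \in \NormEqs{\underline{J}}^{(y)}$, that common value is $1$, so every conditional mass equals $1$ and $\underline{x} \in \NormEqs{\underline{J}}$. The reverse inclusion is immediate, since masses all equal to $1$ are in particular pairwise equal and satisfy the $y$-th equation.

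Granting this, both identities follow by unfolding the definitions of $\QNCCPD{W,\underline{J}}$ and $\CCPD{W,\underline{J}}$ as iterated slices and reassociating the intersections via Proposition \ref{proposition:slicing-closed-under-iteration}. For the first identity I would compute
\[
\Slice{\NormEqs{\underline{J}}}{\QNCCPD{W,\underline{J}}}
= [0,1]^{\sqcup_{y \in Y} J^{(y)}} \cap W \cap \QNormEqs{\underline{J}} \cap \NormEqs{\underline{J}}
= W \cap \Big( [0,1]^{\sqcup_{y \in Y} J^{(y)}} \cap \NormEqs{\underline{J}} \Big),
\]
using $\QNormEqs{\underline{J}} \cap \NormEqs{\underline{J}} = \NormEqs{\underline{J}}$ in the last step; Proposition \ref{proposition:probdist-from-slicing-hypercube} then identifies the parenthesised set with $\prod_{y \in Y} \Dist{J^{(y)}}$, so the whole expression equals $\Slice{W}{\prod_{y \in Y} \Dist{J^{(y)}}} = \CCPD{W,\underline{J}}$. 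The second identity is the verbatim same chain with $\NormEqs{\underline{J}}^{(y)}$ in place of $\NormEqs{\underline{J}}$, now using $\QNormEqs{\underline{J}} \cap \NormEqs{\underline{J}}^{(y)} = \NormEqs{\underline{J}}$.

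I do not expect a real obstacle: the entire content is the remark that, once quasi-normalisation is assumed, one normalisation equation is as strong as all of them. The only care needed is bookkeeping — tracking that every set in sight lives in $\reals^{\sqcup_{y \in Y} J^{(y)}}$, and making sure the use of Proposition \ref{proposition:probdist-from-slicing-hypercube} is via its stated equality $\prod_{y \in Y} \Dist{J^{(y)}} = [0,1]^{\sqcup_{y \in Y} J^{(y)}} \cap \NormEqs{\underline{J}}$. If a pointwise argument for the non-trivial inclusion of the second identity were preferred, one could instead invoke the mass decomposition of Proposition \ref{proposition:mass-of-qnorm-dist}: for $\underline{u} \in \QNCCPD{W,\underline{J}}$ with $\sum_{j \in J^{(y)}} u^{(y)}_j = 1$, writing $\underline{u} = \mass{\underline{u}}\,\underline{e}$ with $\underline{e} \in \prod_{y \in Y} \Dist{J^{(y)}}$ gives $\mass{\underline{u}} = \sum_{j \in J^{(y)}} u^{(y)}_j = 1$, hence $\underline{u} = \underline{e} \in \prod_{y \in Y} \Dist{J^{(y)}}$; since also $\underline{u} \in W$, this yields $\underline{u} \in \CCPD{W,\underline{J}}$. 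I would present the subspace-level computation as the main argument, since it handles both identities uniformly.
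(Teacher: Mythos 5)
Your proposal is correct and follows essentially the same route as the paper: both rest on closure of slicing under iteration together with the key identities $\QNormEqs{\underline{J}} \cap \NormEqs{\underline{J}} = \NormEqs{\underline{J}}$ and $\QNormEqs{\underline{J}} \cap \NormEqs{\underline{J}}^{(y)} = \NormEqs{\underline{J}}$, the latter justified exactly as in the paper by noting that quasi-normalisation forces all conditional masses to agree. Your write-up merely makes the intersection bookkeeping and the appeal to Proposition \ref{proposition:probdist-from-slicing-hypercube} more explicit than the paper does.
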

\begin{proof}
See \ref{proof:proposition:ccpd-from-norm-of-qnccpd}
\end{proof}

Given two nested polytopes $\CCPD{V,\underline{J}} \subset \CCPD{U,\underline{J}}$ of constrained conditional probability distributions, a key task in our work will be to find the largest ``fraction'' of a distribution $\underline{u} \in \CCPD{U,\underline{J}}$ which is ``supported'' by $\CCPD{V,\underline{J}}$, i.e. to find a decomposition $\underline{u} = \underline{v} + \underline{w}$ where $\underline{v} \in \QNCCPD{V,\underline{J}}$, $\underline{w} \in \QNCCPD{U,\underline{J}}$ and the mass of $\underline{v}$ is as large as possible.
The following result---a consequence of our polytopes being defined by linear constraints---significantly simplifies this task, by removing the need to explicitly enforce $\underline{w} \in \QNCCPD{U,\underline{J}}$ in our linear programs.

\begin{proposition}
\label{proposition:difference-in-super-polytope}
Let $Y$ be a finite non-empty set and let $\underline{J}=\left(J^{(y)}\right)_{y \in Y}$ be a family of finite non-empty sets.
Let $\CCPD{V,\underline{J}} \subset \CCPD{U,\underline{J}}$ be two nested polytopes of constrained conditional probability distributions and let $\underline{u} \in \CCPD{U,\underline{J}}$.
If $\underline{v} \in \CCPD{V,\underline{J}}$ is such that $\underline{v} \leq \underline{u}$, then necessarily:
\[
\underline{u}-\underline{v} \in \CCPD{U,\underline{J}}
\]
\end{proposition}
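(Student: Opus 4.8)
The plan is to set $\underline{w} := \underline{u} - \underline{v}$ and check, one family at a time, that $\underline{w}$ satisfies every linear equation and inequality cutting out the relevant super-polytope $\QNCCPD{U,\underline{J}}$ inside $\reals^{\sqcup_{y \in Y} J^{(y)}}$, namely: membership in the constraining subspace $U$; the box constraints $0 \leq w^{(y)}_j \leq 1$; and the quasi-normalisation equations $\QNormEqs{\underline{J}}$. Since all three are linear (in)equalities and $\underline{w}$ is a difference of two vectors already known to behave well with respect to them, each check is short, and the content is entirely in assembling the right hypotheses; this is precisely why the polytopes being defined by linear constraints makes the statement work.

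For the subspace constraint, note that $\underline{v}$ lies in $U$: under the printed hypothesis $\underline{v} \in \CCPD{V,\underline{J}} \subseteq \CCPD{U,\underline{J}} \subseteq U$ directly from the assumed nesting, while in the quasi-normalised reading one first applies the second and fourth parts of Proposition~\ref{proposition:hierarchy-of-ccpd-polytopes} to pass to the spans $\langle\CCPD{V,\underline{J}}\rangle \subseteq \langle\CCPD{U,\underline{J}}\rangle$ (which changes neither polytope) and so again gets $\underline{v} \in U$. As $\underline{u} \in U$ too and $U$ is a linear subspace, $\underline{w} = \underline{u} - \underline{v} \in U$. For the box constraints, nonnegativity $w^{(y)}_j = u^{(y)}_j - v^{(y)}_j \geq 0$ is exactly the hypothesis $\underline{v} \leq \underline{u}$, whereas $w^{(y)}_j \leq u^{(y)}_j \leq 1$ holds because $\underline{u}$ is a conditional probability distribution, so each coordinate is at most $1$. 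Finally, for each $y \in Y$,
\[
\sum_{j \in J^{(y)}} w^{(y)}_j
= \sum_{j \in J^{(y)}} u^{(y)}_j - \sum_{j \in J^{(y)}} v^{(y)}_j
= 1 - \mass{\underline{v}},
\]
where the first sum is $1$ by normalisation of $\underline{u}$ and the second equals the constant $\mass{\underline{v}}$ by Proposition~\ref{proposition:mass-of-qnorm-dist} (simply $1$ if $\underline{v}$ is itself normalised). Since $1 - \mass{\underline{v}}$ does not depend on $y$, all conditional masses of $\underline{w}$ coincide, i.e.\ $\underline{w} \in \QNormEqs{\underline{J}}$. Combining the three items yields $\underline{w} \in \Slice{U \cap \QNormEqs{\underline{J}}}{[0,1]^{\sqcup_{y \in Y} J^{(y)}}} = \QNCCPD{U,\underline{J}}$, which is the claim; if $\mass{\underline{v}}$ happens to make $\underline{w}$ normalised, Proposition~\ref{proposition:ccpd-from-norm-of-qnccpd} upgrades this to $\underline{w} \in \CCPD{U,\underline{J}}$.

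I do not anticipate a genuine obstacle here: the ``box'' and ``(quasi-)normalisation'' parts are one-line computations, and the only point requiring mild care is putting $\underline{u}-\underline{v}$ back inside the constraining subspace $U$ — for which Proposition~\ref{proposition:hierarchy-of-ccpd-polytopes}, relating set inclusion of these polytopes to inclusion of their spans, is exactly the tool to invoke.
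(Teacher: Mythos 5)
Your proof is correct and follows the same core route as the paper's: write $\underline{w}=\underline{u}-\underline{v}$, use linearity of $U$ to get $\underline{w}\in U$, use $\underline{v}\leq\underline{u}$ for non-negativity, and check the normalisation-type constraints componentwise. Where you differ is in handling the normalisation itself, and here your version is actually the more careful one. The paper's proof simply asserts that $\underline{u}-\underline{v}$ is ``itself a conditional probability distribution'', which under the literal hypothesis $\underline{v}\in\CCPD{V,\underline{J}}$ cannot hold: if both $\underline{u}$ and $\underline{v}$ are normalised and $\underline{v}\leq\underline{u}$, then each conditional slice forces $\underline{v}=\underline{u}$ and the difference is the zero vector, which is not normalised. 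You correctly read the hypothesis in the quasi-normalised sense (the one actually used by Definition \ref{definition:component-supported-fraction} and the linear program of Observation \ref{observation:max-component-lp}), compute that each conditional mass of $\underline{w}$ equals $1-\mass{\underline{v}}$ independently of $y$, and land in $\QNCCPD{U,\underline{J}}$, upgrading to $\CCPD{U,\underline{J}}$ only when the mass works out. That is the statement the surrounding text needs, and your proof supplies the missing step rather than eliding it. One very minor remark: the appeal to Proposition \ref{proposition:hierarchy-of-ccpd-polytopes} for $\underline{v}\in U$ is not strictly necessary under the printed hypothesis, since $\CCPD{V,\underline{J}}\subseteq\CCPD{U,\underline{J}}\subseteq U$ already gives it directly, exactly as the paper does; but invoking it does no harm and is the right tool in the quasi-normalised reading.
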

\begin{proof}
See \ref{proof:proposition:difference-in-super-polytope}
\end{proof}

\begin{definition}
\label{definition:component-supported-fraction}
Let $Y$ be a finite non-empty set and let $\underline{J}=\left(J^{(y)}\right)_{y \in Y}$ be a family of finite non-empty sets.
Let $\underline{u} \in \CCPD{U,\underline{J}}$ be a constrained conditional probability distribution.
For any sub-polytope $\CCPD{V,\underline{J}} \subseteq \CCPD{U,\underline{J}}$, we give the following definitions:
\begin{itemize}
    \item A \emph{component} of $\underline{u}$ in $\CCPD{V,\underline{J}}$ is any $\underline{v} \in \QNCCPD{V,\underline{J}}$ such that $\underline{v} \leq \underline{u}$.
    \item A \emph{maximal} component of $\underline{u}$ in $\CCPD{V,\underline{J}}$ is one of maximal mass.
    \item The \emph{supported fraction} of $\underline{u}$ in $\CCPD{V,\underline{J}}$ is the mass of a maximal component of $\underline{u}$ in $\CCPD{V,\underline{J}}$.
\end{itemize}
Colloquially, we say that $\underline{u}$ is $X\%$ supported by $\CCPD{V,\underline{J}}$ to mean that the supported fraction of $\underline{u}$ in $\CCPD{V,\underline{J}}$ is $\frac{X}{100}$.
\end{definition}

\begin{observation}
\label{observation:max-component-lp}
Let $Y$ be a finite non-empty set and let $\underline{J}=\left(J^{(y)}\right)_{y \in Y}$ be a family of finite non-empty sets.
Let $\underline{u} \in \CCPD{U,\underline{J}}$ be a constrained conditional probability distribution and let $\CCPD{V,\underline{J}} \subseteq \CCPD{U,\underline{J}}$ be a sub-polytope.
Let $V$ be defined explicitly by a system of linear equations:
\[
V = \suchthat{\underline{x} \in \reals^{\sqcup_{y \in Y} J^{(y)}}}{A \underline{x} = \underline{0}}
\]
The maximal components $\underline{v}$ of $\underline{u}$ in $\CCPD{V,\underline{J}}$ are the solutions to the following linear program (LP):
\begin{equation}
\begin{array}{rl}
\text{maximise}
&\mass{\underline{v}}
\\
\text{ subject to:}&
\underline{v} \in \QNormEqs{\underline{J}}
\\&
\underline{v} \in V
\\&
\underline{v} \geq 0
\\&
\underline{v} \leq \underline{u}
\end{array}    
\end{equation}
The following form makes the mass and quasi-normalisation equations explicit, for any choice of total order $\{y_1,...,y_n\}$ on $Y$:
\begin{equation}
\begin{array}{rl}
\text{maximise}
&\sum\limits_{j \in J^{(y_1)}} v_j^{(y_1)}
\\
\text{ subject to:}&
\forall i \in \{1,...,n-1\}.\;
\sum\limits_{j \in J^{(y_i)}} v^{(y_i)}_j =\hspace{-2mm} \sum\limits_{j \in J^{(y_{i+1})}} v^{(y_{i+1})}_j
\\&
A \underline{v} = \underline{0}
\\&
\underline{v} \geq 0
\\&
\underline{v} \leq \underline{u}
\end{array}    
\end{equation}
\end{observation}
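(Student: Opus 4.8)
The plan is to show that the feasible region of the stated linear program coincides exactly with the set of components of $\underline{u}$ in $\CCPD{V,\underline{J}}$ (in the sense of Definition \ref{definition:component-supported-fraction}), and that the objective function computes the mass of such a component; the two displayed forms of the LP will then differ only in whether the quasi-normalisation equations and the mass are written out explicitly, and the claim will follow at once, because maximising the mass over all components is by definition the search for a maximal component.

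First I would unpack $\QNCCPD{V,\underline{J}}$: by definition it equals $[0,1]^{\sqcup_{y \in Y} J^{(y)}} \cap V \cap \QNormEqs{\underline{J}}$, and by hypothesis $V = \suchthat{\underline{x} \in \reals^{\sqcup_{y \in Y} J^{(y)}}}{A\underline{x} = \underline{0}}$. So a vector $\underline{v}$ is a component of $\underline{u}$ in $\CCPD{V,\underline{J}}$ precisely when $0 \leq v^{(y)}_j \leq 1$ for all $y,j$, $A\underline{v} = \underline{0}$, $\underline{v} \in \QNormEqs{\underline{J}}$, and $\underline{v} \leq \underline{u}$. The one point needing care is that the upper box constraint $v^{(y)}_j \leq 1$ does not appear in the LP; but it is redundant there, since $\underline{u} \in \CCPD{U,\underline{J}} \subseteq \prod_{y \in Y}\Dist{J^{(y)}}$ forces $u^{(y)}_j \leq 1$, so $0 \leq \underline{v} \leq \underline{u}$ already implies $0 \leq v^{(y)}_j \leq u^{(y)}_j \leq 1$. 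Hence the feasible set of the LP, namely $\suchthat{\underline{v}}{\underline{v} \in \QNormEqs{\underline{J}},\, \underline{v} \in V,\, \underline{v} \geq 0,\, \underline{v} \leq \underline{u}}$, is exactly the set of components of $\underline{u}$ in $\CCPD{V,\underline{J}}$. This set is non-empty ($\underline{0}$ lies in it) and compact (it is a closed subset of the box $\prod_{y,j}[0,u^{(y)}_j]$), so a maximal component exists.

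Next I would identify the objective with the mass. Any feasible $\underline{v}$ lies in $\QNCCPD{V,\underline{J}}$ and is in particular a quasi-normalised conditional distribution, so Proposition \ref{proposition:mass-of-qnorm-dist} gives $\underline{v} = \mass{\underline{v}}\,\underline{e}$ with $\underline{e} \in \prod_{y \in Y}\Dist{J^{(y)}}$; summing over $j \in J^{(y_1)}$ and using $\sum_{j \in J^{(y_1)}} e^{(y_1)}_j = 1$ yields $\sum_{j \in J^{(y_1)}} v^{(y_1)}_j = \mass{\underline{v}}$, which is precisely the objective of the second LP form. The explicit quasi-normalisation constraints appearing there are just the defining equations of $\QNormEqs{\underline{J}}$ for the chosen total order $\{y_1,\dots,y_n\}$, whose order-independence is Proposition \ref{proposition:qnorm-eqs-independent-of-order}, while the first LP form is the same program with $\QNormEqs{\underline{J}}$ and $\mass{\cdot}$ left implicit. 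Combining the two observations, the optimisers of the LP are exactly the components of $\underline{u}$ in $\CCPD{V,\underline{J}}$ of maximal mass, i.e.\ the maximal components.

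I do not anticipate a genuine obstacle: the statement is essentially a repackaging of the relevant definitions together with Proposition \ref{proposition:mass-of-qnorm-dist}. The only step requiring a moment's thought is the redundancy of the upper box constraint $v^{(y)}_j \leq 1$ noted above, which is what permits dropping the explicit membership $\underline{v} \in [0,1]^{\sqcup_{y \in Y} J^{(y)}}$ from the program in favour of $\underline{v} \leq \underline{u}$.
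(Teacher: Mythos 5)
Your proposal is correct: the paper states this as an Observation without any accompanying proof, treating it as an immediate unpacking of Definition \ref{definition:component-supported-fraction} together with Proposition \ref{proposition:mass-of-qnorm-dist}, which is exactly what you carry out. Your extra remarks --- that the upper box constraint $v^{(y)}_j \leq 1$ is made redundant by $0 \leq \underline{v} \leq \underline{u}$ with $\underline{u}$ a conditional probability distribution, and that the objective $\sum_{j \in J^{(y_1)}} v^{(y_1)}_j$ equals $\mass{\underline{v}}$ on the feasible set --- are precisely the details the paper leaves implicit, and they are right.
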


More generally, we will be interested to find the largest fraction of a (constrained) conditional probability distribution which is supported ``jointly'' by multiple sub-polytopes.
This is the same as being supported by the convex hull of the sub-polytopes, but with the caveat that the convex hull of polytopes of constrained conditional probability distributions need not be itself a polytope of constrained conditional probability distributions.
In particular, we have no way to apply Definition 
\ref{definition:component-supported-fraction} or Observation \ref{observation:max-component-lp} to such a convex hull.

\begin{definition}
\label{definition:decomposition-supported-fraction}
Let $Y$ be a finite non-empty set and let $\underline{J}=\left(J^{(y)}\right)_{y \in Y}$ be a family of finite non-empty sets.
Let $\underline{u} \in \CCPD{U,\underline{J}}$ be a constrained conditional probability distribution.
For any family $\left(\CCPD{V^{(z)},\underline{J}}\right)_{z \in Z}$ of sub-polytopes $\CCPD{V^{(z)},\underline{J}} \subseteq \CCPD{U,\underline{J}}$, we give the following definitions:
\begin{itemize}
    \item A \emph{decomposition} of $\underline{u}$ in $\left(\CCPD{V^{(z)},\underline{J}}\right)_{z \in Z}$ is any family $\left(\underline{v}^{(z)}\right)_{z \in Z}$ of distributions $\underline{v}^{(z)} \in \QNCCPD{V^{(z)},\underline{J}}$, the \emph{components}, such that $\sum_{z \in Z}\underline{v}^{(z)} \leq \underline{u}$.
    \item The \emph{mass} of a decomposition $\left(\underline{v}^{(z)}\right)_{z \in Z}$ is the sum of the masses of the individual components:
    \[
        \mass{\left(\underline{v}^{(z)}\right)_{z \in Z}} := \sum_{z \in Z} \mass{\underline{v}^{(z)}}
    \]
    \item A \emph{maximal decomposition} of $\underline{u}$ in $\left(\CCPD{V^{(z)},\underline{J}}\right)_{z \in Z}$ is one of maximal mass.
    \item The \emph{supported fraction} of $\underline{u}$ in $\left(\CCPD{V^{(z)},\underline{J}}\right)_{z \in Z}$ is the mass of a maximal decomposition of $\underline{u}$ in $\left(\CCPD{V^{(z)},\underline{J}}\right)_{z \in Z}$.
\end{itemize}
Colloquially, we say that $\underline{u}$ is $X\%$ supported by $\left(\CCPD{V^{(z)},\underline{J}}\right)_{z \in Z}$ to mean that the supported fraction of $\underline{u}$ in $\left(\CCPD{V^{(z)},\underline{J}}\right)_{z \in Z}$ is $\frac{X}{100}$.
\end{definition}

\begin{corollary}
\label{corollary:difference-in-super-polytope-decomp}
Let $Y$ be a finite non-empty set and let $\underline{J}=\left(J^{(y)}\right)_{y \in Y}$ be a family of finite non-empty sets.
Let $\CCPD{U,\underline{J}}$ be a polytope of constrained conditional probability distributions and let $\left(\CCPD{V^{(z)},\underline{J}}\right)_{z \in Z}$ be a family of sub-polytopes $\CCPD{V^{(z)},\underline{J}} \subseteq \CCPD{U,\underline{J}}$.
Let $\underline{u} \in \CCPD{U,\underline{J}}$ be a constrained conditional probability distribution.
If $\left(\underline{v}^{(z)}\right)_{z \in Z}$ is a decomposition of $\underline{u}$ in $\left(\CCPD{V^{(z)},\underline{J}}\right)_{z \in Z}$, then necessarily:
\[
\underline{u}-\sum_{z \in Z} \underline{v}^{(z)} \in \CCPD{U,\underline{J}}
\]
\end{corollary}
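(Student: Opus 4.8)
The plan is to reduce the statement to the single-component case already established in Proposition~\ref{proposition:difference-in-super-polytope}, by collapsing the family $\left(\underline{v}^{(z)}\right)_{z \in Z}$ into one aggregate vector and showing that this aggregate lies in a sub-polytope of $\CCPD{U,\underline{J}}$ to which the earlier proposition applies. Concretely, set $\underline{v} := \sum_{z \in Z} \underline{v}^{(z)}$. The decomposition hypothesis gives $\underline{v} \leq \underline{u}$ directly, so the only thing to verify before invoking Proposition~\ref{proposition:difference-in-super-polytope} is that $\underline{v}$ is a component of $\underline{u}$ in some $\CCPD{V,\underline{J}} \subseteq \CCPD{U,\underline{J}}$; that is, that $\underline{v} \in \QNCCPD{V,\underline{J}}$ for a suitable linear subspace $V \subseteq \reals^{\sqcup_{y \in Y} J^{(y)}}$.

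First I would check that $\underline{v}$ is quasi-normalised, i.e. $\underline{v} \in \QNormEqs{\underline{J}}$. Each $\underline{v}^{(z)}$ lies in $\QNCCPD{V^{(z)},\underline{J}} \subseteq [0,1]^{\sqcup_{y \in Y} J^{(y)}} \cap \QNormEqs{\underline{J}}$, and $\QNormEqs{\underline{J}}$ is a linear subspace, so it is closed under the (finite) sum, giving $\underline{v} \in \QNormEqs{\underline{J}}$ immediately. Nonnegativity $\underline{v} \geq \underline{0}$ is clear since each summand is nonnegative. The bound $\underline{v} \leq \underline{u}$ is the decomposition hypothesis, and since $\underline{u} \in \CCPD{U,\underline{J}} \subseteq \prod_{y \in Y}\Dist{J^{(y)}}$ has all coordinates at most $1$, we get $\underline{v} \in [0,1]^{\sqcup_{y \in Y} J^{(y)}}$ as well. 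Next I would exhibit the subspace: take $V := \sum_{z \in Z} V^{(z)}$ (the linear span of the union), noting each $\langle\CCPD{V^{(z)},\underline{J}}\rangle \subseteq V^{(z)}$ so $\underline{v}^{(z)} \in V^{(z)} \subseteq V$ and hence $\underline{v} \in V$ by linearity. Then $\underline{v} \in [0,1]^{\sqcup_{y \in Y} J^{(y)}} \cap \QNormEqs{\underline{J}} \cap V = \QNCCPD{V,\underline{J}}$. Applying the normalisation equations (Proposition~\ref{proposition:ccpd-from-norm-of-qnccpd} is not needed here; what matters is only the ambient inclusion) we see $\CCPD{V,\underline{J}} = \Slice{\NormEqs{\underline{J}}}{\QNCCPD{V,\underline{J}}} \subseteq \CCPD{U,\underline{J}}$, using part (i) of Proposition~\ref{proposition:hierarchy-of-ccpd-polytopes} together with $V \subseteq U$ (which holds since each $V^{(z)}$ is a defining subspace of a sub-polytope of $\CCPD{U,\underline{J}}$, hence $\langle\CCPD{V^{(z)},\underline{J}}\rangle \subseteq U$, and one may replace $V^{(z)}$ by its minimal version via part (ii)).

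With $\underline{v} \in \CCPD{V,\underline{J}}$, $\CCPD{V,\underline{J}} \subseteq \CCPD{U,\underline{J}}$, and $\underline{v} \leq \underline{u}$, Proposition~\ref{proposition:difference-in-super-polytope} yields $\underline{u} - \underline{v} \in \CCPD{U,\underline{J}}$, which is exactly $\underline{u} - \sum_{z \in Z} \underline{v}^{(z)} \in \CCPD{U,\underline{J}}$. The one point requiring a little care — the main obstacle, such as it is — is the subtlety that the $V^{(z)}$ appearing in a decomposition need not be the \emph{minimal} subspaces spanned by the respective polytopes, so to be safe about the inclusion $V \subseteq U$ one should first pass to minimal subspaces using Proposition~\ref{proposition:hierarchy-of-ccpd-polytopes}(ii), under which $\CCPD{V^{(z)},\underline{J}}$ is unchanged while its defining subspace becomes $\langle\CCPD{V^{(z)},\underline{J}}\rangle \subseteq U$; then $V := \sum_z \langle\CCPD{V^{(z)},\underline{J}}\rangle \subseteq U$ and the argument goes through. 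Everything else is routine linearity and the closure of the ambient cube and quasi-normalisation subspace under sums.
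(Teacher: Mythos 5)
Your proof is correct, but it takes a genuinely different route from the paper's. The paper disposes of this corollary in one line, by iterating Proposition~\ref{proposition:difference-in-super-polytope}: subtract $\underline{v}^{(1)}$ from $\underline{u}$, observe that the remainder still dominates $\underline{v}^{(2)}$ (since all components are nonnegative and their sum is bounded by $\underline{u}$), subtract again, and so on through the family. You instead collapse the whole family into the single aggregate $\underline{v}=\sum_{z}\underline{v}^{(z)}$ and apply the proposition once, which forces you to manufacture a subspace $V$ with $\underline{v}\in\QNCCPD{V,\underline{J}}$ and $\CCPD{V,\underline{J}}\subseteq\CCPD{U,\underline{J}}$; your choice $V=\sum_{z}\langle\CCPD{V^{(z)},\underline{J}}\rangle$ works, and your precaution of first passing to the minimal subspaces via Proposition~\ref{proposition:hierarchy-of-ccpd-polytopes}(ii) to secure $V\subseteq U$ is exactly the point that needs care. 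The iteration buys brevity and avoids the auxiliary subspace entirely; your version buys a single clean invocation and makes explicit where the hypotheses $\underline{v}\geq 0$, $\underline{v}\in\QNormEqs{\underline{J}}$ and $\underline{v}\leq\underline{u}$ come from. One shared caveat, inherited from the paper rather than introduced by you: your penultimate sentence asserts $\underline{v}\in\CCPD{V,\underline{J}}$ and the conclusion asserts membership in $\CCPD{U,\underline{J}}$, but both vectors are in general only quasi-normalised (the difference has mass $1-\sum_{z}\mass{\underline{v}^{(z)}}$); read literally, Proposition~\ref{proposition:difference-in-super-polytope} and the corollary would force these masses to vanish. Under the intended reading --- membership in the quasi-normalised polytopes $\QNCCPD{V,\underline{J}}$ and $\QNCCPD{U,\underline{J}}$ --- both your argument and the paper's go through unchanged.
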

\begin{proof}
See \ref{proof:corollary:difference-in-super-polytope-decomp}
\end{proof}

\begin{observation}
\label{observation:max-decomposition-lp}
Let $Y$ be a finite non-empty set and let $\underline{J}=\left(J^{(y)}\right)_{y \in Y}$ be a family of finite non-empty sets.
Let $\underline{u} \in \CCPD{U,\underline{J}}$ be a constrained conditional probability distribution and let $\left(\CCPD{V^{(z)},\underline{J}}\right)_{z \in Z}$ be a family of sub-polytopes $\CCPD{V^{(z)},\underline{J}} \subseteq \CCPD{U,\underline{J}}$.
Let each $V^{(z)}$ be defined explicitly by a system of linear equations:
\[
V^{(z)} = \suchthat{\underline{x} \in \reals^{\sqcup_{y \in Y} J^{(y)}}}{A^{(z)} \underline{x} = \underline{0}}
\]
The maximal components $\left(\underline{v}^{(z)}\right)_{z \in Z}$ of $\underline{u}$ in $\left(\CCPD{V^{(z)},\underline{J}}\right)_{z \in Z}$ are the solutions to the following linear program (LP):
\begin{equation}
\begin{array}{rl}
\text{maximise}
&\mass{\left(\underline{v}^{(z)}\right)_{z \in Z}}
\\
\text{ subject to:}&
\forall z \in Z.\;
\underline{v}^{(z)} \in \QNormEqs{\underline{J}}
\\&
\forall z \in Z.\;
\underline{v}^{(z)} \in V^{(z)}
\\&
\underline{v}^{(z)} \geq 0
\\&
\sum\limits_{z \in Z}\underline{v}^{(z)} \leq \underline{u}
\end{array}    
\end{equation}
Making the mass and linear systems explicit, we get:
\begin{equation}
\begin{array}{rl}
\text{maximise}
&\sum\limits_{z \in Z}\mass{\underline{v}^{(z)}}
\\
\text{ subject to:}&
\forall z \in Z.\;
\underline{v}^{(z)} \in \QNormEqs{\underline{J}}
\\&
\forall z \in Z.\;
A^{(z)} \underline{v}^{(z)} = 0
\\&
\underline{v}^{(z)} \geq 0
\\&
\sum\limits_{z \in Z}\underline{v}^{(z)} \leq \underline{u}
\end{array}    
\end{equation}
\end{observation}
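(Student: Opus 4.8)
The plan is to verify that the feasible region of the displayed linear program is exactly the set of decompositions of $\underline{u}$ in $\left(\CCPD{V^{(z)},\underline{J}}\right)_{z \in Z}$, and that its objective is the mass of that decomposition; the assertion then follows at once, since by Definition \ref{definition:decomposition-supported-fraction} a maximal decomposition is one of maximal mass, so the optimal solutions of the program are precisely the maximal decompositions. The two displayed forms of the program coincide because, by hypothesis, $V^{(z)} = \suchthat{\underline{x} \in \reals^{\sqcup_{y\in Y}J^{(y)}}}{A^{(z)}\underline{x} = \underline{0}}$, and writing the quasi-normalisation equations out along a fixed total order $\{y_1,\dots,y_n\}$ on $Y$ is, by Proposition \ref{proposition:qnorm-eqs-independent-of-order}, just an explicit presentation of membership in $\QNormEqs{\underline{J}}$; the two forms of the objective agree by the definition $\mass{\left(\underline{v}^{(z)}\right)_{z\in Z}} = \sum_{z \in Z}\mass{\underline{v}^{(z)}}$.

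First I would unfold the membership condition $\underline{v}^{(z)} \in \QNCCPD{V^{(z)},\underline{J}}$: by definition this polytope is $\Slice{V^{(z)}}{[0,1]^{\sqcup_{y\in Y}J^{(y)}} \cap \QNormEqs{\underline{J}}}$, so it consists of exactly those $\underline{w}$ with $\underline{w} \in V^{(z)}$, $\underline{w} \in \QNormEqs{\underline{J}}$ and $\underline{0} \leq \underline{w} \leq \underline{1}$ componentwise. A family $\left(\underline{v}^{(z)}\right)_{z\in Z}$ is therefore a decomposition of $\underline{u}$ precisely when every $\underline{v}^{(z)}$ satisfies these three conditions and, in addition, $\sum_{z\in Z}\underline{v}^{(z)} \leq \underline{u}$. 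Comparing with the LP constraints, the only apparent discrepancy is the componentwise cap $\underline{v}^{(z)} \leq \underline{1}$, which the program omits; I would argue it is automatic. Since $\underline{u} \in \CCPD{U,\underline{J}} \subseteq \prod_{y\in Y}\Dist{J^{(y)}}$, each coordinate obeys $u^{(y)}_j \leq \sum_{j'\in J^{(y)}} u^{(y)}_{j'} = 1$, hence $\underline{u} \leq \underline{1}$; and since every component is constrained to be nonnegative, $\underline{0} \leq \underline{v}^{(z)} \leq \sum_{z'\in Z}\underline{v}^{(z')} \leq \underline{u} \leq \underline{1}$. Thus the feasible points of the program are exactly the decompositions of $\underline{u}$, and by Corollary \ref{corollary:difference-in-super-polytope-decomp} the implicit remainder $\underline{u} - \sum_{z\in Z}\underline{v}^{(z)}$ then automatically lies in $\CCPD{U,\underline{J}}$, so nothing is lost by not tracking it in the program.

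It then remains to check that the objective genuinely computes the mass, and to confirm the maximum is attained. For a feasible $\left(\underline{v}^{(z)}\right)_{z\in Z}$, each $\underline{v}^{(z)}$ is a quasi-normalised conditional distribution, so Proposition \ref{proposition:mass-of-qnorm-dist} supplies a well-defined $\mass{\underline{v}^{(z)}} \in [0,1]$ with $\underline{v}^{(z)} = \mass{\underline{v}^{(z)}}\,\underline{e}^{(z)}$ for some $\underline{e}^{(z)} \in \prod_{y\in Y}\Dist{J^{(y)}}$; summing the coordinates of $\underline{v}^{(z)}$ in the block indexed by any fixed $y$ and using that $\underline{e}^{(z)}$ sums to $1$ on that block gives $\mass{\underline{v}^{(z)}} = \sum_{j\in J^{(y)}}\left(\underline{v}^{(z)}\right)^{(y)}_j$ for every $y$, so $\sum_{z\in Z}\mass{\underline{v}^{(z)}}$ is the mass of the decomposition and, taking $y = y_1$, is a linear functional of the LP variables. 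The feasible region is nonempty (it contains the all-zero family), bounded (each $\underline{v}^{(z)}$ lies in $[0,1]^{\sqcup_{y\in Y}J^{(y)}}$) and closed, and the objective is linear, so the maximum is attained and its maximisers are precisely the maximal decompositions. I do not expect a genuine obstacle: the only substantive point is the bookkeeping observation that the componentwise bound $\underline{v}^{(z)} \leq \underline{1}$ need not be imposed, since it follows from nonnegativity of all components together with $\underline{u} \leq \underline{1}$ — keeping the program free of the hypercube cap, in direct parallel with the single-component case of Observation \ref{observation:max-component-lp}.
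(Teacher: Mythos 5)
Your proof is correct. The paper states this as an \emph{Observation} and supplies no proof of its own, treating it as an immediate unpacking of Definition \ref{definition:decomposition-supported-fraction}; your verification is exactly that unpacking, and it correctly handles the one point that actually needs an argument, namely that the hypercube cap $\underline{v}^{(z)} \leq \underline{1}$ from the definition of $\QNCCPD{V^{(z)},\underline{J}}$ may be dropped from the program because it follows from $\underline{0} \leq \underline{v}^{(z)} \leq \sum_{z'}\underline{v}^{(z')} \leq \underline{u} \leq \underline{1}$. The identification of $\mass{\underline{v}^{(z)}}$ with the block sum $\sum_{j \in J^{(y)}}\bigl(\underline{v}^{(z)}\bigr)^{(y)}_j$ via Proposition \ref{proposition:mass-of-qnorm-dist}, and the attainment of the maximum on a nonempty compact feasible region, are likewise exactly what is needed.
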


\subsection{Causaltopes}
\label{subsection:geometry-causality-ctopes}

Recall from \cite{gogioso2022topology} that an empirical model $e$ on a cover $\mathcal{C}$ is a compatible family $(e_\lambda)_{\lambda \in \mathcal{C}}$ for the presheaf of causal distributions $\CausDist{\Theta, \underline{O}}$:
\[
e_\lambda \in \CausDist{\lambda, \underline{O}} = \Dist{\ExtCausFun{\lambda, \underline{O}}}
\]
The empirical model assigns a probability distribution on the extended causal functions $\ExtCausFun{\lambda, \underline{O}}$ to each context $\lambda \in \mathcal{C}$, a lowerset of input histories upon which outputs for events can be simultaneously and consistently defined.
As such, empirical models on a given cover inherit the convex structure of the individual sets of distributions, by taking context-wise convex combinations:
\[
(x \cdot e + (1-x) \cdot e')_\lambda := x \cdot e_\lambda + (1-x) \cdot e'_\lambda
\]
where $e, e' \in \EmpModels{\mathcal{C}, \underline{O}}$, $x \in [0,1]$ and $\lambda \in \mathcal{C}$ ranges over the contexts specified by the cover $\mathcal{C}$.

The characterisation of empirical models as conditional probability distributions over extended causal functions is inconvenient, because empirical data is typically expressed in the form of conditional probability distributions over joint outputs at events.
The following results provide an equivalent characterisation of empirical models in this sense.

\begin{theorem}
\label{theorem:topdist-extdist}
Let $\Theta$ be a space of input histories and let $\underline{O} = (O_\omega)_{\omega \in \Events{\Theta}}$ be a family of non-empty sets of outputs.
For any $k \in \Ext{\Theta}$, the following function is a bijection:
\begin{equation}
\begin{array}{rcl}
    \ExtCausFun{\downset{k\!}, \underline{O}}
    &\longrightarrow&
    \prod\limits_{\omega \in \dom{k}}O_\omega
    \\
    \Ext{f}
    &\mapsto&
    \Ext{f}(k)
\end{array}
\end{equation}
As a consequence, the following function is a convex-linear bijection:
\begin{equation}
\begin{array}{rcl}
    \CausDist{\downset{k\!}, \underline{O}}
    &\longrightarrow&
    \Dist{\prod\limits_{\omega \in \dom{k}}O_\omega}
    \\
    d
    &\mapsto&
    \topdist{d}
    := \sum\limits_{\Ext{f}} d(\Ext{f}) \delta_{\Ext{f}(k)}
\end{array}
\end{equation}
We refer to $\topdist{d}$ as the \emph{top-element distribution} for $d \in \CausDist{\downset{k\!}, \underline{O}}$.
We furthermore adopt the following notation for its inverse:
\begin{equation}
\begin{array}{rcl}
    \CausDist{\downset{k\!}, \underline{O}}
    &\longleftarrow&
    \Dist{\prod\limits_{\omega \in \dom{k}}O_\omega}
    \\
    \extdist{p}{k}
    &\mapsfrom&
    p
\end{array}
\end{equation}
\end{theorem}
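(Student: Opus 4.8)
The plan is to reduce the whole statement to the first claim, the bijection $\ExtCausFun{\downset{k}, \underline{O}} \to \prod_{\omega \in \dom{k}} O_\omega$ given by $\Ext{f} \mapsto \Ext{f}(k)$: the other two assertions are then purely formal. Indeed, any bijection $\phi$ between finite sets $A$ and $B$ induces a convex-linear bijection $\Dist{A} \to \Dist{B}$ by pushforward of distributions, $d \mapsto \sum_{a \in A} d(a)\, \delta_{\phi(a)}$, whose inverse is the pushforward along $\phi^{-1}$; convex-linearity is immediate since the coefficients depend linearly on $d$. Instantiating this with $\phi = (\Ext{f} \mapsto \Ext{f}(k))$ and recalling that $\CausDist{\downset{k}, \underline{O}} = \Dist{\ExtCausFun{\downset{k}, \underline{O}}}$, the pushforward is exactly $d \mapsto \topdist{d} = \sum_{\Ext{f}} d(\Ext{f})\, \delta_{\Ext{f}(k)}$ with codomain $\Dist{\prod_{\omega \in \dom{k}} O_\omega}$, and $\extdist{\cdot}{k}$ is by definition the name of its inverse. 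So it suffices to prove the bijection on underlying sets.

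For injectivity, I would exploit that $k$ is the greatest element of the downset $\downset{k}$: if $\Ext{f}, \Ext{g} \in \ExtCausFun{\downset{k}, \underline{O}}$ agree at $k$, then every $h \in \downset{k}$ satisfies $h \leq k$, so $\dom{h} \subseteq \dom{k}$, and the compatibility (restriction) clause in the definition of an extended causal function forces $\Ext{f}(h)$ to be the restriction of $\Ext{f}(k)$ to the coordinates indexed by $\dom{h}$, and similarly for $\Ext{g}$. Hence $\Ext{f}(h) = \Ext{g}(h)$ for every $h \in \downset{k}$, i.e. $\Ext{f} = \Ext{g}$.

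For surjectivity, I would start from an arbitrary $p \in \prod_{\omega \in \dom{k}} O_\omega$ and define a candidate by $\Ext{f}(h) := p|_{\dom{h}} \in \prod_{\omega \in \dom{h}} O_\omega$ for each $h \in \downset{k}$, which is meaningful because $\dom{h} \subseteq \dom{k}$. The real content is checking that this is a genuine extended causal function in the sense of \cite{gogioso2022topology}: compatibility holds because $(p|_{\dom{h}})|_{\dom{h'}} = p|_{\dom{h'}}$ whenever $h' \leq h \leq k$; and the causality clause is satisfied for free on a principal downset, since any two $h, h' \in \downset{k}$ both restrict $k$ and hence agree with one another on the overlap of their domains, so the common tuple $p$ automatically makes the output at an event depend only on the inputs in its causal past. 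Since $\Ext{f}(k) = p$, the map is onto, which completes the bijection on underlying sets and hence the theorem.

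I expect the only genuine obstacle to be matching the argument to the precise form of the definitions in \cite{gogioso2022combinatorics, gogioso2022topology}: one must confirm that $h \leq k$ really entails $\dom{h} \subseteq \dom{k}$ with $\Ext{f}(h)$ literally the coordinate-restriction of $\Ext{f}(k)$, and that over a downset with a greatest element the causality constraint adds nothing beyond compatibility. Once those two definitional facts are pinned down, everything else is routine bookkeeping.
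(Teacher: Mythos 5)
Your proposal is correct and follows essentially the same route as the paper's proof: injectivity via the consistency condition forcing $\Ext{f}(h) = \restrict{\Ext{f}(k)}{\dom{h}}$, surjectivity by defining $f(h) := \restrict{o}{\dom{h}}$ and checking agreement on overlaps of domains, and the distribution-level statement by pushforward along the bijection (which the paper delegates to Lemma 4.37 of ``The Topology of Causality'' rather than arguing directly, but the content is identical).
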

\begin{proof}
See \ref{proof:theorem:topdist-extdist}
\end{proof}

In order to extend Theorem \ref{theorem:topdist-extdist} to arbitrary covers, we need to deal with the fact that lowersets might, in general, contain inconsistent histories, which might result in different outputs for the same event.
If $\lambda \in \Lsets{\Theta}$ is a generic lowerset, we write $\dom{\lambda} := \bigcup_{h \in \lambda} \dom{h}$ for the set of events that appear in the domain of some history in $\lambda$.
Recall that $\TipEqCls{\lambda}{\omega}$ is the set of equivalence classes for the relation $\histconstrSym{\omega}$, which constraints certain input histories to yield the same output value at $\omega$ in all causal functions:
\[
    \TipEqCls{\lambda}{\omega}
    :=
    \suchthat{\histconstreqcls{h}{\omega}}{h \in \TipHists{\lambda}{\omega}}
\]
When $\lambda$ is tight, all equivalence classes $\histconstreqcls{h}{\omega}$ contain a single history, i.e. $\histconstreqcls{h}{\omega} = \{h\}$.
When $\lambda = \downset{k}$ is a downset---as is the case for all contexts in the standard and fully solipsistic covers---each $\omega$ has a single equivalence class associated to it, i.e. $|\TipEqCls{\lambda}{\omega}| = 1$; this is because $k$ witnesses the consistency of all input histories $h \in \downset{k}$ below it.

By definition, the extended causal functions $\Ext{f} \in \ExtCausFun{\lambda, \underline{O}}$ assign the same output value $\Ext{f}(h')_\omega$ to all $h' \in \histconstreqcls{h}{\omega}$, for each equivalence class $\histconstreqcls{h}{\omega} \in \TipEqCls{\lambda}{\omega}$.
This observation paves the way for our desired generalisation of Theorem \ref{theorem:topdist-extdist}: instead of producing a single output for an event $\omega$, we now need to produce independent outputs for each equivalence class in $\TipEqCls{\lambda}{\omega}$.

\begin{theorem}
\label{theorem:topdist-extdist-generalised}
Let $\Theta$ be a space of input histories and let $\underline{O} = (O_\omega)_{\omega \in \Events{\Theta}}$ be a family of non-empty sets of outputs.
For any $\lambda \in \Lsets{\Theta}$, the following function is a bijection:
\begin{equation}
\begin{array}{rcl}
    \ExtCausFun{\lambda, \underline{O}}
    &\longrightarrow&
    \prod\limits_{\omega \in \dom{\lambda}} \left(O_\omega\right)^{\TipEqCls{\lambda}{\omega}}
    \\
    \Ext{f}
    &\mapsto&
    \topdist{\Ext{f}}
\end{array}
\end{equation}
where took $\dom{\lambda} := \bigcup_{h \in \lambda} \dom{h}$ and we defined:
\begin{equation}
    \topdist{\Ext{f}}
    :=
    \omega
    \mapsto
    \left(\Ext{f}(h)_\omega\right)_{\histconstreqcls{h}{\omega} \in \TipEqCls{\lambda}{\omega}}  
\end{equation}
As a consequence, the following function is a convex-linear bijection:
\begin{equation}
\begin{array}{rcl}
    \CausDist{\lambda, \underline{O}}
    &\longrightarrow&
    \Dist{\prod\limits_{\omega \in \dom{\lambda}} \left(O_\omega\right)^{\TipEqCls{\lambda}{\omega}}}
    \\
    d
    &\mapsto&
    \topdist{d}
    := \sum\limits_{\Ext{f}} d(\Ext{f}) \delta_{\topdist{\Ext{f}}}
\end{array}
\end{equation}
We furthermore adopt the following notation for its inverse:
\begin{equation}
\begin{array}{rcl}
    \CausDist{\lambda, \underline{O}}
    &\longleftarrow&
    \Dist{\prod\limits_{\omega \in \dom{\lambda}}\left(O_\omega\right)^{\TipEqCls{\lambda}{\omega}}}
    \\
    \extdist{p}{\lambda}
    &\mapsfrom&
    p
\end{array}
\end{equation}
\end{theorem}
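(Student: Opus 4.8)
The plan is to establish the set-level bijection $\ExtCausFun{\lambda, \underline{O}} \to \prod_{\omega \in \dom{\lambda}} (O_\omega)^{\TipEqCls{\lambda}{\omega}}$ first, and then obtain the convex-linear bijection on distributions essentially for free by pushing measures forward along it, exactly as in the proof of Theorem~\ref{theorem:topdist-extdist}. That theorem is the special case $\lambda = \downset{k}$: there every event of $\dom{k}$ has a single $\histconstrSym{\omega}$-class, so the target product collapses to $\prod_{\omega \in \dom{k}} O_\omega$, and the only genuinely new content here is allowing an event's tip histories to split into several $\histconstrSym{\omega}$-classes.

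First I would recall from \cite{gogioso2022combinatorics, gogioso2022topology} the explicit description of $\ExtCausFun{\lambda, \underline{O}}$: an extended causal function $\Ext{f}$ is determined by the family of outputs $\Ext{f}(h)_\omega \in O_\omega$ indexed by events $\omega \in \dom{\lambda}$ and tip histories $h \in \TipHists{\lambda}{\omega}$, and the sole constraint on such a family is $\histconstrSym{\omega}$-compatibility, i.e.\ $\Ext{f}(h)_\omega = \Ext{f}(h')_\omega$ whenever $\histconstr{\omega}{h}{h'}$ --- this last point being precisely the observation recorded just before the statement. I would also note that $\TipHists{\lambda}{\omega} \neq \emptyset$ exactly when $\omega \in \dom{\lambda}$, so $\dom{\lambda}$ is the correct index set for the product.

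With this in hand the three verifications are routine. \emph{Well-definedness}: $\histconstrSym{\omega}$-compatibility says $\Ext{f}(h)_\omega$ depends only on the class $\histconstreqcls{h}{\omega}$, so $\topdist{\Ext{f}}$ does land in $\prod_{\omega \in \dom{\lambda}} (O_\omega)^{\TipEqCls{\lambda}{\omega}}$. \emph{Injectivity}: if $\topdist{\Ext{f}} = \topdist{\Ext{g}}$ then $\Ext{f}$ and $\Ext{g}$ agree on every tip history of every event, hence coincide. \emph{Surjectivity}: given $(c_\omega)_{\omega \in \dom{\lambda}}$ with $c_\omega : \TipEqCls{\lambda}{\omega} \to O_\omega$, put $\Ext{f}(h)_\omega := c_\omega(\histconstreqcls{h}{\omega})$ for $h \in \TipHists{\lambda}{\omega}$; this family is $\histconstrSym{\omega}$-compatible by construction, hence defines a unique extended causal function with $\topdist{\Ext{f}} = (c_\omega)_{\omega}$. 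For the distributional statement, since $\CausDist{\lambda, \underline{O}} = \Dist{\ExtCausFun{\lambda, \underline{O}}}$, I would invoke the elementary fact that a bijection $\phi : A \to B$ of finite sets induces a convex-linear bijection $\Dist{A} \to \Dist{B}$, $d \mapsto \sum_{a} d(a)\,\delta_{\phi(a)}$, with inverse the pushforward along $\phi^{-1}$; applying this to $\Ext{f} \mapsto \topdist{\Ext{f}}$ yields $d \mapsto \topdist{d}$ and lets us define $\extdist{p}{\lambda}$ as the image of $p$ under the inverse.

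The step I expect to carry the real weight is the appeal to the structure of $\ExtCausFun{\lambda, \underline{O}}$ for a general lowerset $\lambda$: one must be sure that, beyond $\histconstrSym{\omega}$-compatibility, there is no further constraint --- in particular no cross-event consistency requirement --- hidden in the definition, so that an independent choice of output per $\histconstrSym{\omega}$-class, made separately for each event $\omega \in \dom{\lambda}$, always assembles into a valid extended causal function. This is exactly what the constructions of $\histconstrSym{\omega}$ and $\TipHists{\lambda}{\omega}$ in \cite{gogioso2022combinatorics} were designed to guarantee --- $\histconstrSym{\omega}$ identifies precisely those tip histories that any causal function is forced to map to a common output, and no more --- so once that characterisation is invoked, the rest is bookkeeping and mirrors the proof of Theorem~\ref{theorem:topdist-extdist}.
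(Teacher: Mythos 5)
Your proposal is correct and follows essentially the same route as the paper's proof: both reduce the statement to the characterisation of $\ExtCausFun{\lambda, \underline{O}}$ as $\histconstrSym{\omega}$-compatible families of outputs at tip histories (which the paper delegates to Observation~4.12 of ``The Topology of Causality''), and both lift the resulting set-level bijection to distributions by convex-linearity exactly as in Theorem~\ref{theorem:topdist-extdist}. You correctly identify that the only non-routine content is the absence of any constraint beyond $\histconstrSym{\omega}$-compatibility, which is precisely the point the paper's proof also rests on.
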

\begin{proof}
See \ref{proof:theorem:topdist-extdist-generalised}
\end{proof}

Applying the bijection above to each context $\lambda \in \mathcal{C}$ in a given cover provides us with an operationally meaningful ambient polytope in which to formulate our empirical models.
By construction, this ambient space enforces causality constraints within each context, but doesn't yet enforce any constraints across contexts: hence, we refer to its points as ``pseudo-empirical models''.

\begin{definition}
Let $\Theta$ be a space of input histories, let $\underline{O} = (O_\omega)_{\omega \in \Events{\Theta}}$ be a family of non-empty sets of outputs and let $\mathcal{C} \in \Covers{\Theta}$ be any cover.
The polytope of \emph{pseudo-empirical models} on $\mathcal{C}$ is defined to be the following polytope of conditional probability distributions:
\begin{equation}
    \PsEmpModels{\mathcal{C}, \underline{O}}
    :=
    \prod_{\lambda \in \mathcal{C}}
    \Dist{\prod\limits_{\omega \in \dom{\lambda}} \left(O_\omega\right)^{\TipEqCls{\lambda}{\omega}}}
\end{equation}
We adopt the following shorthand for the embedding vector space, which is spanned by all linear combinations of the pseudo-empirical models:
\begin{equation}
    \PsEmpModelsVec{\mathcal{C}, \underline{O}}
    :=
    \reals^{
        \bigsqcup\limits_{\lambda \in \mathcal{C}}
        \prod\limits_{\omega \in \dom{\lambda}}
        \left(O_\omega\right)^{\TipEqCls{\lambda}{\omega}}
    }
\end{equation}
\end{definition}

Pseudo-empirical models are conditional probability distributions, and hence we adopt the notation from the previous subsection to describe them:
\[
\underline{u} = \left(\underline{u}^{(\lambda)}\right)_{\lambda \in \mathcal{C}}
\]
For a given $\lambda$, the components $u_o^{(\lambda)}$ are indexed by functions/families $o \in \prod_{\omega \in \dom{\lambda}} \left(O_\omega\right)^{\TipEqCls{\lambda}{\omega}}$, the components of which are in turn indexed as follows:
\[
o = \left(o_{\omega, \histconstreqcls{h}{\omega}}\right)_{\omega \in \dom{\lambda},\; \histconstreqcls{h}{\omega} \in \TipEqCls{\lambda}{\omega}}
\]
The restriction of pseudo-empirical model to empirical models goes by the definition of suitable ``causality equations'', constraining the marginals of the probability distributions associated to various lowersets in the cover.
Before we can phrase the equations, however, we must define what we mean by restriction of the probability distributions $\underline{u}^{(\lambda)}$ from the cover lowersets $\lambda \in \mathcal{C}$ to arbitrary sub-lowersets $\mu \subseteq \lambda$.

\begin{lemma}
\label{lemma:output-history-injection}
Let $\Theta$ be a space of input histories and let $\lambda, \mu \in \Lsets{\Theta}$ be two lowersets.
If $\mu \subseteq \lambda$, then the following is a well-defined function:
\[
\begin{array}{rrcl}
\TipEqCls{\mu \subseteq \lambda}{\omega}:
&\TipEqCls{\mu}{\omega}
&\rightarrow&
\TipEqCls{\lambda}{\omega}
\\
&\histconstreqcls{h}{\omega}
&\mapsto&
\histconstreqcls{h}{\omega}
\end{array}
\]
\end{lemma}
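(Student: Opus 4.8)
The plan is to unpack the definitions of $\TipHists{\cdot}{\omega}$ and $\histconstrSym{\omega}$ from \cite{gogioso2022topology} and check two monotonicity statements, from which the lemma drops out. Recall that a map between quotient sets specified on representatives, $\TipEqCls{\mu}{\omega} \to \TipEqCls{\lambda}{\omega}$ sending $\histconstreqcls{h}{\omega}$ to $\histconstreqcls{h}{\omega}$, is well defined exactly when (i) for every $h \in \TipHists{\mu}{\omega}$ the class $\histconstreqcls{h}{\omega}$ really is an element of $\TipEqCls{\lambda}{\omega}$, and (ii) whenever $h, h' \in \TipHists{\mu}{\omega}$ satisfy $\histconstreqcls{h}{\omega} = \histconstreqcls{h'}{\omega}$ in $\TipEqCls{\mu}{\omega}$, they also satisfy $\histconstreqcls{h}{\omega} = \histconstreqcls{h'}{\omega}$ in $\TipEqCls{\lambda}{\omega}$.

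For (i) I would observe that membership of a history $h$ in $\TipHists{\lambda}{\omega}$ is the conjunction of ``$h \in \lambda$'' with a property of $h$ alone --- that $\omega$ is a tip of $h$ --- which does not refer to the ambient lowerset; since $\mu \subseteq \lambda$ this gives $\TipHists{\mu}{\omega} \subseteq \TipHists{\lambda}{\omega}$ at once, so $h$ itself witnesses $\histconstreqcls{h}{\omega} \in \TipEqCls{\lambda}{\omega}$. (If $\omega \notin \dom{\mu}$ both quotient sets are empty and there is nothing to prove.) For (ii) I would argue that enlarging the lowerset from $\mu$ to $\lambda$ can only coarsen the equivalence: for $h, h' \in \TipHists{\mu}{\omega}$, the statement ``$h$ and $h'$ are constrained to the same output at $\omega$'' read inside $\mu$ implies the same statement read inside $\lambda$. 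With the chain-of-tip-histories description of $\histconstrSym{\omega}$ this is immediate from $\TipHists{\mu}{\omega}\subseteq\TipHists{\lambda}{\omega}$, since a connecting chain lying in $\mu$ also lies in $\lambda$; with the ``all extended causal functions agree at $\omega$ on $h$ and $h'$'' description it follows because restriction maps $\ExtCausFun{\lambda,\underline{O}}$ into $\ExtCausFun{\mu,\underline{O}}$ and leaves the $\omega$-output of any history having $\omega$ in its domain unchanged (and $\dom{h},\dom{h'}\subseteq\dom{\mu}$), so the $\mu$-condition is the stronger one. If instead $\histconstrSym{\omega}$ is a single relation that does not depend on the lowerset used for saturation, then (ii) is a tautology and only (i) carries content.

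Assembling the pieces is then immediate: (i) says the assignment has the correct codomain, (ii) says it is independent of the chosen representative, so $\TipEqCls{\mu\subseteq\lambda}{\omega}$ is a well-defined function. I expect the only genuine difficulty to be definitional bookkeeping --- pinning down the precise meaning of ``tip history'' and of $\histconstrSym{\omega}$ as used in \cite{gogioso2022topology}, and being careful not to conflate the lowerset that saturates the equivalence classes with the lowerset that cuts out $\TipHists{\cdot}{\omega}$ --- once those are fixed, each of the two checks is a single line.
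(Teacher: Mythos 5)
Your proposal is correct and takes essentially the same route as the paper, whose entire proof is your step (ii): if $\histconstr{\omega}{h}{h'}$ in $\mu$ and $\mu \subseteq \lambda$, then necessarily $\histconstr{\omega}{h}{h'}$ in $\lambda$. Your additional check (i), that $\TipHists{\mu}{\omega} \subseteq \TipHists{\lambda}{\omega}$ — which holds because both are lowersets, hence downward closed, so the tips of a history do not change when the ambient lowerset is enlarged — is left implicit in the paper's one-line argument.
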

\begin{proof}
See \ref{proof:lemma:output-history-injection}
\end{proof}

\begin{definition}
\label{definition:output-history-restriction}
Let $\Theta$ be a space of input histories, let $\underline{O} = (O_\omega)_{\omega \in \Events{\Theta}}$ be a family of non-empty sets of outputs and let $\mathcal{C} \in \Covers{\Theta}$ be any cover.
For every $\lambda \in \mathcal{C}$ and every $\mu \in \Lsets{\Theta}$ such that $\mu \subseteq \lambda$, the \emph{output history restriction} from $\lambda$ to $\mu$ is defined as follows:
\begin{equation}
\label{equation:output-history-restriction}
\begin{array}{rccc}
    \rho_{\lambda, \mu}:
    &\prod\limits_{\omega \in \dom{\lambda}} \left(O_\omega\right)^{\TipEqCls{\lambda}{\omega}}
    &\longrightarrow
    &\prod\limits_{\omega \in \dom{\mu}} \left(O_\omega\right)^{\TipEqCls{\mu}{\omega}}
    \\
    &o
    &\mapsto
    &\left(
        \left(\omega, \histconstreqcls{h}{\omega}\right)
        \mapsto
        o_{\omega,\histconstreqcls{h}{\omega}}
    \right)
\end{array}
\end{equation}
Formally, $o_{\omega,\histconstreqcls{h}{\omega}}$ stands for $o_{\omega,\TipEqCls{\mu \subseteq \lambda}{\omega}\left(\histconstreqcls{h}{\omega}\right)}$.
The restriction extends convex-linearly to a \emph{output history distribution restriction} between the corresponding spaces of probability distributions:
\begin{equation}
\label{equation:output-history-restriction-distr}
\begin{array}{rccc}
    \Dist{\rho_{\lambda, \mu}}:
    &\Dist{\prod\limits_{\omega \in \dom{\lambda}} \left(O_\omega\right)^{\TipEqCls{\lambda}{\omega}}}
    &\longrightarrow
    &\Dist{\prod\limits_{\omega \in \dom{\mu}} \left(O_\omega\right)^{\TipEqCls{\mu}{\omega}}}
    \\
    &d
    &\mapsto
    &\left(
    o \mapsto
    \hspace{-2mm} \sum\limits_{o' \text{ s.t. } \rho_{\lambda, \mu}(o')=o}
    \hspace{-2mm} d(o')
    \right)
\end{array}
\end{equation}
\end{definition}

We are now in a position to define the causality equations.
The initial definition we provide considers all possible pairs of restrictions in a space of input histories: it is the simplest way to formulate these required constraints, but it is also highly redundant.
We then prove two results reducing the number of necessary equations, under certain assumptions.

\begin{definition}
\label{definition:caus-eqs}
Let $\Theta$ be a space of input histories, let $\underline{O} = (O_\omega)_{\omega \in \Events{\Theta}}$ be a family of non-empty sets of outputs and let $\mathcal{C} \in \Covers{\Theta}$ be any cover.
The \emph{causality equations} are indexed by all $\mu \in \Lsets{\Theta}$ and all $\lambda, \lambda' \in \mathcal{C}$ such that $\mu \subseteq \lambda$ and $\mu \subseteq \lambda'$.
For one such triple $\mu, \lambda, \lambda'$, we equate the output history distribution restrictions from $\lambda$ to $\mu$ and from $\lambda'$ to $\mu$:
\begin{equation}
    \CausEqs{\mathcal{C}, \underline{O}}_{\mu, \lambda, \lambda'}
    :=
    \suchthat{
        \underline{u}
        \in
        \PsEmpModelsVec{\mathcal{C}, \underline{O}}
    }{
        \restrict{\underline{u}^{(\lambda)}}{\mu}
        =\restrict{\underline{u}^{(\lambda')}}{\mu}
    }
\end{equation}
where we have adopted the following shorthand for the restriction:
\begin{equation}
    \restrict{\underline{u}^{(\lambda)}}{\mu}
    := \Dist{\rho_{\lambda, \mu}}\left(\underline{u}^{(\lambda)}\right)
\end{equation}
We write $\CausEqs{\mathcal{C}, \underline{O}}$ for the linear subspace of $\PsEmpModelsVec{\mathcal{C}, \underline{O}}$ spanned jointly by all causality equations:
\begin{equation}
    \CausEqs{\mathcal{C}, \underline{O}}
    :=
    \bigcap_{\mu \in \Lsets{\Theta}}
    \bigcap_{\lambda \in \mathcal{C}\cap\upset{\mu} }
    \bigcap_{\lambda' \in \mathcal{C}\cap\upset{\mu} }
    \CausEqs{\mathcal{C}, \underline{O}}_{\mu, \lambda, \lambda'}
\end{equation}
where $\upset{\mu}$ is the upset of $\mu$ in the partial order $\Lsets{\Theta}$ formed by lowersets under inclusion.
\end{definition}

\begin{proposition}
\label{proposition:caus-eqs-chain}
Let $\Theta$ be a space of input histories, let $\underline{O} = (O_\omega)_{\omega \in \Events{\Theta}}$ be a family of non-empty sets of outputs and let $\mathcal{C} \in \Covers{\Theta}$ be any cover.
For every $\mu \in \Lsets{\Theta}$, let $\lambda_{\mu,1}, ..., \lambda_{\mu, n_\mu}$ be a total order on the $\lambda \in \mathcal{C}$ such that $\mu \subseteq \lambda$.
Then we have:
\begin{equation}
    \CausEqs{\mathcal{C}, \underline{O}}
    =
    \bigcap_{\mu \in \Lsets{\Theta}, n_\mu \geq 1}
    \bigcap_{i=1}^{n_\mu-1}
    \CausEqs{\mathcal{C}, \underline{O}}_{\mu, \lambda_{\mu,i}, \lambda_{\mu,i+1}}
\end{equation}
\end{proposition}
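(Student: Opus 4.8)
The plan is to exploit the fact that ``$=$'' is an equivalence relation: requiring a finite family of vectors to be pairwise equal is the same as requiring consecutive members to be equal along any fixed enumeration. Concretely, fix $\mu \in \Lsets{\Theta}$ and attach to each pseudo-empirical model $\underline{u}$ the family of vectors $\left(\restrict{\underline{u}^{(\lambda)}}{\mu}\right)_{\lambda \in \mathcal{C} \cap \upset{\mu}}$. By Definition \ref{definition:caus-eqs}, the subspace $\bigcap_{\lambda, \lambda' \in \mathcal{C} \cap \upset{\mu}} \CausEqs{\mathcal{C}, \underline{O}}_{\mu, \lambda, \lambda'}$ is precisely the set of $\underline{u}$ for which all these vectors coincide, while $\bigcap_{i=1}^{n_\mu - 1} \CausEqs{\mathcal{C}, \underline{O}}_{\mu, \lambda_{\mu,i}, \lambda_{\mu,i+1}}$ is the set for which consecutive members (with respect to the chosen total order $\lambda_{\mu,1}, \dots, \lambda_{\mu,n_\mu}$ on $\mathcal{C} \cap \upset{\mu}$) coincide. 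So the proof reduces to a per-$\mu$ identity between these two subspaces, followed by intersecting over all $\mu$.

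First I would establish the per-$\mu$ identity. The inclusion of the all-pairs intersection into the chain intersection is immediate, since each chain pair is among all pairs. For the reverse, assume $\restrict{\underline{u}^{(\lambda_{\mu,i})}}{\mu} = \restrict{\underline{u}^{(\lambda_{\mu,i+1})}}{\mu}$ for every $i \in \{1, \dots, n_\mu - 1\}$; transitivity of equality then gives $\restrict{\underline{u}^{(\lambda_{\mu,i})}}{\mu} = \restrict{\underline{u}^{(\lambda_{\mu,j})}}{\mu}$ for all $i, j$, and since $\lambda_{\mu,1}, \dots, \lambda_{\mu,n_\mu}$ enumerates $\mathcal{C} \cap \upset{\mu}$, every restriction agrees, so $\underline{u} \in \CausEqs{\mathcal{C}, \underline{O}}_{\mu, \lambda, \lambda'}$ for all $\lambda, \lambda' \in \mathcal{C} \cap \upset{\mu}$. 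Here I also use two facts that are immediate from the definition: the index $\CausEqs{\mathcal{C}, \underline{O}}_{\mu, \lambda, \lambda'}$ is symmetric in $\lambda, \lambda'$ (equality is symmetric), and $\CausEqs{\mathcal{C}, \underline{O}}_{\mu, \lambda, \lambda}$ is the whole embedding space (equality is reflexive).

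Next I would dispose of the degenerate cases in the bookkeeping. If $n_\mu = 0$, the double intersection over $\lambda, \lambda' \in \mathcal{C} \cap \upset{\mu} = \emptyset$ in Definition \ref{definition:caus-eqs} is an empty intersection, hence equal to all of $\PsEmpModelsVec{\mathcal{C}, \underline{O}}$; on the right-hand side the term for such $\mu$ is simply dropped by the side condition $n_\mu \geq 1$, so both sides contribute the whole space for that $\mu$. If $n_\mu = 1$, the only pair on the left is $(\lambda_{\mu,1}, \lambda_{\mu,1})$, again giving the whole space, while on the right $\bigcap_{i=1}^{0}$ is empty, also the whole space; so both sides agree. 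For $n_\mu \geq 2$ the per-$\mu$ identity of the previous paragraph applies verbatim.

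Finally I would intersect the per-$\mu$ identities over all $\mu \in \Lsets{\Theta}$, obtaining the claimed equality. The mathematical content is nothing more than transitivity (plus symmetry and reflexivity) of equality, so I expect no genuine obstacle; the only point demanding care is the treatment of empty intersections and of the degenerate values $n_\mu \in \{0,1\}$, which is why I would handle those cases explicitly as above.
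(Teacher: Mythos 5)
Your proposal is correct and follows essentially the same route as the paper's own proof: both reduce the all-pairs intersection to the chain intersection per fixed $\mu$ via transitivity (and symmetry/reflexivity) of equality of the restricted distributions, then intersect over all $\mu$. Your explicit handling of the degenerate cases $n_\mu \in \{0,1\}$ is slightly more careful than the paper's, which simply notes that $n_\mu = 0$ contributes no equations, but the substance is identical.
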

\begin{proof}
See \ref{proof:proposition:caus-eqs-chain}
\end{proof}

\begin{proposition}
\label{proposition:caus-eqs-chain-std}
Let $\Theta$ be a space of input histories, let $\underline{O} = (O_\omega)_{\omega \in \Events{\Theta}}$ be a family of non-empty sets of outputs and let $\StdCov{\Theta} \in \Covers{\Theta}$ be the standard cover.
For every $h \in \Ext{\Theta}$, let $k_{h,1}, ..., k_{h, n_h}$ be a total order on the $k \in \Ext{\Theta}$ such that $k \leq h$, i.e. such that $\downset{h} \subseteq \downset{k}$.
Then we have:
\begin{equation}
    \CausEqs{\mathcal{C}, \underline{O}}
    =
    \bigcap_{h \in \Ext{\Theta}, n_h \geq 1}
    \bigcap_{i=1}^{n_h-1}
    \CausEqs{\mathcal{C}, \underline{O}}_{\downset{h}, \downset{k_{h,i}}, \downset{k_{h,i+1}}}
\end{equation}
\end{proposition}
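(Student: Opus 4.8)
I establish the two inclusions separately. The inclusion ``$\subseteq$'' is immediate: for every $h \in \Ext{\Theta}$ with $n_h \geq 1$ and every $i$, the triple $(\downset{h}, \downset{k_{h,i}}, \downset{k_{h,i+1}})$ is one of those indexing the equations in Definition \ref{definition:caus-eqs} --- here $\downset{h} \in \Lsets{\Theta}$, while $\downset{k_{h,i}}, \downset{k_{h,i+1}} \in \StdCov{\Theta}$ both contain $\downset{h}$, since $k_{h,j} \leq h$ means precisely $\downset{h} \subseteq \downset{k_{h,j}}$. Hence $\CausEqs{\StdCov{\Theta}, \underline{O}}$ is contained in each summand on the right-hand side, and therefore in their intersection.

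For ``$\supseteq$'', fix $\underline{u}$ in the right-hand side. The first step is to notice that, for any fixed lowerset $\nu$, the relation ``$\restrict{\underline{u}^{(\lambda)}}{\nu} = \restrict{\underline{u}^{(\lambda')}}{\nu}$'' is an equivalence relation on $\{\lambda \in \StdCov{\Theta} : \nu \subseteq \lambda\}$, transitivity being just transitivity of equality. Taking $\nu = \downset{h}$, the hypothesis that consecutive pairs of the enumeration $k_{h,1}, \dots, k_{h,n_h}$ lie in this relation forces the whole enumeration into a single class, so that $\restrict{\underline{u}^{(\downset{k})}}{\downset{h}} = \restrict{\underline{u}^{(\downset{k'})}}{\downset{h}}$ for all $h, k, k' \in \Ext{\Theta}$ with $\downset{h} \subseteq \downset{k}$ and $\downset{h} \subseteq \downset{k'}$. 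By the same transitivity together with Proposition \ref{proposition:caus-eqs-chain}, it then suffices to prove the ``unchained'' statement: for every $\mu \in \Lsets{\Theta}$ and every $k, k' \in \Ext{\Theta}$ with $\mu \subseteq \downset{k}$ and $\mu \subseteq \downset{k'}$, one has $\restrict{\underline{u}^{(\downset{k})}}{\mu} = \restrict{\underline{u}^{(\downset{k'})}}{\mu}$.

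The key point is now a combinatorial fact particular to the standard cover: for such $\mu, k, k'$ with $\dom{\mu} \neq \emptyset$, there is $h_0 \in \Ext{\Theta}$ with $\mu \subseteq \downset{h_0} \subseteq \downset{k} \cap \downset{k'}$. Concretely, I would take $h_0 := \restrict{k}{\dom{\mu}}$: the set $\dom{\mu}$ is downward closed among events, and since every input history $h' \in \mu$ lies below both $k$ and $k'$ these two extended histories agree on all of $\dom{\mu}$, so $\restrict{k}{\dom{\mu}} = \restrict{k'}{\dom{\mu}}$; this common restriction satisfies $\downset{h_0} \subseteq \downset{k} \cap \downset{k'}$, contains $\mu$ (each $h' \in \mu$ equals $\restrict{h_0}{\dom{h'}} \leq h_0$), and lies in $\Ext{\Theta}$ by closure of extended histories under restriction to downward-closed event sets (cf. \cite{gogioso2022combinatorics, gogioso2022topology}). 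Granting this: when $\dom{\mu} = \emptyset$ the set $\ExtCausFun{\mu, \underline{O}}$ is a singleton (Theorem \ref{theorem:topdist-extdist-generalised}) and the desired equation is vacuous; when $\dom{\mu} \neq \emptyset$ we apply the previous paragraph with $h = h_0$ to get $\restrict{\underline{u}^{(\downset{k})}}{\downset{h_0}} = \restrict{\underline{u}^{(\downset{k'})}}{\downset{h_0}}$, push both sides through $\Dist{\rho_{\downset{h_0}, \mu}}$, and invoke functoriality of the output-history (distribution) restrictions (Definition \ref{definition:output-history-restriction} and Lemma \ref{lemma:output-history-injection}), namely $\rho_{\downset{k}, \mu} = \rho_{\downset{h_0}, \mu} \circ \rho_{\downset{k}, \downset{h_0}}$ and likewise for $k'$, to conclude. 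Applying Proposition \ref{proposition:caus-eqs-chain} once more places $\underline{u}$ in $\CausEqs{\StdCov{\Theta}, \underline{O}}$, which completes the inclusion.

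I expect the combinatorial fact to be the main obstacle: it is precisely what distinguishes the standard cover from a generic one (where no such $h_0$ need exist, and the statement would typically fail), and making it rigorous requires unpacking the definition of a space of input histories carefully --- that $\dom{\mu}$ is downward closed, that $k$ and $k'$ genuinely agree on it, and above all that the restriction $\restrict{k}{\dom{\mu}}$ is again a member of $\Ext{\Theta}$. The remaining ingredients (the two equivalence-relation/transitivity reductions and the functoriality of the restrictions) are routine bookkeeping.
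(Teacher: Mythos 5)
Your proof is correct and follows essentially the same route as the paper's: reduce to chains via Proposition \ref{proposition:caus-eqs-chain}, interpose an extended history $h_0$ with $\mu \subseteq \downset{h_0} \subseteq \downset{k}\cap\downset{k'}$, and conclude by composing restrictions. The only difference is one of detail — where the paper simply asserts that the intersection of the two downsets is itself a downset of some $h_{\mu,i} \in \Ext{\Theta}$, you construct $h_0 = \restrict{k}{\dom{\mu}} = \bigvee_{h' \in \mu} h'$ explicitly and check it lies in $\Ext{\Theta}$, which is exactly the point the paper glosses over.
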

\begin{proof}
See \ref{proof:proposition:caus-eqs-chain-std}
\end{proof}

Finally, we are in a position to define a ``causal polytope''---or ``causaltope''---as the space of pseudo-empirical models which satisfy the causality equations, and to prove that it coincides with the convex space of empirical models.

\begin{definition}
Let $\Theta$ be a space of input histories, let $\underline{O} = (O_\omega)_{\omega \in \Events{\Theta}}$ be a family of non-empty sets of outputs and let $\mathcal{C} \in \Covers{\Theta}$ be any cover.
The associated \emph{causaltope}, a portmanteau of \emph{causal polytope}, is defined to be the following space of constrained conditional probability distributions:
\begin{equation}
    \begin{array}{rcl}
    \Causaltope{\mathcal{C}, \underline{O}}
    &:=&
    \CCPD{
        \CausEqs{\mathcal{C}, \underline{O}},
        \left(\prod\limits_{\omega \in \dom{\lambda}} \left(O_\omega\right)^{\TipEqCls{\lambda}{\omega}}\right)_{\lambda \in \mathcal{C}}
    }
    \\
    &=&
    \CausEqs{\mathcal{C}, \underline{O}} \cap
    \prod\limits_{\lambda \in \mathcal{C}}\Dist{\prod\limits_{\omega \in \dom{\lambda}} \left(O_\omega\right)^{\TipEqCls{\lambda}{\omega}}}
    \end{array}
\end{equation} 
% When $\mathcal{C}=\StdCov{\Theta}$ is the standard cover, we refer to the associated causaltope as a \emph{standard causaltope}: we write $\StdCausaltope{\Theta, \underline{O}}$ for the causaltope and $\StdCausEqs{\Theta, \underline{O}}$ for the causal equations.
% When $\mathcal{C}=\SolCov{\Theta}$ is the solipsistic cover, we refer to the associated causaltope as a \emph{solipsistic causaltope}: we write $\SolCausaltope{\Theta, \underline{O}}$ for the causaltope and $\SolCausEqs{\Theta, \underline{O}}$ for the causal equations.
\end{definition}

\begin{observation}
When $\mathcal{C}=\StdCov{\Theta}$ is the standard cover, we refer to the associated causaltope as a \emph{standard causaltope}, taking the following simplified form:
\[
\StdCausaltope{\Theta, \underline{O}}
:= 
\CCPD{
    \StdCausEqs{\mathcal{C}, \underline{O}},
    \left(\prod\limits_{\omega \in \dom{k}} O_\omega\right)_{k \in \max\Ext{\Theta}}
}
\]
We write $\StdCausEqs{\Theta, \underline{O}}$ for the causal equations on the standard cover.
\end{observation}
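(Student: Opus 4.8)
The plan is purely to unfold definitions, the only genuine input being the fact---already recorded in the excerpt---that downset contexts carry trivial tip-equivalence data. First I would recall from \cite{gogioso2022combinatorics,gogioso2022topology} that the standard cover is by definition $\StdCov{\Theta} = \suchthat{\downset{k}}{k \in \max\Ext{\Theta}}$, and that the assignment $k \mapsto \downset{k}$ is a bijection from $\max\Ext{\Theta}$ onto $\StdCov{\Theta}$ (injectivity is immediate, since $k$ is the greatest element of $\downset{k}$). This already lets me rewrite the index set $\mathcal{C} = \StdCov{\Theta}$ of every product appearing in $\PsEmpModels{\StdCov{\Theta}, \underline{O}}$ and in $\CausEqs{\StdCov{\Theta}, \underline{O}}$ as an index set $\max\Ext{\Theta}$, with $\lambda$ replaced throughout by $\downset{k}$.

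Next I would simplify each factor. For a context $\lambda = \downset{k}$ we have, first, $\dom{\downset{k}} = \bigcup_{h \in \downset{k}}\dom{h} = \dom{k}$, since $k \in \downset{k}$ while $\dom{h} \subseteq \dom{k}$ for every $h$ in the downset; and, second, $|\TipEqCls{\downset{k}}{\omega}| = 1$ for each $\omega \in \dom{k}$, because $k$ witnesses the consistency of all input histories in $\downset{k}$ (both points are noted in the paragraph preceding Theorem~\ref{theorem:topdist-extdist-generalised} and established in \cite{gogioso2022topology}). The second point gives a canonical bijection $(O_\omega)^{\TipEqCls{\downset{k}}{\omega}} \cong O_\omega$, hence a canonical bijection $\prod_{\omega \in \dom{\downset{k}}}(O_\omega)^{\TipEqCls{\downset{k}}{\omega}} \cong \prod_{\omega \in \dom{k}} O_\omega$; assembling these over $k \in \max\Ext{\Theta}$ and transporting along the induced identification of embedding spaces identifies $\PsEmpModels{\StdCov{\Theta}, \underline{O}}$ with $\prod_{k \in \max\Ext{\Theta}}\Dist{\prod_{\omega \in \dom{k}} O_\omega}$, i.e. with the product polytope $\prod_{y \in Y}\Dist{J^{(y)}}$ for the family $\underline{J} = \left(\prod_{\omega \in \dom{k}} O_\omega\right)_{k \in \max\Ext{\Theta}}$.

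Finally, $\CausEqs{\StdCov{\Theta}, \underline{O}}$ is by Definition~\ref{definition:caus-eqs} a linear subspace of $\PsEmpModelsVec{\StdCov{\Theta}, \underline{O}}$; transported along the same identification it becomes a linear subspace of the embedding space of the family $\underline{J}$ above, and denoting this image $\StdCausEqs{\Theta, \underline{O}}$ is exactly the notational abbreviation announced in the statement. Since $\Causaltope{\StdCov{\Theta}, \underline{O}} = \CCPD{\CausEqs{\StdCov{\Theta}, \underline{O}}, \underline{J}}$ by definition and slicing commutes with these identifications, we obtain $\Causaltope{\StdCov{\Theta}, \underline{O}} = \CCPD{\StdCausEqs{\Theta, \underline{O}}, \left(\prod_{\omega \in \dom{k}} O_\omega\right)_{k \in \max\Ext{\Theta}}} = \StdCausaltope{\Theta, \underline{O}}$, the claimed simplified form. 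There is no real obstacle here: the argument is bookkeeping of indices, and the one substantive ingredient---triviality of $\TipEqCls{\downset{k}}{\omega}$ for downset contexts---is taken as given.
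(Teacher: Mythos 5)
Your proposal is correct and follows essentially the same route the paper takes: the paper presents this as an unproved observation resting precisely on the two facts you invoke, namely that the standard cover consists of the downsets $\downset{k}$ for $k \in \max\Ext{\Theta}$ (with $\dom{\downset{k}} = \dom{k}$), and that $\TipEqCls{\downset{k}}{\omega}$ is a singleton for downset contexts, so that $\left(O_\omega\right)^{\TipEqCls{\downset{k}}{\omega}} \cong O_\omega$ and the ambient polytope collapses to $\prod_{k \in \max\Ext{\Theta}}\Dist{\prod_{\omega \in \dom{k}} O_\omega}$ (the same identification the paper uses again in the proof of Proposition \ref{proposition:subspace-hierarchy-causaltopes}). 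Transporting the causality equations along this identification and renaming them $\StdCausEqs{\Theta, \underline{O}}$ is, as you say, pure bookkeeping.
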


\begin{observation}
When $\mathcal{C}=\StdCov{\Theta}$ is the solipsistic cover, we refer to the associated causaltope as a \emph{solipsistic causaltope}, taking the following simplified form:
\[
\SolCausaltope{\Theta, \underline{O}}
:= 
\CCPD{
    \SolCausEqs{\mathcal{C}, \underline{O}},
    \left(\prod\limits_{\omega \in \dom{k}} O_\omega\right)_{k \in \max\Theta}
}
\]
We write $\SolCausEqs{\Theta, \underline{O}}$ for the causal equations on the solipsistic cover.
\end{observation}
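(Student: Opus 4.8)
The plan is to check that the right-hand side of the display is simply the general causaltope $\Causaltope{\mathcal{C}, \underline{O}}$ from the preceding definition, specialised to the solipsistic cover $\mathcal{C} = \SolCov{\Theta}$, so that $\SolCausaltope{\Theta, \underline{O}}$ is a legitimate abbreviation rather than a new object (and likewise for the standard cover in the previous observation). The two facts I would import from the earlier instalments are: the description of the solipsistic cover as the family of principal downsets $\downset{k}$ with $k$ ranging over the maximal input histories $\max\Theta$ (respectively, of the standard cover as the principal downsets $\downset{k}$ with $k \in \max\Ext{\Theta}$), from \cite{gogioso2022combinatorics,gogioso2022topology}; and the observation, made just before Theorem~\ref{theorem:topdist-extdist-generalised}, that a principal downset has a single tip-equivalence class at every event. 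Everything else is unfolding of definitions and re-indexing.

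First I would fix a context $\lambda = \downset{k}$ of the solipsistic cover and note that $\dom{\lambda} = \bigcup_{h \in \downset{k}} \dom{h} = \dom{k}$, since $\dom{\cdot}$ is monotone and $k \in \downset{k}$. Next, because $k$ witnesses the mutual consistency of all input histories below it, the relation $\histconstrSym{\omega}$ identifies all of $\TipHists{\downset{k}}{\omega}$, so $|\TipEqCls{\downset{k}}{\omega}| = 1$ for every $\omega \in \dom{k}$; hence the fibre $\left(O_\omega\right)^{\TipEqCls{\lambda}{\omega}}$ is canonically the one-element power $\left(O_\omega\right)^{\{\ast\}}$, which I identify with $O_\omega$. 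Under this identification the codomain appearing in Theorem~\ref{theorem:topdist-extdist-generalised} degenerates to the one in Theorem~\ref{theorem:topdist-extdist}, so no convex structure is lost.

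Feeding these identifications into the ambient polytope gives
\[
\PsEmpModels{\SolCov{\Theta}, \underline{O}}
= \prod_{\lambda \in \SolCov{\Theta}} \Dist{\prod_{\omega \in \dom{\lambda}} \left(O_\omega\right)^{\TipEqCls{\lambda}{\omega}}}
= \prod_{k \in \max\Theta} \Dist{\prod_{\omega \in \dom{k}} O_\omega},
\]
where the last equality re-indexes the cover along $\lambda \leftrightarrow \downset{k}$. Since $\SolCausEqs{\Theta, \underline{O}}$ is by convention exactly $\CausEqs{\SolCov{\Theta}, \underline{O}}$, substituting the above into $\Causaltope{\mathcal{C}, \underline{O}} = \CCPD{\CausEqs{\mathcal{C}, \underline{O}}, \left(\prod_{\omega \in \dom{\lambda}} \left(O_\omega\right)^{\TipEqCls{\lambda}{\omega}}\right)_{\lambda \in \mathcal{C}}}$ with $\mathcal{C} = \SolCov{\Theta}$ yields precisely the claimed simplified form; the identical computation with $\max\Ext{\Theta}$ in place of $\max\Theta$ and $\StdCov{\Theta}$ in place of $\SolCov{\Theta}$ establishes the previous observation. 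The only step that is not pure bookkeeping---the ``main obstacle'', such as it is---is importing the structure of the two covers and the consistency-witnessing property of principal downsets from the earlier instalments; once those are in hand, the two observations are immediate.
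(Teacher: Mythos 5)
Your proposal is correct and follows exactly the route the paper intends: the observation is a pure unfolding of the causaltope definition at the solipsistic cover, using that its contexts are the principal downsets $\downset{k}$ for $k \in \max\Theta$, that $\dom{\downset{k}} = \dom{k}$, and that $|\TipEqCls{\downset{k}}{\omega}| = 1$ (a fact the paper records explicitly just before Theorem \ref{theorem:topdist-extdist-generalised}), so that $\left(O_\omega\right)^{\TipEqCls{\lambda}{\omega}}$ collapses to $O_\omega$. The paper offers no separate proof precisely because the claim is this bookkeeping, and your account of it is complete.
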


% The associated \emph{causaltope}, a portmanteau of \emph{causal polytope}, is defined to be the following space of constrained conditional probability distributions:
% \begin{equation}
%     \begin{array}{rcl}
%     \Causaltope{\mathcal{C}, \underline{O}}
%     &:=&
%     \CCPD{
%         \CausEqs{\mathcal{C}, \underline{O}},
%         \left(\prod\limits_{\omega \in \dom{\lambda}} \left(O_\omega\right)^{\TipEqCls{\lambda}{\omega}}\right)_{\lambda \in \mathcal{C}}
%     }
%     \\
%     &=&
%     \CausEqs{\mathcal{C}, \underline{O}} \cap
%     \prod\limits_{\lambda \in \mathcal{C}}\Dist{\prod\limits_{\omega \in \dom{\lambda}} \left(O_\omega\right)^{\TipEqCls{\lambda}{\omega}}}
%     \end{array}
% \end{equation} 
% When $\mathcal{C}=\SolCov{\Theta}$ is the solipsistic cover, we refer to the associated causaltope as a \emph{solipsistic causaltope}: we write $\SolCausaltope{\Theta, \underline{O}}$ for the causaltope and $\SolCausEqs{\Theta, \underline{O}}$ for the causal equations.
% \end{definition}

\begin{theorem}
\label{theorem:causaltopes-emp-models}
Let $\Theta$ be a space of input histories, let $\underline{O} = (O_\omega)_{\omega \in \Events{\Theta}}$ be a family of non-empty sets of outputs and let $\mathcal{C} \in \Covers{\Theta}$ be any cover.
Then the following is a convex-linear bijection:
\begin{equation}
\begin{array}{rcl}
    \EmpModels{\mathcal{C}, \underline{O}}
    &\leftrightarrow&
    \Causaltope{\mathcal{C}, \underline{O}}
    \\
    \underline{e}
    &\mapsto&
    \left(\topdist{e_\lambda}\right)_{\lambda \in \mathcal{C}}
\end{array}
\end{equation}
\end{theorem}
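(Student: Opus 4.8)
The plan is to realise the stated map as the context-wise application of the convex-linear bijection of Theorem~\ref{theorem:topdist-extdist-generalised}, and then to show that under this translation the gluing conditions that define empirical models become exactly the causality equations of Definition~\ref{definition:caus-eqs}.

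First I would record the ``pseudo'' version of the claim, with compatibility dropped. Applying the bijection $\CausDist{\lambda, \underline{O}} \to \Dist{\prod_{\omega \in \dom{\lambda}}(O_\omega)^{\TipEqCls{\lambda}{\omega}}}$, $d \mapsto \topdist{d}$, of Theorem~\ref{theorem:topdist-extdist-generalised} independently in each component $\lambda \in \mathcal{C}$, and using that products of bijections are bijections and products of convex-linear maps are convex-linear, one obtains a convex-linear bijection from the product $\prod_{\lambda \in \mathcal{C}} \CausDist{\lambda, \underline{O}}$ of all families of causal distributions (one per context, not necessarily compatible) onto the polytope $\PsEmpModels{\mathcal{C}, \underline{O}}$ of pseudo-empirical models. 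The map $\underline{e} \mapsto (\topdist{e_\lambda})_{\lambda \in \mathcal{C}}$ of the theorem is the restriction of this bijection to the subset of compatible families, so it remains only to identify the image of that subset.

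The key technical step is a \emph{naturality square}: for every $\lambda \in \mathcal{C}$ and every lowerset $\mu \subseteq \lambda$, the presheaf restriction $\CausDist{\lambda, \underline{O}} \to \CausDist{\mu, \underline{O}}$, $d \mapsto \restrict{d}{\mu}$ (the pushforward of distributions along the restriction $\ExtCausFun{\lambda,\underline{O}} \to \ExtCausFun{\mu,\underline{O}}$), commutes with the top-element bijections of Theorem~\ref{theorem:topdist-extdist-generalised} and with the output history distribution restriction $\Dist{\rho_{\lambda,\mu}}$ of Definition~\ref{definition:output-history-restriction}; that is, $\topdist{\restrict{d}{\mu}} = \Dist{\rho_{\lambda,\mu}}\bigl(\topdist{d}\bigr)$. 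Since all four maps are convex-linear and $\CausDist{\lambda,\underline{O}} = \Dist{\ExtCausFun{\lambda,\underline{O}}}$ is spanned by the Dirac distributions $\delta_{\Ext{f}}$, it suffices to check the square on each $\delta_{\Ext{f}}$, where it reduces to the pointwise identity $\topdist{\restrict{\Ext{f}}{\mu}} = \rho_{\lambda,\mu}\bigl(\topdist{\Ext{f}}\bigr)$ in $\prod_{\omega \in \dom{\mu}}(O_\omega)^{\TipEqCls{\mu}{\omega}}$. Unwinding the definitions, both sides evaluate at $(\omega, \histconstreqcls{h}{\omega})$ with $\histconstreqcls{h}{\omega} \in \TipEqCls{\mu}{\omega}$ to $\Ext{f}(h)_\omega$: on the left because restricting an extended causal function does not alter its output values, on the right because $\rho_{\lambda,\mu}$ re-indexes $\topdist{\Ext{f}}$ precisely along the class injection $\TipEqCls{\mu \subseteq \lambda}{\omega}$ of Lemma~\ref{lemma:output-history-injection} (the same lemma also guarantees that $\restrict{\Ext{f}}{\mu}$ really is a well-defined extended causal function on $\mu$; one further uses $\dom{\mu} \subseteq \dom{\lambda}$, immediate from $\mu \subseteq \lambda$). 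I expect this index-matching to be the only genuinely fiddly point, though it is routine once the three definitions are written side by side.

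Finally I would conclude. By definition \cite{gogioso2022topology}, $\underline{e}$ is an empirical model on $\mathcal{C}$ precisely when $\restrict{e_\lambda}{\mu} = \restrict{e_{\lambda'}}{\mu}$ for all $\lambda, \lambda' \in \mathcal{C}$ and all lowersets $\mu$ with $\mu \subseteq \lambda$ and $\mu \subseteq \lambda'$ (functoriality of restriction reduces the usual gluing condition on $\lambda \cap \lambda'$ to this uniform form, which is exactly the indexing of Definition~\ref{definition:caus-eqs}). Writing $\underline{u} := (\topdist{e_\lambda})_{\lambda \in \mathcal{C}}$ and applying the naturality square to each such triple, this is equivalent to $\restrict{\underline{u}^{(\lambda)}}{\mu} = \restrict{\underline{u}^{(\lambda')}}{\mu}$ for the same triples, i.e.\ to $\underline{u} \in \CausEqs{\mathcal{C}, \underline{O}}$. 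Hence the pseudo-level bijection of the second paragraph restricts to a bijection from $\EmpModels{\mathcal{C}, \underline{O}}$ onto $\PsEmpModels{\mathcal{C}, \underline{O}} \cap \CausEqs{\mathcal{C}, \underline{O}} = \Causaltope{\mathcal{C}, \underline{O}}$, and this restriction is still convex-linear because the ambient map is.
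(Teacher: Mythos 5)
Your proposal is correct and follows essentially the same route as the paper's proof: both apply Theorem~\ref{theorem:topdist-extdist-generalised} context-wise and then establish that the top-element map intertwines presheaf restriction with $\Dist{\rho_{\lambda,\mu}}$ (the paper does this by expanding $\restrict{\topdist{e_\lambda}}{\mu}$ on the Dirac basis, which is exactly your naturality square checked on $\delta_{\Ext{f}}$), so that compatibility of the family translates into membership in $\CausEqs{\mathcal{C},\underline{O}}$. The only organisational difference is that you set up the full pseudo-level bijection first and then cut down to compatible families, whereas the paper argues injectivity and surjectivity separately; the mathematical content is the same.
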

\begin{proof}
See \ref{proof:theorem:causaltopes-emp-models}
\end{proof}

In the previous Section, empirical models were defined topologically, as compatible families in presheaves of causal distributions.
Theorem \ref{theorem:causaltopes-emp-models} provides an equivalent geometric characterisation for empirical models, as constrained conditional probability distributions on joint outputs.
From this moment onwards, we will freely confuse between the topological and geometric picture, referring to the points of causaltopes as ``empirical models''.

\subsection{Causal inseparability for standard empirical models}
\label{subsection:causal-inseparability}

For standard empirical models on spaces satisfying the free choice condition, our geometric characterisation coincides with the definition typically used by previous literature on indefinite causality: empirical models are probability distributions $u^{(i)}_{o} \in [0,1]$ on joint outputs $o \in \prod_{\omega \in \Events{\Theta}} O_\omega$ conditional to joint inputs $i \in \prod_{\omega \in \Events{\Theta}}\Inputs{\Theta}_\omega = \max\Ext{\Theta}$.

We start by observing that the hierarchy of spaces on common events and inputs reflects into an inclusion hierarchy for the associated standard causaltopes.

\begin{proposition}
\label{proposition:subspace-hierarchy-causaltopes}
Let $\Theta \leq \Theta'$ be a space of input histories such that $\Events{\Theta'} = \Events{\Theta}$ and $\Inputs{\Theta'} = \Inputs{\Theta}$ and $\max\Ext{\Theta}=\max\Ext{\Theta'}$ (e.g. because they both satisfy the free choice condition).
Let $\underline{O} = (O_\omega)_{\omega \in \Events{\Theta}}$ be a family of non-empty sets of outputs.
The standard causaltope for $\Theta$ is always contained into the standard causaltope for $\Theta'$:
\[
\StdCausEqs{\Theta, \underline{O}}
\subseteq\StdCausEqs{\Theta', \underline{O}}
\]
\[
\StdCausaltope{\Theta, \underline{O}}
\subseteq\StdCausaltope{\Theta', \underline{O}}
\]
\end{proposition}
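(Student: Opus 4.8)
The plan is to establish the subspace inclusion $\StdCausEqs{\Theta, \underline{O}} \subseteq \StdCausEqs{\Theta', \underline{O}}$ first, and then read off the causaltope inclusion from it almost for free. The enabling observation is that, because $\max\Ext{\Theta} = \max\Ext{\Theta'}$, the standard covers $\StdCov{\Theta}$ and $\StdCov{\Theta'}$ are indexed by the same set of maximal extended histories; and for each such context $\downset{k}$, being a downset, every event $\omega$ has a single tip-equivalence class, so the corresponding factor of the pseudo-empirical model polytope is simply $\Dist{\prod_{\omega \in \dom{k}} O_\omega}$, with $\dom{\downset{k}} = \dom{k}$ intrinsic to $k$ (and equal to $\Events{\Theta} = \Events{\Theta'}$ under free choice). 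Hence $\PsEmpModels{\StdCov{\Theta}, \underline{O}} = \PsEmpModels{\StdCov{\Theta'}, \underline{O}}$, the embedding vector spaces coincide, and the two causaltopes $\StdCausaltope{\Theta, \underline{O}}$ and $\StdCausaltope{\Theta', \underline{O}}$ are slices of one and the same polytope $\prod_{k}\Dist{\prod_{\omega\in\dom k}O_\omega}$ by the respective subspaces $\StdCausEqs{\Theta, \underline{O}}$ and $\StdCausEqs{\Theta', \underline{O}}$. Once the subspace inclusion is available, the causaltope inclusion follows directly from Proposition \ref{proposition:hierarchy-of-ccpd-polytopes}(i).

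It therefore remains to prove $\StdCausEqs{\Theta, \underline{O}} \subseteq \StdCausEqs{\Theta', \underline{O}}$. First I would put both sides in the compact chain form of Proposition \ref{proposition:caus-eqs-chain-std}: each is cut out by the equation-subspaces $\CausEqs{\StdCov{\Theta}, \underline{O}}_{\downset{h}, \downset{k}, \downset{k'}}$ (resp.\ for $\Theta'$) indexed by an extended history $h$ and by maximal extended histories $k, k'$ with $\downset{h} \subseteq \downset{k}$ and $\downset{h} \subseteq \downset{k'}$. Since $\downset{h}$, $\downset{k}$, $\downset{k'}$ are all downsets, the output-history-restriction maps $\rho_{\downset{k}, \downset{h}}$ appearing in such an equation are ordinary marginalisations of the coordinate block $\prod_{\omega \in \dom{k}} O_\omega$ onto $\prod_{\omega \in \dom{h}} O_\omega$ (using $\dom{h} \subseteq \dom{k}$), so the equation depends only on the intrinsic data $\dom{h}$, $\dom{k}$, $\dom{k'}$ and not on the ambient space. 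Consequently $\StdCausEqs{\Theta, \underline{O}}$ and $\StdCausEqs{\Theta', \underline{O}}$ are intersections, inside the common embedding space, of one and the same assignment $(h,k,k') \mapsto E_{h,k,k'}$ of linear subspaces, taken respectively over index sets $I_{\Theta}$ and $I_{\Theta'}$ that are determined by $\Ext{\Theta}$ (resp.\ $\Ext{\Theta'}$) together with the downset-inclusion order.

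The final step is to show $I_{\Theta'} \subseteq I_{\Theta}$, whence $\StdCausEqs{\Theta, \underline{O}} = \bigcap_{I_{\Theta}} E_{h,k,k'} \subseteq \bigcap_{I_{\Theta'}} E_{h,k,k'} = \StdCausEqs{\Theta', \underline{O}}$. This is the only place the hypothesis $\Theta \leq \Theta'$ is used: unwinding the definition of the order on spaces of input histories from ``The Combinatorics of Causality'', one shows that every triple $(h,k,k')$ indexing a causality equation of $\StdCov{\Theta'}$ (i.e.\ with the downset inclusions computed in $\Theta'$) also indexes one of $\StdCov{\Theta}$ — intuitively because $\Ext{\Theta'} \subseteq \Ext{\Theta}$ and the downset order is compatible — or, at worst, that its equation is a linear consequence of finitely many equations of $\Theta$ via Proposition \ref{proposition:caus-eqs-chain}. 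I expect this combinatorial bookkeeping about $\Ext{(\cdot)}$ and downset inclusions under $\leq$, and the verification that the restriction maps involved are genuinely intrinsic, to be the main obstacle; everything else is a routine application of the slicing machinery of Section \ref{subsection:geometry-causality-ccpd}.
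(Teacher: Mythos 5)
Your plan is correct and follows essentially the same route as the paper's proof: identify the two pseudo-empirical polytopes via $\max\Ext{\Theta}=\max\Ext{\Theta'}$ and the singleton tip-classes of downsets, reduce the causality equations to the chain form of Proposition \ref{proposition:caus-eqs-chain-std}, and use $\Theta \leq \Theta'$ (i.e.\ $\Ext{\Theta'} \subseteq \Ext{\Theta}$) to see that $\Theta$ is cut out by a superset of the equations cutting out $\Theta'$, whence the inclusion of solution subspaces and, by the slicing machinery, of causaltopes. The "combinatorial bookkeeping" you flag as the main obstacle is in fact immediate from the definition of the order on spaces, so your hedge about equations being only linear consequences is unnecessary; your explicit check that each equation subspace $E_{h,k,k'}$ is intrinsic to $\dom{h}\subseteq\dom{k}$ is the one detail the paper leaves implicit.
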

\begin{proof}
See \ref{proof:proposition:subspace-hierarchy-causaltopes}
\end{proof}

An immediate consequence of the above result is that the standard empirical models for any space $\Theta$ satisfying the free choice condition are also standard empirical models for the associated indiscrete space $\Hist{\indiscrete{\Events{\Theta}}, \underline{\Inputs{\Theta}}}$.
At the opposite extreme, the standard causaltope for the discrete space $\Hist{\discrete{\Events{\Theta}}, \underline{\Inputs{\Theta}}}$---i.e. the no-signalling polytope studied by previous literature on non-locality---is always contained in the causaltope of $\Theta$.

\begin{observation}
For any non-empty set $E$ of events and any family of non-empty input sets $\underline{I} = (I_e)_{e \in E}$, the standard causaltope for the indiscrete space $\Hist{\indiscrete{E}, \underline{I}}$ is the polytope of pseudo-empirical models:
\[
\StdCausaltope{\Theta_{ind}, \underline{O}}
= \PsEmpModels{\StdCov{\Theta_{ind}}, \underline{O}}
\]
where we have defined the shorthand $\Theta_{ind} := \Hist{\indiscrete{E}, \underline{I}}$.
\end{observation}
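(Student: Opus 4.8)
The plan is to show that, for the indiscrete space, the causality equations impose no constraint at all, so the causaltope coincides with the full polytope of pseudo-empirical models that it slices. Unfolding the definition of the causaltope together with the simplified form of the standard causaltope, one has
\[
\StdCausaltope{\Theta_{ind}, \underline{O}}
= \StdCausEqs{\Theta_{ind}, \underline{O}} \cap \PsEmpModels{\StdCov{\Theta_{ind}}, \underline{O}},
\]
so it suffices to prove $\StdCausEqs{\Theta_{ind}, \underline{O}} = \PsEmpModelsVec{\StdCov{\Theta_{ind}}, \underline{O}}$, i.e. that every causality equation from Definition \ref{definition:caus-eqs} is satisfied by every pseudo-empirical model.

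First I would record the combinatorial structure of $\Theta_{ind} := \Hist{\indiscrete{E}, \underline{I}}$ from the definitions of \cite{gogioso2022combinatorics}: since the indiscrete order on $E$ has only $\emptyset$ and $E$ as lowersets, the input histories of $\Theta_{ind}$ are exactly the empty history $\bot$ (with $\dom{\bot} = \emptyset$) and the total input assignments, so $\Ext{\Theta_{ind}} = \Theta_{ind} = \{\bot\} \cup \prod_{e\in E} I_e$, $\max\Ext{\Theta_{ind}} = \prod_{e\in E} I_e$, and $\downset{k} = \{\bot, k\}$ for each maximal $k$. Hence the lowersets of $\Theta_{ind}$ are $\emptyset$, $\{\bot\}$, and the sets $\{\bot\} \cup S$ for $S \subseteq \prod_{e\in E} I_e$, and the standard cover is $\StdCov{\Theta_{ind}} = \suchthat{\downset{k}}{k \in \prod_{e\in E} I_e}$.

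Then I would go term by term through the intersection defining $\StdCausEqs{\Theta_{ind}, \underline{O}}$, whose summands $\CausEqs{\StdCov{\Theta_{ind}}, \underline{O}}_{\mu, \lambda, \lambda'}$ are indexed by $\mu \in \Lsets{\Theta_{ind}}$ and $\lambda, \lambda' \in \StdCov{\Theta_{ind}}$ with $\mu \subseteq \lambda$ and $\mu \subseteq \lambda'$. If $\mu \in \{\emptyset, \{\bot\}\}$, then $\dom{\mu} = \emptyset$, so the codomain $\prod_{\omega \in \dom{\mu}} (O_\omega)^{\TipEqCls{\mu}{\omega}}$ of the restriction map $\rho_{\lambda,\mu}$ is a one-point set; hence each restriction $\restrict{\underline{u}^{(\lambda)}}{\mu}$ is the unique distribution on that point, and $\restrict{\underline{u}^{(\lambda)}}{\mu} = \restrict{\underline{u}^{(\lambda')}}{\mu}$ holds for all $\underline{u}$. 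Otherwise $\mu = \{\bot\} \cup S$ with $S \neq \emptyset$, and $\mu \subseteq \downset{k} = \{\bot, k\}$ forces $S = \{k\}$, i.e. $\mu = \downset{k}$; but then $\downset{k}$ is the only standard context containing $\mu$, so $\lambda = \lambda' = \downset{k}$, $\rho_{\lambda,\mu}$ is the identity, and the equation is trivially true. In every case the summand is all of $\PsEmpModelsVec{\StdCov{\Theta_{ind}}, \underline{O}}$, so $\StdCausEqs{\Theta_{ind}, \underline{O}} = \PsEmpModelsVec{\StdCov{\Theta_{ind}}, \underline{O}}$ and the claim follows. (Alternatively, once the structure of $\Theta_{ind}$ is established, one may instead invoke Proposition \ref{proposition:caus-eqs-chain-std}, noting that $n_h = 1$ for every maximal $h$ and that the $h = \bot$ generator has empty domain.)

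The only real subtlety — the step I would be most careful about — is extracting the structure of $\Hist{\indiscrete{E}, \underline{I}}$ and its extended histories correctly: the whole argument rests on the fact that the only nonempty lowerset of $\Theta_{ind}$ strictly below a standard context $\downset{k}$ is $\{\bot\}$, which has empty domain, so two distinct standard contexts never share a common sub-context carrying any output information. Granting that, the remainder is routine bookkeeping with the restriction maps.
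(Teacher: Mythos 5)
The paper states this as an unproved Observation, so there is no official proof to compare against; your proposal supplies the argument the authors evidently considered routine, and it is essentially correct: distinct standard contexts of $\Hist{\indiscrete{E}, \underline{I}}$ share no common lowerset carrying output information, so the causality equations cannot constrain the polytope of pseudo-empirical models. Your identification of the combinatorial structure is right in the respects that matter ($\max\Ext{\Theta_{ind}} = \prod_{e}I_e$, any two distinct total assignments are incompatible, hence any common lowerset $\mu$ of two distinct contexts has $\dom{\mu}=\emptyset$), though strictly $\bot \notin \Theta_{ind}$ itself, so $\downset{k}=\{k\}$ rather than $\{\bot,k\}$ — this changes nothing downstream.

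The one step I would push back on is the intermediate claim $\StdCausEqs{\Theta_{ind}, \underline{O}} = \PsEmpModelsVec{\StdCov{\Theta_{ind}}, \underline{O}}$. When $\dom{\mu}=\emptyset$ the codomain of $\rho_{\lambda,\mu}$ is a one-point set, so $\restrict{\underline{u}^{(\lambda)}}{\mu}$ is the \emph{total mass} $\sum_o u^{(\lambda)}_o$; the equation $\restrict{\underline{u}^{(\lambda)}}{\mu}=\restrict{\underline{u}^{(\lambda')}}{\mu}$ is therefore not vacuous on the ambient vector space — it is precisely the quasi-normalisation constraint equating the masses of the two contexts, a proper linear subspace of $\PsEmpModelsVec{\StdCov{\Theta_{ind}}, \underline{O}}$. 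Your justification (``each restriction is the unique distribution on that point'') is valid only for $\underline{u}$ whose components are already normalised distributions, i.e.\ for points of the polytope, not for arbitrary vectors. This is harmless for the statement you actually need: every pseudo-empirical model has all context masses equal to $1$, so $\PsEmpModels{\StdCov{\Theta_{ind}}, \underline{O}} \subseteq \CausEqs{\StdCov{\Theta_{ind}}, \underline{O}}$ and the slicing does nothing, which gives the Observation. I would simply replace the subspace equality by this containment (or restrict the claim to lowersets with non-empty domain, which is how the paper's figures separate ``causality equations'' from ``quasi-normalisation equations''). With that adjustment the proof is complete.
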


\begin{observation}
Let $\Theta$ be a space of input histories and let $\underline{O} = (O_\omega)_{\omega \in \Events{\Theta}}$ be a family of non-empty sets of outputs.
The standard causaltope for the discrete space $\Hist{\discrete{\Events{\Theta}}, \underline{\Inputs{\Theta}}}$ is always contained in the causaltope for $\Theta$:
\[
\StdCausaltope{\Hist{\discrete{\Events{\Theta}}, \underline{\Inputs{\Theta}}}, \underline{O}}
\subseteq \StdCausaltope{\Theta, \underline{O}}
\]
For any non-empty set $E$ of events and any family of non-empty input sets $\underline{I} = (I_e)_{e \in E}$, we refer to $\StdCausaltope{\Hist{\discrete{E}, \underline{I}}, \underline{O}}$ as the \emph{no-signalling causaltope}.
\end{observation}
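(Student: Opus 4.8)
The plan is to obtain the inclusion directly from Proposition \ref{proposition:subspace-hierarchy-causaltopes}. Write $\Theta_{disc} := \Hist{\discrete{\Events{\Theta}}, \underline{\Inputs{\Theta}}}$ for the discrete space, so that the assertion to be proved is $\StdCausEqs{\Theta_{disc}, \underline{O}} \subseteq \StdCausEqs{\Theta, \underline{O}}$ and $\StdCausaltope{\Theta_{disc}, \underline{O}} \subseteq \StdCausaltope{\Theta, \underline{O}}$. Both follow at once from Proposition \ref{proposition:subspace-hierarchy-causaltopes}, applied with its ``$\Theta$'' instantiated to $\Theta_{disc}$ and its ``$\Theta'$'' instantiated to our $\Theta$, \emph{provided} the three hypotheses of that proposition hold: (i) $\Theta_{disc} \leq \Theta$ in the order on spaces of input histories; (ii) $\Events{\Theta_{disc}} = \Events{\Theta}$ and $\Inputs{\Theta_{disc}} = \Inputs{\Theta}$; and (iii) $\max\Ext{\Theta_{disc}} = \max\Ext{\Theta}$. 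Thus the proof reduces to verifying (i)--(iii).

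Hypothesis (ii) is immediate, since $\Theta_{disc}$ is by definition the space of input histories built from the discrete causal order on the event set $\Events{\Theta}$ equipped with the family of input sets $\underline{\Inputs{\Theta}}$. For hypothesis (i), recall from \cite{gogioso2022combinatorics} that the discrete causal order imposes no precedence constraints: hence $\Theta_{disc}$ contains every input history over $\Events{\Theta}$ with inputs drawn from $\underline{\Inputs{\Theta}}$, in particular all histories of $\Theta$, and it is the least element among spaces of input histories with the given events and input sets. So $\Theta_{disc} \leq \Theta$ for every such $\Theta$; this is precisely the sense in which $\Theta_{disc}$ is ``discrete'', being the most permissive space and correspondingly the most causally constrained at the level of empirical data (its standard causaltope being the no-signalling causaltope).

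Hypothesis (iii) is the point that needs care, and I expect it to be the main obstacle. Since the discrete causal order imposes no constraints, $\Theta_{disc}$ satisfies the free choice condition; by the discussion preceding this observation we then have $\max\Ext{\Theta_{disc}} = \prod_{\omega \in \Events{\Theta}} \Inputs{\Theta}_\omega$, with $\dom{k} = \Events{\Theta}$ for each maximal extended history $k$. It remains to confirm that $\max\Ext{\Theta}$ is this same set of total input assignments, which should follow from the construction of the extension $\Ext{\cdot}$ in \cite{gogioso2022combinatorics, gogioso2022topology}: the maximal extended input histories of any space of input histories are exactly the total input functions on its events, independently of whether the space itself satisfies free choice. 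The delicate part is that one needs these two sets of maximal extended histories to coincide \emph{on the nose}, so that $\StdCausaltope{\Theta_{disc}, \underline{O}}$ and $\StdCausaltope{\Theta, \underline{O}}$ are polytopes of conditional probability distributions over the \emph{same} ambient space $\prod_{k}\Dist{\prod_{\omega \in \Events{\Theta}} O_\omega}$ and the claimed inclusion even typechecks; merely an embedding of one into the other would not do. An alternative, more hands-on route avoids invoking Proposition \ref{proposition:subspace-hierarchy-causaltopes} and instead observes that $\Lsets{\Theta} \subseteq \Lsets{\Theta_{disc}}$ (spaces of input histories being downward closed under restriction), so every causality equation indexing $\StdCausEqs{\Theta, \underline{O}}$ occurs among those defining $\StdCausEqs{\Theta_{disc}, \underline{O}}$ and hence the latter subspace is contained in the former; but this still requires the careful comparison of the contexts $\downset{k}$ and their domains across the two spaces, which is exactly the bookkeeping that Proposition \ref{proposition:subspace-hierarchy-causaltopes} packages. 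Once (iii) is settled, that proposition applies verbatim and the proof is complete.
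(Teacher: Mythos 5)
Your proposal is correct and follows essentially the same route as the paper, which presents this observation as an ``immediate consequence'' of Proposition \ref{proposition:subspace-hierarchy-causaltopes} applied with the discrete space as the bottom element of the hierarchy. The one caveat you rightly flag --- that $\max\Ext{\Theta}$ must coincide on the nose with the set of total input assignments, which is not automatic for arbitrary spaces --- is resolved by the standing assumption at the start of Subsection \ref{subsection:causal-inseparability} that the spaces under consideration satisfy the free choice condition.
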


The following definitions provide ``causally flavoured'' variants of Definition \ref{definition:component-supported-fraction} and Definition \ref{definition:decomposition-supported-fraction}.
In particular, we adopt a special name for the fraction supported by the no-signalling causaltope.

\begin{definition}
\label{definition:component-supported-fraction-causal}
Let $\Theta$ be a space of input histories and let $\underline{O} = (O_\omega)_{\omega \in \Events{\Theta}}$ be a family of non-empty sets of outputs.
Let $\underline{u} \in \StdCausaltope{\Theta, \underline{O}}$ be a standard empirical model.
For any $\Theta' \leq \Theta$ such that $\Events{\Theta'} = \Events{\Theta}$ and $\Inputs{\Theta'} = \Inputs{\Theta}$ and $\max\Ext{\Theta}=\max\Ext{\Theta'}$, we give the following definitions:
\begin{itemize}
    \item A \emph{component} of $\underline{u}$ in $\Theta'$ is a component of $\underline{u}$ in the sub-polytope of constrained conditional probability distributions $\StdCausaltope{\Theta', \underline{O}}$ according to Definition \ref{definition:component-supported-fraction}.
    \item A \emph{maximal} component of $\underline{u}$ in $\Theta'$ is one of maximal mass.
    \item The \emph{causal fraction} of $\underline{u}$ in $\Theta'$ is the mass of a maximal component of $\underline{u}$ in $\Theta'$.
\end{itemize}
Colloquially, we say that $\underline{u}$ is $X\%$ supported by $\Theta'$ to mean that the supported fraction of $\underline{u}$ in $\Theta'$ is $\frac{X}{100}$.
The \emph{no-signalling fraction} of $\underline{u}$ is the causal fraction of $\underline{u}$ in the discrete space $\Hist{\discrete{\Events{\Theta}}, \underline{\Inputs{\Theta}}}$.
\end{definition}

\begin{definition}
\label{definition:decomposition-supported-fraction-causal}
Let $\Theta$ be a space of input histories and let $\underline{O} = (O_\omega)_{\omega \in \Events{\Theta}}$ be a family of non-empty sets of outputs.
Let $\underline{u} \in \StdCausaltope{\Theta, \underline{O}}$ be a standard empirical model.
For any family $\left(\Theta^{(z)}\right)_{z \in Z}$ of sub-spaces $\Theta^{(z)} \leq \Theta$ such that $\Events{\Theta^{(z)}} = \Events{\Theta}$ and $\Inputs{\Theta^{(z)}} = \Inputs{\Theta}$ and $\max\Ext{\Theta^{(z)}}=\max\Ext{\Theta}$, we give the following definitions:
\begin{itemize}
    \item A \emph{(causal) decomposition} of $\underline{u}$ over the sub-spaces $\left(\Theta^{(z)}\right)_{z \in Z}$ is a decomposition of $\underline{u}$ in $\left(\StdCausaltope{\Theta^{(z)}, \underline{O}}\right)_{z \in Z}$ according to Definition \ref{definition:decomposition-supported-fraction}.
    \item A \emph{maximal (causal) decomposition} of $\underline{u}$ over the sub-spaces $\left(\Theta^{(z)}\right)_{z \in Z}$ is one of maximal mass.
    \item The \emph{causal fraction} of $\underline{u}$ over the sub-spaces $\left(\Theta^{(z)}\right)_{z \in Z}$ is the mass of a maximal decomposition of $\underline{u}$ in $\left(\Theta^{(z)}\right)_{z \in Z}$.
\end{itemize}
Colloquially, we say that $\underline{u}$ is $X\%$ supported by $\left(\Theta^{(z)}\right)_{z \in Z}$ to mean that the causal fraction of $\underline{u}$ over the sub-spaces $\left(\Theta^{(z)}\right)_{z \in Z}$ is $\frac{X}{100}$.
\end{definition}

Now consider a standard empirical model $\underline{u}$ for a causally incomplete space $\Theta$.
A key question in the study of indefinite causality is whether the $\underline{u}$ admits an explanation in terms of ``dynamical'' definite causal structure, i.e. whether it is 100\% jointly supported by some causally complete sub-spaces of $\Theta$.
This leads to the definition of the qualitative notion of ``causal (in)separability'' and the associated quantitative notion of ``causally (in)separable fraction''.

\begin{definition}
\label{definition:causal-separability}
Let $\Theta$ be a space of input histories and let $\underline{O} = (O_\omega)_{\omega \in \Events{\Theta}}$ be a family of non-empty sets of outputs.
Let $\underline{u} \in \StdCausaltope{\Theta, \underline{O}}$ be a standard empirical model.
We give the following definitions:
\begin{itemize}
    \item The \emph{causally separable fraction} of $\underline{u}$ is the causal fraction of $\underline{u}$ over the causal completions of $\Theta$, i.e. over the maximal causally complete subspaces of $\Theta$. If $\Theta$ is causally complete, the causally separable fraction is always 1.
    \item The \emph{causally inseparable fraction} of $\underline{u}$ is 1 minus its causally separable fraction.
    \item The empirical model $\underline{u}$ is \emph{causally separable} if it has causally separable fraction 1, and it is \emph{causally inseparable} otherwise.
\end{itemize}
We say that $\underline{u}$ has $X\%$ causally (in)separable fraction if its causally (in)separable fraction is $\frac{X}{100}$.
\end{definition}

The definition of causal inseparability provided above is relative to an ambient space $\Theta$ for the empirical model $\underline{u}$: we take the causal constraints of $\Theta$ as a given, and ask whether the empirical model is causally (in)separable conditional to those constraints.
However, it is important to note that this notion of causal inseparability is strictly finer than that used by previous literature on indefinite causal order \cite{oreshkov2012quantum,oreshkov2016causal,abbott2016multipartite}, where $\Theta$ is fixed to the indiscrete space $\Hist{\indiscrete{E}, \underline{I}}$.

\begin{remark}
The recursive definition of causal inseparability used by previous literature \cite{oreshkov2012quantum,oreshkov2016causal,abbott2016multipartite} allows for the possibility of conditioning the empirical model for subsequent events on the output value at the first event.
Such a conditional definition might appear more general at first sight, but it is actually equivalent to the one given above, because the randomness of the output at the first event can always be lifted to a local hidden variable (cf. Theorem 4.51 p.77 \cite{gogioso2022topology}).
For each fixed value of the random variable, the output at the first event depends deterministically on the input: hence, any formal dependence of the empirical model for the remaining parties on the output for the first even is lifted to a dependence on its input, compatibly with our definition.
\end{remark}

The increased generality of the notion we provide allows non-trivial causal constraints to be captured, and is therefore more suitable for the analysis of practical experimental setups, where much of the causal structure between events is typically known.
Additionally, our approach has two distinct practical advantages:
\begin{itemize}
    \item Fixing a non-indiscrete ambient space $\Theta$ restricts the number of causal completions that need to be considered to compute the causally separable fraction of an empirical model. This can be extremely significant, even for few events: for the 6-party contextually-controlled classical switch example presented later on, the reduction is from the 16511297126400 (or more) causal completions of the indiscrete space to the known 16 causal completions for the ambient space provided.
    \item Fixing a non-indiscrete ambient space $\Theta$ makes it easier to identify concrete examples of indefinite causal order: there are empirical models---such as the two entangled quantum switches---which witness indefinite causality relative to the given ambient space, i.e. if part of the causal structure between events is known, but are causally separable relative to the indiscrete space, i.e. if no causal structure between the events is assumed. 
\end{itemize}

We conclude this Subsection by proving a result showing that causal inseparability is bounded above by the ``separable'' non-local fraction (cf. Definition 4.38 p.70 of ``The Topology of Causality'' \cite{gogioso2022topology}), and formulating a related conjecture.
Several concrete examples of causal separability correlating to contextuality, more or less tightly, are presented in Subsection \ref{subsection:geometry-causality-examples} below.

\begin{proposition}
\label{proposition:csep-frac-bounded-below-by-csep-local-frac}
Let $\Theta$ be a space of input histories and let $e$ be a standard empirical model for $\Theta$.
The causally separable fraction of $e$ is bounded below by its separable local fraction.
In particular, if $e$ is local in a causally separable way, then $e$ is causally separable.
\end{proposition}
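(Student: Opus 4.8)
The plan is to show that every \emph{separable local} component of $e$ can be repackaged as a (rather trivial) causal decomposition over the causal completions of $\Theta$, which then yields the stated inequality of fractions directly. Write $\underline{u} \in \StdCausaltope{\Theta, \underline{O}}$ for the geometric representative of $e$ (identifying empirical models with points of causaltopes via Theorem~\ref{theorem:causaltopes-emp-models}). Unfolding Definition~4.38 of \cite{gogioso2022topology}, the separable local fraction of $e$ is the largest $m \in [0,1]$ for which there exist a separable local empirical model $\ell_0$ and an empirical model $e'$ with $e = m\,\ell_0 + (1-m)\,e'$. Set $\underline{\ell} := m\,\ell_0$. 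The first thing I would check is that $\underline{\ell}$ is a component of $\underline{u}$ in the sense of Definition~\ref{definition:component-supported-fraction}: it is non-negative; it satisfies $\underline{\ell} \leq \underline{u}$ since $\underline{u} - \underline{\ell} = (1-m)\,\underline{e'} \geq \underline{0}$, hence also $\underline{\ell} \leq \underline{1}$; it lies in the linear (homogeneous) subspace $\StdCausEqs{\Theta, \underline{O}}$, being a scalar multiple of the empirical model $\ell_0$; and its mass is $\mass{\underline{\ell}} = m$.

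The crucial step is the observation that $\ell_0$, being separable local — i.e.\ a classical mixture of independent local responses, so that all correlations between events factorise — lies in the no-signalling causaltope $\StdCausaltope{\Hist{\discrete{\Events{\Theta}}, \underline{\Inputs{\Theta}}}, \underline{O}}$; indeed its deterministic extreme points are product behaviours, which are among the vertices of the product of simplices underlying the discrete space. Since a scalar multiple of a point of a polytope of constrained conditional probability distributions lies in the corresponding quasi-normalised polytope, $\underline{\ell}$ lies in the quasi-normalised version of the no-signalling causaltope. Now fix an arbitrary causal completion $\Theta^{(z)} \leq \Theta$; it has the same events and inputs as $\Theta$, so by the Observation stating that the no-signalling causaltope is contained in the causaltope of any space, $\StdCausaltope{\Hist{\discrete{\Events{\Theta}}, \underline{\Inputs{\Theta}}}, \underline{O}} \subseteq \StdCausaltope{\Theta^{(z)}, \underline{O}}$, and hence by Proposition~\ref{proposition:hierarchy-of-ccpd-polytopes}(v) the same inclusion holds for the associated quasi-normalised polytopes. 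Therefore $\underline{\ell}$ is a component of $\underline{u}$ in $\Theta^{(z)}$ for \emph{every} causal completion $\Theta^{(z)}$ of $\Theta$.

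It remains to assemble the causal decomposition. Let $(\Theta^{(z)})_{z \in Z}$ enumerate the causal completions of $\Theta$ and fix any $z_0 \in Z$; put $\underline{v}^{(z_0)} := \underline{\ell}$ and $\underline{v}^{(z)} := \underline{0}$ for $z \neq z_0$. Each $\underline{v}^{(z)}$ lies in the quasi-normalised version of $\StdCausaltope{\Theta^{(z)}, \underline{O}}$ — by the previous step for $z_0$, trivially for the rest — and $\sum_{z \in Z} \underline{v}^{(z)} = \underline{\ell} \leq \underline{u}$, so this is a decomposition of $\underline{u}$ over $(\Theta^{(z)})_{z \in Z}$ in the sense of Definition~\ref{definition:decomposition-supported-fraction}, of mass $\sum_{z \in Z} \mass{\underline{v}^{(z)}} = m$. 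By Definition~\ref{definition:causal-separability}, the causally separable fraction of $e$ — the mass of a maximal such decomposition — is therefore at least $m$, i.e.\ at least the separable local fraction of $e$, which is the assertion. The final sentence is then immediate: if $e$ is local in a causally separable way then $m = 1$, forcing the causally separable fraction of $e$ to equal $1$, so $e$ is causally separable.

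I do not expect a genuine obstacle here. The only delicate points are book-keeping: verifying that a scalar multiple $m\,\ell_0$ of an empirical model is a bona fide quasi-normalised component (non-negativity, the bound by $\underline{1}$, membership in the homogeneous subspace, and mass $m$), pinning down that ``separable local'' is precisely the condition placing $\ell_0$ inside the no-signalling causaltope, and noting that the family of causal completions of $\Theta$ is non-empty — which is in any case already presupposed for the causally separable fraction to be defined.
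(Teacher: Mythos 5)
Your overall scaffolding---identifying the separable local fraction $m$, checking that $m\,\ell_0$ is a legitimate quasi-normalised component of mass $m$, and then assembling a causal decomposition over the causal completions to lower-bound the causally separable fraction---matches the paper's strategy. But the step you yourself flag as crucial is wrong, and it breaks the proof. ``Separably local'' (Definitions 4.37--4.38 of \cite{gogioso2022topology}, as recalled in the paper's proof) does not mean that $\ell_0$ is a mixture of product behaviours, and it does not place $\ell_0$ in the no-signalling causaltope. It means that $\ell_0$ is a convex mixture of deterministic \emph{causal functions}, each of which is causally separable, i.e.\ each of which is a causal function for at least one causal completion of $\Theta$. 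Such functions may signal freely along the causal order of whichever completion supports them: for the total order $\total{A,B}$, the function copying \ev{A}'s input to \ev{B}'s output is causally separable but manifestly signalling. The paper's own classical-switch example is separably local yet has no-signalling fraction well below $1$, so your claimed inclusion of $\ell_0$ in $\StdCausaltope{\Hist{\discrete{\Events{\Theta}},\underline{\Inputs{\Theta}}},\underline{O}}$ fails on a concrete instance from the paper.

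The error propagates into your assembly step: you place the entire component $\underline{\ell}$ into a single causal completion $\Theta^{(z_0)}$, which would require $\ell_0$ itself to be $100\%$ supported by one completion. In general it is not---in the classical-switch example the model splits $75\%/25\%$ between the two switch completions with no overlap. The correct move, and the one the paper makes, is to write $\ell_0$ as a convex mixture of causally separable functions, group the terms according to which causal completion supports each function, and rescale by $m$; this distributes the mass across possibly many completions and yields a decomposition of total mass $m$ in the sense of Definition \ref{definition:decomposition-supported-fraction}. With that replacement your remaining bookkeeping (non-negativity, the bound $\underline{\ell}\leq\underline{u}$, mass $m$, and the final sentence for $m=1$) goes through as you state it.
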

\begin{proof}
See \ref{proof:proposition:csep-frac-bounded-below-by-csep-local-frac}
\end{proof}

\begin{conjecture}
Let $\Theta$ be a space of input histories and let $e$ be a standard empirical model for $\Theta$.
If $e$ is both local and causally separable, then it is local in a causally separable way.
\end{conjecture}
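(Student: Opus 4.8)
The plan is to make both hypotheses explicit as convex decompositions and then reduce the conjecture to a geometric identity about how the local polytope meets the ``causally separable polytope''. Write $\Theta^{(1)},\dots,\Theta^{(N)}$ for the causal completions of $\Theta$, which share its events, inputs and maximal extended histories by Definition \ref{definition:causal-separability}. For a subspace $\Theta'\leq\Theta$ of this kind let $L(\Theta')\subseteq\StdCausaltope{\Theta',\underline{O}}$ be the convex hull of the deterministic empirical models of $\Theta'$, i.e.\ of its causal functions; an empirical model is \emph{local} precisely when it lies in $L(\Theta)$. By Proposition \ref{proposition:subspace-hierarchy-causaltopes} and the definition of the causaltope, $\StdCausaltope{\Theta^{(z)},\underline{O}}=\StdCausaltope{\Theta,\underline{O}}\cap\StdCausEqs{\Theta^{(z)},\underline{O}}$, so each completion causaltope is a \emph{slice} of $\StdCausaltope{\Theta,\underline{O}}$. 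Setting $C:=\mathrm{conv}\bigl(\bigcup_{z}\StdCausaltope{\Theta^{(z)},\underline{O}}\bigr)$ and unwinding Definitions \ref{definition:causal-separability} and \ref{definition:decomposition-supported-fraction-causal} (a decomposition of total mass $1$ lying below $e$ equals $e$), $e$ is causally separable iff $e\in C$, while $e$ is local in a causally separable way iff $e\in\mathrm{conv}\bigl(\bigcup_{z}L(\Theta^{(z)})\bigr)$. The conjecture is thus: $e\in L(\Theta)\cap C$ implies $e\in\mathrm{conv}\bigl(\bigcup_{z}L(\Theta^{(z)})\bigr)$.

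First I would establish that locality is inherited by slices: $L(\Theta^{(z)})=L(\Theta)\cap\StdCausaltope{\Theta^{(z)},\underline{O}}$ for each $z$. The inclusion $\subseteq$ is immediate, since a causal function of $\Theta^{(z)}$ is also one of $\Theta$ and $\StdCausaltope{\Theta^{(z)},\underline{O}}\subseteq\StdCausaltope{\Theta,\underline{O}}$. For $\supseteq$, one must show that a local empirical model of $\Theta$ whose top-element distributions obey the stronger causality equations of $\Theta^{(z)}$ can be written as a mixture of causal functions of $\Theta^{(z)}$ alone; I would prove this via the sheaf-theoretic description of locality in ``The Topology of Causality'', cleaning up the canonical hidden-variable model by redistributing weight along the fibres of histories on which $\Theta^{(z)}$ forces equal outputs while $\Theta$ does not.

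The heart of the proof is then the identity $L(\Theta)\cap C=\mathrm{conv}\bigl(\bigcup_{z}\bigl(L(\Theta)\cap\StdCausaltope{\Theta^{(z)},\underline{O}}\bigr)\bigr)$, which together with Step 1 yields the conjecture (the inclusion $\supseteq$ being trivial). I would approach it by linear-programming duality on a maximal-decomposition program in the spirit of Observation \ref{observation:max-decomposition-lp}, now for the family $\bigl(L(\Theta)\cap\StdCausaltope{\Theta^{(z)},\underline{O}}\bigr)_{z}$: if the supported fraction of $e$ in this family were strictly below $1$, a dual optimum produces a linear functional strictly separating $e$ from $\mathrm{conv}\bigl(\bigcup_{z}(L(\Theta)\cap\StdCausaltope{\Theta^{(z)},\underline{O}})\bigr)$, and the aim is to exhibit this ``separable-local witness'' as a nonnegative combination of a functional that is valid on $L(\Theta)$ (a non-locality inequality for $\Theta$, which $e$ satisfies since it is local) and a functional that is valid on $C$ (a causal inequality, which $e$ satisfies since it is causally separable), contradicting the separation.

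The main obstacle is exactly this last decomposition, i.e.\ Step 2: the identity is \emph{false} for arbitrary families of polytopes --- two behaviours in distinct $\StdCausaltope{\Theta^{(z)},\underline{O}}$ can each be strongly non-local while their mixture is local (the ``$\mathrm{PR}+\overline{\mathrm{PR}}$'' cancellation) --- so the argument must use the special geometry of causaltopes. Concretely, one has to show that $\mathrm{conv}\bigl(\bigcup_{z}(L(\Theta)\cap\StdCausaltope{\Theta^{(z)},\underline{O}})\bigr)$ has no ``genuinely mixed'' facet beyond the inherited non-locality and causal inequalities --- equivalently, that every face of $C$ meeting $L(\Theta)$ is already spanned by causally separable causal functions. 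I expect this to require a Carath\'eodory-type argument confining any decomposition of $e\in L(\Theta)$ to the minimal face of $C$ containing it, a classification of the \emph{deterministic} vertices of $C$ (each a causal function of some $\Theta^{(z)}$, hence causally separable), and a proof that a local point of $C$ never needs a non-deterministic vertex of $C$ in its decomposition. Step 1, concerning a single slice rather than a hull, is a milder instance of the same phenomenon and should be within reach of the existing machinery.
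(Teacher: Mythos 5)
This statement is left as an open conjecture in the paper, so there is no proof of record to compare against; what matters is whether your argument would actually close it, and it does not. You have written a strategy, not a proof, and both of its load-bearing steps are unestablished. Step 2 you flag yourself: the identity $L(\Theta)\cap C=\mathrm{conv}\bigl(\bigcup_{z}(L(\Theta)\cap\StdCausaltope{\Theta^{(z)},\underline{O}})\bigr)$ is false for generic families of polytopes (your $\mathrm{PR}+\overline{\mathrm{PR}}$ remark), and the ``special geometry of causaltopes'' that is supposed to rescue it --- the classification of deterministic vertices of $C$, the claim that a local point of $C$ never needs a non-deterministic vertex, the Carath\'eodory confinement to a minimal face --- is exactly the content of the conjecture restated in convex-geometric language. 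Announcing that an LP-duality witness ``should'' split into a non-locality inequality plus a causal inequality is a reformulation of what must be proved, not an argument for it.

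Step 1 is a more serious problem because you treat it as routine when the paper itself exhibits the phenomenon that defeats the naive version of it. The causal fork empirical model is local for the causally complete space $\Theta_{\ev{A}\vee (\ev{C}\seqcomposeSym \ev{B})}$, lies entirely inside the standard causaltope of a proper subspace (the non-tight space in equivalence class 2), and yet has local fraction $0\%$ there: membership in a sub-causaltope does not let you ``clean up'' a hidden-variable model into one supported on the causal functions of the subspace, because the causal functions of the subspace can be a strictly smaller set (here, only the no-signalling functions) even though the causaltope barely shrinks. Your proposed fix --- redistributing weight along fibres where $\Theta^{(z)}$ forces equal outputs --- is precisely the move that fails in that example. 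Restricting to causal completions $\Theta^{(z)}$ rather than arbitrary subspaces may well save the inclusion $L(\Theta)\cap\StdCausaltope{\Theta^{(z)},\underline{O}}\subseteq L(\Theta^{(z)})$, but you give no argument that uses causal completeness, so as written Step 1 rests on a principle the paper refutes. Until both steps are supplied with genuine proofs, the conjecture remains open.
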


\subsection{Causal equations for standard causaltopes}
\label{subsection:std-causal-eqs}

To gain practical familiarity with standard causaltopes and the causal equations that define them, we investigate the standard causaltopes for the causally complete spaces with 3 events and binary inputs, listed in the companion work ``Classification of causally complete spaces on 3 events with binary inputs'' \cite{gogioso2022classification}.
Figure \ref{fig:nosig-eqs} (p.\pageref{fig:nosig-eqs}) shows the causality equations and quasi-normalisation equations for the no-signalling causaltope.
Discussing these equations is, in fact, enough to understand the standard causaltopes for all other spaces in the hierarchy: the discrete space $\Theta_0$, at the bottom, contains all possible extended input histories, and hence the causality equations for other spaces are always a subset of those for the discrete space.

\begin{figure}[p!]
    \centering
    \includegraphics[width=13cm]{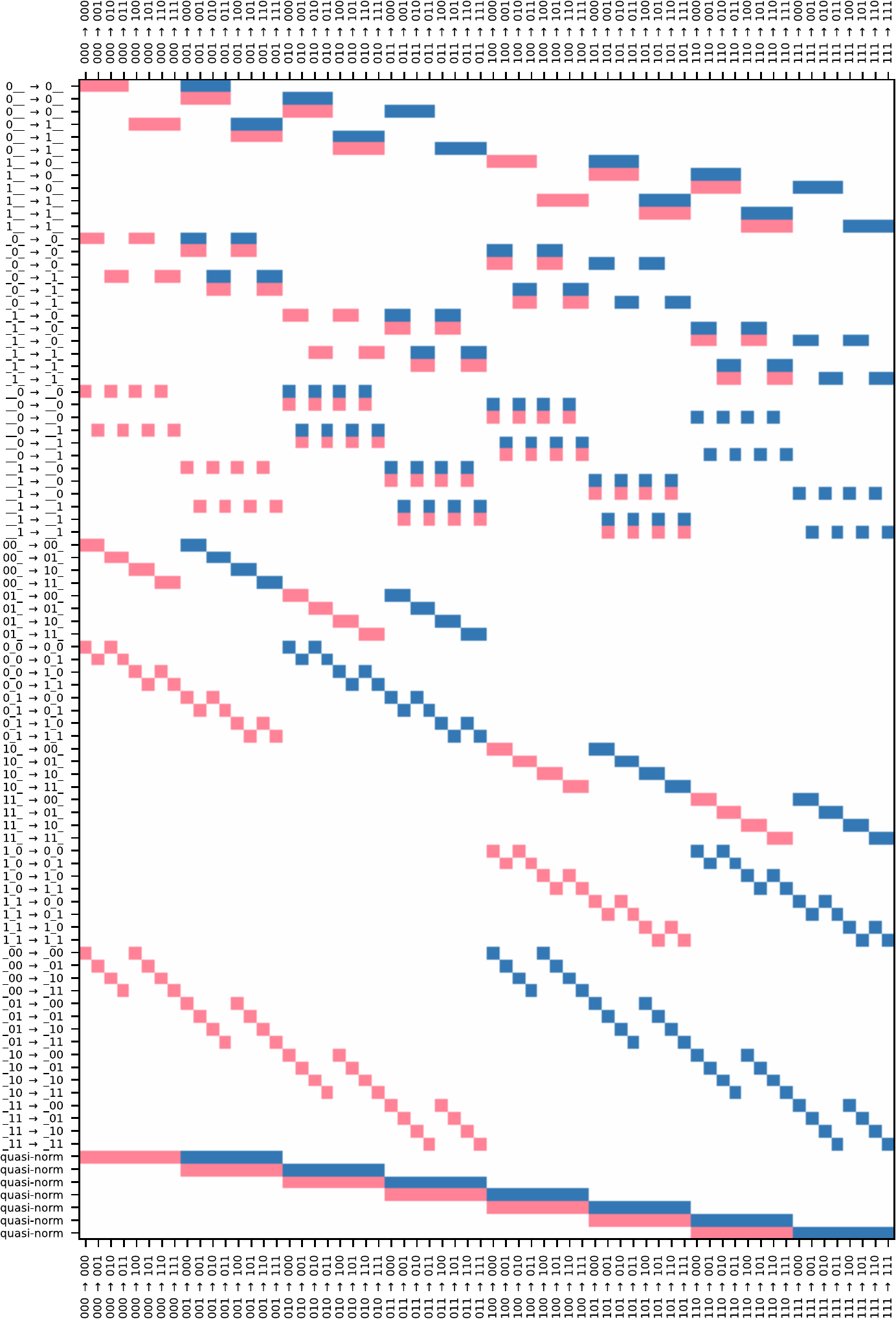}
    \caption{
    Causal equations for the no-signalling causaltope on 3 events and binary inputs.
    Columns correspond to the 64 components of standard pseudo-empirical models, linearised and indexed as $i_A i_B i_C \rightarrow o_A o_B o_C$.
    Rows correspond to the causality equations and quasi-normalisation equations for the no-signalling causaltope.
    Causality equations are indexed as $i_A i_B i_C \rightarrow o_A o_B o_C$, with underscores indicating marginals: see main text for discussion.
    Equation entries are colour-coded, for ease of reading: white is 0, red is 1, blue is -1.
    }
\label{fig:nosig-eqs}
\end{figure}

In Figure \ref{fig:nosig-eqs} (p.\pageref{fig:nosig-eqs}), the columns correspond to the 64 components of standard pseudo-empirical models, which have been linearised.
Specifically, a generic column is indexed as $i_A i_B i_C \rightarrow o_A o_B o_C$, and it corresponds to the component $u^{(\downset{k_{in}})}_{k_{out}}$ of an empirical model $\underline{u}$ for the following input/output histories:
\[
k_{in} := \hist{A/i_A, B/i_B, C/i_C}
\hspace{2cm}
k_{out} := \hist{A/o_A, B/o_B, C/o_C}
\]
For example, consider the following pseudo-empirical model:
\begin{center}
\scalebox{0.8}{
\begin{tabular}{l|rrrrrrrr}
\hfill
ABC & 000 & 001 & 010 & 011 & 100 & 101 & 110 & 111\\
\hline
000 & 0.250 & 0.250 & 0.000 & 0.000 & 0.000 & 0.000 & 0.250 & 0.250\\
001 & 0.000 & 0.000 & 0.250 & 0.250 & 0.250 & 0.250 & 0.000 & 0.000\\
010 & 0.188 & 0.188 & 0.062 & 0.062 & 0.062 & 0.062 & 0.188 & 0.188\\
011 & 0.062 & 0.062 & 0.188 & 0.188 & 0.188 & 0.188 & 0.062 & 0.062\\
100 & 0.188 & 0.188 & 0.062 & 0.062 & 0.062 & 0.062 & 0.188 & 0.188\\
101 & 0.062 & 0.062 & 0.188 & 0.188 & 0.188 & 0.188 & 0.062 & 0.062\\
110 & 0.250 & 0.062 & 0.000 & 0.188 & 0.000 & 0.188 & 0.250 & 0.062\\
111 & 0.188 & 0.000 & 0.062 & 0.250 & 0.062 & 0.250 & 0.188 & 0.000\\
\end{tabular}
}
\end{center}
Below is a heat-map representation of the same empirical model, with rows indexed by $i_A i_B i_C$ and columns indexed by $o_A o_B o_C$:
\begin{center}
    \includegraphics[width=6cm]{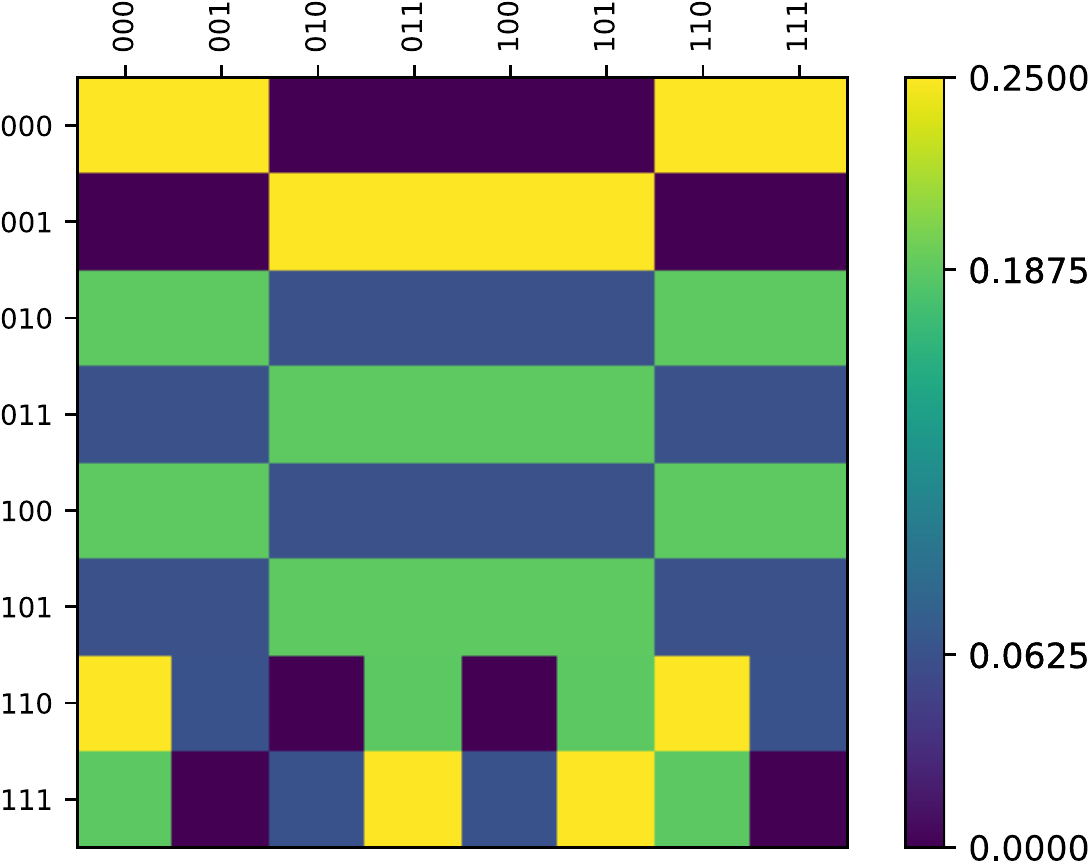}
\end{center}
Below is a the same heat-map, but with components linearised and indexed as $i_A i_B i_C \rightarrow o_A o_B o_C$:
\begin{center}
    \includegraphics[width=\textwidth]{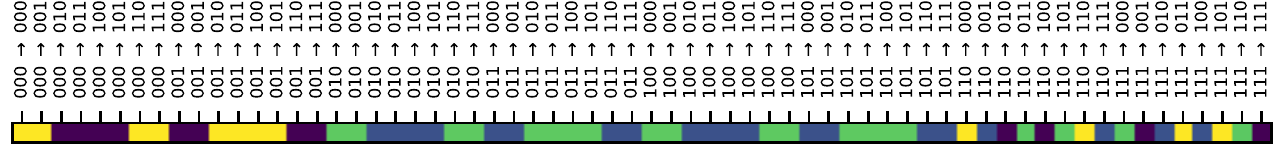}
\end{center}

Each row of Figure \ref{fig:nosig-eqs} (p.\pageref{fig:nosig-eqs}) corresponds to a causality equation or a quasi-normalisation equation (labelled ``quasi-norm'', at the bottom).
Equation entries are colour-coded, for ease of reading: white is 0, red is 1, blue is -1.
Causality equations take the form specified by Proposition \ref{proposition:caus-eqs-chain-std}: they are indexed as $i_A i_B i_C \rightarrow o_A o_B o_C$, with underscores indicating which events are marginalised over.
We go through the various blocks in turn, explaining how they arise.

The first block of equations is indexed by $i_A \_ \_ \rightarrow o_A \_ \_$.
For fixed $i_A, o_A \in \{0,1\}$, we have the equations $\StdCausEqs{\Theta_0, \{0,1\}}_{\downset{h}, \downset{k_{h,i}}, \downset{k_{h,i+1}}}$ for extended input history $h := \hist{A/i_A}$ and maximal extended input history $k_{h, i}$ taking the form $\hist{A/o_A, B/o_B, C/o_C}$ for some $o_B, o_C \in \{0,1\}$: since there are 4 possible such $k_{h, i}$, we have 3 equations (i.e. we have $i=1,2,3$).
Each row equates the sum of the red components (red means coefficient +1) with the sum of the blue components (blue means coefficient -1), ignoring all white components (white means coefficient 0).
\begin{center}
    \includegraphics[width=11cm]{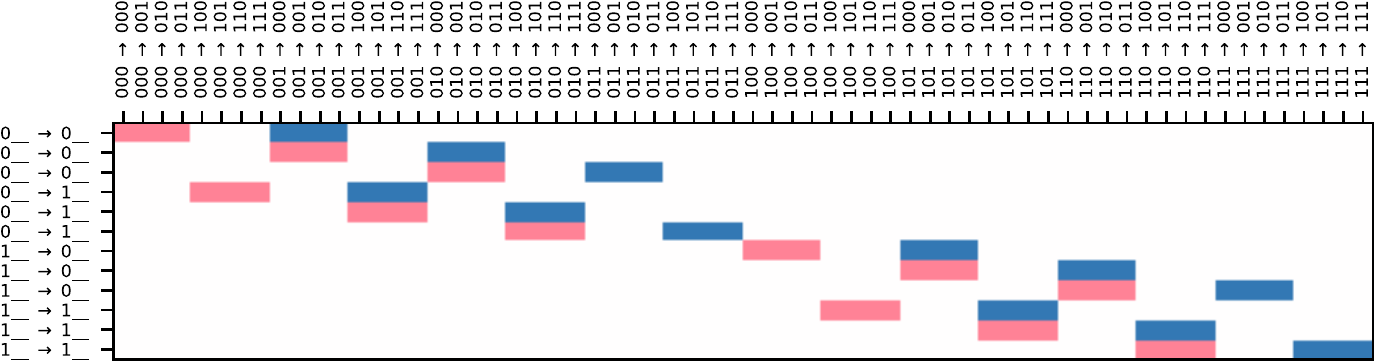}
\end{center}
The second block of equations is indexed by $\_ i_B \_ \rightarrow \_ o_B \_$.
For fixed $i_B, o_B \in \{0,1\}$, we have the equations $\StdCausEqs{\Theta_0, \{0,1\}}_{\downset{h}, \downset{k_{h,i}}, \downset{k_{h,i+1}}}$ for extended input history $h := \hist{B/i_B}$ and maximal extended input history $k_{h, i}$ taking the form $\hist{A/o_A, B/o_B, C/o_C}$ for some $o_A, o_C \in \{0,1\}$: since there are 4 possible such $k_{h, i}$, we have 3 equations (i.e. we have $i=1,2,3$).
\begin{center}
    \includegraphics[width=11cm]{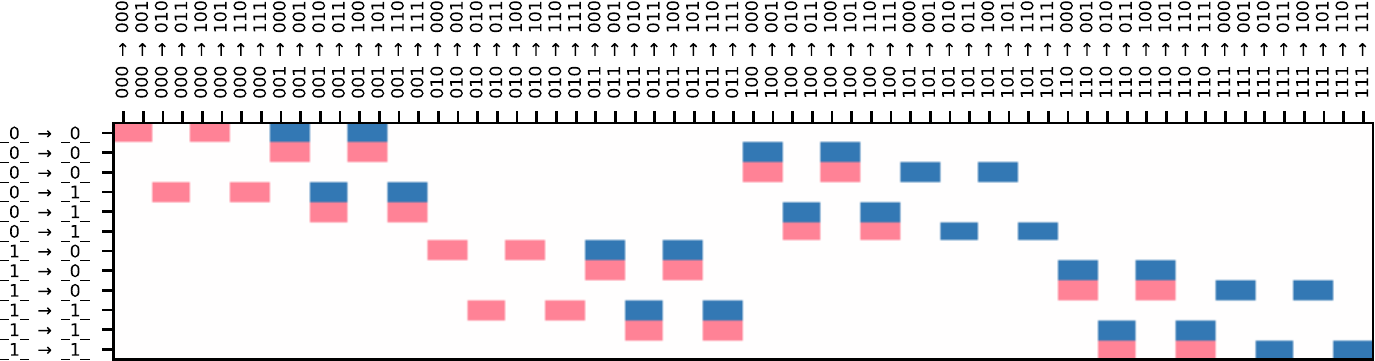}
\end{center}
The third block of equations is indexed by $\_ \_ i_C \rightarrow \_ \_ o_C$.
For fixed $i_C, o_C \in \{0,1\}$, we have the equations $\StdCausEqs{\Theta_0, \{0,1\}}_{\downset{h}, \downset{k_{h,i}}, \downset{k_{h,i+1}}}$ for extended input history $h := \hist{C/i_C}$ and maximal extended input history $k_{h, i}$ taking the form $\hist{A/o_A, B/o_B, C/o_C}$ for some $o_A, o_B \in \{0,1\}$: since there are 4 possible such $k_{h, i}$, we have 3 equations (i.e. we have $i=1,2,3$).
\begin{center}
    \includegraphics[width=11cm]{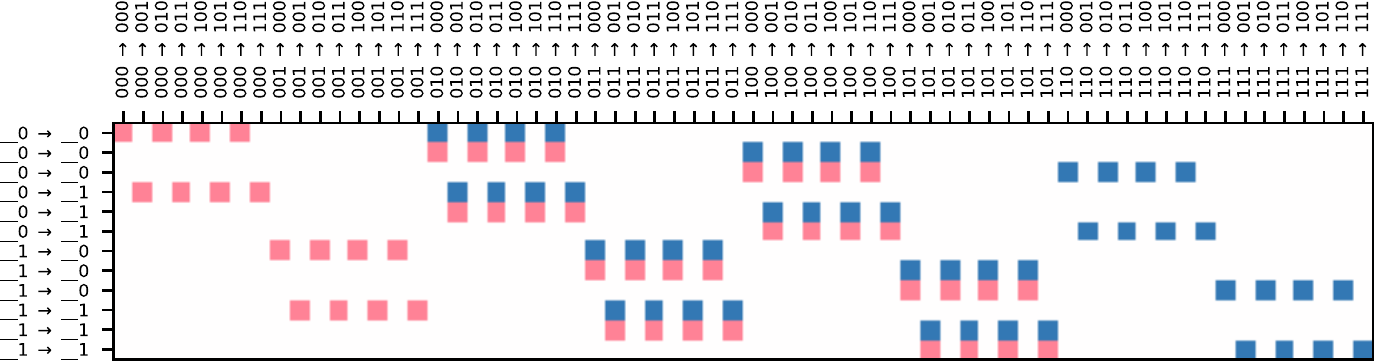}
\end{center}
The fourth block of equations is indexed by $i_A i_B \_ \rightarrow o_A o_B \_$.
For fixed $i_A, i_B, o_A, o_B \in \{0,1\}$, we have the equation $\StdCausEqs{\Theta_0, \{0,1\}}_{\downset{h}, \downset{k_{h,i}}, \downset{k_{h,i+1}}}$ for extended input history $h := \hist{A/i_A, B/i_B}$ and maximal extended input history $k_{h, i}$ taking the form $\hist{A/o_A, B/o_B, C/o_C}$ for some $o_C \in \{0,1\}$: since there are 2 possible such $k_{h, i}$, we have 1 equation (i.e. we have $i=1$ only).
\begin{center}
    \includegraphics[width=11cm]{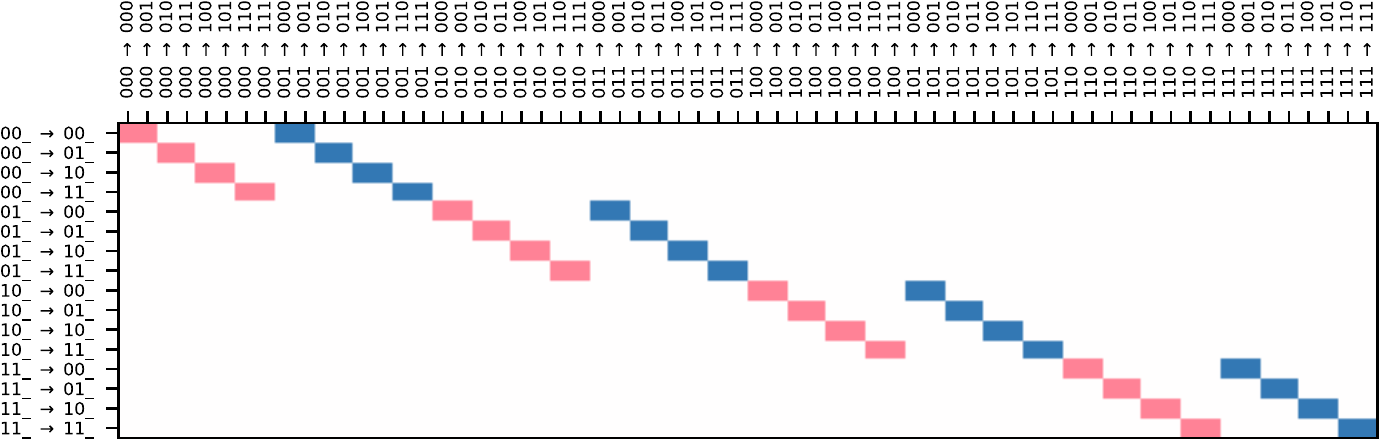}
\end{center}
The fifth block of equations is indexed by $i_A \_ i_C \rightarrow o_A \_ o_C$.
For fixed $i_A, i_C, o_A, o_C \in \{0,1\}$, we have the equation $\StdCausEqs{\Theta_0, \{0,1\}}_{\downset{h}, \downset{k_{h,i}}, \downset{k_{h,i+1}}}$ for extended input history $h := \hist{A/i_A, C/i_C}$ and maximal extended input history $k_{h, i}$ taking the form $\hist{A/o_A, B/o_B, C/o_C}$ for some $o_B \in \{0,1\}$: since there are 2 possible such $k_{h, i}$, we have 1 equation (i.e. we have $i=1$ only).
\begin{center}
    \includegraphics[width=11cm]{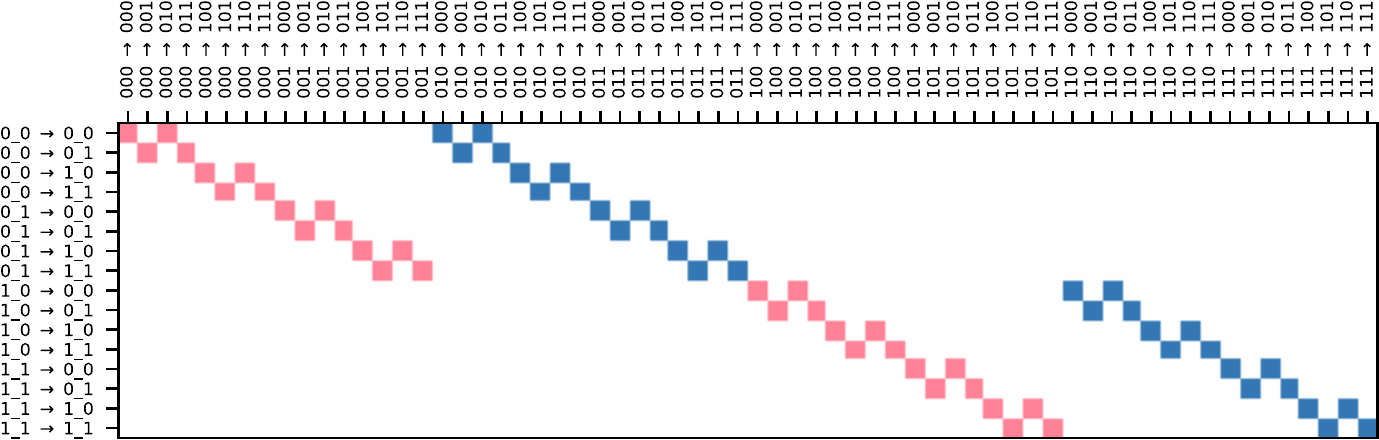}
\end{center}
The sixth block of equations is indexed by $\_ i_B i_C \rightarrow \_ o_B o_C$.
For fixed $i_B, i_C, o_B, o_C \in \{0,1\}$, we have the equation $\StdCausEqs{\Theta_0, \{0,1\}}_{\downset{h}, \downset{k_{h,i}}, \downset{k_{h,i+1}}}$ for extended input history $h := \hist{B/i_B, C/i_C}$ and maximal extended input history $k_{h, i}$ taking the form $\hist{A/o_A, B/o_B, C/o_C}$ for some $o_A \in \{0,1\}$: since there are 2 possible such $k_{h, i}$, we have 1 equation (i.e. we have $i=1$ only).
\begin{center}
    \includegraphics[width=11cm]{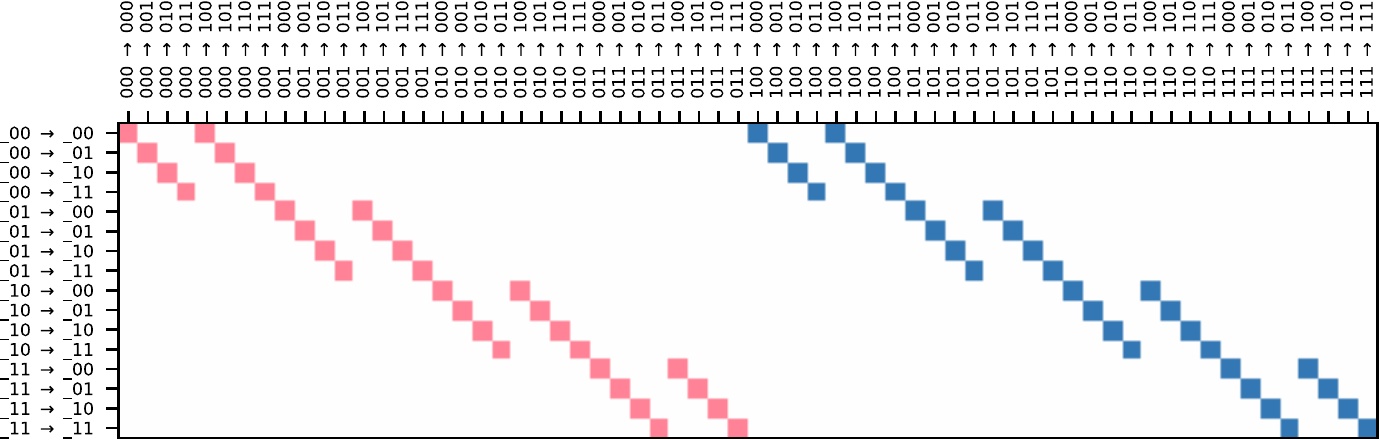}
\end{center}
The seventh and final block of equations consists of the quasi-normalisation equations, equating the sum of the components in successive rows of a pseudo-empirical model.
\begin{center}
    \includegraphics[width=11cm]{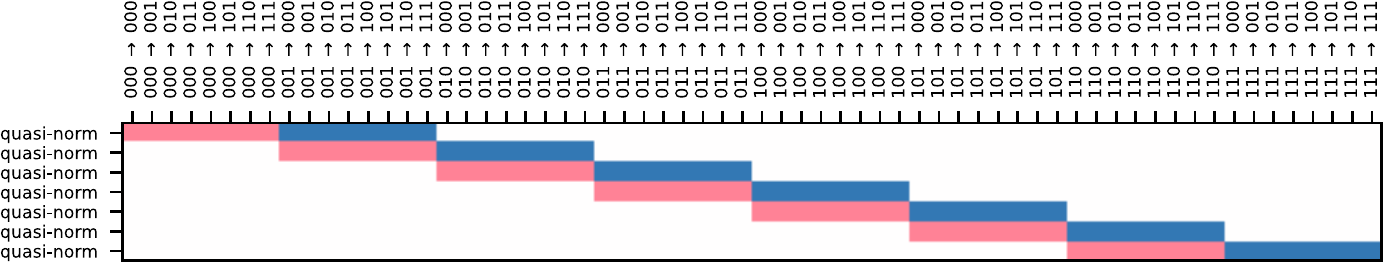}
\end{center}
As an example of how the equations above are combined to defined the causaltopes for other spaces, we consider the set of equations for the space $\Hist{\total{A, B, C}, \{0,1\}}$, shown below.
The first block of equations from the no-signalling causaltope provides the marginals for histories $\hist{A/i_A}$, the fourth block provides the marginals for histories $\hist{A/i_A, B/i_B}$ and the seventh block provides the quasi-normalisation equations.
\begin{center}
    \includegraphics[width=11cm]{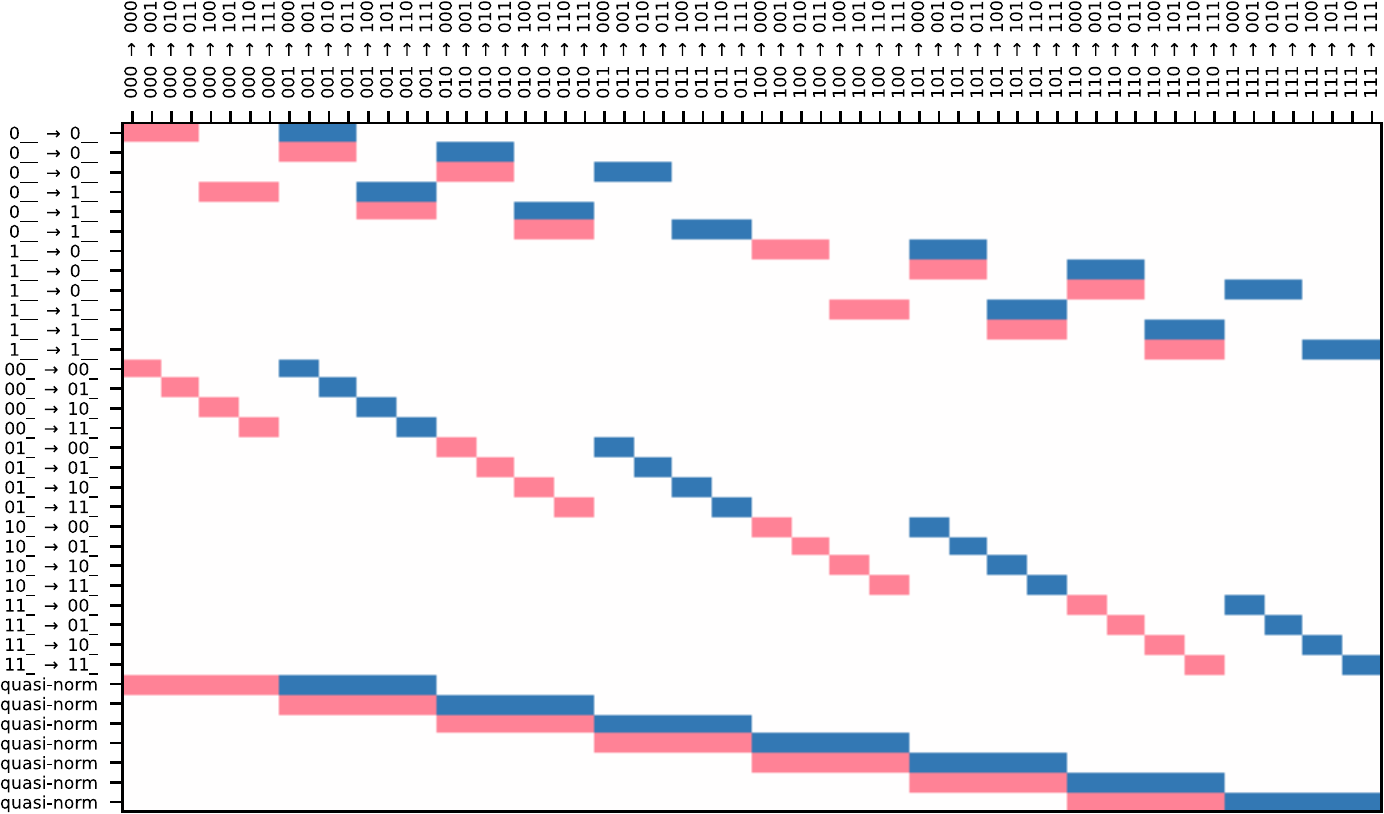}
\end{center}
Analogously, we can consider the set of equations for the space $\Hist{\total{A, C, B}, \{0,1\}}$, shown below.
The first block of equations from the no-signalling causaltope provides the marginals for histories $\hist{A/i_A}$, the fifth block provides the marginals for histories $\hist{A/i_A, C/i_C}$ and the seventh block provides the quasi-normalisation equations.
\begin{center}
    \includegraphics[width=12cm]{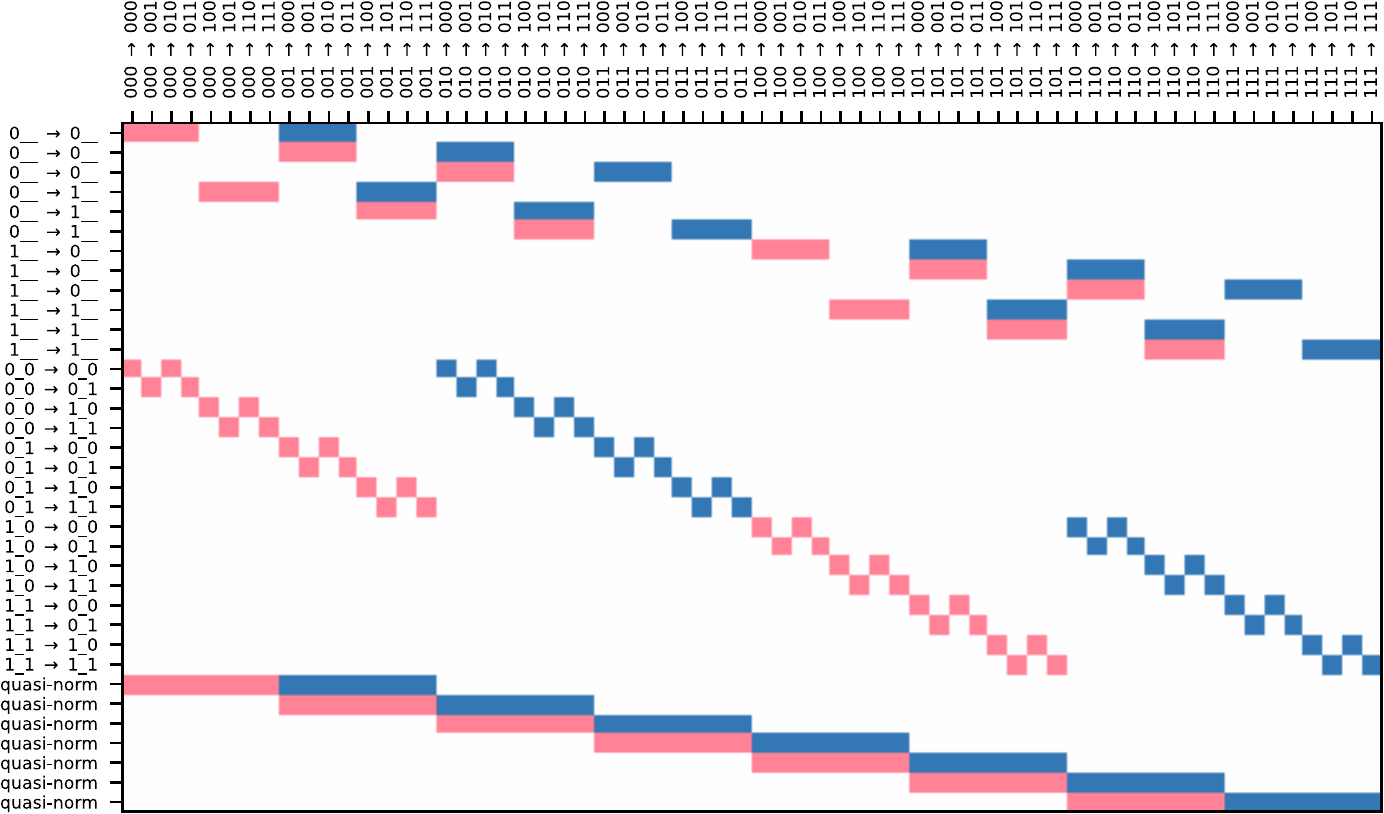}
\end{center}
If we splice the causality equations for the two total orders, taking $\hist{A/0, B/i_B}$ from the first and $\hist{A/1, C/i_C}$ for the second, we obtain the causality equations for the switch space where $\ev{A}$ controls the order of $\ev{B}$ and $\ev{C}$, shown below.
\begin{center}
    \includegraphics[width=11cm]{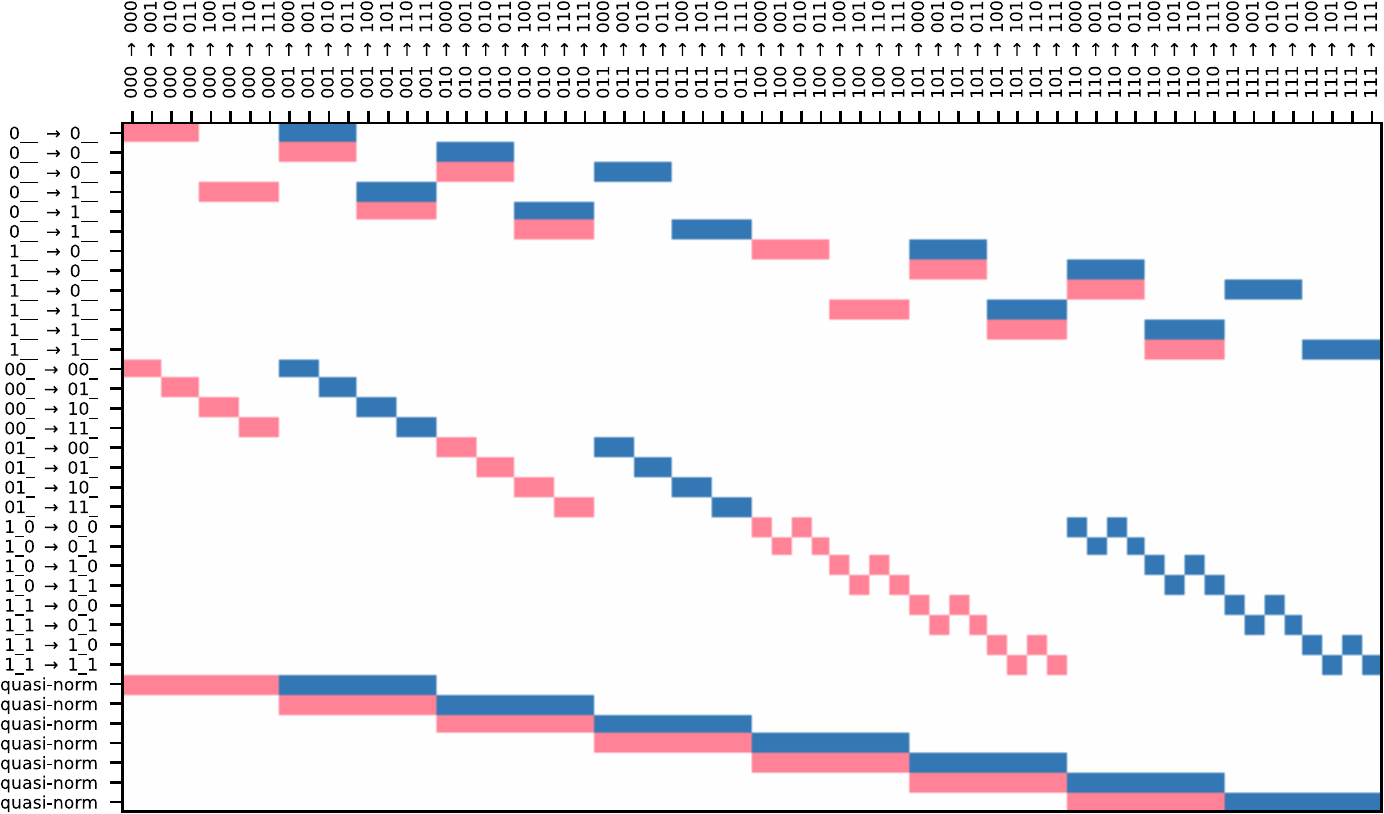}
\end{center}

The equations presented above are still redundant: in order to compute causaltope dimensions, or compare causaltopes, we first need to put them in reduced row echelon form (RREF).
At this point, it becomes clear why we included the quasi-normalisation equations in the mix: the full set of equations defining the polytope must be considered when computing the RREF.
Furthermore, by omitting the final normalisation equation we are considering the cone over the causaltope, which is where the linear programs for component/decomposition take place.

Because the empirical models on 3 events with binary inputs/output have $8 \times 8 = 64$ components, the embedding space for our causaltopes is 64-dimensional; however, 1 dimension/degree of freedom is taken away by the normalisation equation.
Hence, the dimension of a causaltope can be easily computed as 63 minus the number of non-zero rows in the system of causality and quasi-normalisation equations in RREF.

For example, below is the system of causality and quasi-normalisation equations in RREF for the no-signalling causaltope.
The RREF system has 37 non-zero rows, and hence the causaltope has dimension $63-37=26$.
\begin{center}
    \includegraphics[width=11cm]{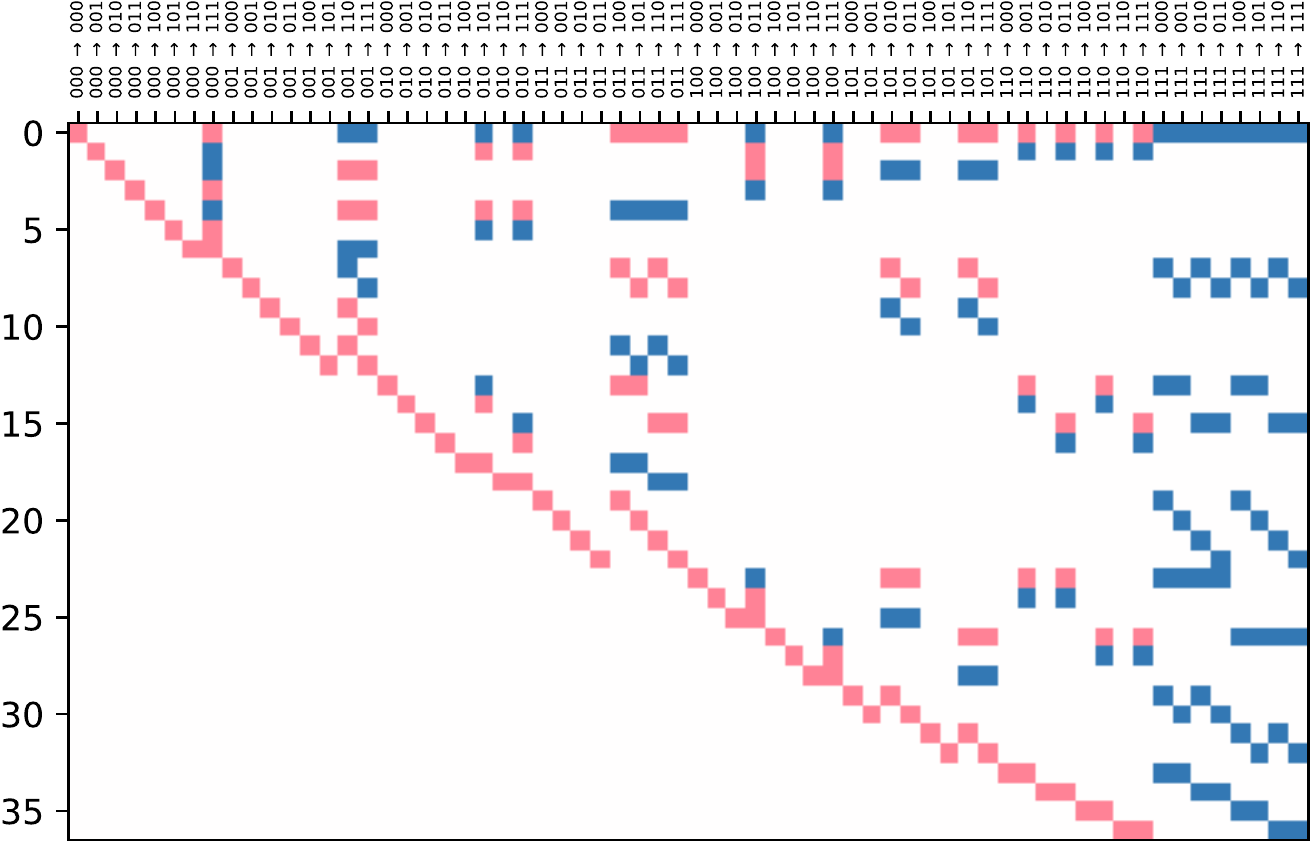}
\end{center}
As a second example, below are the systems of causality and quasi-normalisation equations in RREF for the total orders $\total{A, B, C}$ and $\total{A, C, B}$.
The RREF systems have 21 non-zero rows each, and hence the two causaltopes have dimension $63-21=42$.
\begin{center}
    \includegraphics[width=11cm]{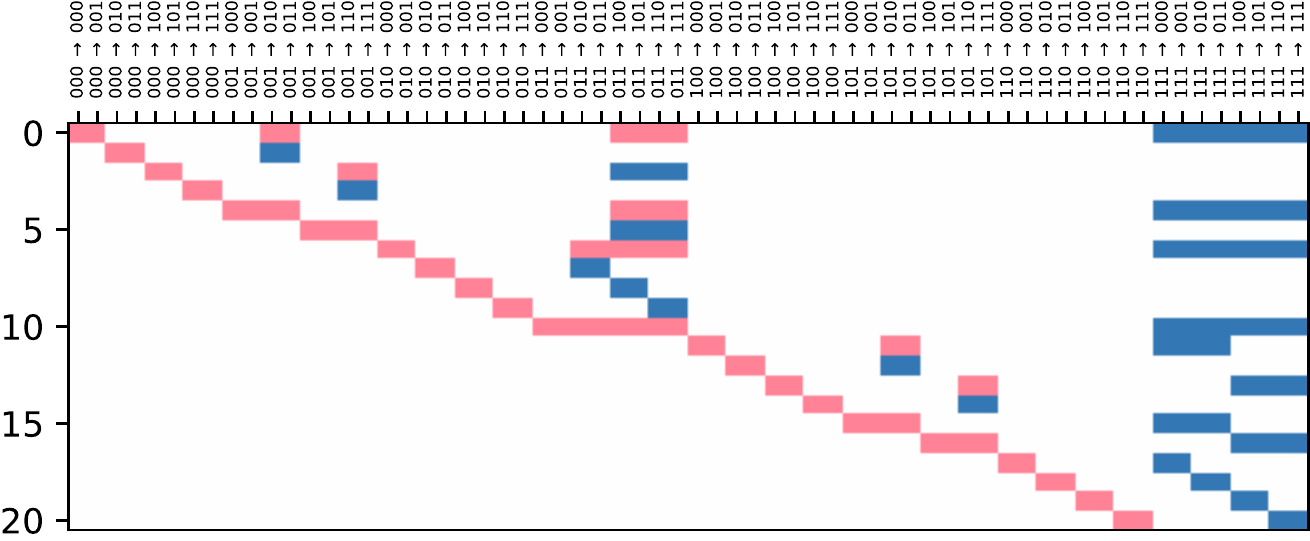}
\end{center}
\begin{center}
    \includegraphics[width=11cm]{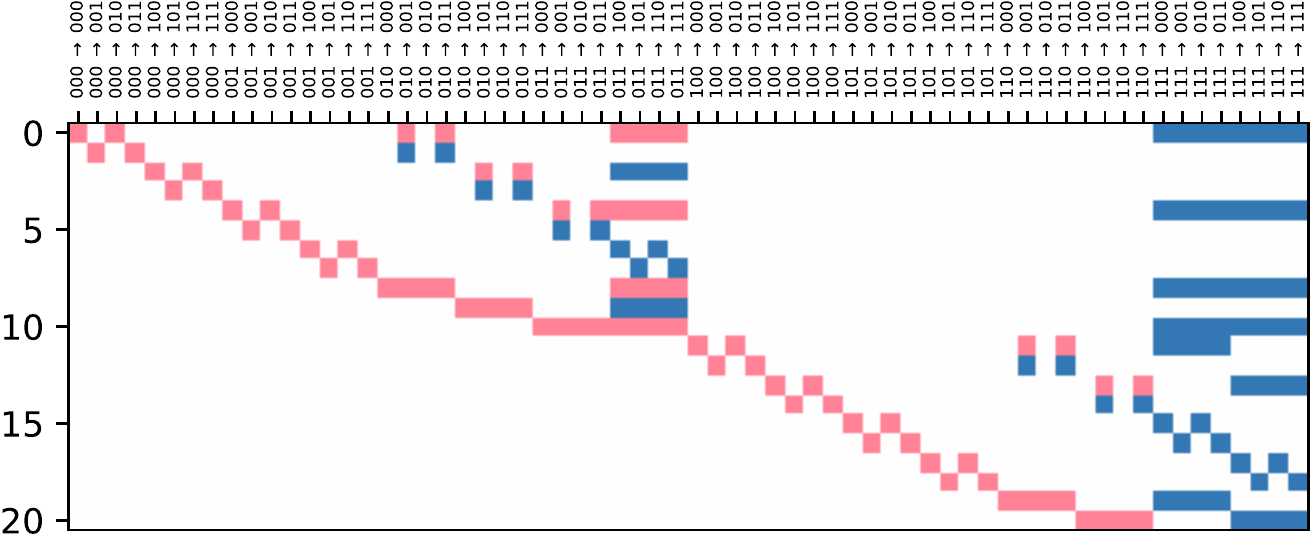}
\end{center}
As a third example, below is the system of causality and quasi-normalisation equations in RREF for the switch order previously considered.
The RREF system has 21 non-zero rows, and hence the causaltope has dimension $63-21=42$.
\begin{center}
    \includegraphics[width=11cm]{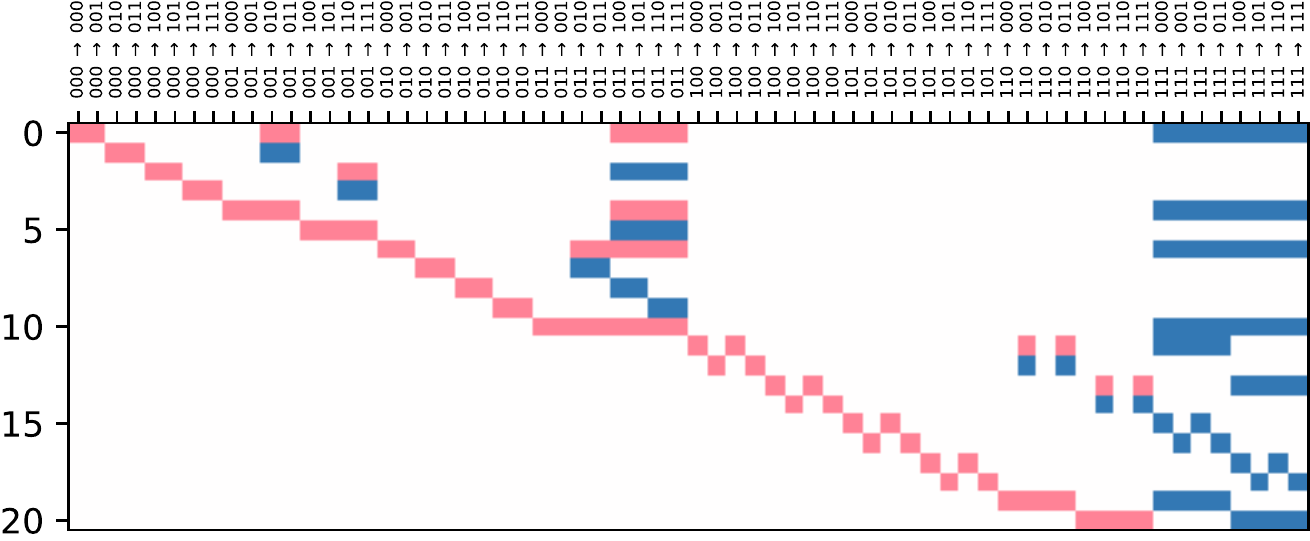}
\end{center}
Figure \ref{fig:hierarchy-spaces-ABC-causaltope-dim} (p.\pageref{fig:hierarchy-spaces-ABC-causaltope-dim}) shows the dimensions for the standard causaltopes of all causally complete spaces on 3 events with binary inputs/outputs: all spaces in the same equivalence class under event-input permutation the same-dimensional causaltopes, so it suffices to show the dimension for each one of the 102 equivalence classes.

The edges colour indicates the minimum increase in causaltope dimension from a subspace: dark blue edges indicate that the standard causaltopes for spaces in the equivalence class coincide with the standard causaltope for some sub-space.
For example, the standard causaltope for spaces in equivalence classes 1 and 3 is the no-signalling causaltope: hence, it comes as no surprise that the causal functions for these spaces are exactly the no-signalling ones; see Figure~5 (p.42) of \cite{gogioso2022combinatorics}.
However, spaces in equivalence class 2 have 27-dimensional causaltopes, despite having the no-signalling functions as their causal functions: standard empirical models for some space $\Theta$ in equivalence class 2 which are not no-signalling, i.e. which are not 100\% supported by the no-signalling causaltope, must necessarily be contextual/non-local in $\Theta$.
This is, indeed, the case for our ``causal fork'' example below.

\begin{figure}[h]
    \centering
    \includegraphics[width=0.9\textwidth]{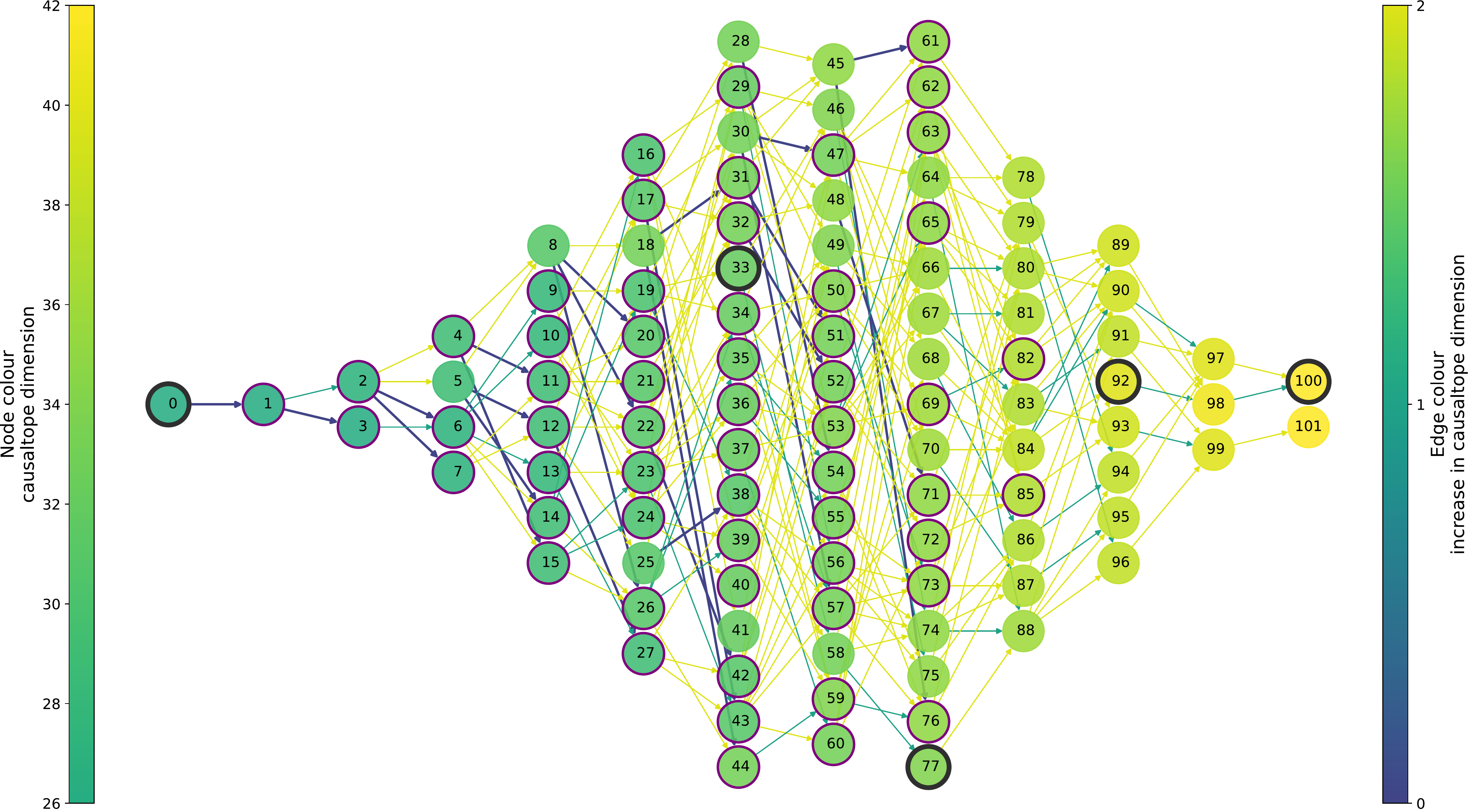}
    \caption{
    The hierarchy of standard causaltopes on causally complete spaces with 3 events and binary inputs, cf. Figure~5 (p.42) of \cite{gogioso2022combinatorics}.
    Node colour indicates the dimension of the causaltopes in the equivalence class, while edge colour indicates the increase in dimension from the causaltopes for the tail spaces to the causaltopes for the head spaces.
    }
\label{fig:hierarchy-spaces-ABC-causaltope-dim}
\end{figure}

\newpage

\subsection{Examples of standard empirical models}
\label{subsection:geometry-causality-examples}

In this subsection, we discuss the empirical models for a selection of examples of interest. All empirical models are for the standard cover, so that any non-classicality arises from non-locality rather than other forms of contextuality (causally-induced or otherwise).

All models have binary inputs and outputs $I_\omega = O_\omega = \{0, 1\}$ at each event, unless otherwise specified.
For convenience, we will describe our scenarios in terms of agents performing operations at the events, always following the same convention: Alice acts at event \ev{A}, Bob acts at event \ev{B}, Charlie acts at event \ev{C}, Diane acts at event \ev{D}, Eve acts at event \ev{E} and Felix acts at event \ev{F}.

\subsubsection{A Classical Switch Empirical Model.}

In this example, Alice classically controls the order of Bob and Charlie, as follows:
\begin{itemize}
    \item Alice flips one of two biased coins, depending on her input: when her input is 0, her output is 75\% 0 and 25\% 1; when her input is 1, her output is 25\% 0 and 75\% 1 instead.
    \item Bob and Charlie are in a quantum switch, controlled in the Z basis and with $|0\rangle$ as a fixed input: on output $a \in \{0, 1\}$, Alice feeds state $|a\rangle$ into the control system of the switch, determining the relative causal order of Bob and Charlie.
    \item Bob and Charlie both apply the same quantum instrument: they measure the incoming qubit they receive in the Z basis, obtaining their output, and then encode their input into the Z basis of the outgoing qubit.
    \item Both the control qubit and the outgoing qubit of the switch are discarded: even without Alice controlling the switch in the Z basis, discarding the control qubit would be enough to make the control classical.
\end{itemize}
The description above results in the following empirical model on 3 events:
\begin{center}
\begin{tabular}{l|rrrrrrrr}
\hfill
ABC & 000 & 001 & 010 & 011 & 100 & 101 & 110 & 111\\
\hline
000 & $3/4$ & $0$ & $0$ & $0$ & $1/4$ & $0$ & $0$ & $0$\\
001 & $3/4$ & $0$ & $0$ & $0$ & $0$ & $0$ & $1/4$ & $0$\\
010 & $0$ & $3/4$ & $0$ & $0$ & $1/4$ & $0$ & $0$ & $0$\\
011 & $0$ & $3/4$ & $0$ & $0$ & $0$ & $0$ & $1/4$ & $0$\\
100 & $1/4$ & $0$ & $0$ & $0$ & $3/4$ & $0$ & $0$ & $0$\\
101 & $1/4$ & $0$ & $0$ & $0$ & $0$ & $0$ & $3/4$ & $0$\\
110 & $0$ & $1/4$ & $0$ & $0$ & $3/4$ & $0$ & $0$ & $0$\\
111 & $0$ & $1/4$ & $0$ & $0$ & $0$ & $0$ & $3/4$ & $0$\\
\end{tabular}
\end{center}
To better understand the table above, we focus on the second row, corresponding to input 001:
\begin{enumerate}
    \item Alice's input is 0, so her output is 75\% 0 and 25\% 1. This means that the probabilities of outputs $0\_\_$ in row 001 of the empirical model must sum to 75\%, and the probabilities of output $1\_\_$ must sum to 25\%.
    \item Conditional to Alice's output being 0, the output is $000$ with 100\% probability:
    \begin{enumerate}
        \item Bob goes first and receives the input state $|0\rangle$ for the switch: he measures the state in the Z basis, obtaining output 0 with 100\% probability.
        Because his input is 0, he then prepares the state $|0\rangle$, which he forwards into the switch.
        \item Charlie goes second and receives the state $|0\rangle$ prepared by Bob: he measures the state in the Z basis, obtaining output 0 with 100\% probability.
        Because his input is 1, he then prepares the state $|1\rangle$, which he forwards into the switch.
        \item Charlie's state $|1\rangle$ comes out of the switch, and is discarded.
    \end{enumerate}
    \item Conditional to Alice's output being 1, the output is $110$ with 100\% probability:
    \begin{enumerate}
        \item Charlie goes first and receives the input state $|0\rangle$ for the switch: he measures the state in the Z basis, obtaining output 0 with 100\% probability.
        Because his input is 1, he then prepares the state $|1\rangle$, which he forwards into the switch.
        \item Bob goes second and receives the state $|1\rangle$ prepared by Charlie: he measures the state in the Z basis, obtaining output 1 with 100\% probability.
        Because his input is 0, he then prepares the state $|0\rangle$, which he forwards into the switch.
        \item Bob's state $|0\rangle$ comes out of the switch, and is discarded.
    \end{enumerate}
\end{enumerate}
This empirical model is causally separable.
A maximum fraction of 75\% is supported by the switch space where Alice choosing 0 makes Bob precede Charlie and a maximum fraction of 25\% is supported by the switch space where Alice choosing 0 makes Charlie precede Bob, with a fraction of 0\% supported by both spaces (i.e. no overlap).
Below we show the two spaces, the corresponding causal fraction, and the (renormalised) component of the empirical model supported by each space:
\begin{center}
    \begin{tabular}{cc}
    \includegraphics[height=3cm]{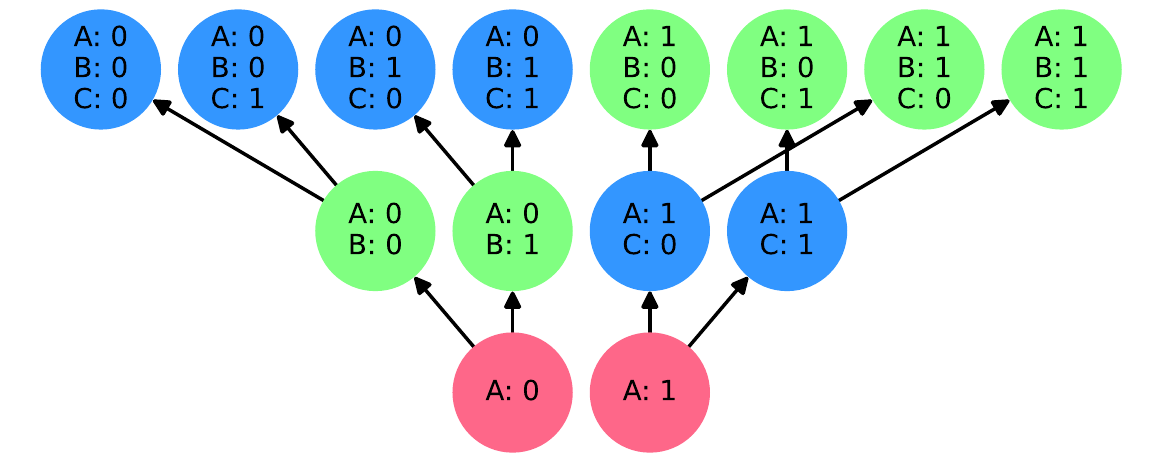}
    &
    \includegraphics[height=3cm]{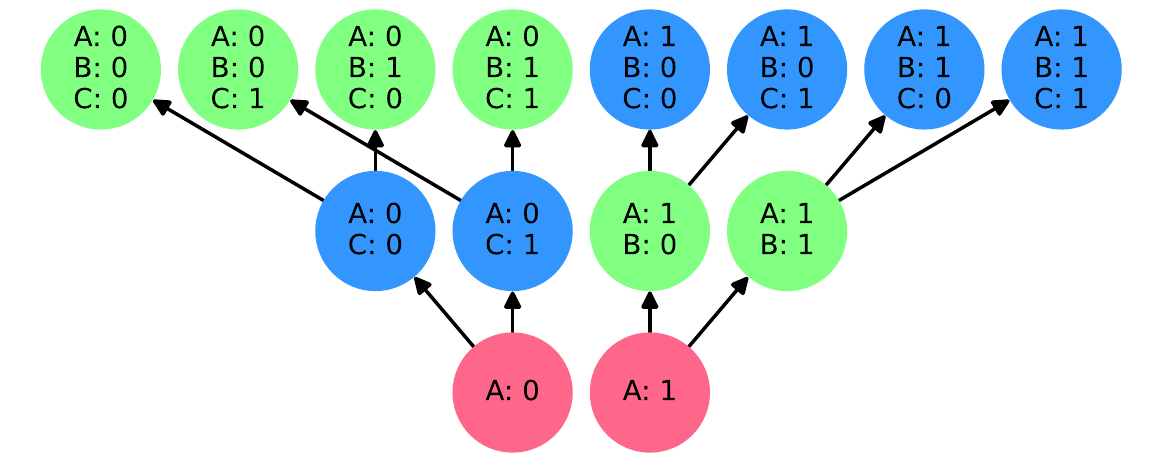}
    \\
    Causal fraction: $75\%$
    &
    Causal fraction: $25\%$
    \vspace{2mm}
    \\
    \scalebox{0.75}{
    \begin{tabular}{l|rrrrrrrr}
    \hfill
    ABC & 000 & 001 & 010 & 011 & 100 & 101 & 110 & 111\\
    \hline
    000 & 1 & 0 & 0 & 0 & 0 & 0 & 0 & 0\\
    001 & 1 & 0 & 0 & 0 & 0 & 0 & 0 & 0\\
    010 & 0 & 1 & 0 & 0 & 0 & 0 & 0 & 0\\
    011 & 0 & 1 & 0 & 0 & 0 & 0 & 0 & 0\\
    100 & 0 & 0 & 0 & 0 & 1 & 0 & 0 & 0\\
    101 & 0 & 0 & 0 & 0 & 0 & 0 & 1 & 0\\
    110 & 0 & 0 & 0 & 0 & 1 & 0 & 0 & 0\\
    111 & 0 & 0 & 0 & 0 & 0 & 0 & 1 & 0\\
    \end{tabular}
    }
    &
    \scalebox{0.75}{
    \begin{tabular}{l|rrrrrrrr}
    \hfill
    ABC & 000 & 001 & 010 & 011 & 100 & 101 & 110 & 111\\
    \hline
    000 & 0 & 0 & 0 & 0 & 1 & 0 & 0 & 0\\
    001 & 0 & 0 & 0 & 0 & 0 & 0 & 1 & 0\\
    010 & 0 & 0 & 0 & 0 & 1 & 0 & 0 & 0\\
    011 & 0 & 0 & 0 & 0 & 0 & 0 & 1 & 0\\
    100 & 1 & 0 & 0 & 0 & 0 & 0 & 0 & 0\\
    101 & 1 & 0 & 0 & 0 & 0 & 0 & 0 & 0\\
    110 & 0 & 1 & 0 & 0 & 0 & 0 & 0 & 0\\
    111 & 0 & 1 & 0 & 0 & 0 & 0 & 0 & 0\\
    \end{tabular}
    }
    \end{tabular}
\end{center}

\subsubsection{A Causal Fork Empirical Model.}

In this example, Charlie produces one of the four Bell basis states and forwards one qubit each to Alice and Bob, who measure it in either the Z or X basis:
\begin{enumerate}
    \item On input $c \in \{0,1\}$ Charlie prepares the 2-qubit state $|0c\rangle$. He then performs a XX parity measurement, resulting in one of $|\Phi^{\pm}\rangle$ states (if his input was 0) or one of $|\Psi^{\pm}\rangle$ states (if his input was 1), all with 50\% probability. He forwards this state to Alice and Bob, one qubit each.
    \item Alice and Bob perform a Z basis measurement on input 0 and an X basis measurement on input 1, and use the measurement outcome as their output.
\end{enumerate}
The following figure summarises the experiment:
\begin{center}
\scalebox{1.25}{    
\tikzfig{causal-fork}
}
\end{center}
The description above results in the following empirical model on 3 events:
\begin{center}
\begin{tabular}{l|rrrrrrrr}
\hfill
ABC & 000 & 001 & 010 & 011 & 100 & 101 & 110 & 111\\
\hline
000 & $1/4$ & $1/4$ & $0$ & $0$ & $0$ & $0$ & $1/4$ & $1/4$\\
001 & $0$ & $0$ & $1/4$ & $1/4$ & $1/4$ & $1/4$ & $0$ & $0$\\
010 & $1/8$ & $1/8$ & $1/8$ & $1/8$ & $1/8$ & $1/8$ & $1/8$ & $1/8$\\
011 & $1/8$ & $1/8$ & $1/8$ & $1/8$ & $1/8$ & $1/8$ & $1/8$ & $1/8$\\
100 & $1/8$ & $1/8$ & $1/8$ & $1/8$ & $1/8$ & $1/8$ & $1/8$ & $1/8$\\
101 & $1/8$ & $1/8$ & $1/8$ & $1/8$ & $1/8$ & $1/8$ & $1/8$ & $1/8$\\
110 & $1/4$ & $0$ & $0$ & $1/4$ & $0$ & $1/4$ & $1/4$ & $0$\\
111 & $1/4$ & $0$ & $0$ & $1/4$ & $0$ & $1/4$ & $1/4$ & $0$\\
\end{tabular}
\end{center}
To better understand the process, we restrict our attention to the rows where Charlie has input 0, corresponding to Alice and Bob receiving the Bell basis states $|\Phi^\pm\rangle$:
\begin{center}
\begin{tabular}{l|rrrrrrrr}
\hfill
ABC & 000 & 001 & 010 & 011 & 100 & 101 & 110 & 111\\
\hline
000 & $1/4$ & $1/4$ & $0$ & $0$ & $0$ & $0$ & $1/4$ & $1/4$\\
010 & $1/8$ & $1/8$ & $1/8$ & $1/8$ & $1/8$ & $1/8$ & $1/8$ & $1/8$\\
100 & $1/8$ & $1/8$ & $1/8$ & $1/8$ & $1/8$ & $1/8$ & $1/8$ & $1/8$\\
110 & $1/4$ & $0$ & $0$ & $1/4$ & $0$ & $1/4$ & $1/4$ & $0$\\
\end{tabular}
\end{center}
When Charlie's output is 0 (left below) Alice and Bob receive the Bell basis state $|\Phi^+\rangle$: they get perfectly correlated outputs when they both measure in Z or both measure in X, and uncorrelated uniformly distributed outputs otherwise.
When Charlie's output is 1 (right below) Alice and Bob receive the Bell basis state $|\Phi^-\rangle$: they get perfectly correlated outputs when they both measure in Z, perfectly anti-correlated outputs when they both measure in X, and uncorrelated uniformly distributed outputs otherwise. 
\begin{center}
\begin{tabular}{l|rrrr}
\hfill
ABC & 000 & 010 & 100 & 110\\
\hline
000 & $1/4$ & $0$ & $0$ & $1/4$\\
010 & $1/8$ & $1/8$ & $1/8$ & $1/8$\\
100 & $1/8$ & $1/8$ & $1/8$ & $1/8$\\
110 & $1/4$ & $0$ & $0$ & $1/4$\\
\end{tabular}
\hspace{2cm}
\begin{tabular}{l|rrrr}
\hfill
ABC & 001 & 011 & 101 & 111\\
\hline
000 & $1/4$ & $0$ & $0$ & $1/4$\\
010 & $1/8$ & $1/8$ & $1/8$ & $1/8$\\
100 & $1/8$ & $1/8$ & $1/8$ & $1/8$\\
110 & $0$ & $1/4$ & $1/4$ & $0$\\
\end{tabular}
\end{center}
Rather interestingly, the empirical model for this experiment is 100\% supported by two incompatible spaces of input histories, both in equivalence class 33: the space $\Theta_{\ev{A}\vee (\ev{C}\seqcomposeSym \ev{B})}$ induced by causal order $\ev{A}\vee (\ev{C}\seqcomposeSym \ev{B})$ (left below) an the space $\Theta_{\ev{B}\vee (\ev{C}\seqcomposeSym \ev{A})}$ induced by causal order $\ev{B}\vee (\ev{C}\seqcomposeSym \ev{A})$. In other words, the empirical data is compatible both with absence of signalling from \ev{C} to \ev{A} (left below) and with absence of signalling from \ev{C} to \ev{B} (right below).
\begin{center}
    \begin{tabular}{ccc}
    \includegraphics[height=3cm]{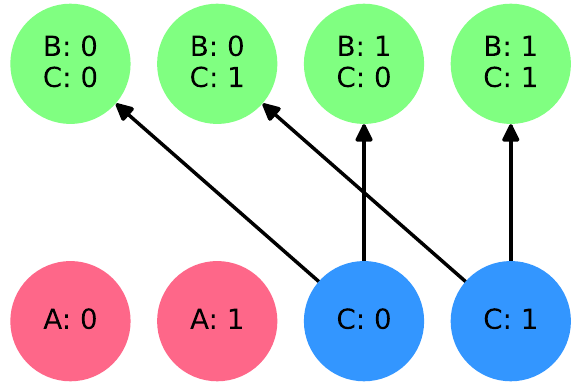}
    & \hspace{2cm} &
    \includegraphics[height=3cm]{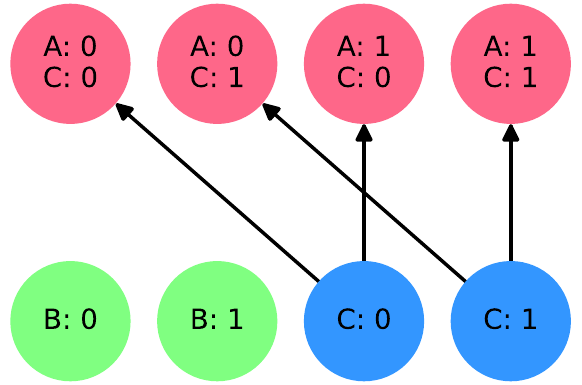}
    \\
    Causal fraction: $100\%$
    &&
    Causal fraction: $100\%$
    \end{tabular}
\end{center}
What makes this empirical model even more interesting is that its no-signalling fraction is 0\%: no part of it can be explained without signalling from \ev{C} to at least one of \ev{A} or \ev{B}.
\begin{center}
    \begin{tabular}{c}
    \includegraphics[height=3cm]{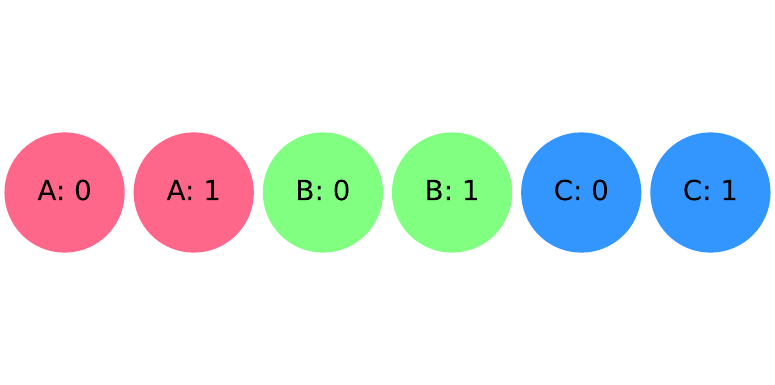}
    \\
    Causal fraction: $0\%$
    \end{tabular}
\end{center}
Since the discrete space $\Theta_{\ev{A}\vee \ev{B} \vee \ev{C}}$ is the meet of the two order-induced spaces $\Theta_{\ev{A}\vee (\ev{C}\seqcomposeSym \ev{B})}$ and $\Theta_{\ev{B}\vee (\ev{C}\seqcomposeSym \ev{A})}$, we now have an example of an empirical model which is fully supported by two spaces of input histories but not supported at all by their meet.
In particular, this shows that the intersection of two causaltopes is not necessarily the causaltope for the meet of the underlying spaces.
To better understand what's going on, we look at the systems of equations in RREF defining the two causaltopes for $\Theta_{\ev{A}\vee (\ev{C}\seqcomposeSym \ev{B})}$ and $\Theta_{\ev{B}\vee (\ev{C}\seqcomposeSym \ev{A})}$, both of dimension 32.
\begin{center}
    \begin{tabular}{ccc}
    \includegraphics[width=6.5cm]{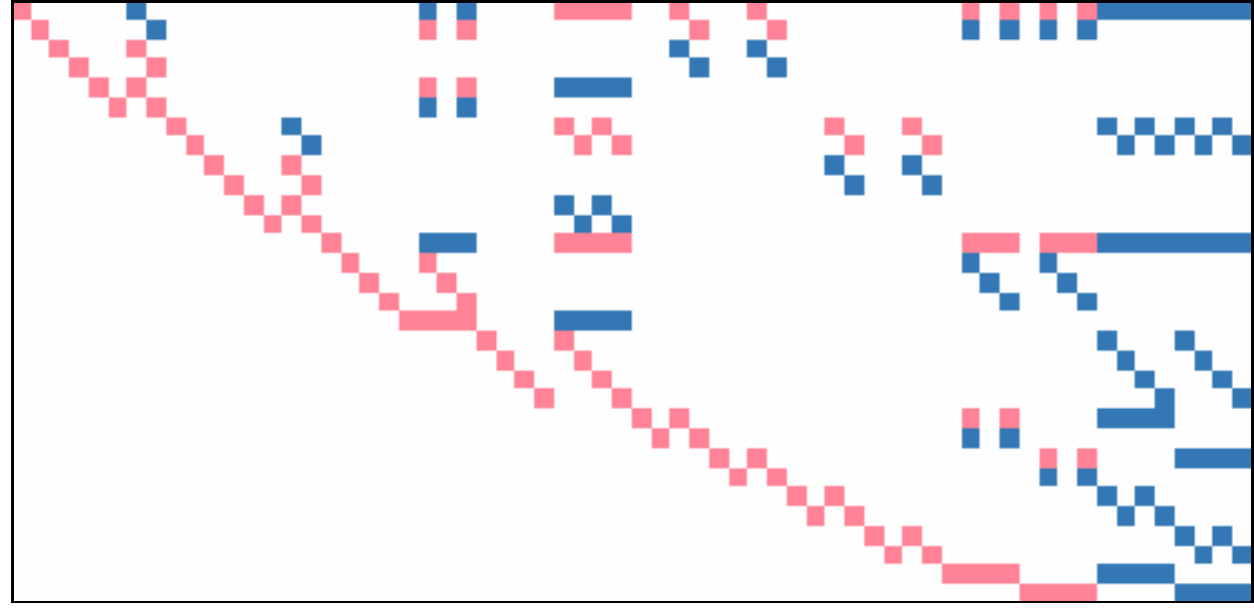}
    & \hspace{1.5cm} &
    \includegraphics[width=6.5cm]{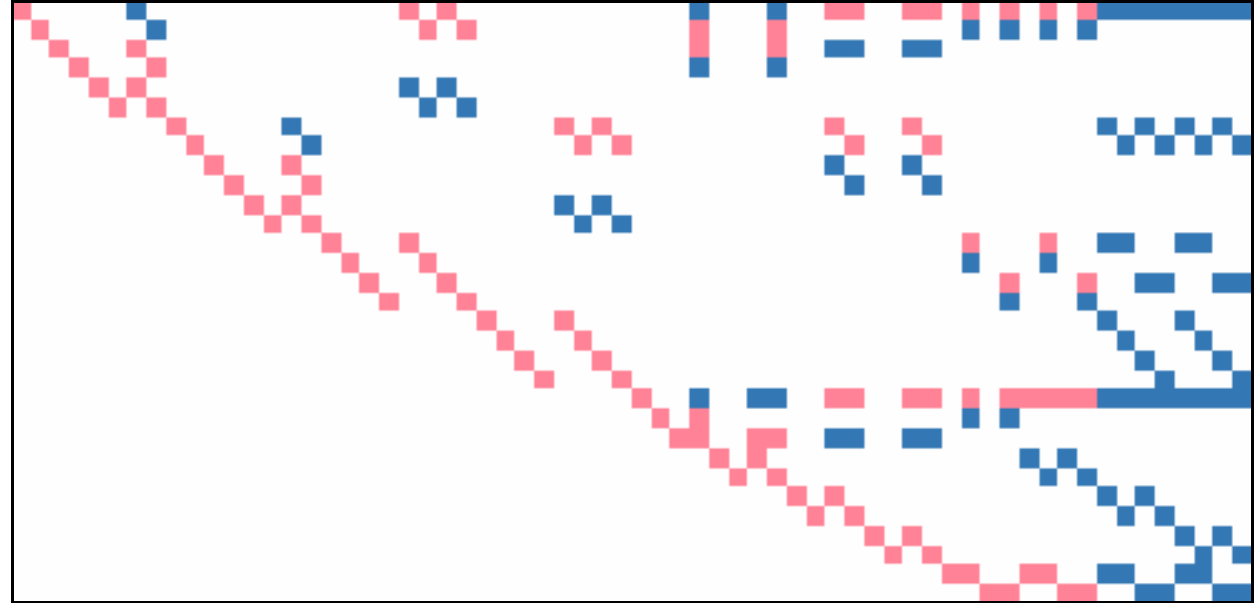}
    \\
    $\StdCausaltope{\Theta_{\ev{A}\vee (\ev{C}\seqcomposeSym \ev{B})}, [0,1]}$
    &&
    $\StdCausaltope{\Theta_{\ev{B}\vee (\ev{C}\seqcomposeSym \ev{A})}, [0,1]}$
    \\
    dimension 32
    &&
    dimension 32
    \end{tabular}
\end{center}
We then compare the system of equations in RREF defining the intersection of the two causaltopes (left below) to the system of equations in RREF defining the no-signalling causaltope (right below).
The intersection of the causaltopes has dimension 30, while the no-signalling causaltope has dimension 26, proving that they do not coincide: indeed, the empirical model presented in this subsection lies in the former, but not in the latter.
\begin{center}
    \begin{tabular}{ccc}
    \includegraphics[width=6.5cm]{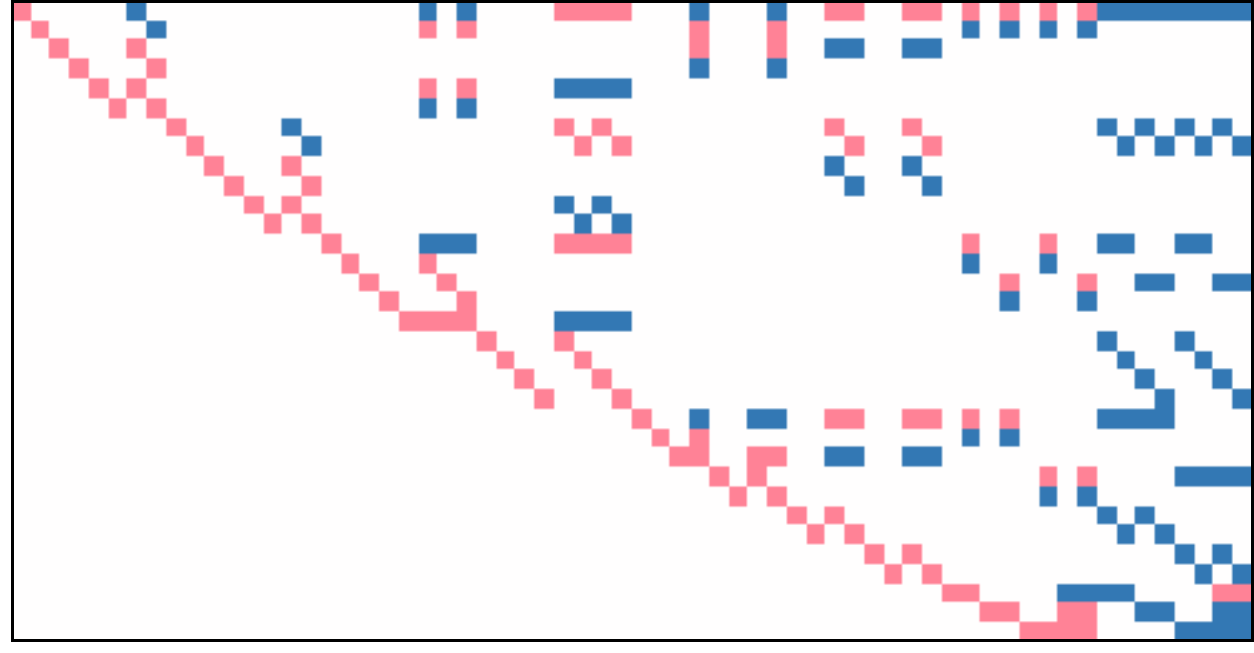}
    & \hspace{1.5cm} &
    \includegraphics[width=6.5cm]{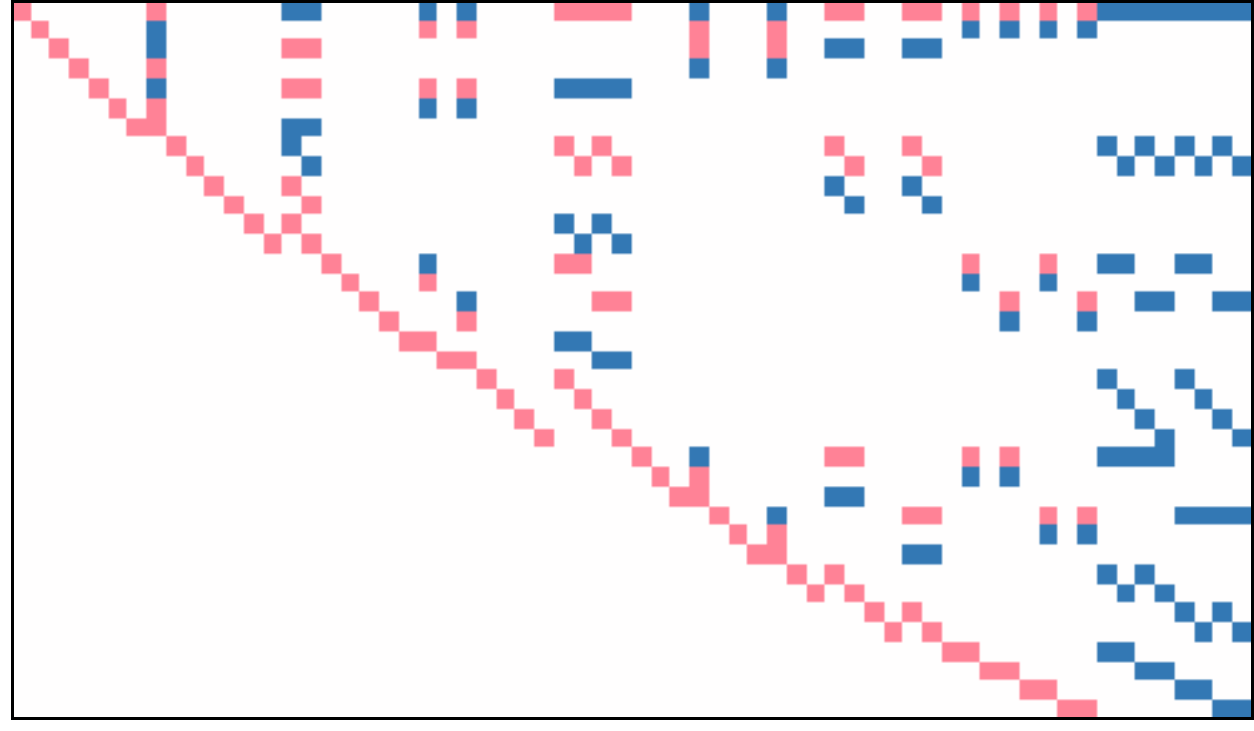}
    \\
    intersection
    &&
    $\StdCausaltope{\Theta_{\ev{A}\vee \ev{B} \vee \ev{C}}, [0,1]}$
    \\
    dimension 30
    &&
    dimension 26
    \end{tabular}
\end{center}
The intersection of causaltopes $\StdCausaltope{\Theta_{\ev{A}\vee (\ev{C}\seqcomposeSym \ev{B})}, [0,1]} \cap \StdCausaltope{\Theta_{\ev{B}\vee (\ev{C}\seqcomposeSym \ev{A})}, [0,1]}$ is not the causaltope for any causally complete space on 3 events, and in particular it isn't the causaltope for any subspace of $\Theta_{\ev{A}\vee (\ev{C}\seqcomposeSym \ev{B})}$ or $\Theta_{\ev{B}\vee (\ev{C}\seqcomposeSym \ev{A})}$.
However, the empirical model does happen to be 100\% supported by two unrelated non-tight subspaces of $\Theta_{\ev{A}\vee (\ev{C}\seqcomposeSym \ev{B})}$ and $\Theta_{\ev{B}\vee (\ev{C}\seqcomposeSym \ev{A})}$ respectively, both falling into equivalence class 2:
\begin{center}
    \begin{tabular}{ccc}
    \includegraphics[height=3cm]{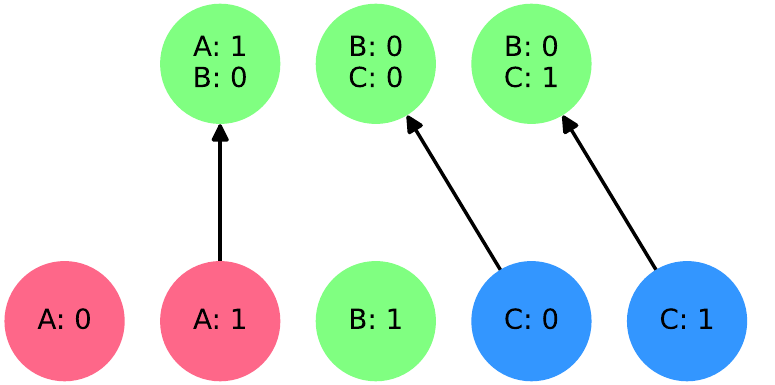}
    & \hspace{2cm} &
    \includegraphics[height=3cm]{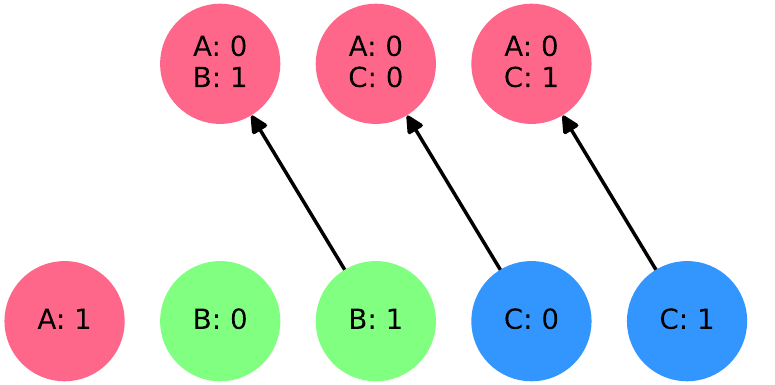}
    \\
    Causal fraction: $100\%$
    &&
    Causal fraction: $100\%$
    \end{tabular}
\end{center}
Unlike $\Theta_{\ev{A}\vee (\ev{C}\seqcomposeSym \ev{B})}$ and $\Theta_{\ev{B}\vee (\ev{C}\seqcomposeSym \ev{A})}$, these two spaces have exactly the same standard causaltope, of dimension 27.
Because it is only 1 dimension larger than the no-signalling causaltope, this is the minimal supporting causaltope for our empirical model.
\begin{center}
    \begin{tabular}{ccc}
    \includegraphics[width=6.5cm]{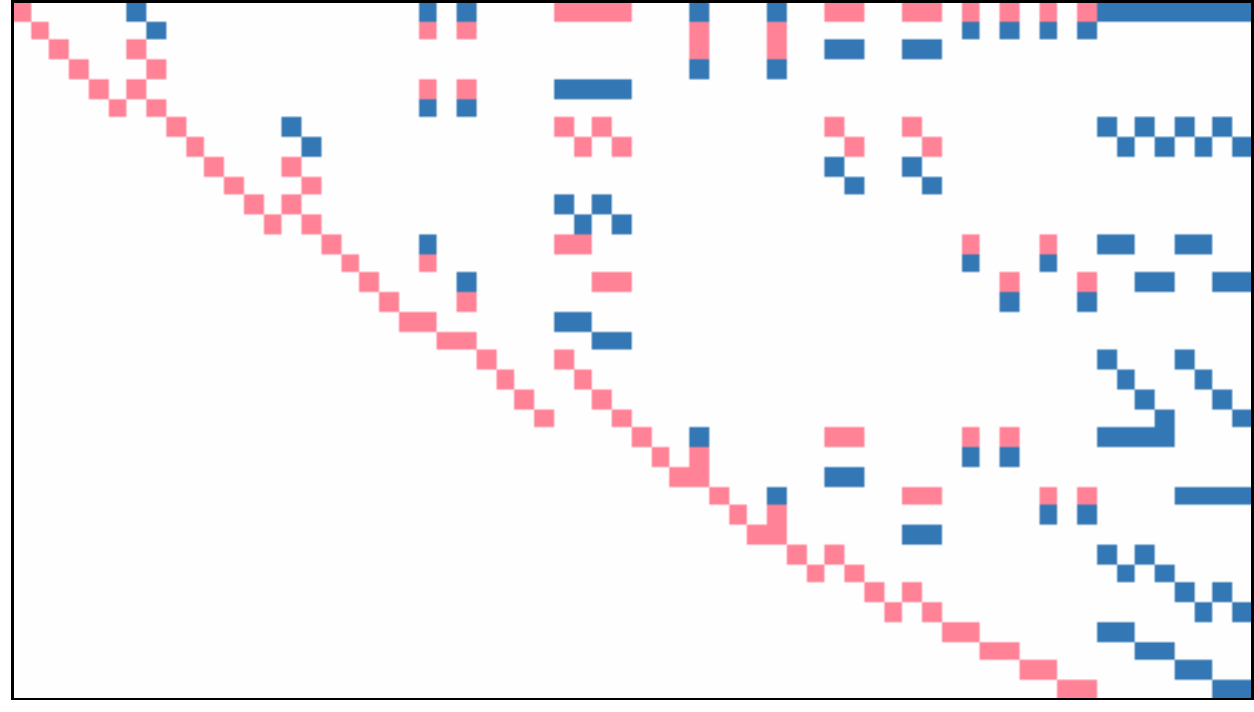}
    & \hspace{1.5cm} &
    \includegraphics[width=6.5cm]{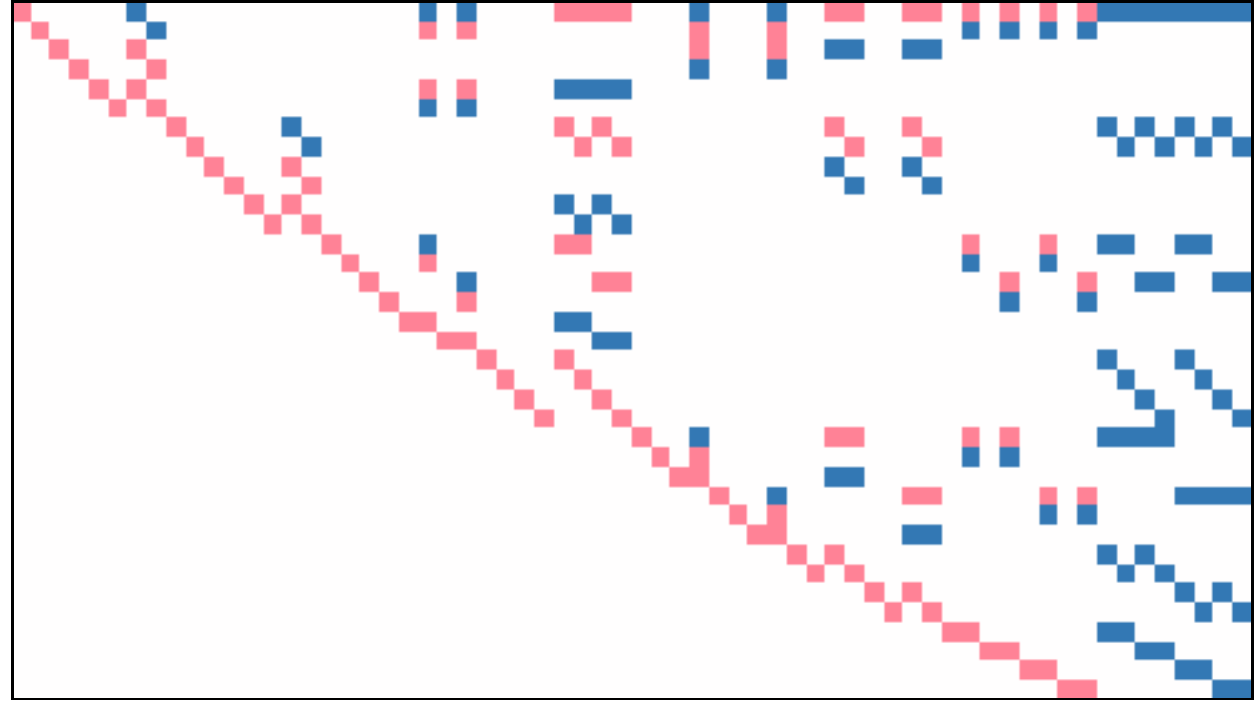}
    \\
    non-tight subspace of $\Theta_{\ev{A}\vee (\ev{C}\seqcomposeSym \ev{B})}$
    &&
    non-tight subspace of $\Theta_{\ev{B}\vee (\ev{C}\seqcomposeSym \ev{A})}$
    \\
    dimension 27
    &&
    dimension 27
    \end{tabular}
\end{center}
The empirical model is local for both space $\Theta_{\ev{A}\vee (\ev{C}\seqcomposeSym \ev{B})}$ and space $\Theta_{\ev{B}\vee (\ev{C}\seqcomposeSym \ev{A})}$: for example, below is a decomposition as a uniform mixture of 8 causal functions for $\Theta_{\ev{B}\vee (\ev{C}\seqcomposeSym \ev{A})}$.
The black dots are classical copies, the $\oplus$ dots are classical XORs and the $\wedge$ dots classical ANDs. 
\[
    \frac{1}{8} \sum_{(x,y,z) \in \{0,1\}^3}
    \scalebox{0.65}{
        \tikzfig{classical-realisation}
    }
\]
However, we know from Figure~5 (p.42) of \cite{gogioso2022combinatorics} that spaces in equivalence class 2 have exactly the same causal functions as the discrete space, in equivalence class 0.
Since the empirical model has a no-signalling fraction of 0\%, it immediately follows that it has a local fraction of 0\% in its minimal supporting causaltope, i.e. that it is maximally non-local there. To recap, this example bears many gifts:
\begin{itemize}
    \item It shows that there are empirical models 100\% supported by multiple spaces but 0\% supported by their meet; in particular, it shows that the intersection of causaltopes is not necessarily the causaltope for the meet of the underlying spaces.
    \item Further to the previous point, it shows that there are causaltopes whose intersection is not the causaltope for any space.
    \item It shows that there can be unrelated spaces with equal causaltopes, differing from the no-signalling causaltope.
    \item It provides an empirical model whose minimally supporting space is non-tight, providing additional evidence for the importance of non-tight spaces in the study of causality.
    \item It shows that the notions of non-locality and contextuality depend on a specific choice of causal constraints, by providing an empirical model which is local in for a space and maximally non-local for a sub-space.
\end{itemize}

\subsubsection{A Causal Cross Empirical Model.}

In this example, Charlie receives qubits from Alice and Bob and forwards them to Diane and Eve, choosing whether to forward the qubits as $A\rightarrow D, B\rightarrow E$ or as $A\rightarrow E, B\rightarrow D$.
Below is the ``cross'' causal order that naturally supports this example:
\begin{center}
\includegraphics[height=2.5cm]{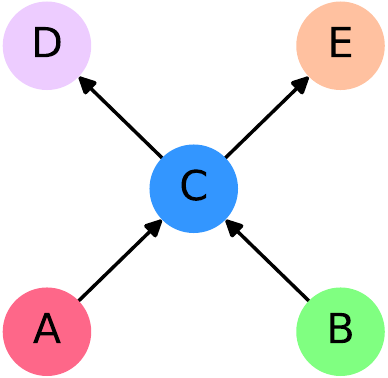}
\end{center}
More specifically, the parties act as follows:
\begin{enumerate}
    \item Alice and Bob encode their input into the Z basis of one qubit each, which they then forward to Charlie. Their output is trivial, constantly set to 0.
    \item Charlie receives the two qubits from Alice and Bob, and decides how to forward them:
    \begin{itemize}
         \item On input 0, Charlie forwards Alice's qubit to Diane and Bob's qubit to Eve.
         \item On input 1, Charlie forwards Alice's qubit to Eve and Bob's qubit to Diane.
     \end{itemize} 
    \item Diane and Eve have trivial input, with only 0 as an option. They measure the qubit they receive in the Z basis and use the outcome as their output.
\end{enumerate}
We will consider two version of this protocol: one where Charlie measures the parity of the qubits he receives, and one where he doesn't perform any measurement and trivially outputs 0.
The version where Charlie measures the parity corresponds to the following empirical model $e$; note that the outputs of Alice and Bob, as well as the inputs of Diane and Eve, are fixed to 0. 
\begin{center}
\scalebox{0.8}{
\begin{tabular}{l|rrrrrrrr}
\hfill
ABCDE & 00000 & 00001 & 00010 & 00011 & 00100 & 00101 & 00110 & 00111\\
\hline
00000 & $1$ & $0$ & $0$ & $0$ & $0$ & $0$ & $0$ & $0$\\
00100 & $1$ & $0$ & $0$ & $0$ & $0$ & $0$ & $0$ & $0$\\
01000 & $0$ & $0$ & $0$ & $0$ & $0$ & $1$ & $0$ & $0$\\
01100 & $0$ & $0$ & $0$ & $0$ & $0$ & $0$ & $1$ & $0$\\
10000 & $0$ & $0$ & $0$ & $0$ & $0$ & $0$ & $1$ & $0$\\
10100 & $0$ & $0$ & $0$ & $0$ & $0$ & $1$ & $0$ & $0$\\
11000 & $0$ & $0$ & $0$ & $1$ & $0$ & $0$ & $0$ & $0$\\
11100 & $0$ & $0$ & $0$ & $1$ & $0$ & $0$ & $0$ & $0$\\
\end{tabular}
}
\end{center}
The figures below exemplify this full scenario:
\begin{center}
    \scalebox{1}{
        \tikzfig{causal-x-charlie1p}
    }
    \hspace{0.5cm}
    \scalebox{1}{
        \tikzfig{causal-x-charlie0p}
    }
\end{center}
The version where Charlie doesn't measures the parity corresponds to the following simplified empirical model $e'$; note that the outputs of Alice, Bob and Charlie, as well as the inputs of Diane and Eve, are fixed to 0. 
\begin{center}
\scalebox{0.8}{
\begin{tabular}{l|rrrr}
\hfill
ABCDE & 00000 & 00001 & 00010 & 00011\\
\hline
00000 & $1$ & $0$ & $0$ & $0$\\
00100 & $1$ & $0$ & $0$ & $0$\\
01000 & $0$ & $1$ & $0$ & $0$\\
01100 & $0$ & $0$ & $1$ & $0$\\
10000 & $0$ & $0$ & $1$ & $0$\\
10100 & $0$ & $1$ & $0$ & $0$\\
11000 & $0$ & $0$ & $0$ & $1$\\
11100 & $0$ & $0$ & $0$ & $1$\\
\end{tabular}
}
\end{center}
The figures below exemplify this latter, simplified scenario:
\begin{center}
    \scalebox{1}{
        \tikzfig{causal-x-charlie0}
    }
    \hspace{0.5cm}
    \scalebox{1}{
        \tikzfig{causal-x-charlie1}
    }
\end{center}
By construction, both empirical model are 100\% supported by the space of input histories induced the cross causal order. In fact, they are both deterministic, and hence they correspond to causal functions for the space.
\begin{center}
\includegraphics[height=3cm]{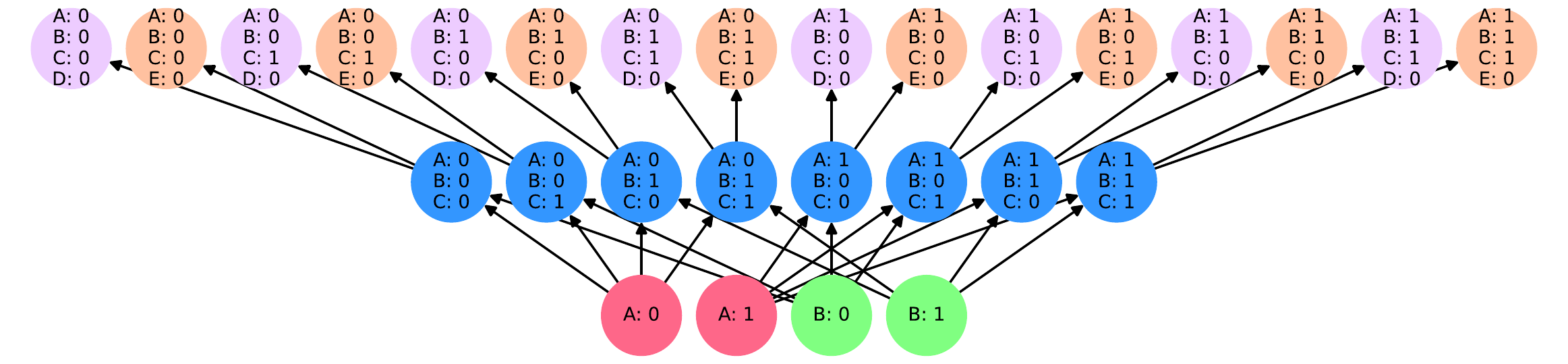}
\end{center}
In the second version of the experiment, Charlie doesn't learn anything about Alice and Bob's inputs: his trivial output can be explained without signalling from either one of Alice or Bob.
Indeed, the simplified empirical model $e'$ is 100\% supported by the space of input histories induced by the following ``$K_{3,2}$'' causal order.  
\begin{center}
\includegraphics[height=3cm]{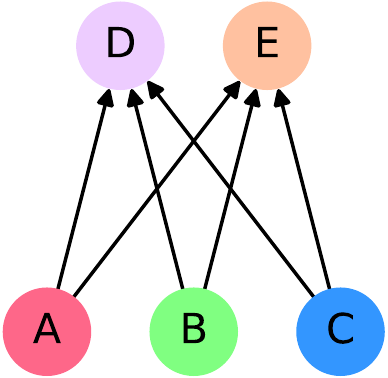}
\end{center}
The space of input histories is explicitly depicted below.
\begin{center}
\includegraphics[height=2.75cm]{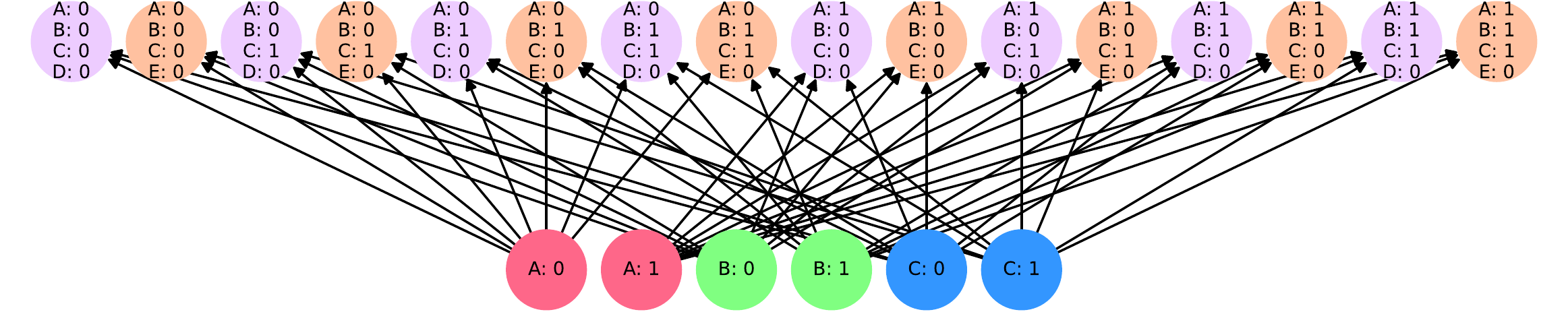}
\end{center}
In fact, the causaltopes for the spaces induced by the cross causal order and the $K_{3, 2}$ causal order coincide when the output $O_{\ev{C}} = \{0\}$ for Charlie is trivial, as demonstrated by the corresponding systems of equations in RREF: 
\begin{center}
    \begin{tabular}{ccc}
    \includegraphics[width=6cm]{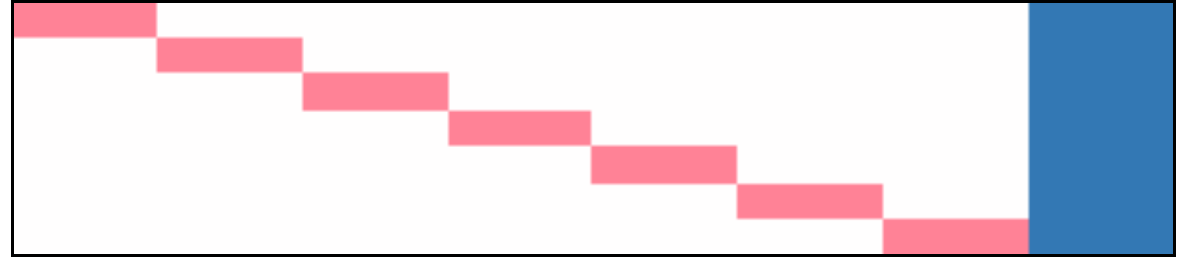}
    & \hspace{2cm} &
    \includegraphics[width=6cm]{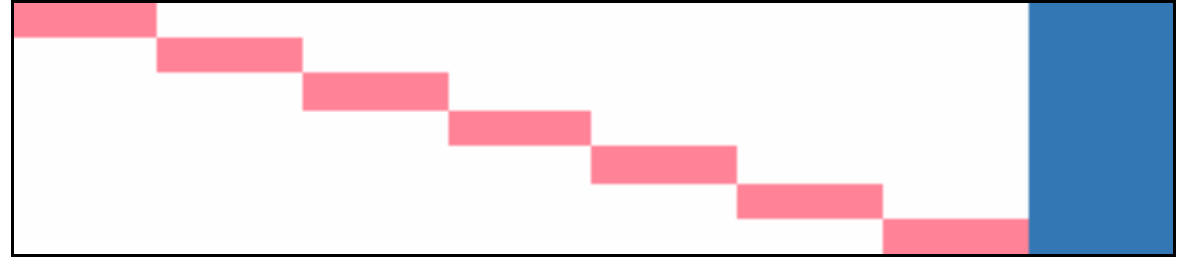}
    \\
    cross causal order
    &&
    $K_{3, 2}$ causal order
    \\
    dimension 24
    &&
    dimension 24
    \end{tabular}
\end{center} 
We can also construct an entirely different space where Charlie's output is independent of Alice and Bob's input, by exploiting the additional constraints on causal functions afforded by lack of tightness.
Indeed, the empirical model $e'$ is also 100\% supported by the non-tight space below: no-signalling from Alice to Charlie is enforced by the \hist{B/b,C/c} histories on the right, while no-signalling from Bob to Charlie is enforced by the \hist{A/a,C/c} histories on the left.
\begin{center}
\includegraphics[height=3cm]{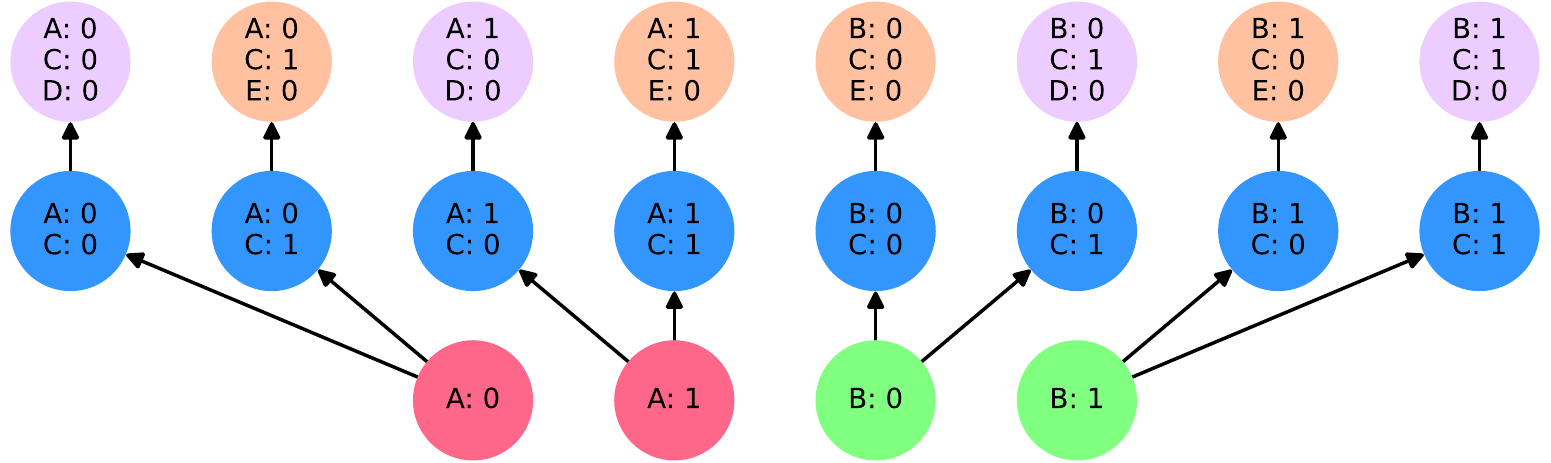}
\end{center}
This space of input histories is a subspace of the space induced by the cross causal order, but is unrelated to the one for the $K_{3,2}$ causal order.
However, the 16-dimensional causaltope for this non-tight space sits inside the causaltope for the two order-induced spaces: even though the spaces are unrelated, every standard empirical model for the non-tight space is also a standard empirical model for the space induced by the $K_{3, 2}$ order.
For completeness, below is the system of equations in RREF defining the causaltope for the non-tight space:
\begin{center}
    \begin{tabular}{c}
    \includegraphics[width=6cm]{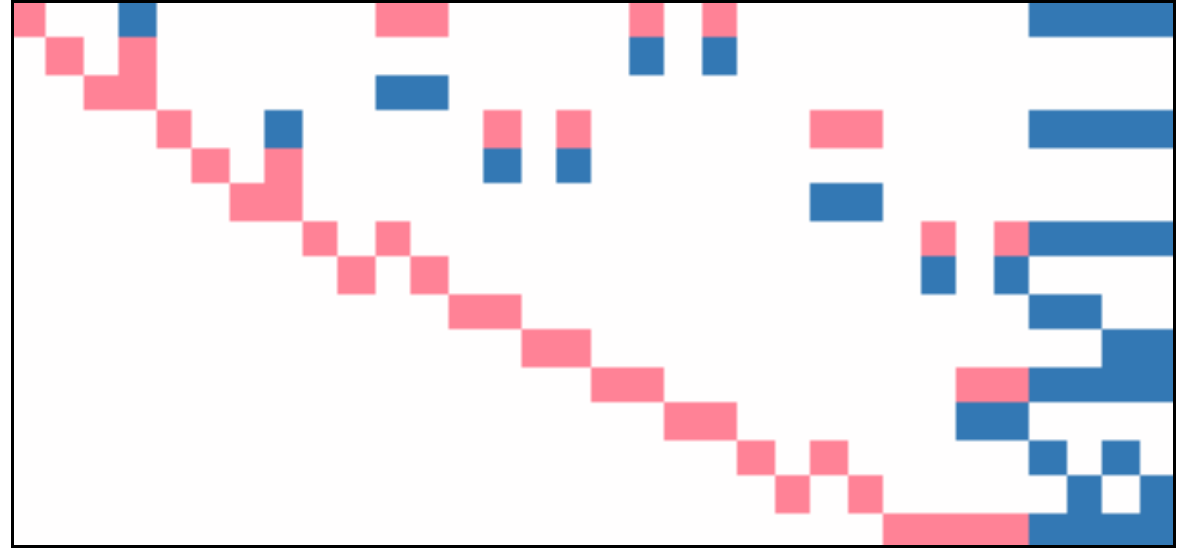}
    \\
    non-tight space
    \\
    dimension 16
    \end{tabular}
\end{center}

\subsubsection{The Leggett-Garg Empirical Model.}

To disprove macro-realistic explanations of quantum mechanical phenomena, the authors of \cite{leggett1985quantum} propose the following experiment on a 2-level quantum system, i.e. a qubit, which evolves in time by rotating about the Y axis at a constant angular rate.
Writing $\Delta t > 0$ for the minimum time over which qubit evolution performs a $\frac{2\pi}{3}$ Y rotation, the experiment proceeds as follows:
\begin{enumerate}
    \item The qubit is prepared in the $|+\rangle$ state at time $t_0$ and left alone to evolve.
    \item At time $t_1 := t_0 + \Delta t$, known to us as event $\ev{A}$, the qubit is either left alone (input 0 at $\ev{A}$, with output fixed to $0$) or a non-demolition measurement in the Z basis is performed on it (input 1 at $\ev{A}$, with meas. outcome as output). The qubit is again left alone to evolve.
    \item At time $t_2 := t_1 + \Delta t$, known to us as event $\ev{B}$, the qubit is either left alone (input 0 at $\ev{B}$, with output fixed to $0$) or a non-demolition measurement in the Z basis is performed on it (input 1 at $\ev{B}$, with meas. outcome as output). The qubit is again left alone to evolve.
    \item At time $t_3 := t_2 + \Delta t$, known to us as event $\ev{C}$, the qubit is either discarded (input 0 at $\ev{C}$, with output fixed to $0$) or a demolition measurement in the Z basis is performed on it (input 1 at $\ev{C}$, with meas. outcome as output).
\end{enumerate}
The figure below exemplifies the scenario we have just described:
\begin{center}
\scalebox{1}{
    \tikzfig{leggett-garg}
}
\end{center}
The description above results in the following empirical model on 3 events:
\begin{center}
\scalebox{0.9}{
\begin{tabular}{l|rrrrrrrr}
\hfill
ABC & 000 & 001 & 010 & 011 & 100 & 101 & 110 & 111\\
\hline
000 & 1.000 & 0.000 & 0.000 & 0.000 & 0.000 & 0.000 & 0.000 & 0.000\\
001 & 0.933 & 0.067 & 0.000 & 0.000 & 0.000 & 0.000 & 0.000 & 0.000\\
010 & 0.067 & 0.000 & 0.933 & 0.000 & 0.000 & 0.000 & 0.000 & 0.000\\
011 & 0.017 & 0.050 & 0.700 & 0.233 & 0.000 & 0.000 & 0.000 & 0.000\\
100 & 0.500 & 0.000 & 0.000 & 0.000 & 0.500 & 0.000 & 0.000 & 0.000\\
101 & 0.125 & 0.375 & 0.000 & 0.000 & 0.375 & 0.125 & 0.000 & 0.000\\
110 & 0.125 & 0.000 & 0.375 & 0.000 & 0.375 & 0.000 & 0.125 & 0.000\\
111 & 0.031 & 0.094 & 0.281 & 0.094 & 0.094 & 0.281 & 0.094 & 0.031\\
\end{tabular}
}
\end{center}
The Leggett-Garg inequalities provides bounds, valid in macro-realistic interpretations, for the sum of the expected $\pm1$-valued parity of outputs when the $\pm1$-valued parity of inputs is $+1$:
\[
-1 \leq 
\mathbb{E}(-1^{o_A\!\oplus\! o_B \!\oplus\! o_C}|011)
+\mathbb{E}(-1^{o_A\!\oplus\! o_B \!\oplus\! o_C}|101)
+\mathbb{E}(-1^{o_A\!\oplus\! o_B \!\oplus\! o_C}|110)
\leq 3
\]
The authors then observe that, in the experiment they propose, the sum of such expected parities is $-\frac{3}{2}$, violating the lower bound and thus excluding a macro-realistic explanation.
Indeed, we can restrict ourselves to the relevant rows of the empirical model:
\begin{center}
\scalebox{0.9}{
\begin{tabular}{l|rrrrrrr}
\hfill
ABC & 000 & 001 & 010 & 011 & 100 & 101 & 110\\
\hline
011 & 0.017 & 0.050 & 0.700 & 0.233 & 0.000 & 0.000 & 0.000\\
101 & 0.125 & 0.375 & 0.000 & 0.000 & 0.375 & 0.125 & 0.000\\
110 & 0.125 & 0.000 & 0.375 & 0.000 & 0.375 & 0.000 & 0.125\\
\end{tabular}
}
\end{center}
The sum of the expected parity of outputs is then computed as follows: 
\[
\begin{array}{rl}
&\mathbb{E}(-1^{o_A\!\oplus\! o_B \!\oplus\! o_C}|011)
+\mathbb{E}(-1^{o_A\!\oplus\! o_B \!\oplus\! o_C}|101)
+\mathbb{E}(-1^{o_A\!\oplus\! o_B \!\oplus\! o_C}|110)
\\
=& (0.017-0.050-0.700+0.233)
\\
 &+(0.125-0.375-0.375+0.125)
\\
 &+(0.125-0.375-0.375+0.125)
\\
=& \frac{1}{2}+\frac{1}{2}+\frac{1}{2} = \frac{3}{2}
\end{array}
\]
By construction, this empirical model is 100\% supported by the space of inputs histories for the total order $\total{A, B, C}$.
As a consequence, it is necessarily non-contextual/local for this space: for an explicit decomposition as a convex combination of 12 causal functions, see Subsection 4.5.4 of \cite{gogioso2022topology}.

The constraints specified by Equation (1) of \cite{leggett1985quantum} are in fact causal constraint, stating that a macro-realist model has to be supported by the following 3 indefinite causal orders.
\begin{center}
    \begin{tabular}{ccc}
    \includegraphics[height=4cm]{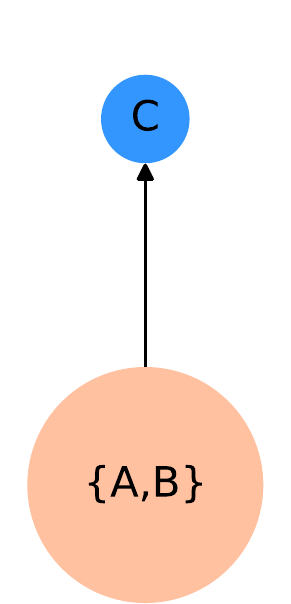}
    &\hspace{2cm}
    \includegraphics[height=4cm]{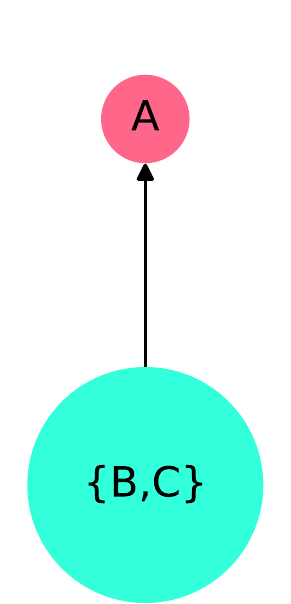}
    &\hspace{2cm}
    \includegraphics[height=4cm]{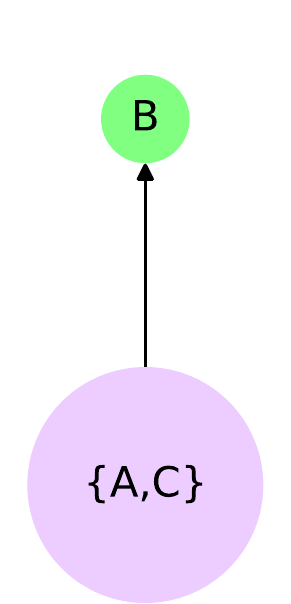}
    \end{tabular}
\end{center}
By construction, the Leggett-Garg empirical model is 100\% supported by the leftmost causal order (of which \total{A, B, C} is a sub-order).
However, it is only 56.7\% supported by the middle causal order and 62.5\% supported by the right causal order, showing that it violates the causal constraints imposed by the macro-realist assumption.
The meet of the spaces of input histories induced by the three indefinite causal orders above is the non-causally-complete, non-tight space depicted below:
\begin{center}
\includegraphics[height=2.5cm]{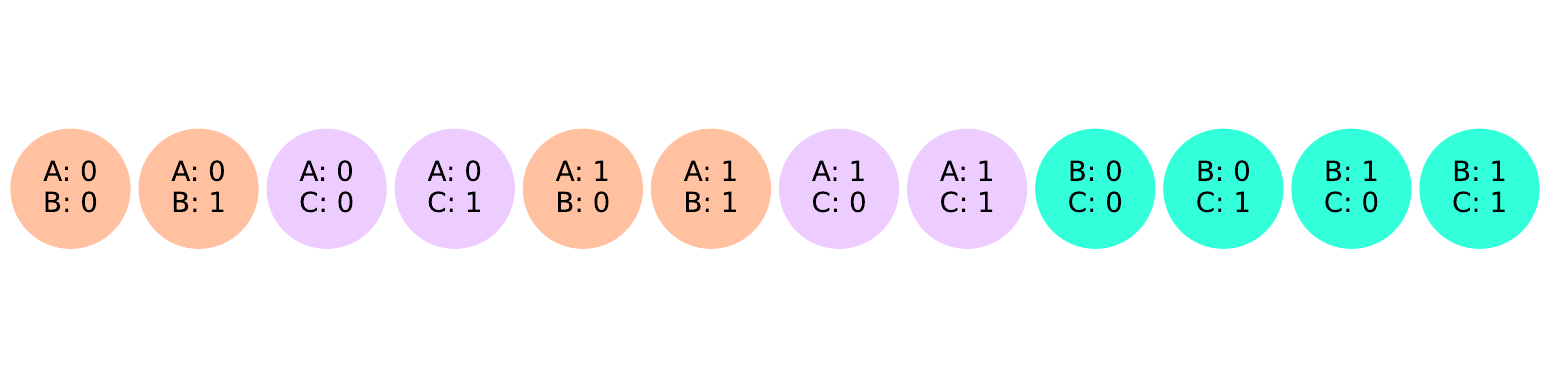}
\end{center}
The meet of the causaltopes for the spaces induced by the three indefinite causal orders is the same as the causaltope for the meet space: they both have dimension 26, and hence coincide with the no-signalling causaltope.
Imposing the three constraints together is thus the same as imposing no-signalling, and the Leggett-Garg empirical model has a 30.25\% no-signalling fraction.
It also has varying causal fractions over other total orders: 62.50\% over $\total{A, C, B}$, 56.70\% over $\total{B, A, C}$ and $\total{B, C, A}$, 45.87\% over $\total{C, B, A}$ and 37.95\% over $\total{C, A, B}$.
% Below is a scatter plot of causal fractions against causaltope dimension, restricting to the causally complete subspaces of the space induced by the total order $\total{A, B, C}$:
% \begin{center}
% \includegraphics[height=4.5cm]{svg-inkscape/example-4-scatter-1_svg-tex.pdf}
% \end{center}
% Amongst those, 
The (unique) minimal supporting space is depicted below, with a 35-dim causaltope. 
\begin{center}
\includegraphics[height=3cm]{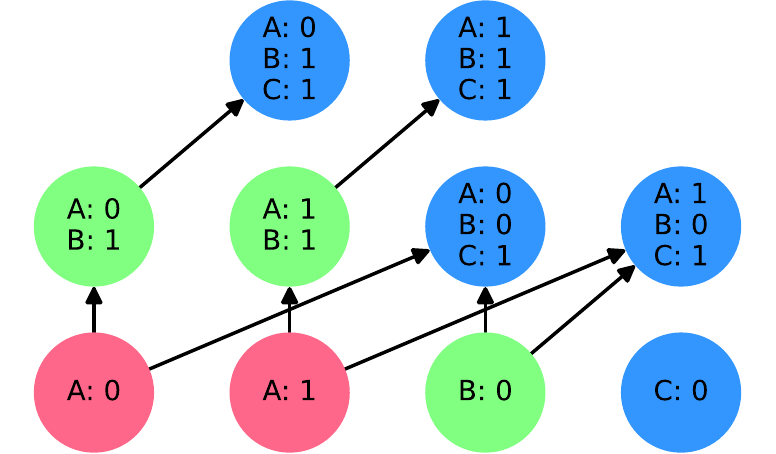}
\end{center}
This space captures the causal constraints---additional with respect to the total order \total{A,B,C}---associated with the absence of measurement on input 0:
\begin{itemize}
    \item The presence of history \hist{C/0} states that there is no signalling from \ev{A} nor \ev{B} to \ev{C} when no measurement is performed at \ev{C}.
    \item The presence of history \hist{B/0} states that there is no signalling from \ev{A} to \ev{B} when no measurement is performed at \ev{B}.
\end{itemize}
The causal functions involved in the deterministic causal HVM for the empirical model over \total{A, B, C} are also causal for the minimal supporting space, hence the empirical model is non-contextual/local there as well.

\subsubsection{An OCB Empirical Model.}

In this example, we look consider two agents, Alice and Bob, acting within the context of the process matrix described by Oreshkov, Costa and Brukner in \cite{oreshkov2012quantum}:
\[
    W^{A_1A_2B_1B_2} = \frac{1}{4}[1^{A_1A_2B_1B_2} + \frac{1}{\sqrt{2}}(\sigma_z^{A_2} \sigma_z^{B_1} + \sigma_z^{A_1} \sigma_x^{B_1} \sigma_z^{B_2})]
\]
The two agents perform the following local instruments:
\begin{itemize}
    \item Alice measures the incoming qubit in the Z basis, producing an output $x \in \{0, 1\}$. She then encodes her input $a \in \{0,1\}$ into the Z basis of the outgoing qubit.
    \item Bob has input $(b, b') \in \{0,1\}^2$:
    \begin{itemize}
        \item If $b' = 0$, Bob measures the incoming qubit in the X basis, obtaining a measurement outcome $z \in \{+,-\}$: if $z=+$, Bob prepares the outgoing qubit in $|b\rangle$; if $z=-$, Bob prepares the outgoing qubit in $|1-b\rangle$ instead. Regardless of the value of $z$, the output $y \in \{0,1\}$ of Bob is set constantly to $y=0$.
        \item If $b' = 1$, Bob measures the incoming qubit in the Z basis and uses the measurement outcome as his output $y \in \{0,1\}$. He prepares the outgoing qubit in $|0\rangle$, regardless of the value of $b$.
    \end{itemize}
\end{itemize}
The description above results in the following empirical model on 2 events:
\begin{center}
\scalebox{0.9}{
\begin{tabular}{l|rrrr}
\hfill
ABB & 00 & 01 & 10 & 11\\
\hline
000 & $1/4+\sqrt{2}/8$ & $1/4+\sqrt{2}/8$ & $1/4-\sqrt{2}/8$ & $1/4-\sqrt{2}/8$\\
001 & $1/4-\sqrt{2}/8$ & $1/4-\sqrt{2}/8$ & $1/4+\sqrt{2}/8$ & $1/4+\sqrt{2}/8$\\
010 & $1/4+\sqrt{2}/16$ & $1/4-\sqrt{2}/16$ & $1/4+\sqrt{2}/16$ & $1/4-\sqrt{2}/16$\\
011 & $1/4+\sqrt{2}/16$ & $1/4-\sqrt{2}/16$ & $1/4+\sqrt{2}/16$ & $1/4-\sqrt{2}/16$\\
100 & $1/4+\sqrt{2}/8$ & $1/4+\sqrt{2}/8$ & $1/4-\sqrt{2}/8$ & $1/4-\sqrt{2}/8$\\
101 & $1/4-\sqrt{2}/8$ & $1/4-\sqrt{2}/8$ & $1/4+\sqrt{2}/8$ & $1/4+\sqrt{2}/8$\\
110 & $1/4-\sqrt{2}/16$ & $1/4+\sqrt{2}/16$ & $1/4-\sqrt{2}/16$ & $1/4+\sqrt{2}/16$\\
111 & $1/4-\sqrt{2}/16$ & $1/4+\sqrt{2}/16$ & $1/4-\sqrt{2}/16$ & $1/4+\sqrt{2}/16$\\
\end{tabular}
}
\end{center}
This is our first example of causally inseparable empirical model: the maximum causal fraction achieved over the causaltopes for the spaces induced by \total{A, B} and \total{B, A} (the maximal causally complete spaces on 2 events $\{\ev{A}, \ev{B}\}$) is around 93.9\%.
The particular decomposition achieving this fraction in the convex hull has components with the following causal fractions over the two individual causaltopes:  
\begin{center}
    \begin{tabular}{ccc}
    \includegraphics[height=3cm]{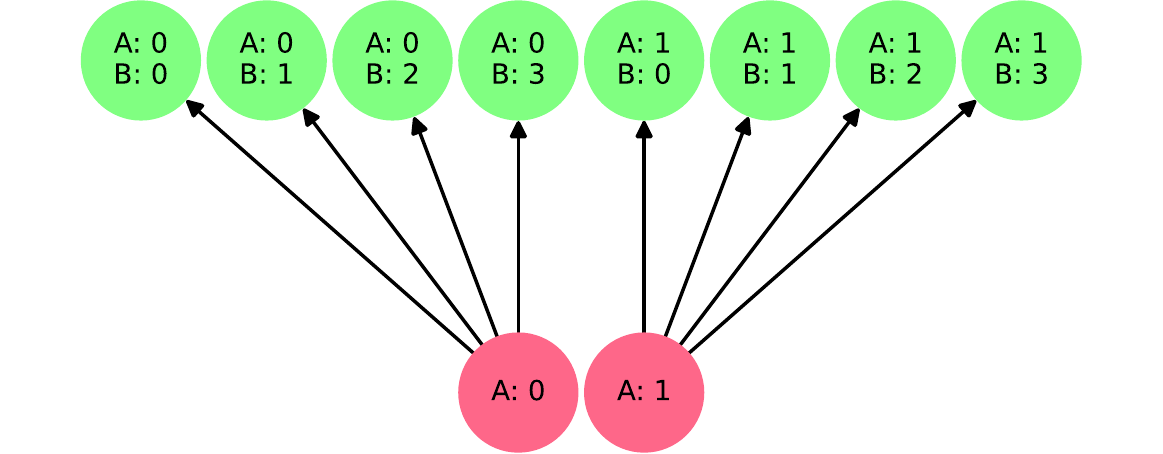}
    &
    \includegraphics[height=3cm]{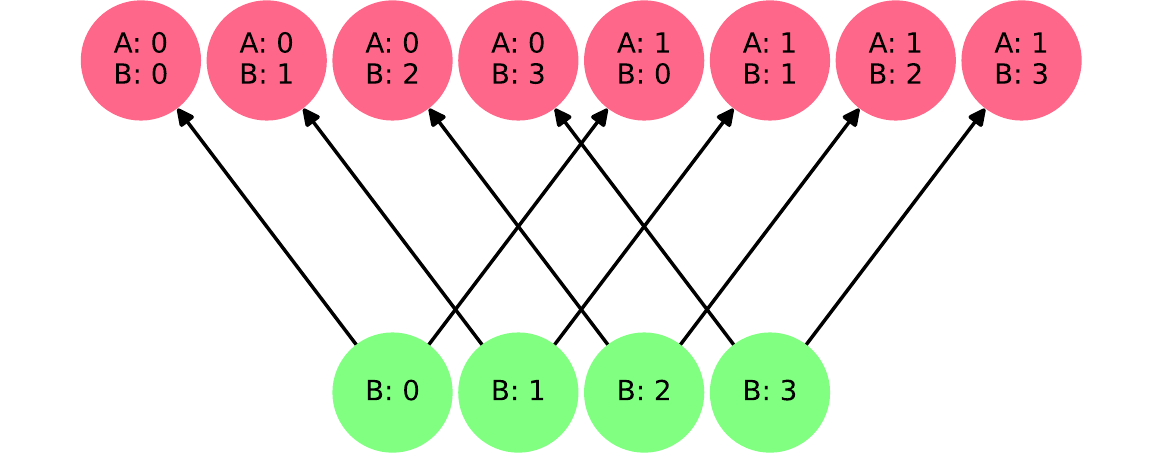}
    \\
    $\Hist{\total{A,B}, \{0,1\}}$
    &
    $\Hist{\total{B,A}, \{0,1\}}$
    \\
    Causal fraction: 29.3\%
    &
    Causal fraction: 64.6\%
    \end{tabular}
\end{center}
There is no ambiguity in the allocation above: the component for each causaltope has causal fraction 0\% in the other causaltope.
The two unnormalised components are shown below:
\begin{center}
\scalebox{0.9}{
    \begin{tabular}{l|rrrr}
    \hfill
    ABB & 00 & 01 & 10 & 11\\
    \hline
    000 & 0.086 & 0.061 & 0.073 & 0.073\\
    001 & 0.073 & 0.073 & 0.061 & 0.086\\
    010 & 0.146 & 0.000 & 0.146 & 0.000\\
    011 & 0.146 & 0.000 & 0.146 & 0.000\\
    100 & 0.146 & 0.000 & 0.073 & 0.073\\
    101 & 0.073 & 0.073 & 0.000 & 0.146\\
    110 & 0.000 & 0.146 & 0.000 & 0.146\\
    111 & 0.000 & 0.146 & 0.000 & 0.146\\
    \end{tabular}
}
\hspace{2cm}
\scalebox{0.9}{
    \begin{tabular}{l|rrrr}
    \hfill
    ABB & 00 & 01 & 10 & 11\\
    \hline
    000 & 0.280 & 0.366 & 0.000 & 0.000\\
    001 & 0.000 & 0.000 & 0.366 & 0.280\\
    010 & 0.131 & 0.162 & 0.192 & 0.162\\
    011 & 0.131 & 0.162 & 0.192 & 0.162\\
    100 & 0.280 & 0.366 & 0.000 & 0.000\\
    101 & 0.000 & 0.000 & 0.366 & 0.280\\
    110 & 0.162 & 0.131 & 0.162 & 0.192\\
    111 & 0.162 & 0.192 & 0.162 & 0.131\\
    \end{tabular}
}
\end{center}
The maximum causal fraction supported by \total{A, B} is 29.3\% and the maximum causal fraction supported by \total{B, A} is 64.6\%.
The maximum causal fraction supported by the no-signalling polytope is also 29.3\%, but it is necessarily different from the 29.3\% component shown on the left above, because it is supported by both \total{A, B} and \total{B, A}.
% \begin{center}
% \scalebox{0.9}{
% \begin{tabular}{l|rrrr}
% \hfill
% ABB & 00 & 01 & 10 & 11\\
% \hline
% 000 & 0.000 & 0.146 & 0.073 & 0.073\\
% 001 & 0.073 & 0.073 & 0.146 & 0.000\\
% 010 & 0.146 & 0.000 & 0.000 & 0.146\\
% 011 & 0.146 & 0.000 & 0.146 & 0.000\\
% 100 & 0.000 & 0.146 & 0.073 & 0.073\\
% 101 & 0.073 & 0.073 & 0.146 & 0.000\\
% 110 & 0.146 & 0.000 & 0.000 & 0.146\\
% 111 & 0.146 & 0.000 & 0.146 & 0.000\\
% \end{tabular}
% }
% \end{center}

\newpage
\subsubsection{The BFW Empirical Model.}

We now look at the empirical model introduced by Baumeler, Feix and Wolf in \cite{baumeler2014maximal,baumeler2014perfect}, described by the authors as the 50\%-50\% mixture of a circular ``identity'' classical process and a circular ``bitflip'' classical process, for three agents Alice, Bob and Charlie.
\begin{center}
\scalebox{0.9}{
\begin{tabular}{l|rrrrrrrr}
\hfill
ABC & 000 & 001 & 010 & 011 & 100 & 101 & 110 & 111\\
\hline
000 & $1/2$ & $0$ & $0$ & $0$ & $0$ & $0$ & $0$ & $1/2$\\
001 & $0$ & $0$ & $0$ & $1/2$ & $1/2$ & $0$ & $0$ & $0$\\
010 & $0$ & $1/2$ & $0$ & $0$ & $0$ & $0$ & $1/2$ & $0$\\
011 & $0$ & $0$ & $1/2$ & $0$ & $0$ & $1/2$ & $0$ & $0$\\
100 & $0$ & $0$ & $1/2$ & $0$ & $0$ & $1/2$ & $0$ & $0$\\
101 & $0$ & $1/2$ & $0$ & $0$ & $0$ & $0$ & $1/2$ & $0$\\
110 & $0$ & $0$ & $0$ & $1/2$ & $1/2$ & $0$ & $0$ & $0$\\
111 & $1/2$ & $0$ & $0$ & $0$ & $0$ & $0$ & $0$ & $1/2$\\
\end{tabular}
}
\end{center}
Specifically, the empirical model above is the 50\%-50\% mixture of the following causally inseparable functions (cf. Subsubsection 4.5.5 of \cite{gogioso2022topology}) for the causally incomplete space induced by the causal order \indiscrete{A, B, C}:
\begin{center}
\scalebox{0.7}{
    \begin{tabular}{l|rrrrrrrr}
    \hfill
    ABC & 000 & 001 & 010 & 011 & 100 & 101 & 110 & 111\\
    \hline
    000 & $1$ & $0$ & $0$ & $0$ & $0$ & $0$ & $0$ & $0$\\
    001 & $0$ & $0$ & $0$ & $0$ & $1$ & $0$ & $0$ & $0$\\
    010 & $0$ & $1$ & $0$ & $0$ & $0$ & $0$ & $0$ & $0$\\
    011 & $0$ & $0$ & $0$ & $0$ & $0$ & $1$ & $0$ & $0$\\
    100 & $0$ & $0$ & $1$ & $0$ & $0$ & $0$ & $0$ & $0$\\
    101 & $0$ & $0$ & $0$ & $0$ & $0$ & $0$ & $1$ & $0$\\
    110 & $0$ & $0$ & $0$ & $1$ & $0$ & $0$ & $0$ & $0$\\
    111 & $0$ & $0$ & $0$ & $0$ & $0$ & $0$ & $0$ & $1$\\
    \end{tabular}
}
\hspace{5mm}
\scalebox{0.7}{
    \begin{tabular}{l|rrrrrrrr}
    \hfill
    ABC & 000 & 001 & 010 & 011 & 100 & 101 & 110 & 111\\
    \hline
    000 & $0$ & $0$ & $0$ & $0$ & $0$ & $0$ & $0$ & $1$\\
    001 & $0$ & $0$ & $0$ & $1$ & $0$ & $0$ & $0$ & $0$\\
    010 & $0$ & $0$ & $0$ & $0$ & $0$ & $0$ & $1$ & $0$\\
    011 & $0$ & $0$ & $1$ & $0$ & $0$ & $0$ & $0$ & $0$\\
    100 & $0$ & $0$ & $0$ & $0$ & $0$ & $1$ & $0$ & $0$\\
    101 & $0$ & $1$ & $0$ & $0$ & $0$ & $0$ & $0$ & $0$\\
    110 & $0$ & $0$ & $0$ & $0$ & $1$ & $0$ & $0$ & $0$\\
    111 & $1$ & $0$ & $0$ & $0$ & $0$ & $0$ & $0$ & $0$\\
    \end{tabular}
}
\end{center}
This is an example of a maximally causally inseparable empirical model: it has 0\% support over all causally complete spaces on 3 events.

Interestingly, however, the BFW empirical model is 100\% supported by each of the following 3 indefinite causal orders: this shows that either one of the 3 parties can be taken to act first, as long as the other two parties remain in indefinite causal order.
In contrast, the two individual causally inseparable functions have 0\% support over each of the 3 indefinite causal orders. 
\begin{center}
    \begin{tabular}{ccc}
    \includegraphics[height=4cm]{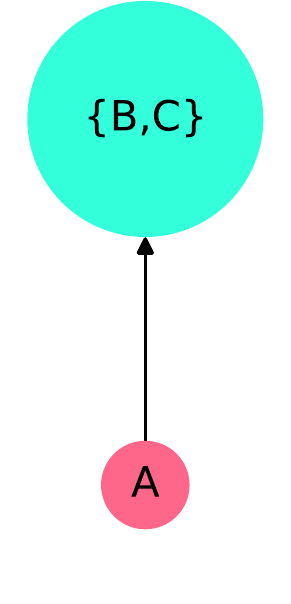}
    &\hspace{2cm}
    \includegraphics[height=4cm]{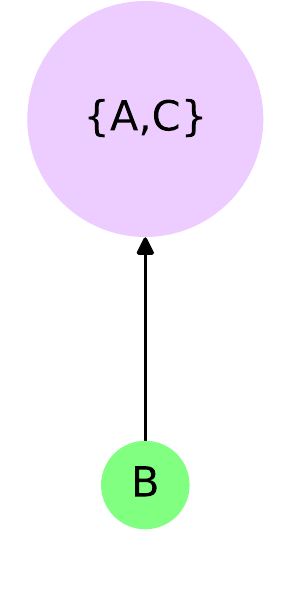}
    &\hspace{2cm}
    \includegraphics[height=4cm]{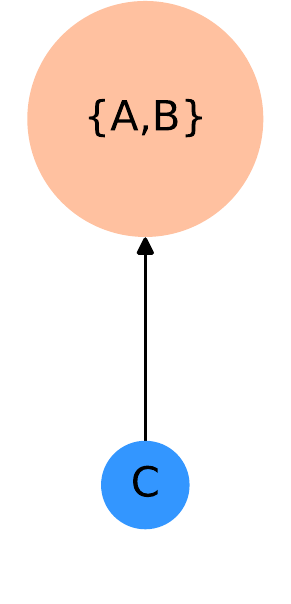}
    \end{tabular}
\end{center}
The meet of the associated spaces of input histories is the discrete space on 3 events: the associated no-signalling causaltope is 26-dimensional and supports 0\% of the BFW empirical model.
The intersection of the associated causaltopes, on the other hand, has dimension 38, and it supports 100\% of the BFW empirical model.

We now look more in detail at the 3 indefinite causal order explanations: without loss of generality, we take Alice to act first.
The absence of any support by the 2 total orders where Alice acts first means that no part of the indefinite causal order between Bob and Charlie is explainable by a fixed causal structure.
Furthermore, the absence of any support by the 2 non-trivial switch orders where Alice acts first means that no part of the indefinite causal order between Bob and Charlie is controlled by Alice's input.
Therefore, the only remaining explanation is that the indefinite causal order between Bob and Charlie is somehow correlated to Alice's outputs.
To verify that this is indeed the case, we consider the scenarios corresponding to a fixed input choice by Alice (input 0 left below, input 1 right below), where the output of Alice has been discarded.  
\begin{center}
\scalebox{0.9}{
    \begin{tabular}{l|rrrr}
    \hfill
    BC & 00 & 01 & 10 & 11\\
    \hline
    00 & $1/2$ & $0$ & $0$ & $1/2$\\
    01 & $1/2$ & $0$ & $0$ & $1/2$\\
    10 & $0$ & $1/2$ & $1/2$ & $0$\\
    11 & $0$ & $1/2$ & $1/2$ & $0$\\
    \end{tabular}
}
\hspace{2cm}
\scalebox{0.9}{
    \begin{tabular}{l|rrrr}
    \hfill
    BC & 00 & 01 & 10 & 11\\
    \hline
    00 & $0$ & $1/2$ & $1/2$ & $0$\\
    01 & $0$ & $1/2$ & $1/2$ & $0$\\
    10 & $1/2$ & $0$ & $0$ & $1/2$\\
    11 & $1/2$ & $0$ & $0$ & $1/2$\\
    \end{tabular}
}
\end{center}
Unsurprisingly, the two restricted empirical models above are both causally separable.
Perhaps surprisingly, they are both 100\% supported by the no-signalling causaltope for Bob and Charlie.
To understand whether Alice's output determines a fixed causal order between Bob and Charlie, we look at the empirical models obtained by conditioning on each of Alice's outputs (output 0 left below, output 1 right below), which have 50\%-50\% distribution independently of her input.
\begin{center}
\scalebox{0.9}{
    \begin{tabular}{l|rrrr}
    \hfill
    ABC & 000 & 001 & 010 & 011\\
    \hline
    000 & $1$ & $0$ & $0$ & $0$\\
    001 & $0$ & $0$ & $0$ & $1$\\
    010 & $0$ & $1$ & $0$ & $0$\\
    011 & $0$ & $0$ & $1$ & $0$\\
    100 & $0$ & $0$ & $1$ & $0$\\
    101 & $0$ & $1$ & $0$ & $0$\\
    110 & $0$ & $0$ & $0$ & $1$\\
    111 & $1$ & $0$ & $0$ & $0$\\
    \end{tabular}
}
\hspace{2cm}
\scalebox{0.9}{
    \begin{tabular}{l|rrrr}
    \hfill
    ABC & 100 & 101 & 110 & 111\\
    \hline
    000 & $0$ & $0$ & $0$ & $1$\\
    001 & $1$ & $0$ & $0$ & $0$\\
    010 & $0$ & $0$ & $1$ & $0$\\
    011 & $0$ & $1$ & $0$ & $0$\\
    100 & $0$ & $1$ & $0$ & $0$\\
    101 & $0$ & $0$ & $1$ & $0$\\
    110 & $1$ & $0$ & $0$ & $0$\\
    111 & $0$ & $0$ & $0$ & $1$\\
    \end{tabular}
}
\end{center}
The empirical models above are deterministic and correspond to causal functions on the space of input histories determined by the order $\total{A, \evset{B,C}}$.
However, the two functions are causally inseparable, and hence so are the two empirical models.
When Alice's input and output are both 0, i.e. in the first 4 rows of the empirical model left above, Bob and Charlie's outputs are related to their inputs in a way which requires bi-directional signalling:
\[
\begin{array}{rcl}
o_B &=& i_C\\
o_C &=& \neg (i_B \wedge i_C)
\end{array}
\]
This evidence supports the intuition that causal inseparability for the BFW model stems from a cyclic order structure, as its very definition seems to suggest.

\subsection{Prologue to Quantum Switches with Entangled Controls.}

The next four examples provide evidence for the phenomenon of \emph{contextual causality}, where causal structure is correlated to contextual information in such a way that non-locality/contextuality implies causal inseparability.
All four examples are based on quantum switches with entangled, or otherwise contextual, control.
In all four examples, we consider causal separability as a function of experimental parameters, and compare it to both the (causally separable) local fraction of the example itself and the local fraction of the non-locality experiment that they each example is inspired by (either on the Bell state or on the 3-partite GHZ state):
\begin{itemize}
    \item In the first example, the causally separable fraction closely tracks both the local fraction of the example itself and the local fraction of the underlying non-locality experiment, but with differences at certain angles.
    \item In the second example, the causally separable fraction has similar qualitative features as the local fraction, but is significantly higher, while the local fraction of the example still closely tracks the local fraction of the underlying non-locality experiment.
    \item In the third example, the causally separable fraction coincides with the local fraction of the example---saturating the lower bound provided by Proposition \ref{proposition:csep-frac-bounded-below-by-csep-local-frac}---but is significantly lower than the local fraction of the underlying non-locality experiment at most angles.
    \item In the fourth example, the locality of the underlying experiment perfectly correlates with causal separability, making it so that the causally separable fraction, the local fraction of the example and the local fraction of the underlying non-locality experiment all coincide.
\end{itemize}
These examples highlight the importance of defining causal separability relative to an ambient causal order (or, more generally, space of input histories): if no-signalling constraints between different groups of agents are enforced---as could very much be the case in a real-world scenario---then such experiments can be used to witness indefinite causal order in quantum theory.
If no constraints are enforced, on the other hand, then the experiments all become causally separable, and cannot be used as witnesses of indefinite causality.

\subsubsection{A Quantum Switch with Bell Entangled Control.}
\label{subsubsection:ghzswitch-2-1}

In this example we consider a single quantum switch between Alice and Bob, where the switch control is the first qubit of a maximally entangled 2-qubit pair.
Charlie measures the control qubit, after the switch, while Diane measures the second qubit in the entangled pair.
Specifically, for angles $\gamma_0, \gamma_1 \in [0, \pi)$:
\begin{enumerate}
    \item A 2-qubit Bell state $|\Phi^+\rangle$ is created: one qubit is sent to the control of a quantum switch between Alice and Bob, the other qubit is sent to Diane.
    \item Alice and Bob are in a quantum switch, with one of the two $|\Phi^+\rangle$ qubits as its control and the $|+\rangle$ state as its input.
    \item Alice and Bob do the same thing: they performs an X measurement on the incoming qubit, using the measurement outcome as their individual output, and they encode their individual input into the X basis of the outgoing qubit.
    \item The output qubit of the switch is discarded. Charlie receives the control qubit, which he measures in a basis chosen as follows: on input 0, he first applies a X rotation by $-\gamma_0$ and then measures in the Z basis; on input 1, he first applies a X rotation by $-\gamma_1$ and then measures in the Z basis.
    \item Diane receives the second qubit of the entangled state $|\Phi^+\rangle$, which she measures in a basis chosen in the same manner as Charlie.
\end{enumerate}
The figure below exemplifies the scenario we have just described:
\begin{center}
\scalebox{1.25}{
    \tikzfig{entangled-switch-single}
}
\end{center}
For $\gamma_0 = \frac{\pi}{5}, \gamma_1 = \frac{3\pi}{5}$, the description above results in the following empirical model:
\begin{center}
    % \includesvg[height=8cm]{qcaus-figures/geometry/example-ghzswitch-2-1-X-plus-X-empmodel.svg}
    \includegraphics[height=10cm]{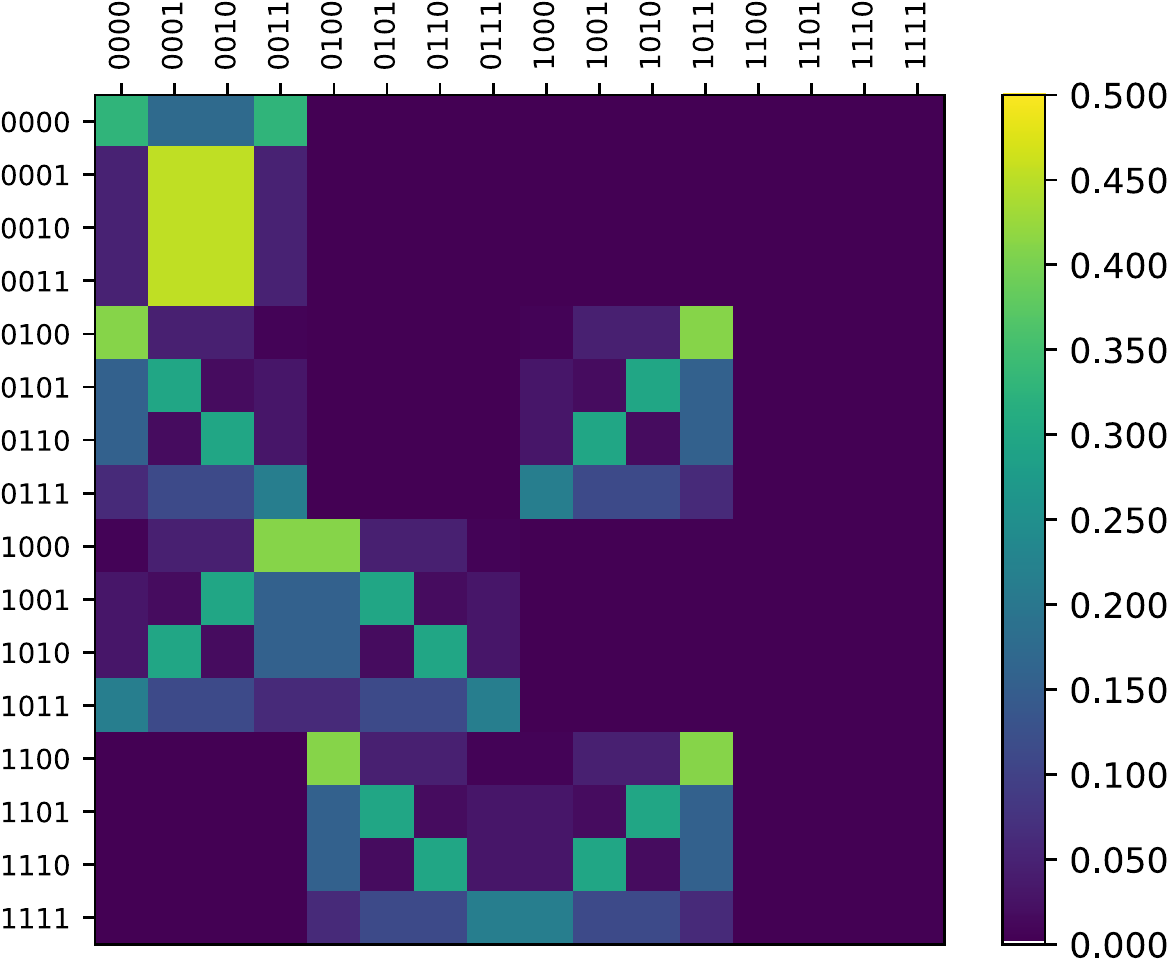}
\end{center}
By construction, the empirical model is 100\% supported by the space of input histories induced by the following indefinite causal order (whose standard causaltope is 134-dimensional):
\begin{center}
        \includegraphics[height=3.5cm]{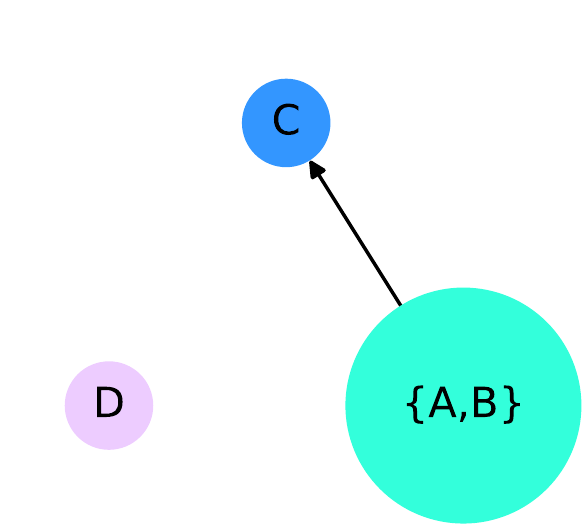}
\end{center}
The empirical model is causally inseparable for the space right above, with a causally separable fraction of \char`\~63.2\% over its two causal completions.
The two completions are induced by the causal orders $\total{A, B, C}\vee\discrete{D}$ and $\total{B, A, C}\vee\discrete{D}$, and their causaltopes are 128-dimensional:
\begin{center}
    \begin{tabular}{ccc}
    \includegraphics[height=3.5cm]{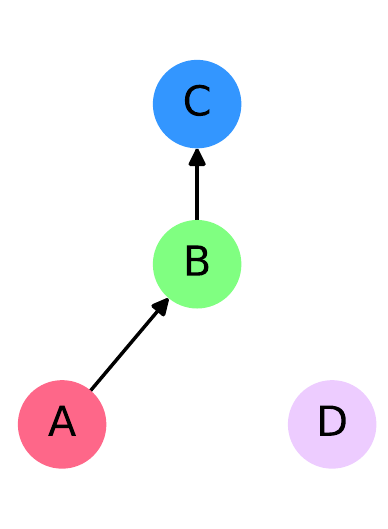}
    &
    \hspace{1cm}
    &
    \includegraphics[height=3.5cm]{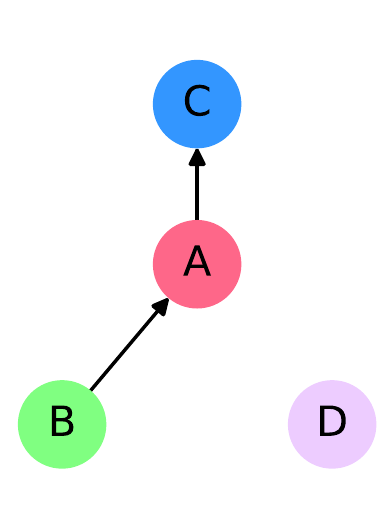}
    \\
    Causal fraction: \char`\~31.6\%
    &
    &
    Causal fraction: \char`\~31.6\%
    \end{tabular}
\end{center}
Below we plot the causally separable fraction (left) and the local fraction (middle) for this empirical model as a function of the $\gamma_0$ and $\gamma_1$ measurement angles used by Charlie and Diane.
We compare it to the local fraction for a Bell experiment with the same measurements (right), as originally shown by Figure 1(a) of \cite{abramsky2017contextual}.
\begin{center}
    \begin{tabular}{ccc}
    % \includesvg[height=4cm]{qcaus-figures/geometry/example-ghzswitch-2-1-X-plus-X-sepfrac-100.svg}
    \includegraphics[height=4.cm]{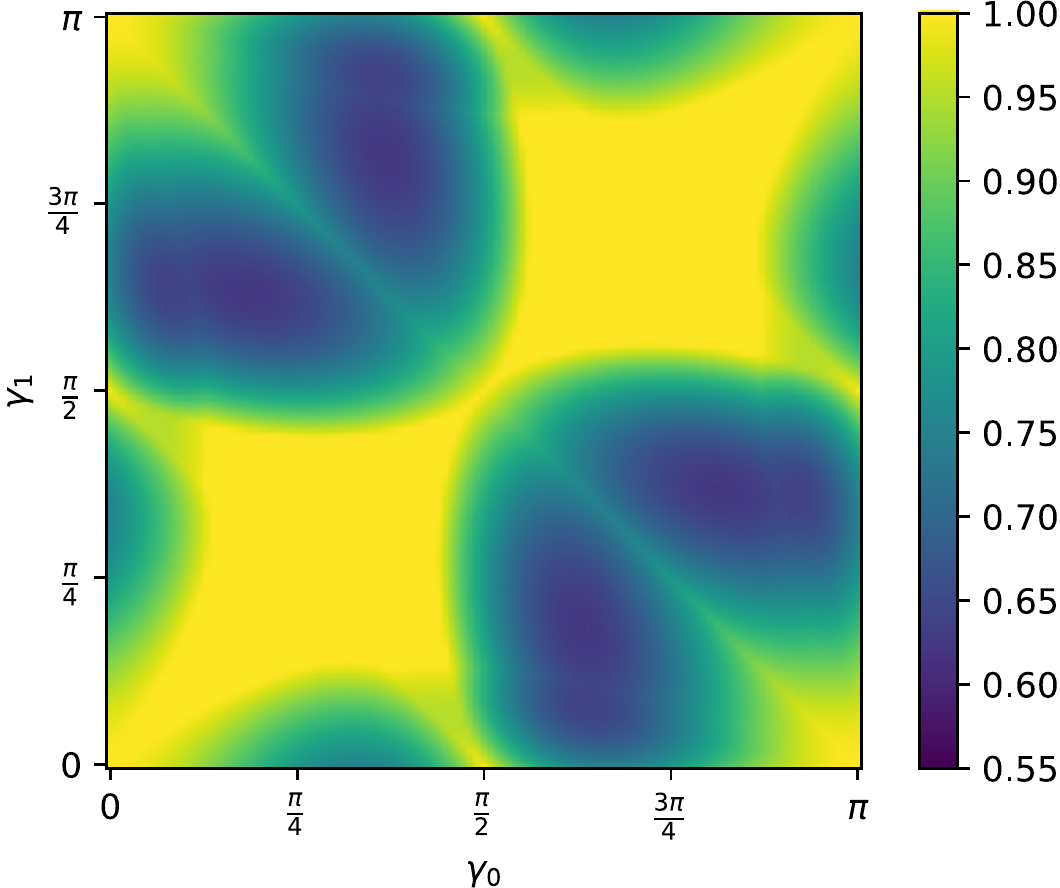}
    &
    % \includesvg[height=4cm]{qcaus-figures/geometry/example-ghzswitch-2-1-X-plus-X-local-fraction-res100.svg}
    \includegraphics[height=4cm]{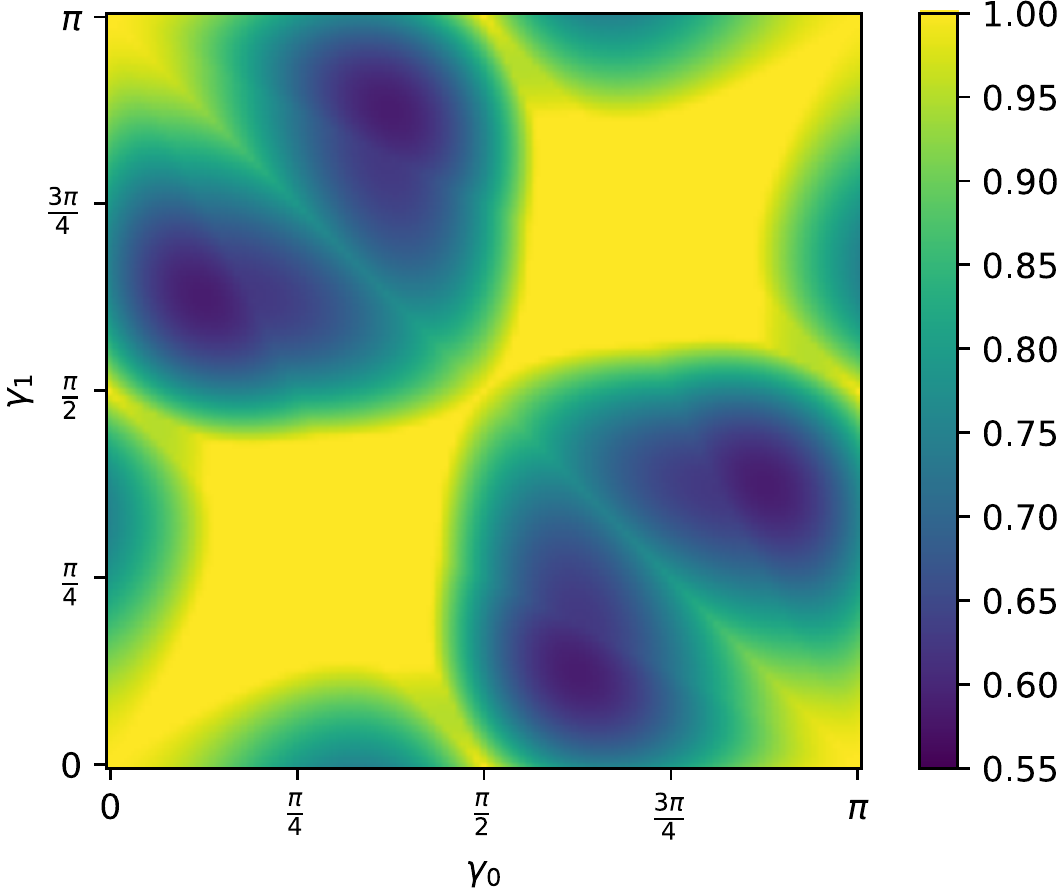}
    &
    % \includesvg[height=4cm]{qcaus-figures/geometry/bell-local-fraction-res100.svg}
    \includegraphics[height=4cm]{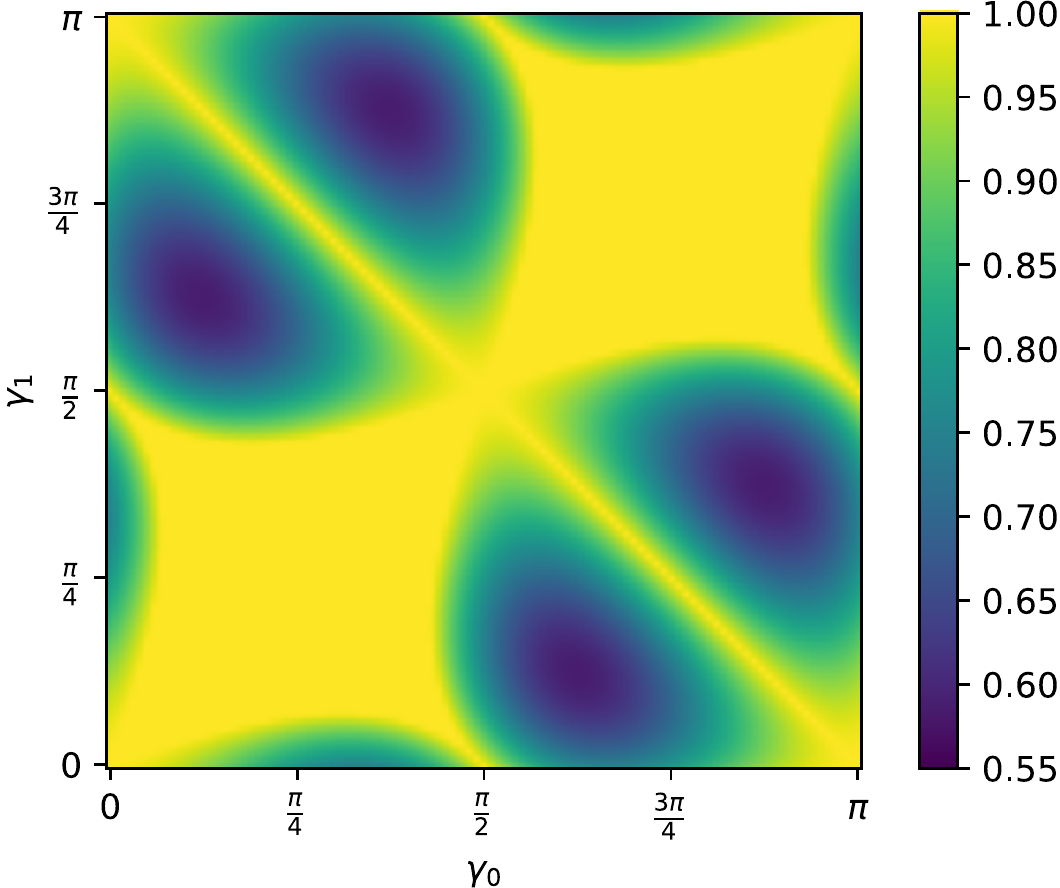}
    \\
    causally separable fraction
    &
    local fraction
    &
    local fraction (Bell exp.)
    \end{tabular}
\end{center}
At the angular resolution of $\frac{\pi}{100}$ used by the plots, the minimum causally separable fraction of \char`\~62.3\% is achieved at $\gamma_0 = \frac{18\pi}{100}$ and $\gamma_1 = \frac{63\pi}{100}$, close to the $\gamma_0 = \frac{\pi}{5}, \gamma_1 = \frac{3\pi}{5}$ values used in the empirical model above.
For comparison, the minimum local fraction for the empirical model is \char`\~58.6\%, and it coincides with the minimum local fraction of the corresponding Bell experiment.
The local fraction for the empirical model, computed over all 262144 causal functions (reduced to 8192 compatible with the empirical model support), coincides with the causally separable local fraction, computed over the 114688 causally separable functions (reduced to 4096 compatible with the empirical model support).

To better understand the differences between the three landscapes above, below we plot the difference between causally separable fraction and local fraction for the empirical model (left), as well as the difference between local fraction for the Bell experiment and local fraction for the empirical model (right).
\begin{center}
    \begin{tabular}{ccc}
    % \includesvg[height=4.5cm]{qcaus-figures/geometry/example-ghzswitch-2-1-X-plus-X-sepfrac-locfrac-diff-100.svg}
    \includegraphics[height=4.5cm]{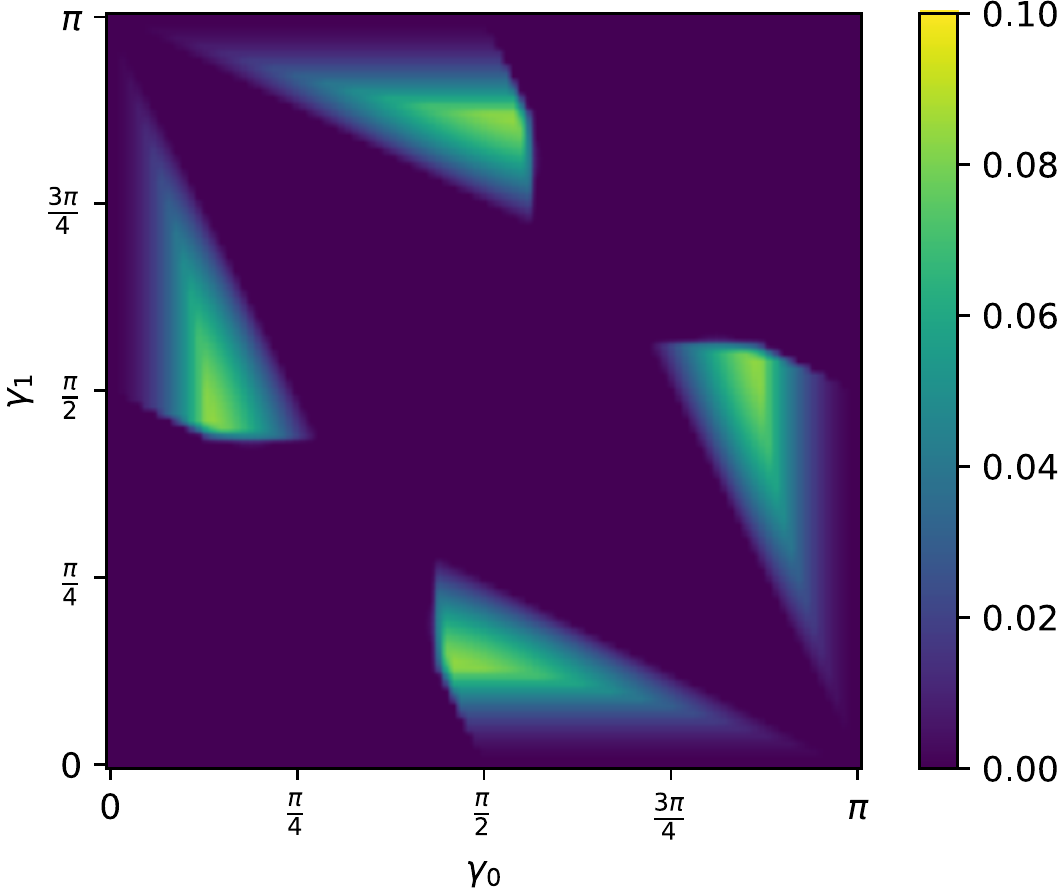}
    &
    \hspace{8mm}
    &
    % \includesvg[height=4.5cm]{qcaus-figures/geometry/example-ghzswitch-2-1-X-plus-X-locfrac-bell-locfrac-diff-100.svg}
    \includegraphics[height=4.5cm]{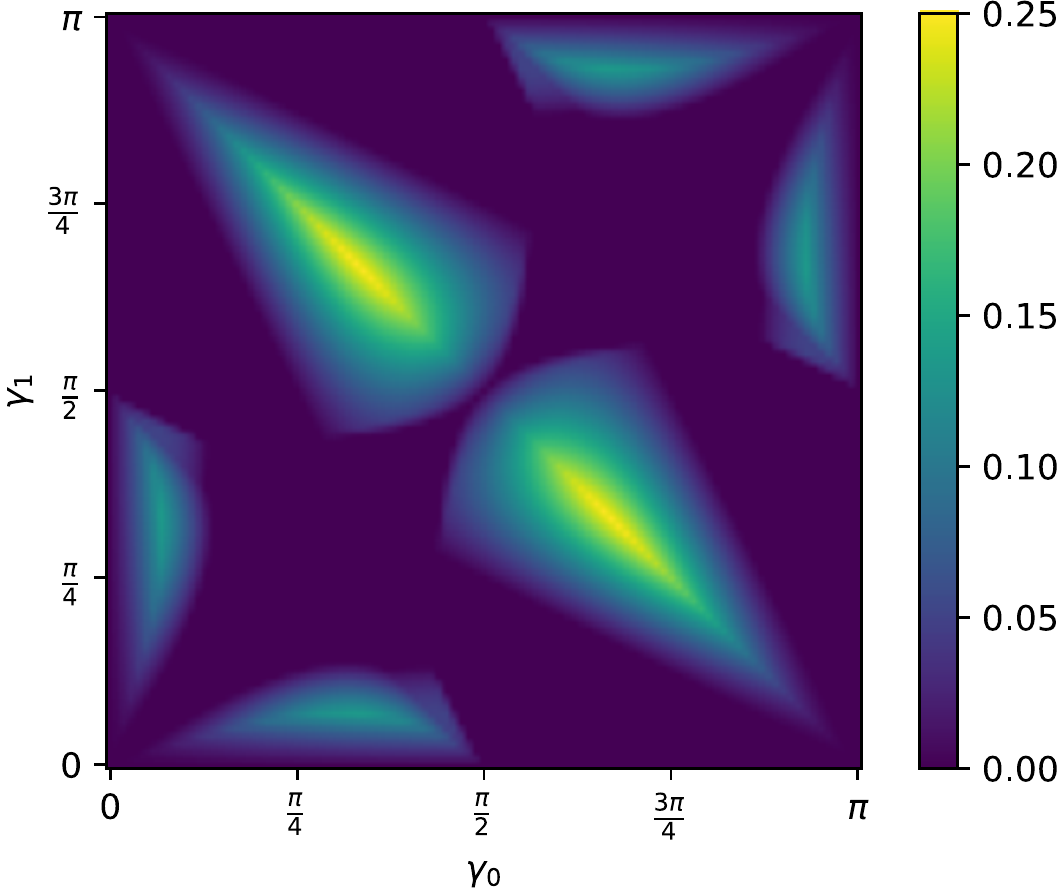}
    \\
    causally separable fraction - local fraction
    &
    &
    local fraction (Bell exp.) - local fraction
    \end{tabular}
\end{center}
In the left difference plot, we observe that the local fraction for the empirical model provides a lower bound for its causally separable fraction at all angles, as expected from Proposition \ref{proposition:csep-frac-bounded-below-by-csep-local-frac}.
However, we note that the bound is not tight for all angles, showing that the two fractions need not coincide even in the case of simple switch-based models.
In the right difference plot, we observe that the local fraction for the empirical model also provides a lower bound for the local fraction of the corresponding Bell experiment; in particular, we observe that the empirical model is non-local on the diagonal $\gamma_0 = \pi-\gamma_1$, where the corresponding Bell empirical model is always local.

The empirical model becomes causally separable if the no-signalling constraint to Diane is dropped: the model is 100\% supported by the spaces of input histories induced by causal orders \total{A,B,C,D} and \total{B,A,C,D}, the two causal completions of \total{\evset{A,B},C,D}, with each causal completion supporting exactly 50\% of the empirical model.

\subsubsection{A Quantum Switch with GHZ Entangled Control.}

In this example we analyse a variation of the previous example where the control qubit for the quantum switch is now part of a 3-partite state, namely a GHZ state in the X basis:
\[
\frac{1}{\sqrt{2}}\left(|\!+\!++\rangle+|\!-\!--\rangle\right)
= \frac{1}{2}\sum_{b_0 \oplus b_1 \oplus b_2 = 0} |b_0 b_1 b_2\rangle
\]
The setup is identical to that of the previous example, except that now there are two qubits entangled with the control qubit, one measured by Diane---as in the previous example---and one measured by Eve.
Note that the choice of creating the GHZ state in the X basis is consistent with the entangled state used in the previous example:
\[
|\Phi^+\rangle
=
\frac{1}{\sqrt{2}}\left(|\!+\!+\rangle+|\!-\!-\rangle\right)
=
\frac{1}{\sqrt{2}}\sum_{b_0 \oplus b_1 = 0} |b_0 b_1\rangle
\]
The figure below exemplifies the scenario we have just described:
\begin{center}
\scalebox{1.25}{
    \tikzfig{GHZ-single-switch}
}
\end{center}
For $\gamma_0 = \frac{7\pi}{25}, \gamma_1 = \frac{3\pi}{5}$, the description above results in the following empirical model:
\begin{center}
    % \includesvg[height=8cm]{qcaus-figures/geometry/example-ghzswitch-3-1-X-plus-X-empmodel.svg}
    \includegraphics[height=10cm]{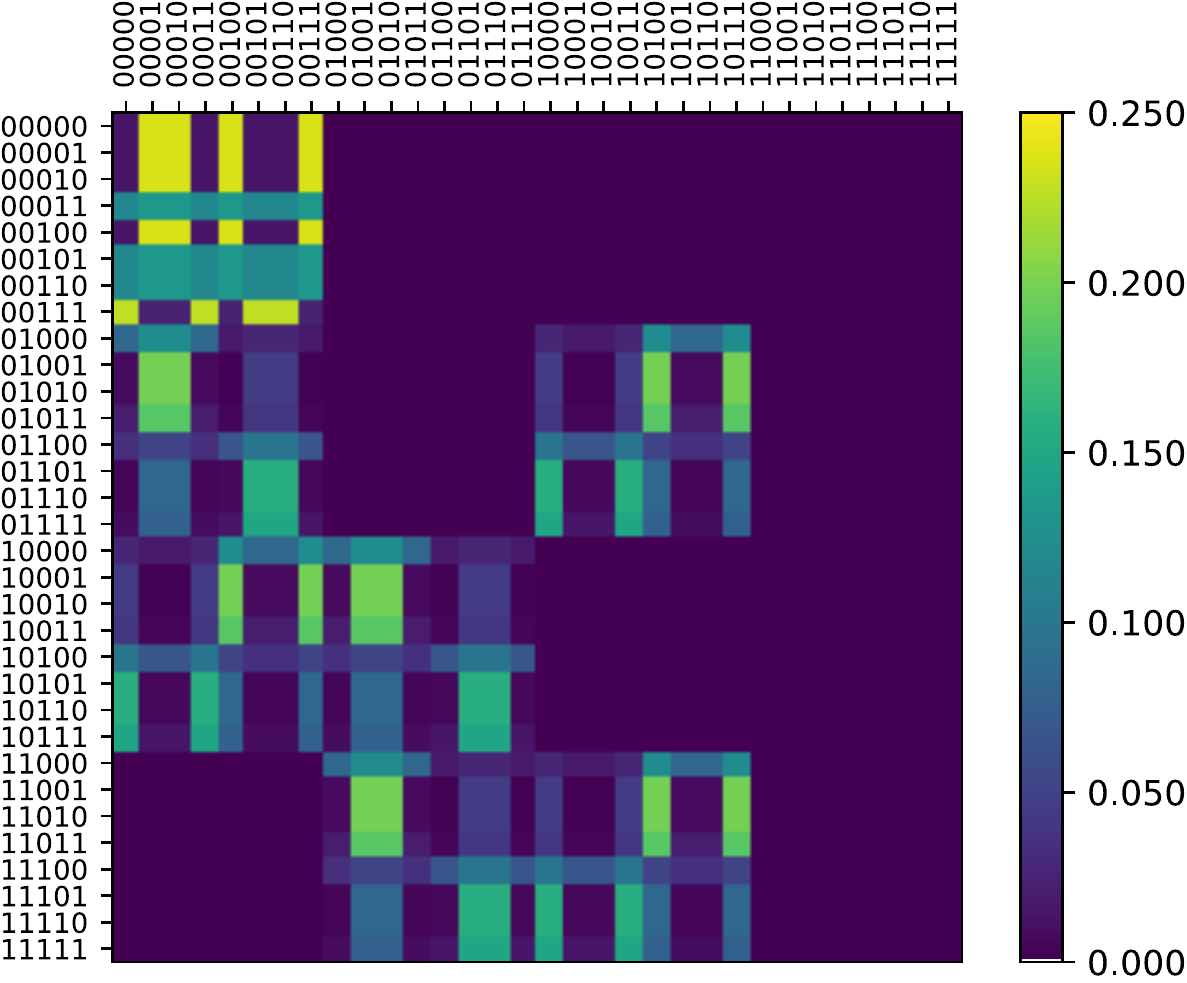}
\end{center}
By construction, the empirical model is 100\% supported by the space of input histories induced by the following indefinite causal order (whose standard causaltope is 404-dimensional):
\begin{center}
        % \includesvg[height=3.25cm]{qcaus-figures/geometry/example-ghzswitch-3-1-X-plus-X-order.svg}
        \includegraphics[height=3.5cm]{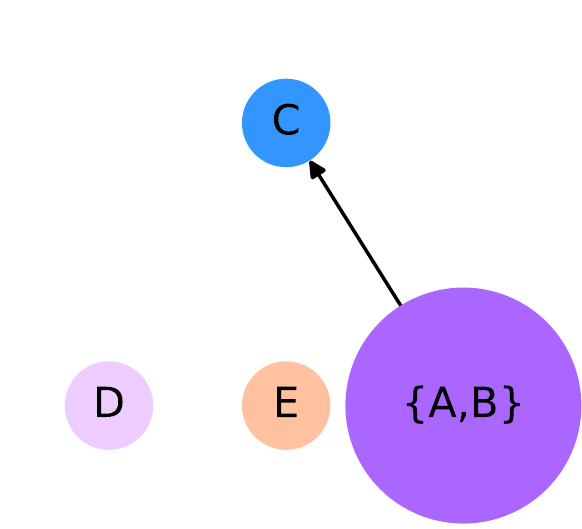}
\end{center}
The empirical model is causally inseparable for the space right above, with a causally separable fraction of \char`\~41.2\% over its two causal completions.
The two completions are induced by the causal orders $\total{A, B, C}\vee\discrete{D}\vee\discrete{E}$ and $\total{B, A, C}\vee\discrete{D}\vee\discrete{E}$, and their causaltopes are 386-dimensional:
\begin{center}
    \begin{tabular}{ccc}
    % \includesvg[height=3.25cm]{qcaus-figures/geometry/example-ghzswitch-3-1-X-plus-X-comp-0.svg}
    \includegraphics[height=3.5cm]{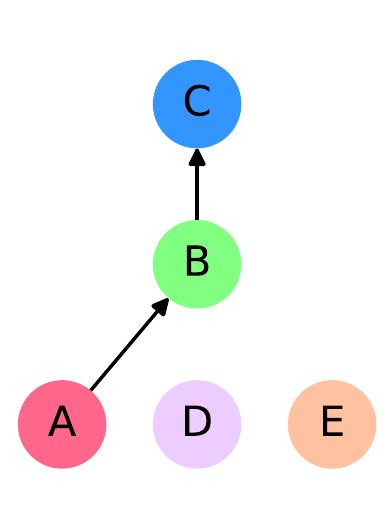}
    &
    \hspace{1cm}
    &
    % \includesvg[height=3.25cm]{qcaus-figures/geometry/example-ghzswitch-3-1-X-plus-X-comp-1.svg}
    \includegraphics[height=3.5cm]{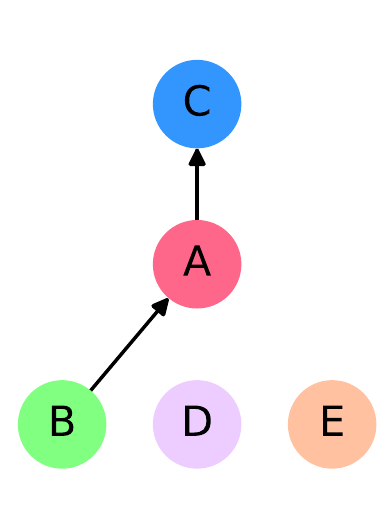}
    \\
    Causal fraction: \char`\~20.6\%
    &
    &
    Causal fraction: \char`\~20.6\%
    \end{tabular}
\end{center}
The empirical model becomes causally separable if the no-signalling constraint to Diane and Eve is dropped: the model is 100\% supported by the spaces of input histories induced by causal orders \total{A,B,C,D,E} and \total{B,A,C,D,E}, the two causal completions of \total{\evset{A,B},C,D,E}, with each causal completion supporting exactly 50\% of the empirical model.

Below we plot the causally separable fraction (left) and the local fraction (middle) for this empirical model as a function of the $\gamma_0$ and $\gamma_1$ measurement angles used by Charlie, Diane and Eve.
We compare it to the local fraction for a GHZ experiment with the same measurements (right), as originally shown by Figure 1(b) of \cite{abramsky2017contextual}.
\begin{center}
    \begin{tabular}{ccc}
    % \includesvg[height=4cm]{qcaus-figures/geometry/example-ghzswitch-3-1-X-plus-X-sepfrac-100.svg}
    \includegraphics[height=4.cm]{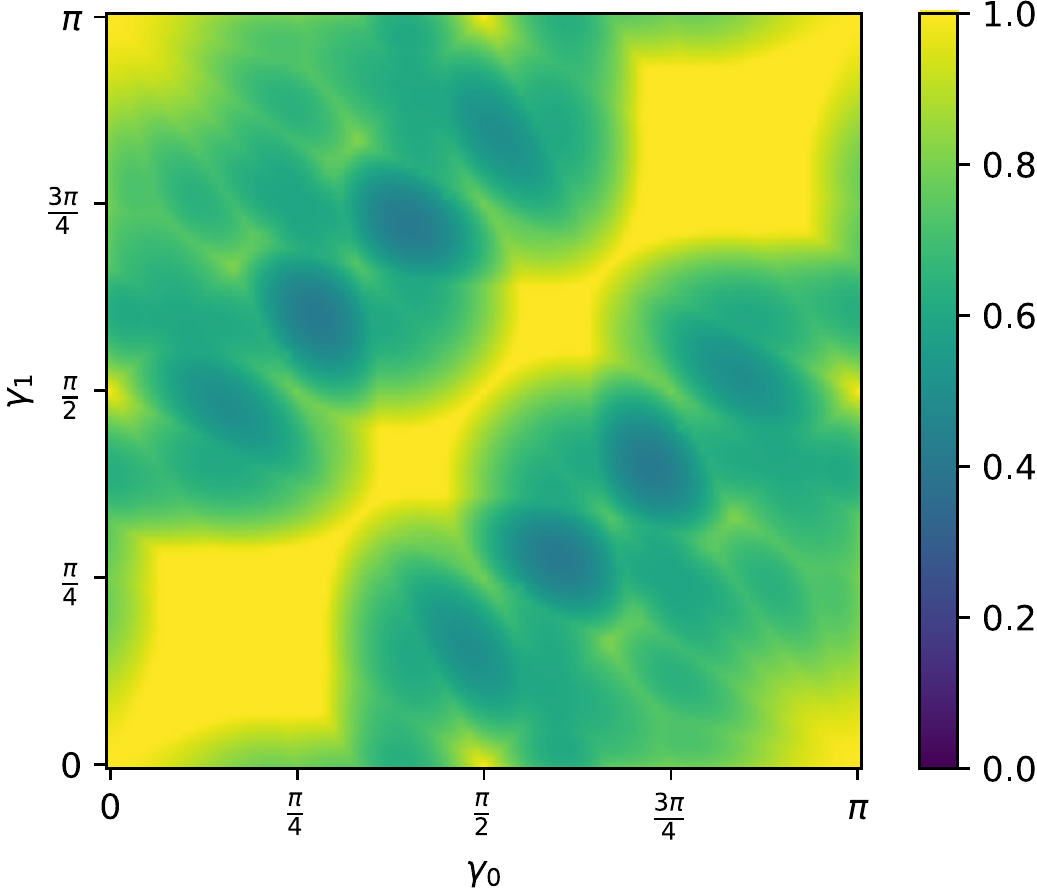}
    &
    % \includesvg[height=4cm]{qcaus-figures/geometry/example-ghzswitch-3-1-X-plus-X-local-fraction-res100.svg}
    \includegraphics[height=4cm]{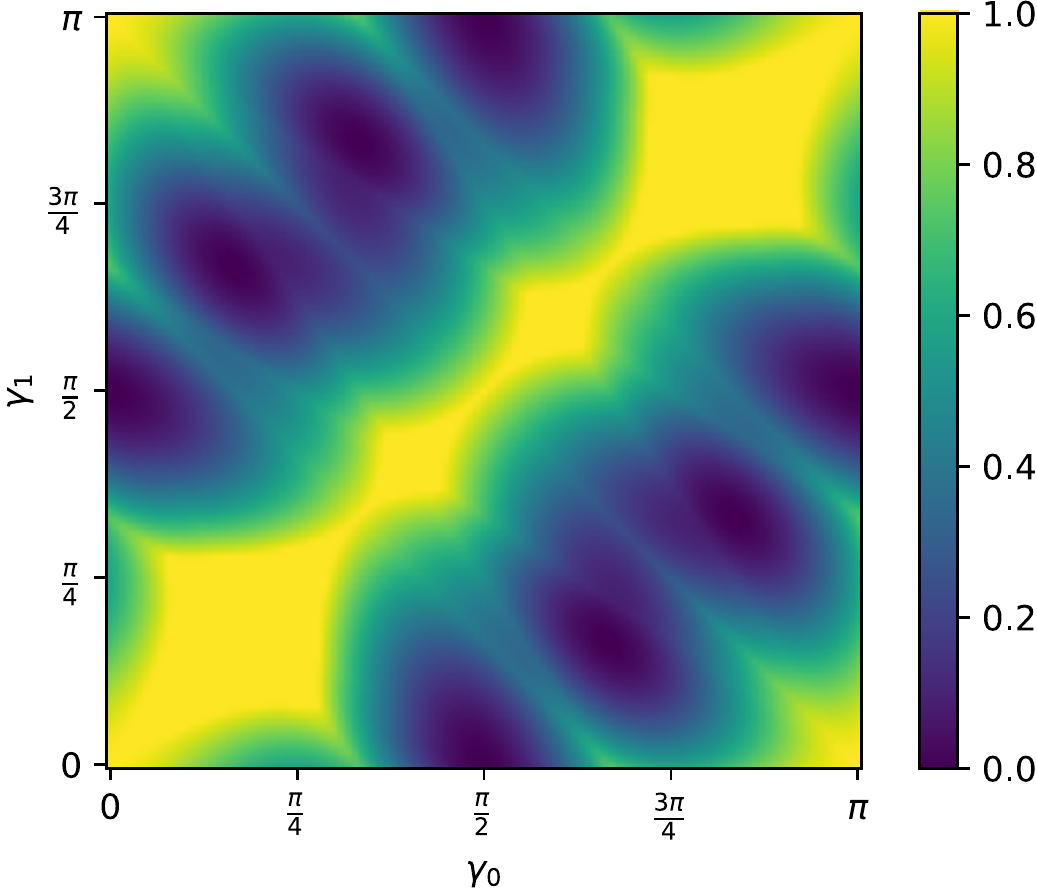}
    &
    % \includesvg[height=4cm]{qcaus-figures/geometry/ghz3-local-fraction-res100.svg}
    \includegraphics[height=4cm]{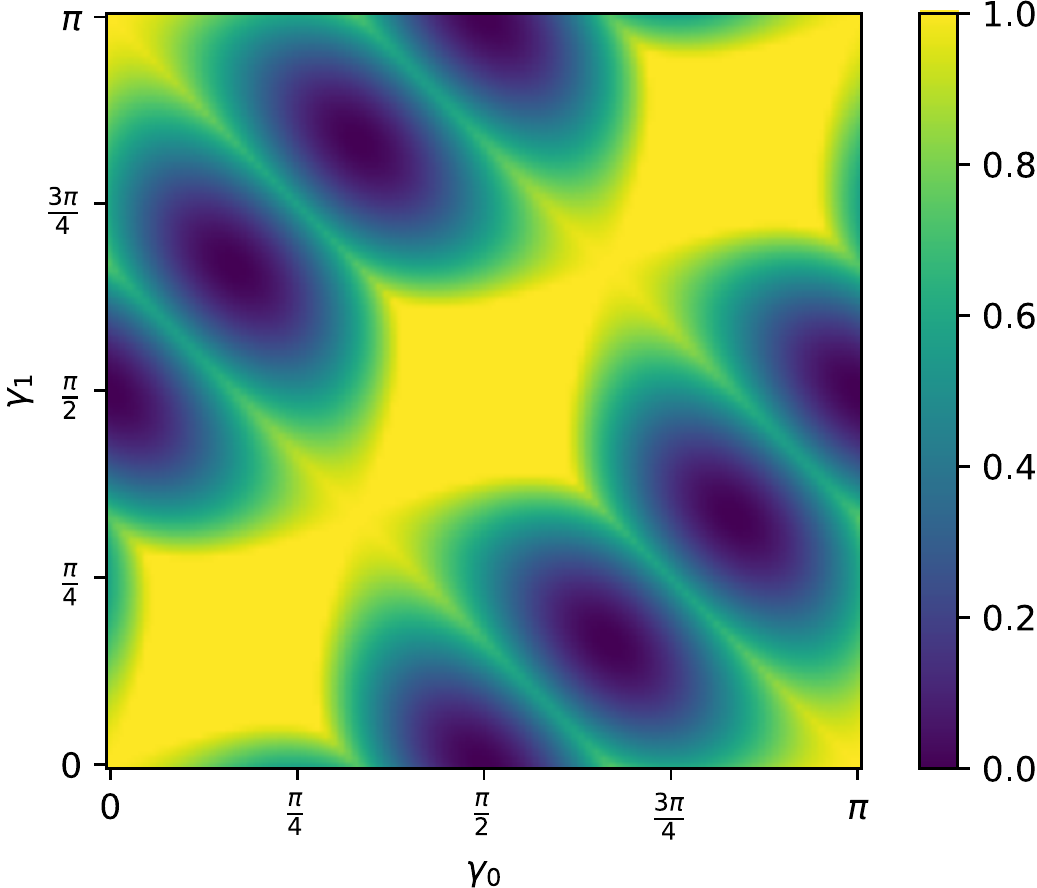}
    \\
    causally separable fraction
    &
    local fraction
    &
    local fraction (GHZ exp.)
    \end{tabular}
\end{center}
At the angular resolution of $\frac{\pi}{100}$ used by the plots, the minimum causally separable fraction of \char`\~41.2\% is achieved at $\gamma_0 = \frac{7\pi}{25}$ and $\gamma_1 = \frac{3\pi}{5}$.
For comparison, the minimum local fraction for the empirical model is 0\%, and it coincides with the minimum local fraction for the corresponding GHZ experiment.
The local fraction for the empirical model, computed over all 1048576 causal functions (reduced to 32768 compatible with the empirical model support), coincides with the causally separable local fraction, computed over the 458752 causally separable functions (reduced to 16384 compatible with the empirical model support).

To better understand the differences between the three landscapes above, below we plot the difference between causally separable fraction and local fraction for the empirical model (left), as well as the difference between local fraction for the GHZ experiment and local fraction for the empirical model (right).
\begin{center}
    \begin{tabular}{ccc}
    % \includesvg[height=4.5cm]{qcaus-figures/geometry/example-ghzswitch-3-1-X-plus-X-sepfrac-locfrac-diff-100.svg}
    \includegraphics[height=4.5cm]{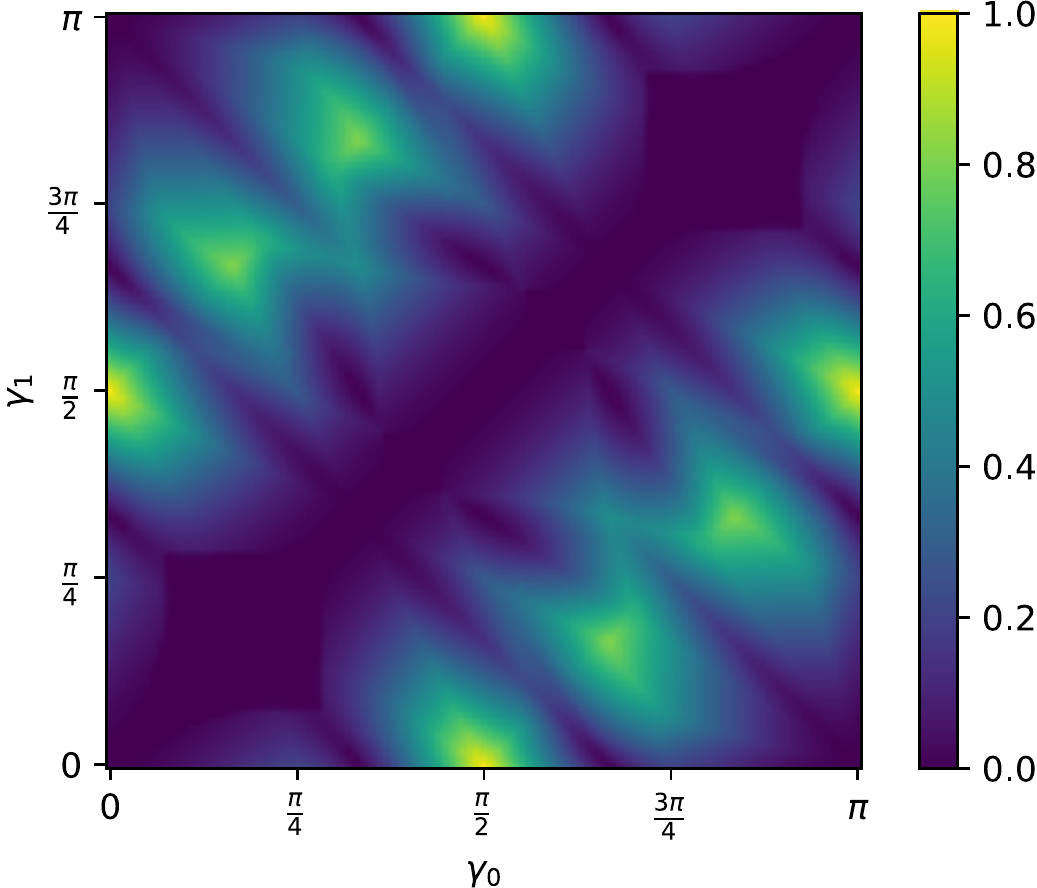}
    &
    \hspace{8mm}
    &
    % \includesvg[height=4.5cm]{qcaus-figures/geometry/example-ghzswitch-3-1-X-plus-X-locfrac-ghz3-locfrac-diff-100.svg}
    \includegraphics[height=4.5cm]{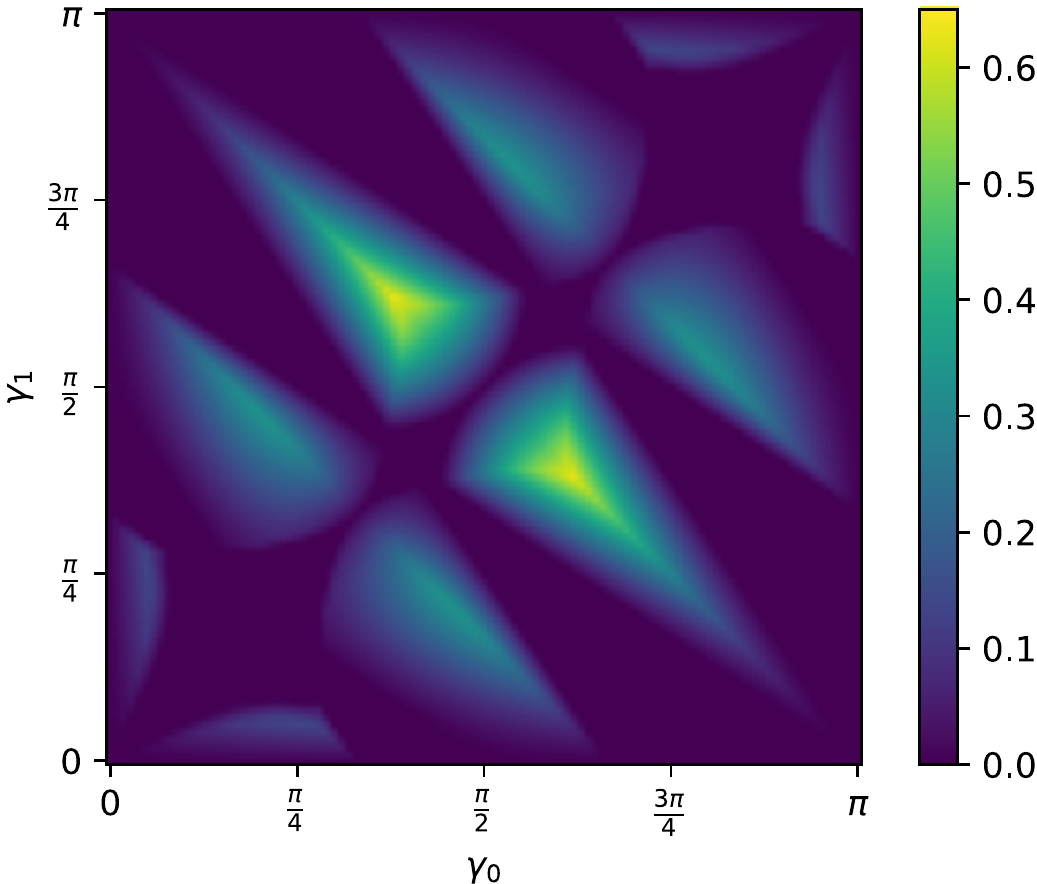}
    \\
    causally separable fraction - local fraction
    &
    &
    local fraction (Bell exp.) - local fraction
    \end{tabular}
\end{center}
In the left difference plot, we observe that the local fraction for the empirical model provides a lower bound for its causally separable fraction at all angles, as expected from Proposition \ref{proposition:csep-frac-bounded-below-by-csep-local-frac}.
As with our previous example, we note that the bound is not tight for all angles.
In the right difference plot, we observe that the local fraction for the empirical model provides a lower bound for the local fraction of the corresponding Bell experiment.

\subsubsection{Two Quantum Switches with Bell Entangled Control.}

In this example we consider two quantum switches---one between Alice and Bob, the other between Charlie and Diane---with entangled controls, where Eve measures the control of the Alice/Bob switch and Felix measures the control of the Charlie/Diane switch.
Specifically, for angles $\gamma_0, \gamma_1 \in [0, \pi)$:
\begin{enumerate}
    \item A 2-qubit Bell state $|\Phi^+\rangle$ is created: one qubit is sent to the control of the Alice/Bob switch, the other is sent to the control of the Charlie/Diane switch.
    \item Alice and Bob are in the left quantum switch, with the first of the two $|\Phi^+\rangle$ qubits as its control and the $|+\rangle$ state as its input.
    \item Charlie and Diane are in the right quantum switch, with second of the two $|\Phi^+\rangle$ qubits as its control and the $|+\rangle$ state as its input.
    \item Alice, Bob, Charlie and Diane all do the same thing: they perform an X measurement on the incoming qubit, using the measurement outcome as their individual output, and then encode their individual input into the X basis of the outgoing qubit.
    \item The output qubit of each switch is discarded. Eve and Felix receive the control qubit of the left and right switch respectively, and do the same thing: on input 0, they apply a X rotation by $-\gamma_0$ and then measure in the Z basis; on input 1, they first apply a X rotation by $-\gamma_1$ and then measure in the Z basis.
\end{enumerate}
The figure below exemplifies the scenario we have just described:
\begin{center}
\scalebox{1.25}{
    \tikzfig{entangled-switch}
}
\end{center}
For $\gamma_0 = \frac{\pi}{5}, \gamma_1 = \frac{3\pi}{5}$, the description above results in the following empirical model:
\begin{center}
    % \includesvg[height=8cm]{qcaus-figures/geometry/example-ghzswitch-2-2-X-plus-X-empmodel.svg}
    \includegraphics[height=12cm]{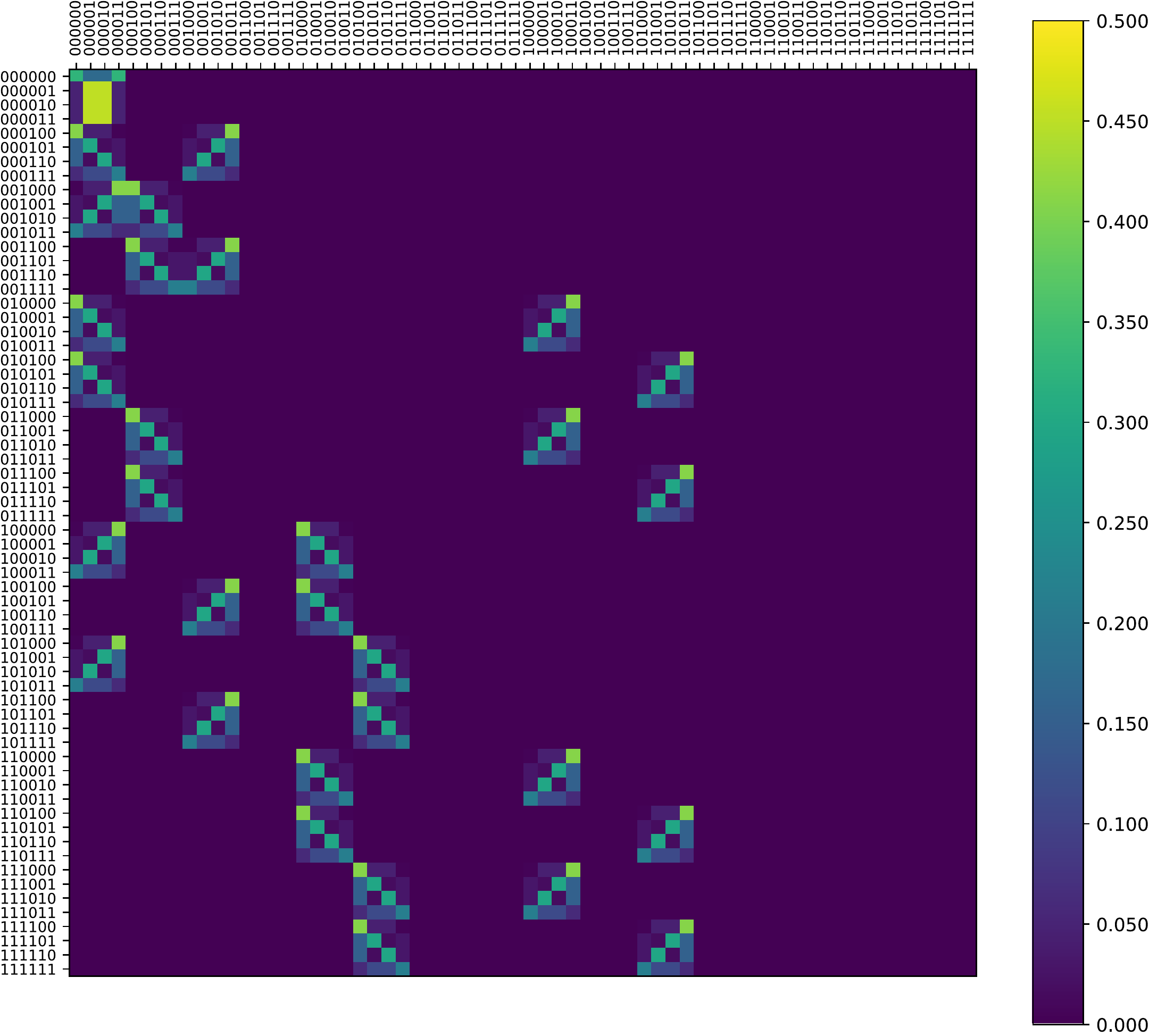}
\end{center}
By construction, the empirical model is 100\% supported by the space of input histories induced by the following indefinite causal order (whose standard causaltope is 2024-dimensional):
\begin{center}
        % \includesvg[height=3.25cm]{qcaus-figures/geometry/example-ghzswitch-2-2-X-plus-X-order.svg}
        \includegraphics[height=3.5cm]{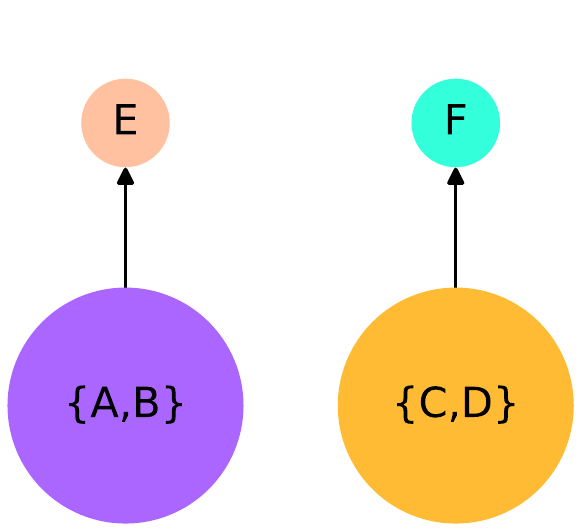}
\end{center}
The empirical model is causally inseparable for the space right above, with a causally separable fraction of \char`\~47.74\% over its two causal completions.
The two completions are induced by the causal orders $\total{A, B, E}\vee\total{C, D, F}$ and $\total{B, A, E}\vee\total{D, C, F}$, and their causaltopes are 1848-dimensional:
\begin{center}
    \begin{tabular}{ccc}
    % \includesvg[height=3.25cm]{qcaus-figures/geometry/example-ghzswitch-2-2-X-plus-X-comp-0.svg}
    \includegraphics[height=3.5cm]{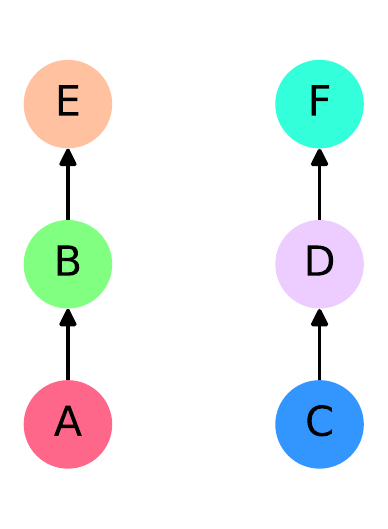}
    &
    \hspace{1cm}
    &
    % \includesvg[height=3.25cm]{qcaus-figures/geometry/example-ghzswitch-2-2-X-plus-X-comp-1.svg}
    \includegraphics[height=3.5cm]{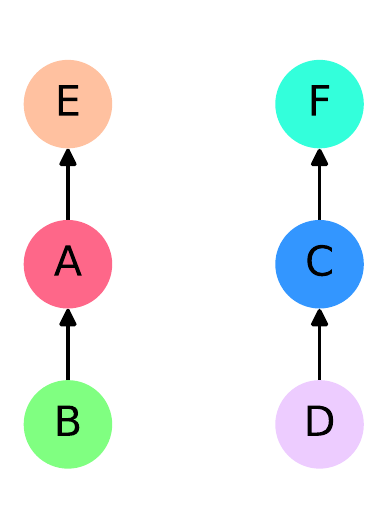}
    \\
    Causal fraction: \char`\~23.87\%
    &
    &
    Causal fraction: \char`\~23.87\%
    \end{tabular}
\end{center}
The empirical model becomes causally separable if the no-signalling constraint between the two 3-party groups are dropped: the model is 100\% supported by the spaces of input histories induced by causal orders \total{C,D,F,A,B,E} and \total{D, C, F, B, A, E}, two of the 512 causal completions of $\total{\evset{C,D},F,\evset{A,B},E}$, with each causal completion supporting exactly 50\% of the empirical model.

Below we plot the causally separable fraction for this empirical model as a function of the $\gamma_0$ and $\gamma_1$ measurement angles used by Eve and Felix.
We compare it to the local fraction for a Bell experiment with the same measurements.
\begin{center}
    \begin{tabular}{ccc}
    % \includesvg[height=4cm]{qcaus-figures/geometry/example-ghzswitch-2-2-X-plus-X-sepfrac-100.svg}
    \includegraphics[height=4cm]{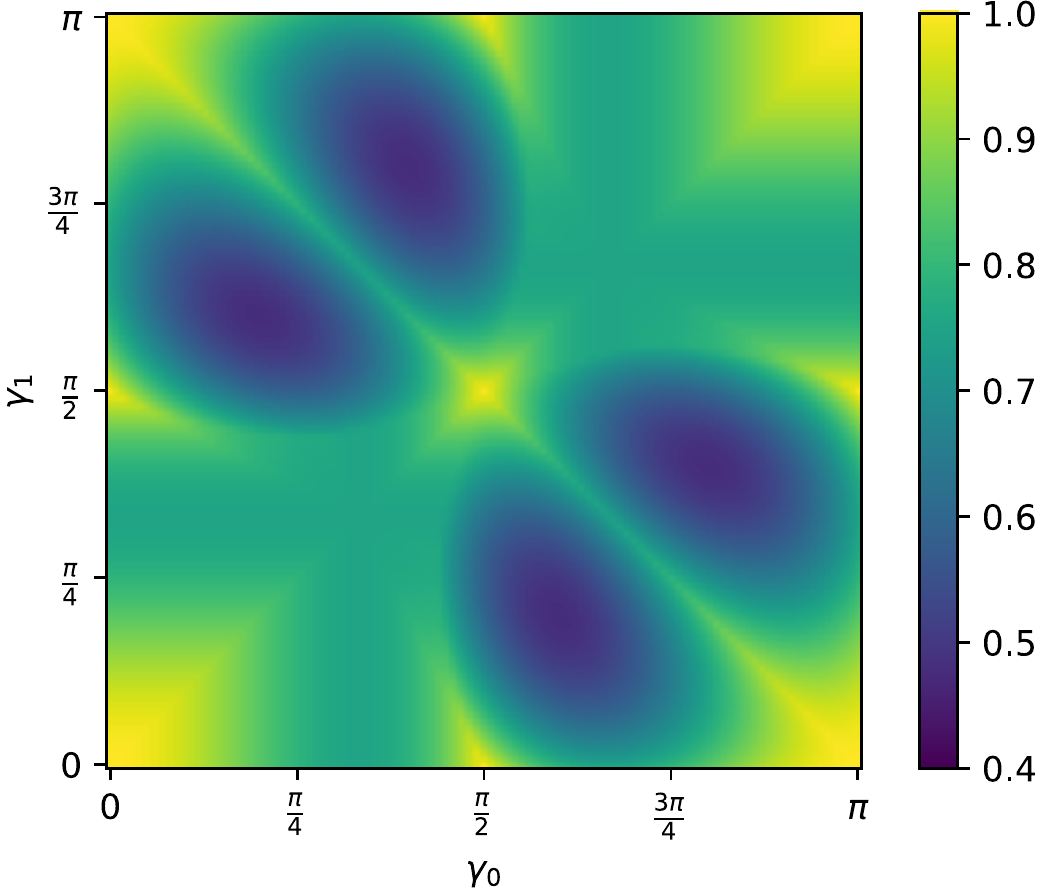}
    &
    % \includesvg[height=4cm]{qcaus-figures/geometry/bell-local-fraction-res100-entangled-switch-comparison.svg}
    \includegraphics[height=4cm]{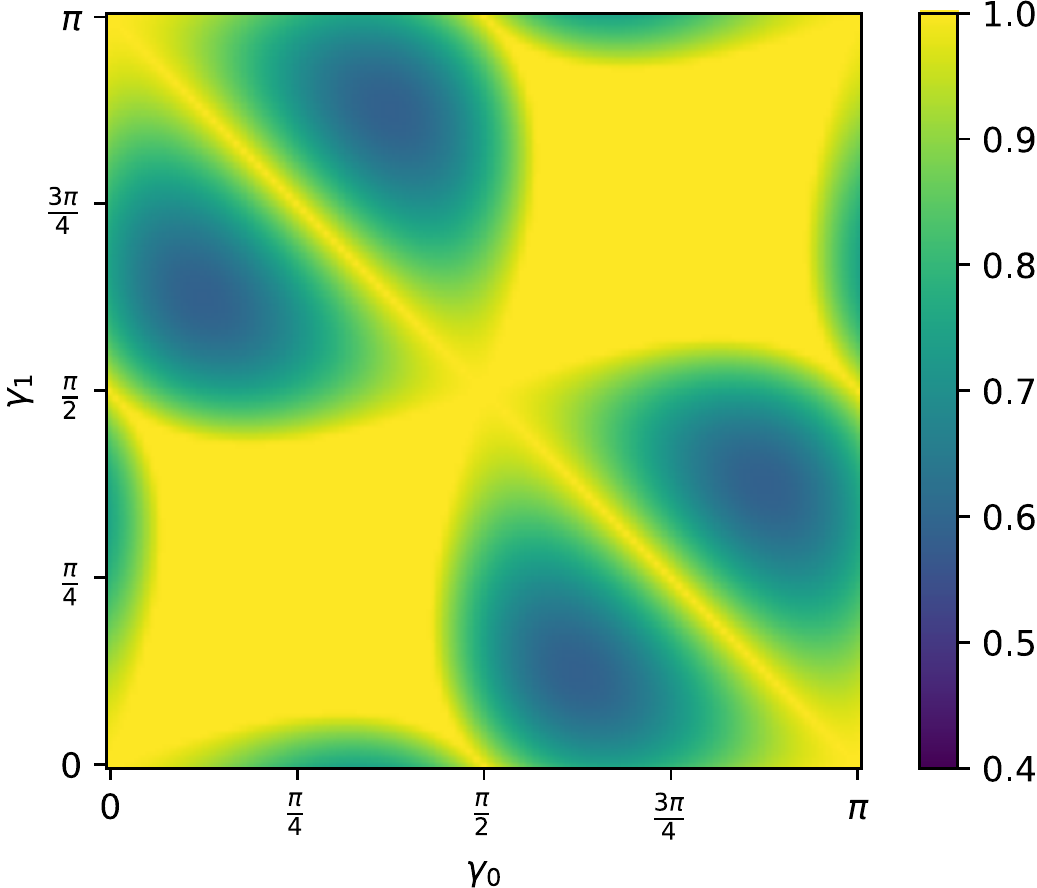}
    &
    % \includesvg[height=4cm]{qcaus-figures/geometry/example-ghzswitch-2-2-X-plus-X-locfrac-bell-locfrac-diff-100.svg}
    \includegraphics[height=4cm]{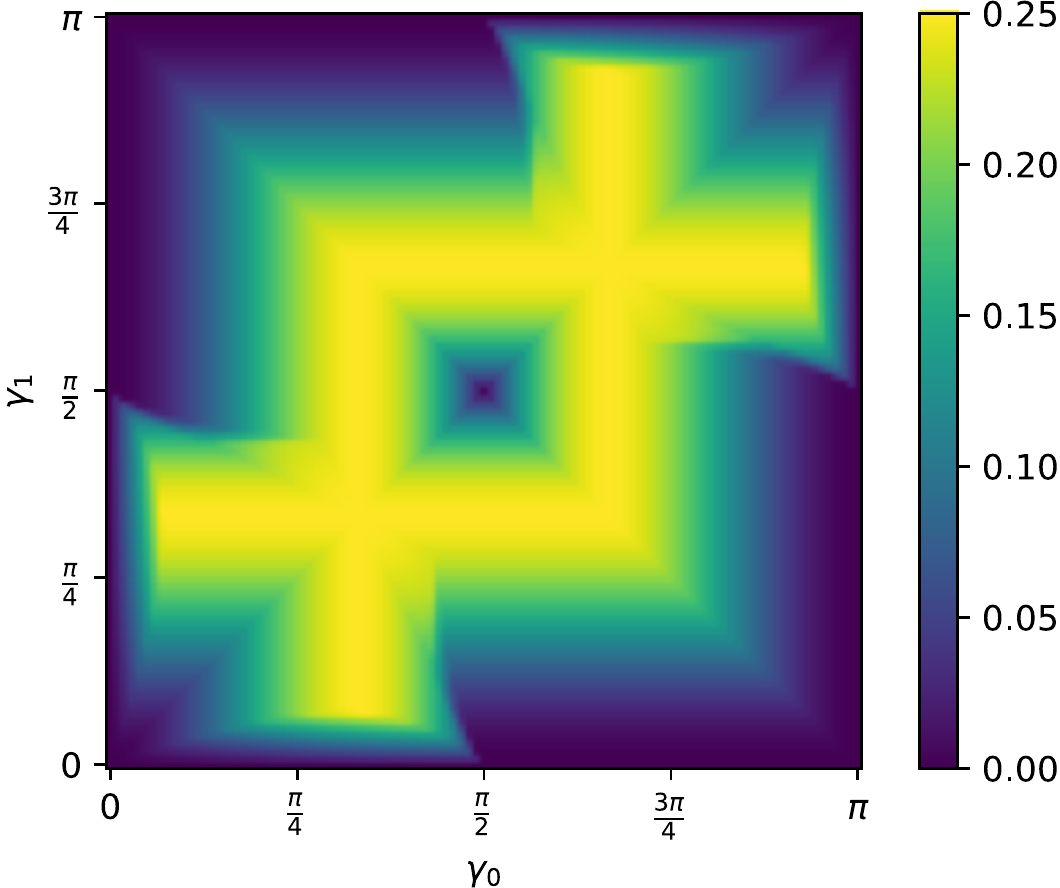}
    \\
    causally separable fraction
    &
    local fraction (Bell exp.)
    &
    difference
    \end{tabular}
\end{center}
At the angular resolution of $\frac{\pi}{100}$ used by the plots, the minimum causally separable fraction of \char`\~47.74\% is achieved at $\gamma_0 = \frac{\pi}{5}$ and $\gamma_1 = \frac{3\pi}{5}$.
For comparison, the minimum local fraction for the corresponding Bell experiment is \char`\~58.2\%.
The causally separable fraction coincides with the (causally separable) local fraction, saturating the lower bound provided by Proposition \ref{proposition:csep-frac-bounded-below-by-csep-local-frac}.
For further discussion of how the local fraction was practically calculated, see ``The Topology of Causality'' \cite{gogioso2022topology}.
\begin{center}
    \begin{tabular}{ccc}
    % \includesvg[height=4cm]{qcaus-figures/geometry/example-ghzswitch-2-2-X-plus-X-sepfrac-100.svg}
    \includegraphics[height=4cm]{svg-inkscape/example-ghzswitch-2-2-X-plus-X-sepfrac-100_svg-tex.pdf}
    &
    % \includesvg[height=4cm]{qcaus-figures/geometry/example-ghzswitch-2-2-X-plus-X-sepfrac-100.svg}
    \includegraphics[height=4cm]{svg-inkscape/example-ghzswitch-2-2-X-plus-X-sepfrac-100_svg-tex.pdf}
    &
    % \includesvg[height=4cm]{qcaus-figures/geometry/bell-local-fraction-res100-entangled-switch-comparison.svg}
    \includegraphics[height=4cm]{svg-inkscape/bell-local-fraction-res100-entangled-switch-comparison_svg-tex.pdf}
    \\
    causally separable fraction
    &
    local fraction
    &
    local fraction (Bell exp.)
    \end{tabular}
\end{center}

\subsubsection{Two Contextually Controlled Classical Switches.}
\label{subsubsection:contextually-control-switches}

In this example we consider two quantum switches---one between Charlie and Diane, the other between Eve and Felix, both with the $|+\rangle$ state as their input---with ``contextual control'', determined by Alice and Bob's non-demolition measurements of a Bell state.
Specifically, for angles $\gamma_0, \gamma_1 \in [0, \pi)$:
\begin{enumerate}
    \item Two qubits in a Bell state $|\Phi^+\rangle$ are shared by Alice and Bob at the start of the protocol.
    \item On input 0, Alice applies an X rotation by $-\gamma_0$ to her qubit, then decoheres it in the Z basis; on input 1, Alice applies an X rotation by $-\gamma_1$ to her qubit, then decoheres it in the Z basis. Alice forwards the qubit to the control of the switch between Charlie and Diane. Alice's output is fixed to 0.
    \item Bob does the same as Alice to his qubit before forwarding it to the control of the switch between Eve and Felix. Bob's output is fixed to 0.
    \item Inside the switches, Charlie, Diane, Eve and Felix all do the same thing: they perform an X measurement on the incoming qubit, using the measurement outcome as their individual output, and then encode their individual input into the X basis of the outgoing qubit.
    \item Both the outgoing qubit and the control qubit of the two switches are discarded.
\end{enumerate}
The figure below exemplifies the scenario we have just described:
\begin{center}
\scalebox{1}{
    % \tikzfig{bell-before-switches}
    \tikzfig{bell-before-switches-implicit-var}
}
\end{center}
For $\gamma_0 = 0, \gamma_1 = \frac{2\pi}{3}$, the description above results in the following empirical model:
\begin{center}
    % \includesvg[height=8cm]{qcaus-figures/geometry/example-11-plus-empmodel-1.svg}
    \includegraphics[height=12cm]{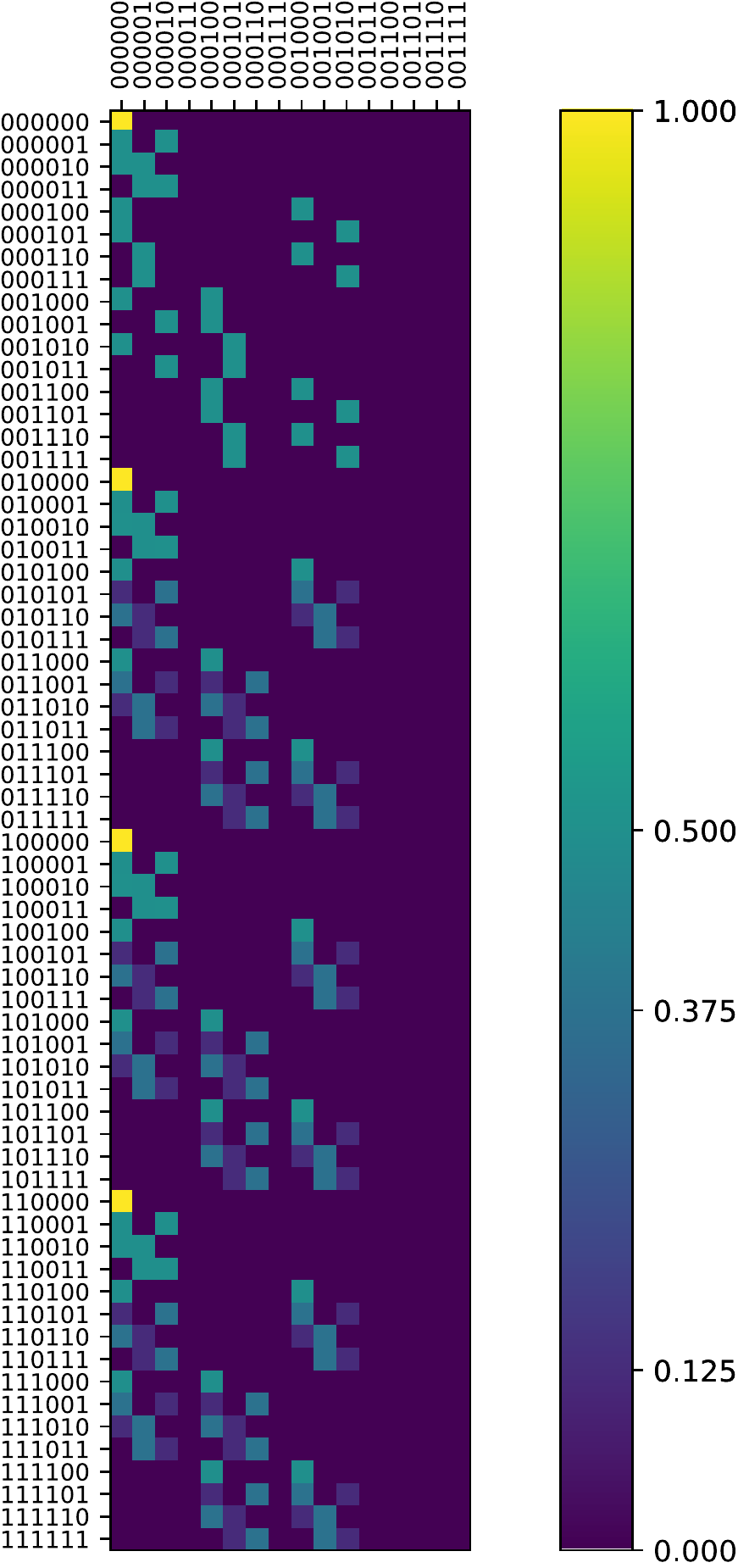}
\end{center}
By construction, the empirical model is 100\% supported by the space of input histories induced by the following indefinite causal order:
\begin{center}
        % \includesvg[height=3.5cm]{qcaus-figures/geometry/example-11-plus-order-1.svg}
        \includegraphics[height=3.25cm]{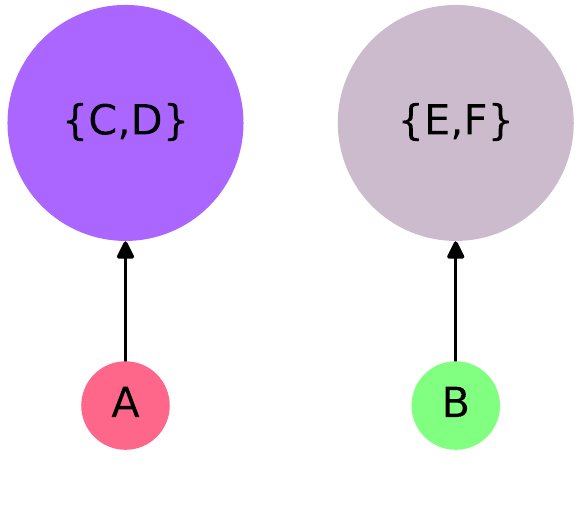}
\end{center}
The empirical model is causally inseparable for the space right above: it has a causally separable fraction of 75\%, coinciding with the local fraction for the Bell empirical model used to control the switches.
The space has $2^2 \cdot 2^2 = 16$ causal completions: these are all possible combinations of the $2^2=4$ switch spaces on events \evset{A,C,D} having \ev{A} as first event and the $2^2=4$ switch spaces on events \evset{B,E,F} having \ev{B} as first event.
Of the 16 causal completions, only the 6 spaces shown below support a non-zero fraction of the empirical model: each supports exactly 12.5\%, with no fraction in common between spaces, for a total of $6\cdot12.5\% = 75\%$.
They correspond to the 6 no-signalling functions appearing in the 75\% local decomposition of the Bell empirical model used to contextually control the causal order.

In the first space, Charlie precedes Diane when Alice's input is 0 and succeeds her when Alice's input is 1, while Eve always precedes Felix.
\begin{center}
% \includesvg[height=3cm]{qcaus-figures/geometry/example-11-plus-space-2.svg}
\includegraphics[height=2cm]{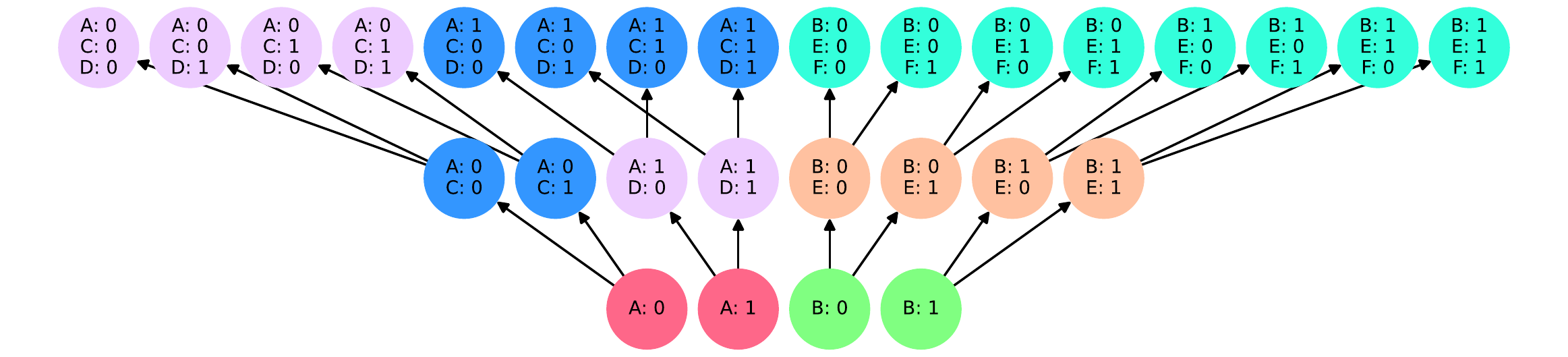}
\end{center}
In the second space, Charlie always precedes Diane, while Eve precedes Felix when Bob's input is 0 and succeeds him when Bob's input is 1.
\begin{center}
% \includesvg[height=3cm]{qcaus-figures/geometry/example-11-plus-space-3.svg}
\includegraphics[height=2cm]{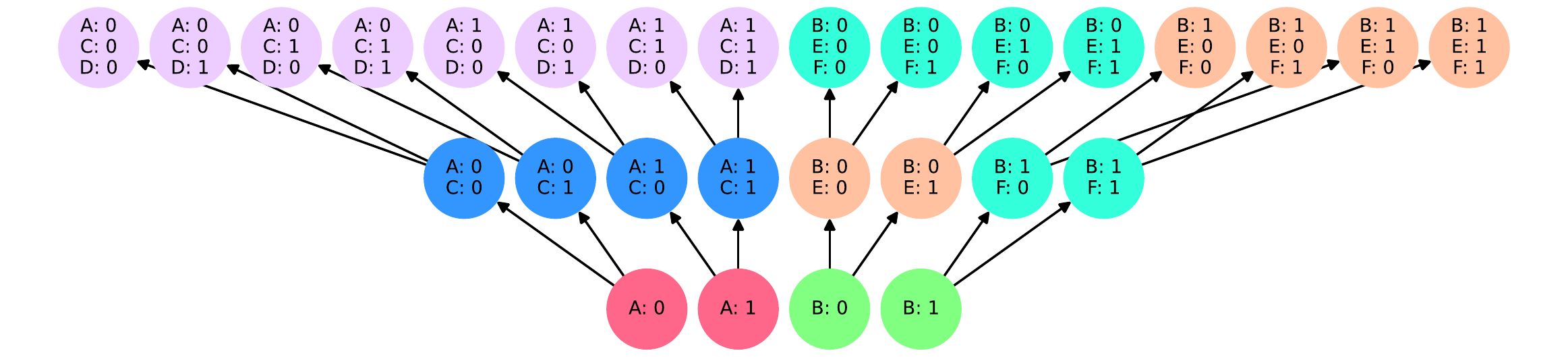}
\end{center}
In the third space, Charlie succeeds Diane when Alice's input is 0 and precedes her when Alice's input is 1, while while Eve succeeds Felix when Bob's input is 0 and precedes him when Bob's input is 1:
\begin{center}
% \includesvg[height=3cm]{qcaus-figures/geometry/example-11-plus-space-4.svg}
\includegraphics[height=2cm]{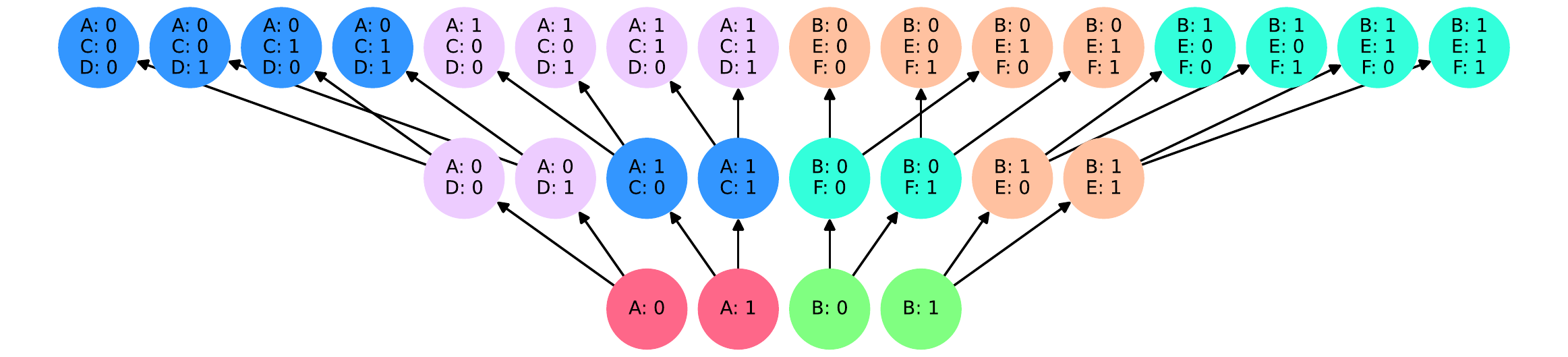}
\end{center}
In the fourth space, Charlie precedes Diane when Alice's input is 0 and succeeds her when Alice's input is 1, while while Eve precedes Felix when Bob's input is 0 and succeeds him when Bob's input is 1:
\begin{center}
% \includesvg[height=3cm]{qcaus-figures/geometry/example-11-plus-space-5.svg}
\includegraphics[height=2cm]{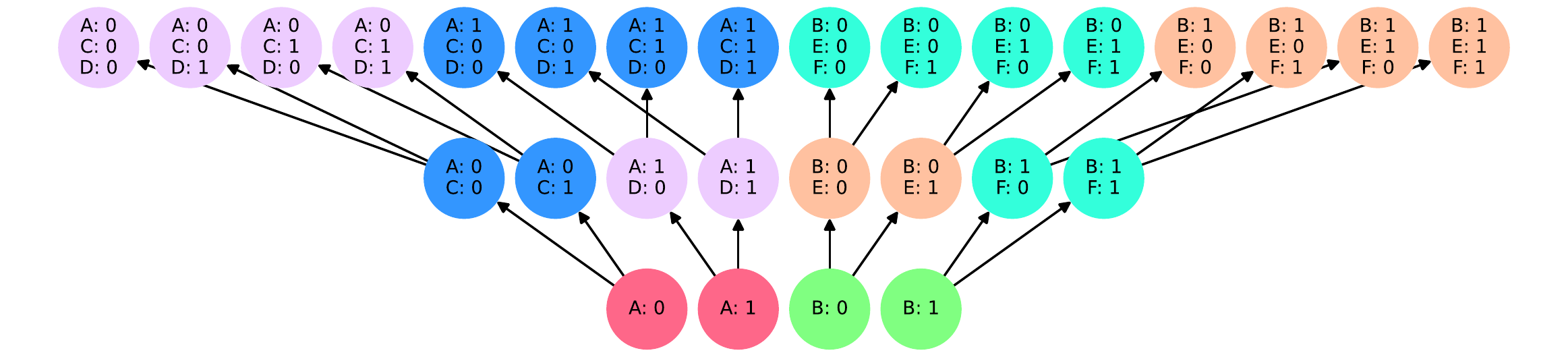}
\end{center}
In the fifth space, Charlie always succeeds Diane, while while Eve succeeds Felix when Bob's input is 0 and precedes him when Bob's input is 1:
\begin{center}
% \includesvg[height=3cm]{qcaus-figures/geometry/example-11-plus-space-6.svg}
\includegraphics[height=2cm]{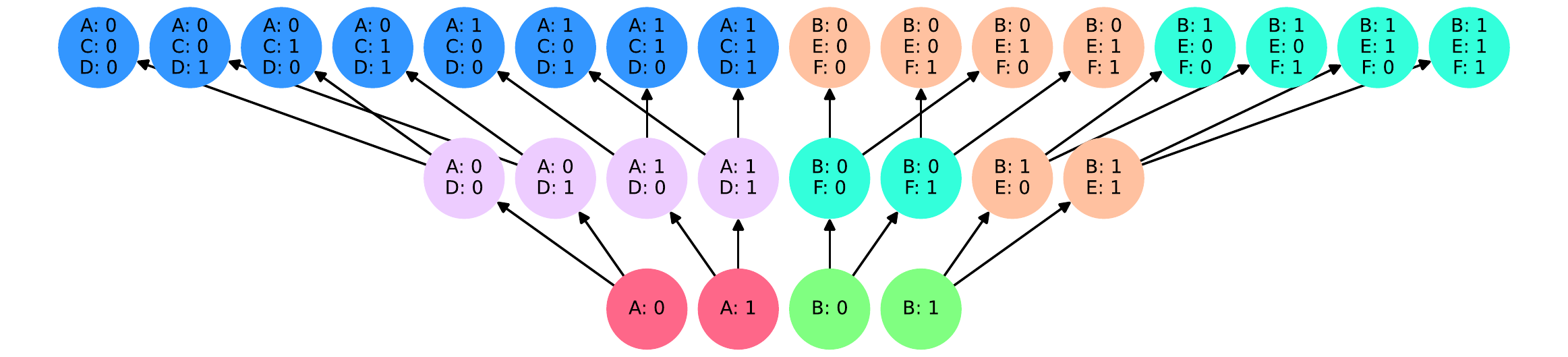}
\end{center}
In the sixth and final space, Charlie succeeds Diane when Alice's input is 0 and precedes her when Alice's input is 1, while while Eve always succeeds Felix:
\begin{center}
% \includesvg[height=3cm]{qcaus-figures/geometry/example-11-plus-space-7.svg}
\includegraphics[height=2cm]{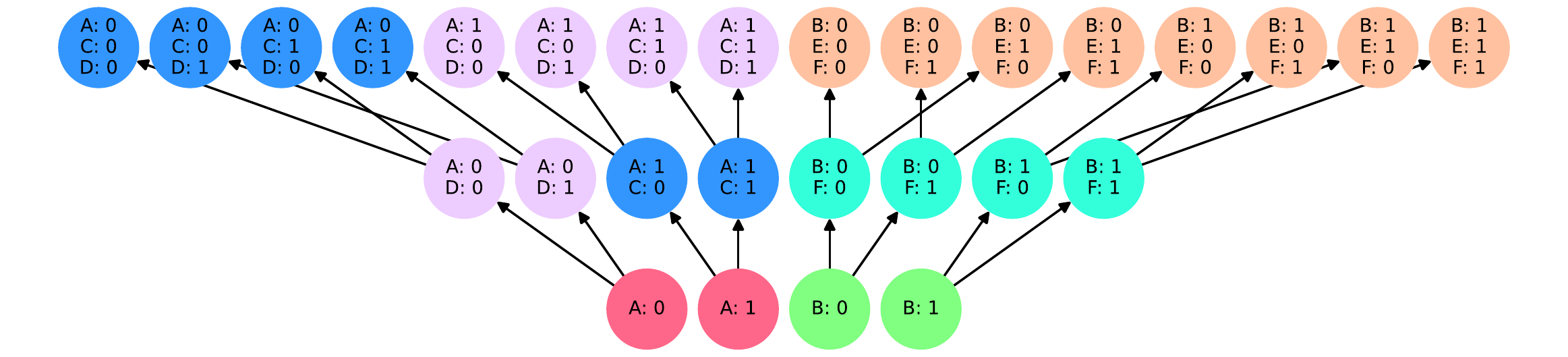}
\end{center}
Because its construction relies on contextuality of the inputs to the switches, we expect the empirical model to be causally separable once the no-signalling constraint between Alice and Bob is removed.
Unfortunately, we cannot simply iterate through all switch spaces to demonstrate causal separability: at 16511297126400 of them, the task would take millions of years.
Luckily, we can restrict our attention to a selected few: we know that signalling from \ev{A} to \ev{B} will be enough to explain the contextuality of the switch inputs, and that classically controlled switches are causally separable.
Hence, it suffices to look at the $4^4 = 256$ switch spaces in the following form:
\[
\begin{array}{l}
    \Hist{\total{A,B}, \{0,1\}}
    \seqcomposeSym
    \underline{\Theta}
    \\
    \underline{\Theta}
    \in
    \suchthat{
    \begin{array}{l}
    \Hist{\Omega_{CD}, \{0,1\}}
    \\\cup\;
    \Hist{\Omega_{EF}, \{0,1\}}
    \end{array}
    }{
    \begin{array}{l}
    \Omega_{CD} \in \{\total{C,D}, \total{D,C}\},\\
    \Omega_{EF} \in \{\total{E,F}, \total{F,E}\}
    \end{array}
    }^{K}
    \\
    \text{where }
    K := \max{\ExtHist{\total{A,B}, \{0,1\}}}
    \end{array}
\]
The remaining 25\% of the empirical model is explained by exactly two out of the 256 spaces, each one explaining exactly 12.5\%: in the two spaces, Charlie and Diane are in fixed causal order, while the causal order between Eve and Felix is dictated by the logical OR of Alice and Bob's outcome.
In the first space, Charlie always precedes Diane, while Eve precedes Felix when Alice and Bob's outcomes are both 0 and otherwise succeeds him:
\begin{center}
% \includesvg[height=1.5cm]{qcaus-figures/geometry/example-11-plus-space-8.svg}
\includegraphics[height=1.75cm]{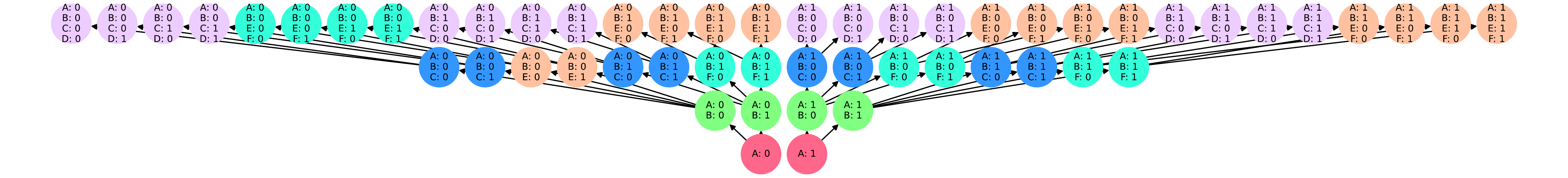}
\end{center}
In the second space, Charlie always succeeds Diane, while Eve succeeds Felix when Alice and Bob's outcomes are both 0 and otherwise precedes him.
\begin{center}
% \includesvg[height=1.5cm]{qcaus-figures/geometry/example-11-plus-space-9.svg}
\includegraphics[height=1.75cm]{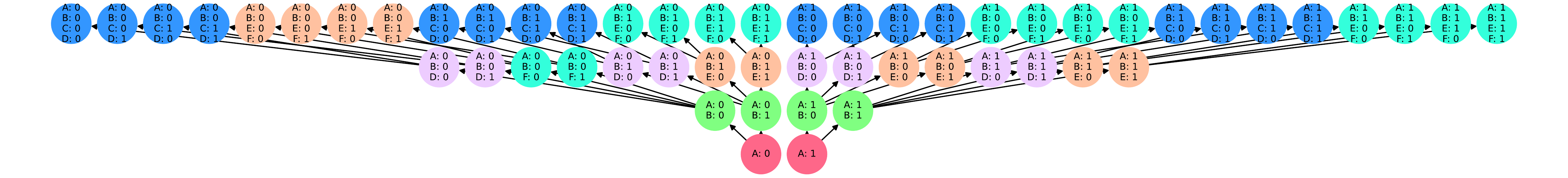}
\end{center}

Below we plot the causally separable fraction (left) and the local fraction (middle) for this empirical model as a function of the $\gamma_0$ and $\gamma_1$ measurement angles used by Alice and Bob.
We compare it to the local fraction for a Bell experiment with the same measurements (right).
\begin{center}
    \begin{tabular}{ccc}
    % \includesvg[height=4cm]{qcaus-figures/geometry/example-11-plus-sepfrac-res100.svg}
    \includegraphics[height=3.25cm]{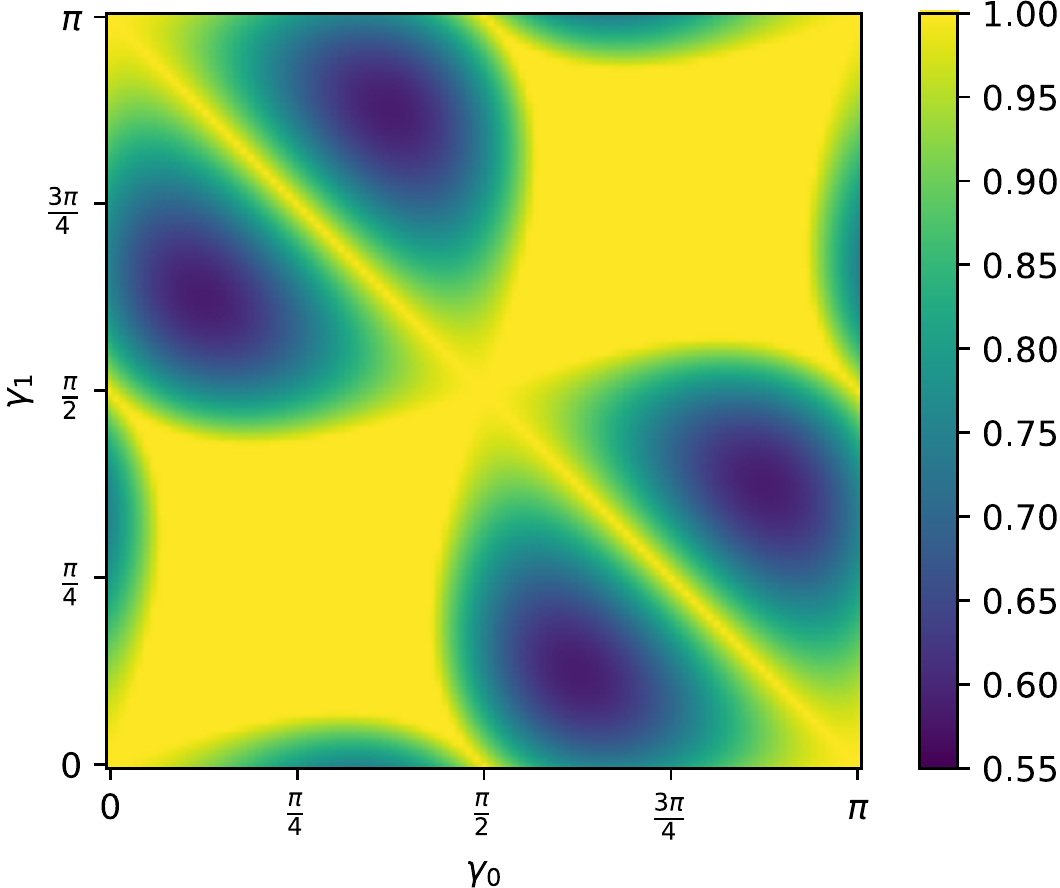}
    &
    \hspace{5mm}
    % \includesvg[height=4cm]{qcaus-figures/geometry/example-11-plus-locfrac-res100.svg}
    \includegraphics[height=3.25cm]{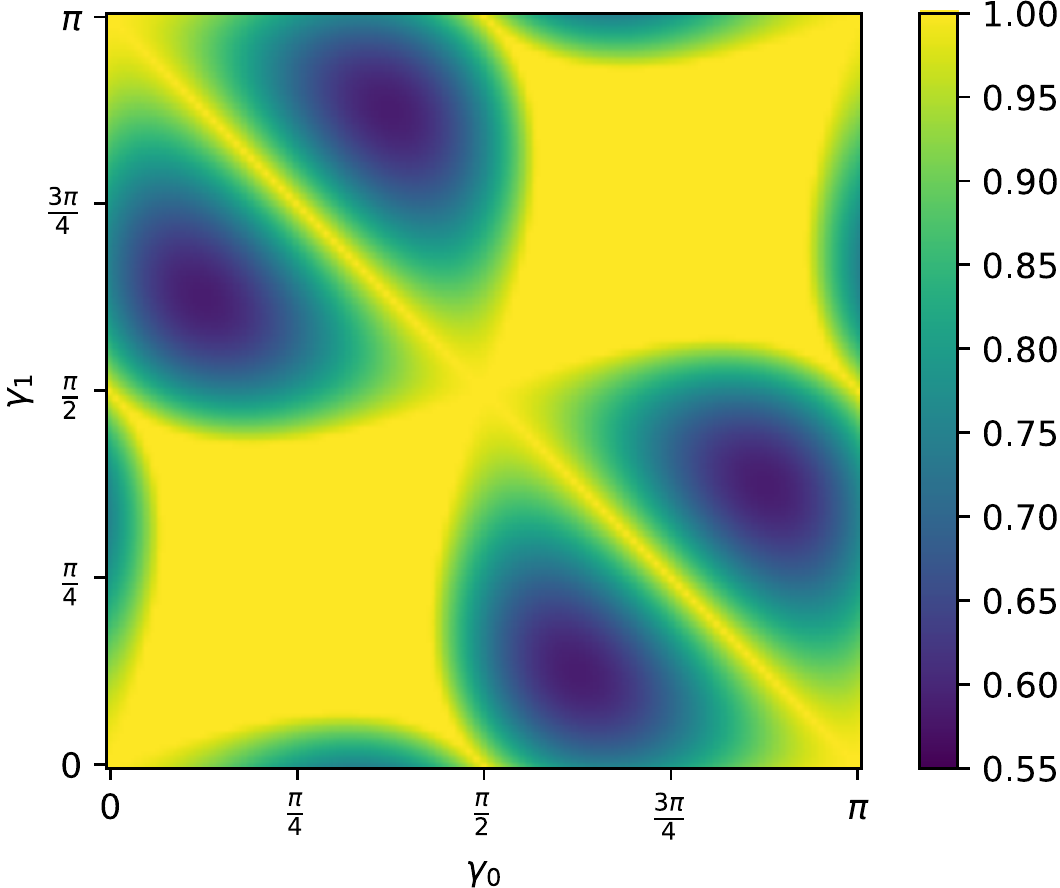}
    \hspace{5mm}
    &
    % \includesvg[height=4cm]{qcaus-figures/geometry/bell-local-fraction-res100.svg}
    \includegraphics[height=3.25cm]{svg-inkscape/bell-local-fraction-res100_svg-tex.pdf}
    \\
    causally separable fraction
    &
    local fraction
    &
    local fraction (Bell exp.)
    \end{tabular}
\end{center}
By construction, the causally separable fraction, local fraction and causally separable local fraction for this empirical model all coincide, being perfectly correlated with the local fraction for a Bell experiment with the same measurements.
Explicit evaluation of the causally separable fraction landscape for this example is computationally intensive, requiring 15 hours on 30 cores (2.5GHz), with 260GB RAM used at peak.

\subsection{Causal equations for arbitrary causaltopes}
\label{subsection:nonstd-causal-eqs}

Expanding on the presentation of standard causaltopes from Subsection \ref{subsection:std-causal-eqs}, we now investigate how causality equations generalise to covers other than the standard cover.
Recall the definition of causality equations from Definition \ref{definition:caus-eqs} (p.\pageref{definition:caus-eqs}) and Proposition \ref{proposition:caus-eqs-chain} (p.\pageref{proposition:caus-eqs-chain}):
\[
\begin{array}{rl}
    \CausEqs{\mathcal{C}, \underline{O}}_{\mu, \lambda, \lambda'}
    &:=
    \suchthat{
        \underline{u}
        \in
        \PsEmpModelsVec{\mathcal{C}, \underline{O}}
    }{
        \restrict{\underline{u}^{(\lambda)}}{\mu}
        =\restrict{\underline{u}^{(\lambda')}}{\mu}
    }
    \\
    \restrict{\underline{u}^{(\lambda)}}{\mu}
    &:= \Dist{\rho_{\lambda, \mu}}\left(\underline{u}^{(\lambda)}\right)
    \\
    \CausEqs{\mathcal{C}, \underline{O}}
    &:=
    \bigcap_{\mu \in \Lsets{\Theta}, n_\mu \geq 1}
    \bigcap_{i=1}^{n_\mu-1}
    \CausEqs{\mathcal{C}, \underline{O}}_{\mu, \lambda_{\mu,i}, \lambda_{\mu,i+1}}
\end{array}
\]
where $\mu \in \Lsets{\Theta}$ and $\lambda_{\mu,1}, ..., \lambda_{\mu, n_\mu}$ is a total order on the $\lambda \in \mathcal{C}$ such that $\mu \subseteq \lambda$.

In Subsection \ref{subsection:std-causal-eqs}, we employed a succinct indexing scheme for causality equation, based on the observation that all contexts $\lambda \in \mathcal{C}$ in a standard cover $\mathcal{C}$ have exactly one input for each event, and all lowersets $\mu \subseteq \lambda$ have at most one input for each event.
Once the events are consistently sorted, rows can be labelled using the sequence of input/output values associated to each event by a lowerset, with a special character---the underscore \_, in our case---used to indicate the absence of an event from a lowerset.
Similarly, columns can be labelled using the input/output values associated to each event by a context.
This is shown below on the left, for the standard cover of a space on two events.

Moving away from the standard cover, we no longer have this luxury: the same event may appear multiple times in each context, and there is no natural way to consistently order the histories within contexts.
As a consequence, we will index causality equations for arbitrary covers by explicitly indicating the input histories that appear in each lowerset (for rows) and in each context (for columns).
The output values associated to the tip events will be still indicated succinctly, based on the order of the explicitly listed input histories (at least in those cases where each history $h$ in a context has a separate tip historysets $\histconstreqcls{h}{\omega}$).
This is shown below on the right, for the same standard cover as the left example: the explicit notation has the disadvantage of being verbose, but the advantage of being unambiguous.

\begin{center}
    \begin{tabular}{cc}
    \includegraphics[width=6cm]{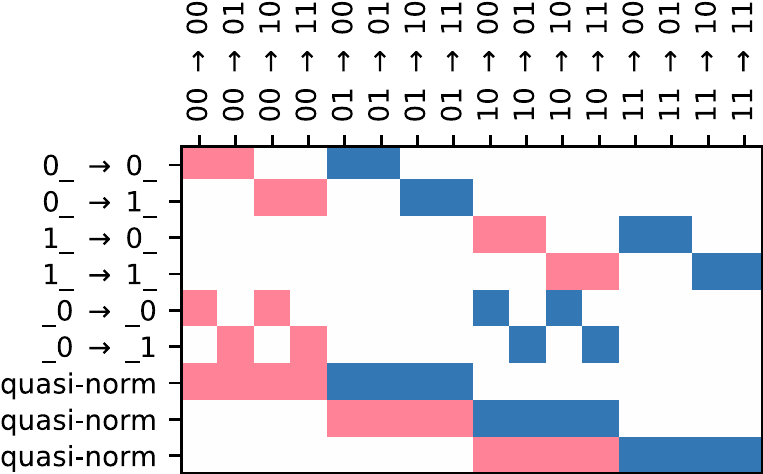}
    \hspace{1cm}
    &
    \includegraphics[width=6cm]{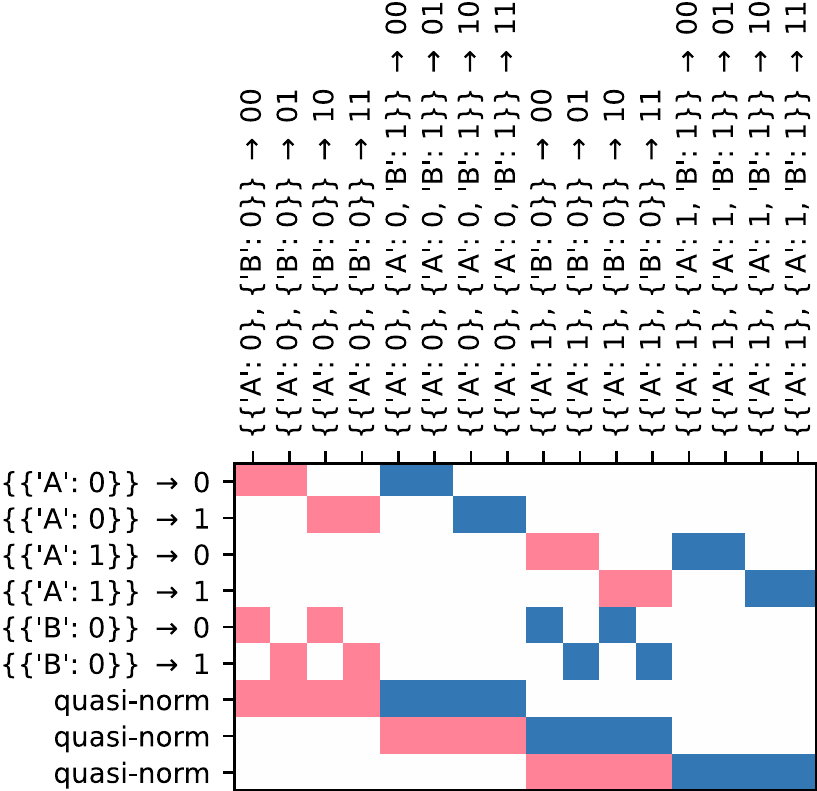}
    \end{tabular}
\end{center}

As our first example, we consider the discrete space on a single event, with ternary inputs and outputs.
This space has 9 covers, arranged in the lattice shown below: the finest cover is the standard cover \#0, on the left, while the coarsest cover is the classical cover \#8, on the right.
\begin{center}
    \begin{tabular}{cc}
    \includegraphics[width=4cm]{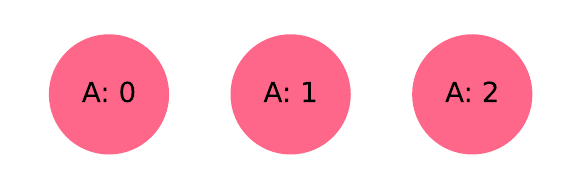}
    \hspace{2cm}
    &
    \includegraphics[width=8cm]{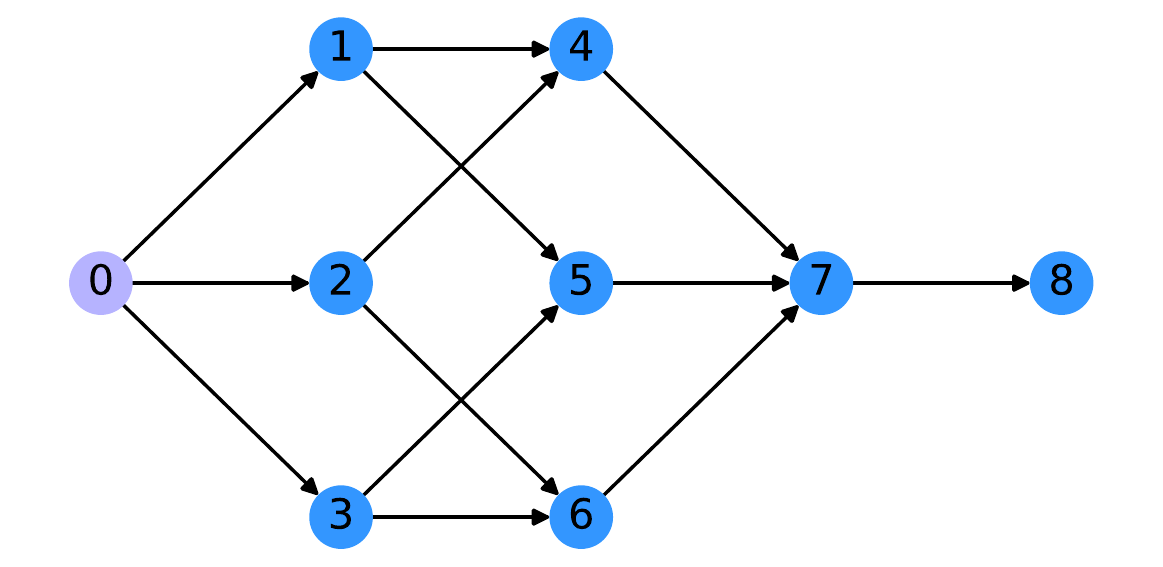}
    \\
    space of input histories
    \hspace{2cm}
    &
    hierarchy of covers for the space
    \end{tabular}
\end{center}
The standard cover for this space has three contexts: $\left\{\hist{A/0}\right\}$, $\left\{\hist{A/1}\right\}$ and $\left\{\hist{A/2}\right\}$.
There are no causality equations, because the contexts don't have any non-trivial lowerset $\mu$.
Causality and quasi-normalisation equations for this cover are depicted below.
\begin{center}
    \begin{tabular}{c}
    \includegraphics[width=6cm]{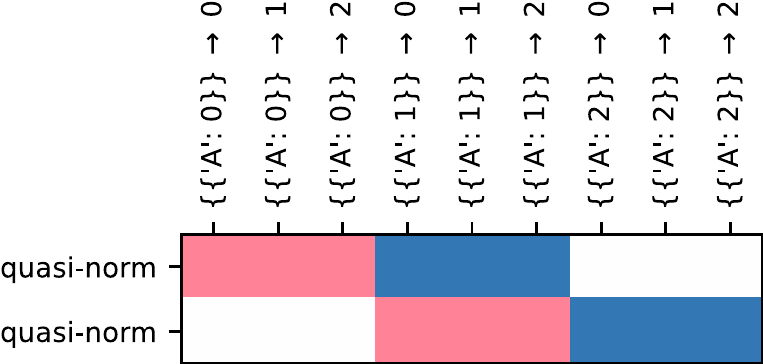}
    \end{tabular}
\end{center}
Covers \#1, \#2 and \#3 have 2 contexts each: $\left\{\hist{A/i_A}\right\}$ and $\suchthat{\hist{A/j_A}}{j_A \in \{0, 1, 2\}, j_A \neq i_A}$.
There are no causality equations in either case, because the two contexts in each cover don't have non-trivial lowerset $\mu$ in common.
Causality and quasi-normalisation equations for all three covers are depicted below.
\begin{center}
    \begin{tabular}{ccc}
    \includegraphics[width=5cm]{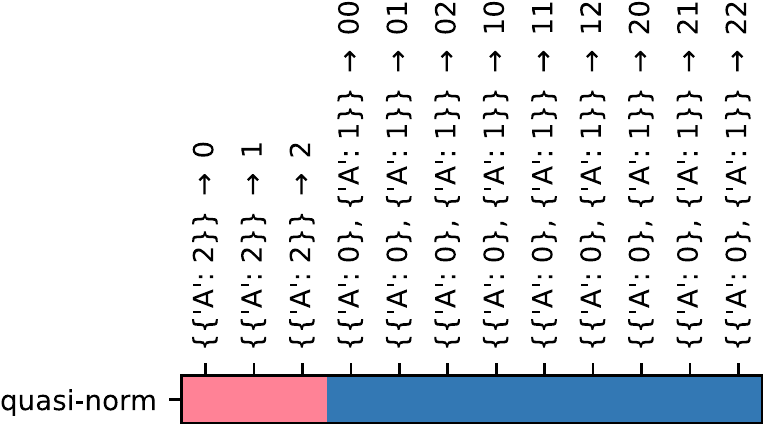}
    &
    \includegraphics[width=5cm]{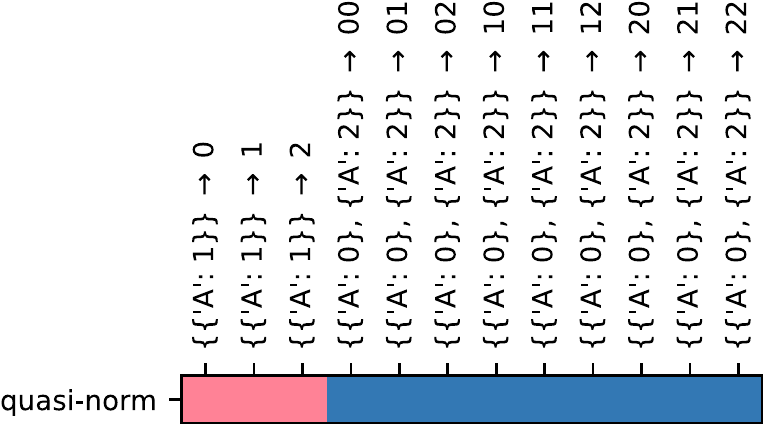}
    &
    \includegraphics[width=5cm]{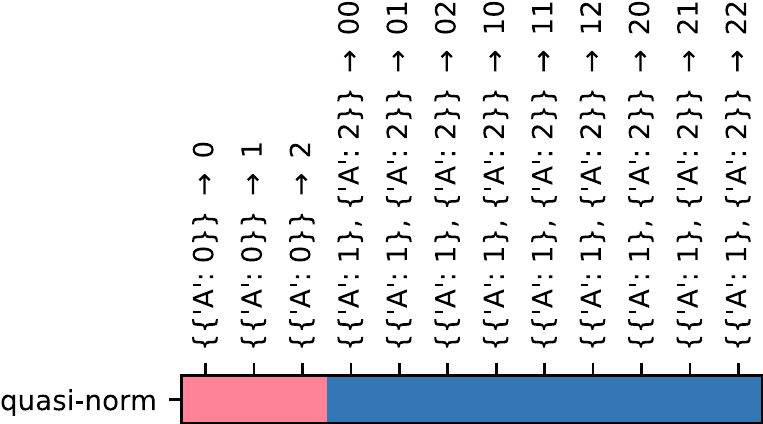}
    \end{tabular}
\end{center}
Covers \#4, \#5 and \#6 have 2 contexts each: $\left\{\hist{A/i_0}, \hist{A/i_1}\right\}$ and $\left\{\hist{A/i_0}, \hist{A/i_2}\right\}$, where $\{i_0, i_1, i_2\} = \{0, 1, 2\}$.
There are 3 causality equations in each case: the two contexts in each cover have the lowerset $\left\{\hist{A/i_0}\right\}$ in common, and there are three possible outputs for its tip event \ev{A}.
Causality and quasi-normalisation equations for all three covers are depicted below.
\begin{center}
    \begin{tabular}{ccc}
    \includegraphics[width=5cm]{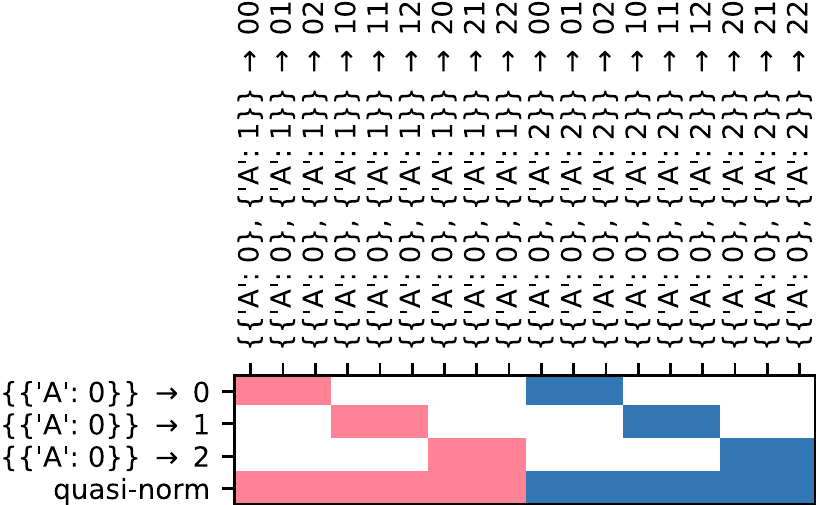}
    &
    \includegraphics[width=5cm]{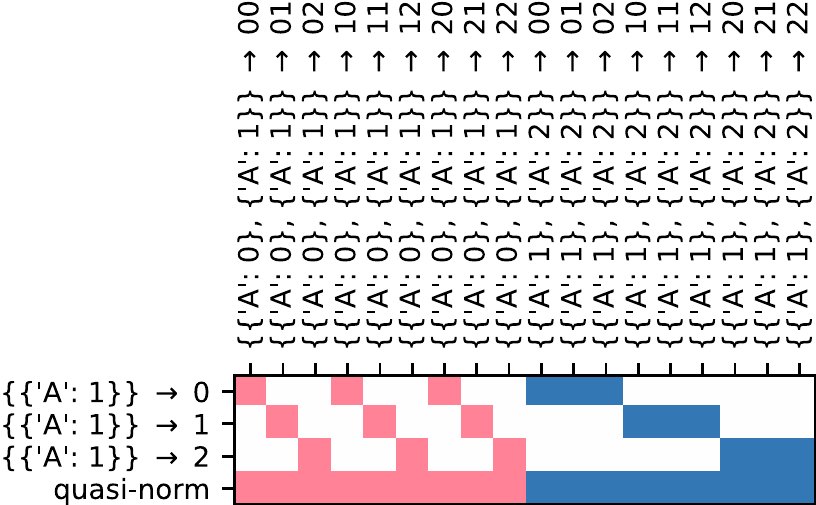}
    &
    \includegraphics[width=5cm]{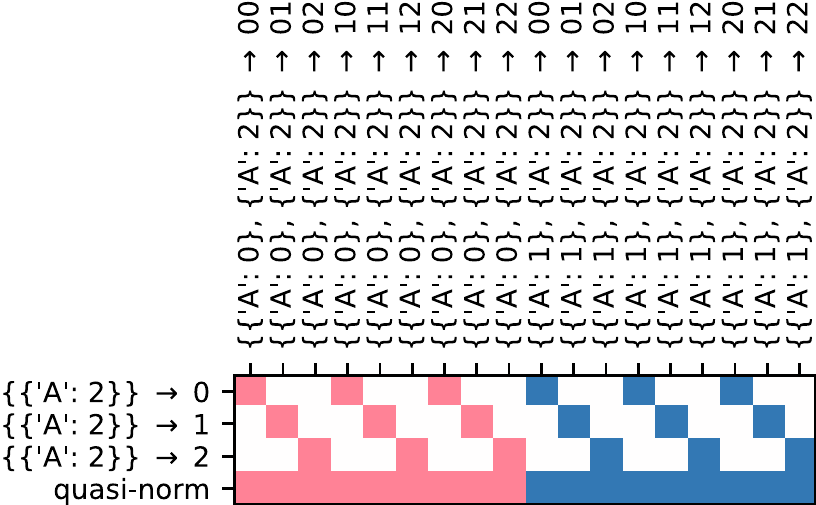}
    \end{tabular}
\end{center}
Cover \#7 has 3 contexts: $\left\{\hist{A/0}, \hist{A/1}\right\}$, $\left\{\hist{A/0}, \hist{A/2}\right\}$ and $\left\{\hist{A/1}, \hist{A/2}\right\}$.
There are 9 causality equations: each one of the three pairs of contexts has a different lowerset $\left\{\hist{A/i_A}\right\}$ in common, and there are three possible outputs for its tip event \ev{A}.
Causality and quasi-normalisation equations for this cover are depicted below.
\begin{center}
    \begin{tabular}{c}
    \includegraphics[width=8cm]{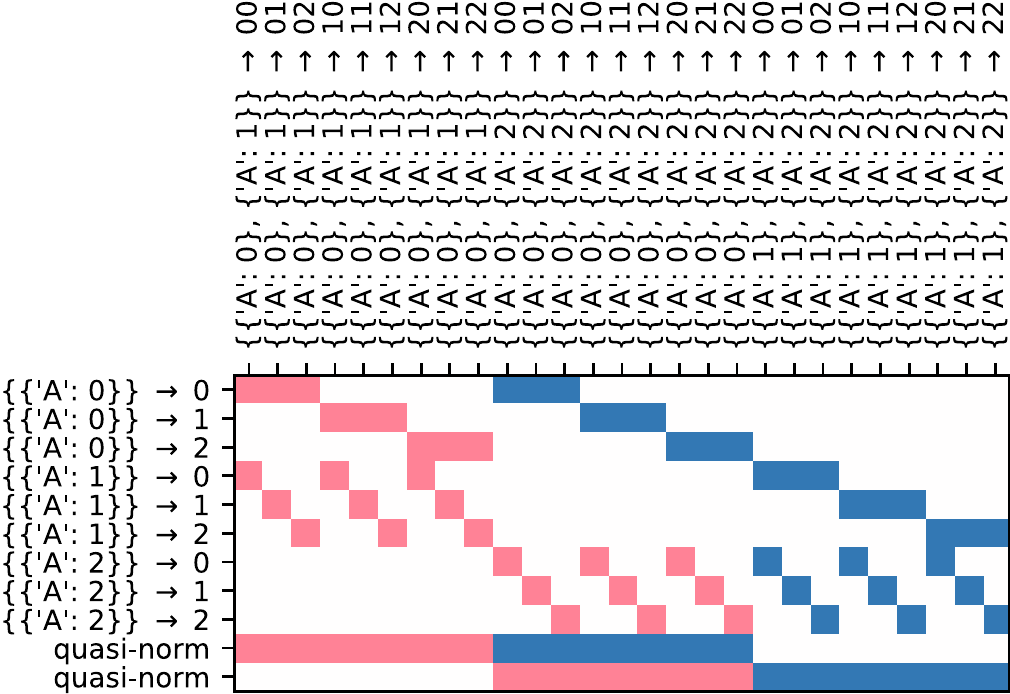}
    \end{tabular}
\end{center}
Cover \#8, finally, is the classical cover: it has 1 context, no causality equations and no quasi-normalisation equations.

As our second example, we consider one of the four non-discrete, non-total causally complete spaces on two events with binary inputs.
This space has 89 covers, arranged in the lattice shown below: the finest cover is the fully solipsistic cover \#0, on the left, the coarsest cover is the classical cover \#88, on the right, and the standard cover is cover \#4, in the middle.
\begin{center}
    \begin{tabular}{cc}
    \includegraphics[width=4cm]{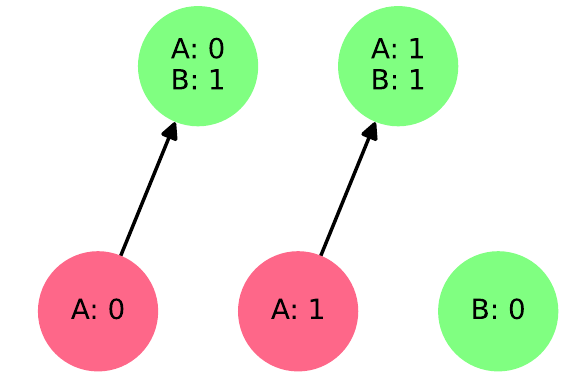}
    \hspace{0cm}
    &
    \includegraphics[width=10cm]{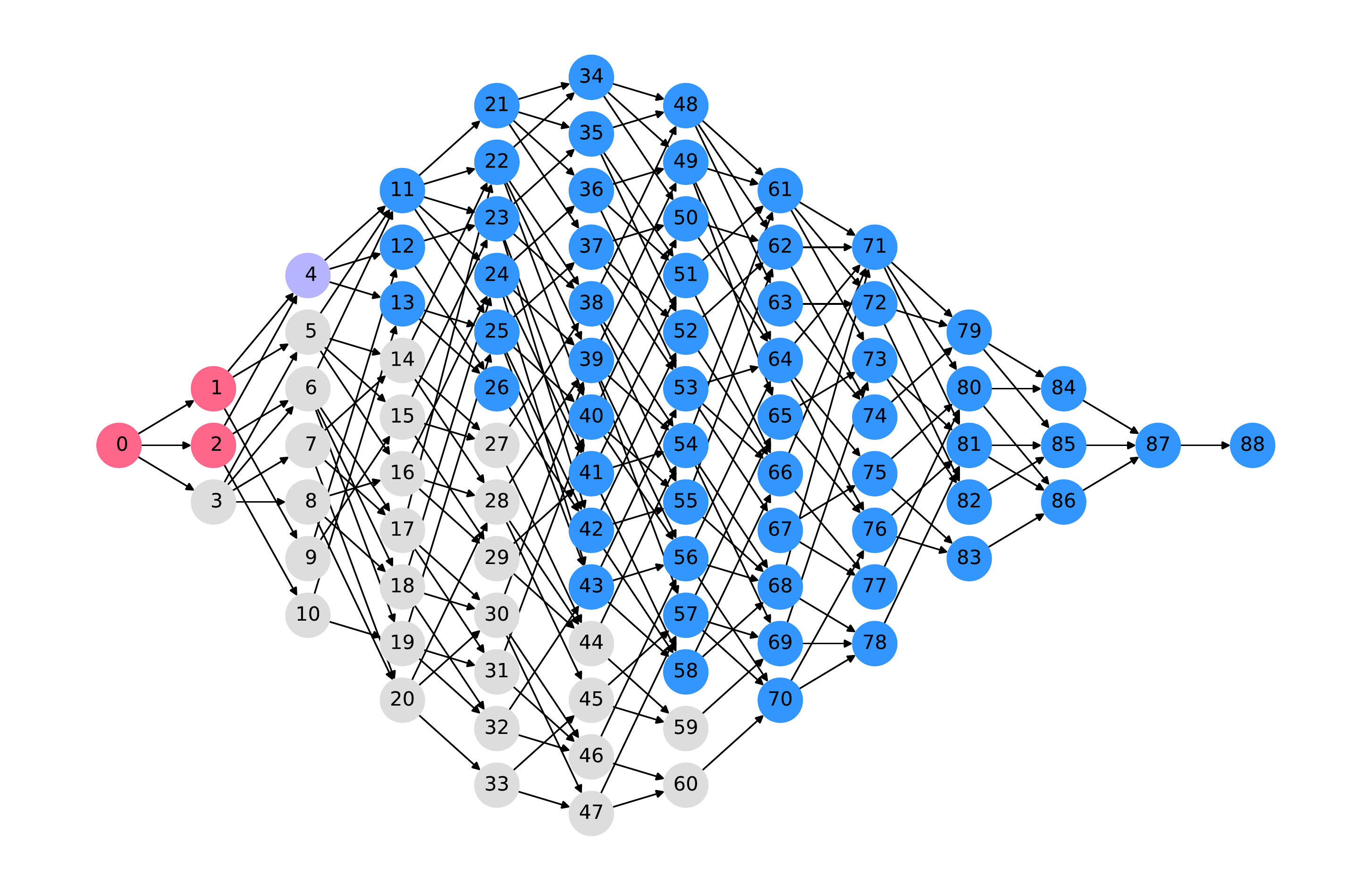}
    \\
    space of input histories $\Theta$
    \hspace{0cm}
    &
    hierarchy of covers for the space $\Theta$
    \end{tabular}
\end{center}
The fully solipsistic cover has the downsets $\downset{h}$ of input histories $h \in \Theta$ as its contexts.
There are no causality equations, because the contexts don't have non-trivial lowerset $\mu$ in common.
\begin{center}
    \begin{tabular}{c}
    \includegraphics[width=5cm]{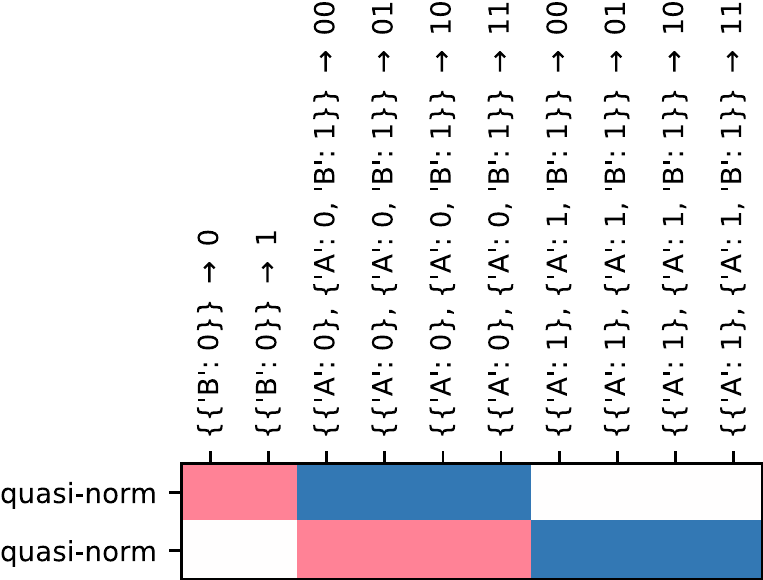}
    \end{tabular}
\end{center}
Covers \#1 and \#2 lie between the fully solipsistic cover \#0 and the standard cover \#4.
They have 3 contexts each, where $i_A = 0$ for cover \#1 and $i_A = 1$ for cover \#2:
\[
    \left\{\hist{A/i_A}, \hist{B/0}\right\}
    \hspace{5mm}
    \left\{\hist{A/0}, \hist{A/0,B/1}\right\}
    \hspace{5mm}
    \left\{\hist{A/1}, \hist{A/1,B/1}\right\}
\]
There are 2 causality equations for each one of covers \#1 and \#2:
\begin{itemize}
    \item lowerset $\left\{\hist{A/i_A}\right\}$ is common to contexts $\left\{\hist{A/i_A}, \hist{B/0}\right\}$ and $\left\{\hist{A/i_A}, \hist{A/i_A,B/1}\right\}$, with two outputs for tip event $\ev{A}$
\end{itemize}
Cover \#3 lies above the fully solipsistic cover, but not below the standard cover.
There are 4 context in this cover:
\[
\begin{array}{l}
    \left\{\hist{B/0}\right\}
    \\
    \left\{\hist{A/0}, \hist{A/1}\right\}
    \\
    \left\{\hist{A/0}, \hist{A/0,B/1}\right\}
    \\
    \left\{\hist{A/1}, \hist{A/1,B/1}\right\}
\end{array}
\]
There are 4 causality equations for cover \#3:
\begin{itemize}
    \item lowerset $\left\{\hist{A/0}\right\}$ is common to contexts $\left\{\hist{A/0}, \hist{A/1}\right\}$ and $\left\{\hist{A/0}, \hist{A/0,B/1}\right\}$, with two outputs for tip event $\ev{A}$
    \item lowerset $\left\{\hist{A/1}\right\}$ is common to contexts $\left\{\hist{A/0}, \hist{A/1}\right\}$ and $\left\{\hist{A/1}, \hist{A/1,B/1}\right\}$, with two outputs for tip event $\ev{A}$
\end{itemize}
Causality and quasi-normalisation equations for all three covers are depicted below.
\begin{center}
    \begin{tabular}{ccc}
    \includegraphics[width=5cm]{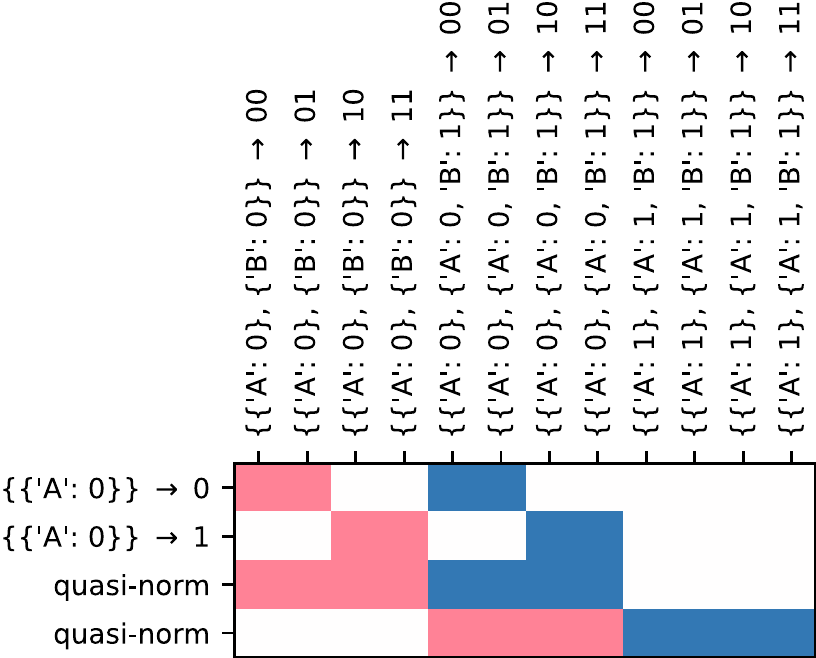}
    &
    \includegraphics[width=5cm]{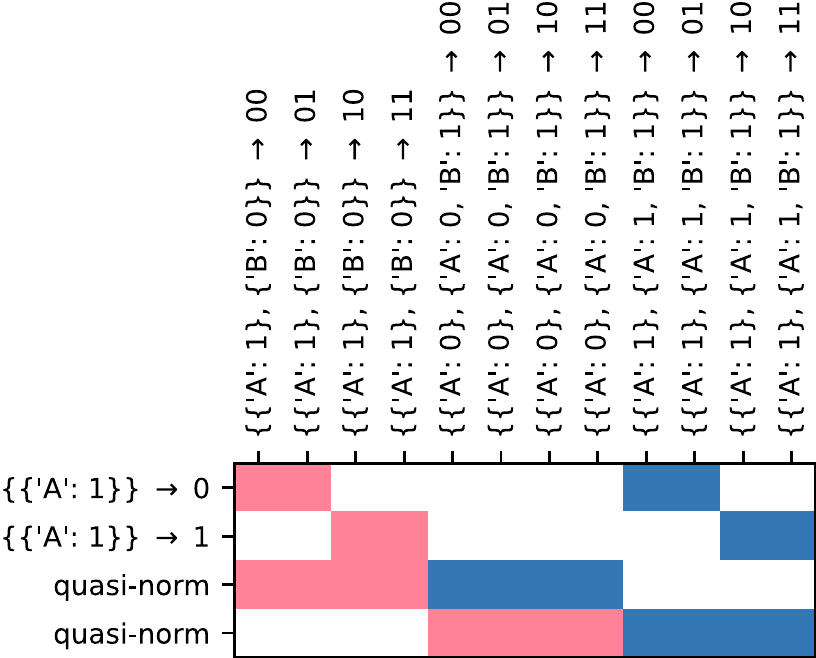}
    &
    \includegraphics[width=5cm]{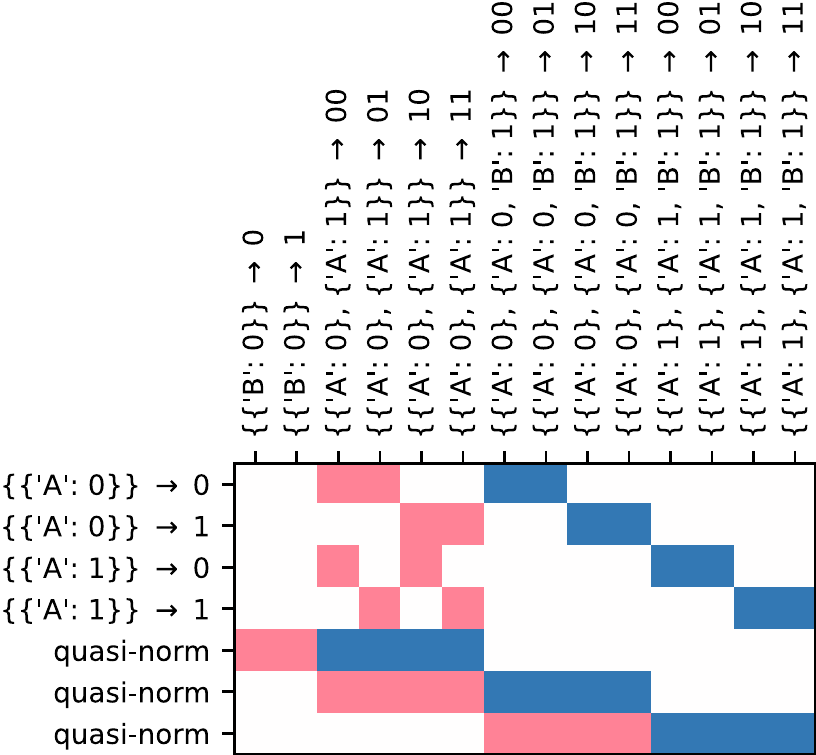}
    \\
    cover \#1
    &
    cover \#2
    &
    cover \#3
    \end{tabular}
\end{center}
The standard cover \#4 has the downsets $\downset{k}$ of extended input histories $k \in \Ext{\Theta}$ as its contexts:
\[
    \left\{\hist{A/i_A,B/0}\right\}
    \\
    \left\{\hist{A/i_A}, \hist{A/i_A,B/0}\right\}
\]
where $i_A$ ranges over $i_A \in \{0, 1\}$.
There are 6 causality equations:
\begin{itemize}
    \item lowerset $\left\{\hist{A/0}\right\}$ is common to contexts $\left\{\hist{A/0}, \hist{B/0}\right\}$ and $\left\{\hist{A/0}\right\}$, with two outputs for tip event $\ev{A}$
    \item lowerset $\left\{\hist{A/1}\right\}$ is common to contexts $\left\{\hist{A/1}, \hist{B/0}\right\}$ and $\left\{\hist{A/1}\right\}$, with two outputs for tip event $\ev{A}$
    \item lowerset $\left\{\hist{B/0}\right\}$ is common to contexts $\left\{\hist{A/0}, \hist{B/0}\right\}$ and $\left\{\hist{A/1}, \hist{B/0}\right\}$, with two outputs for tip event $\ev{B}$
\end{itemize}
Causality and quasi-normalisation equations for this cover are depicted below.
\begin{center}
    \begin{tabular}{c}
    \includegraphics[width=6cm]{svg-inkscape/2303-causeqs-space-e2idx1-causeqs-cover-4.pdf}
    \end{tabular}
\end{center}
Covers \#11, \#12 and \#13 lie just above the standard cover.
There are 5 contexts in cover \#11:
\[
\begin{array}{l}
    \left\{\hist{A/0}, \hist{A/1}\right\}
    \\
    \left\{\hist{A/i_A}, \hist{B/0}\right\}
    \\
    \left\{\hist{A/i_A}, \hist{A/i_A,B/0}\right\}
\end{array}
\]
where $i_A$ ranges over $i_A \in \{0, 1\}$.
There are 10 causality equations for cover \#11:
\begin{itemize}
    \item lowerset $\left\{\hist{A/0}\right\}$ is common to contexts $\left\{\hist{A/0}, \hist{A/1}\right\}$, $\left\{\hist{A/0}, \hist{B/0}\right\}$ and $\left\{\hist{A/0}, \hist{A/0,B/0}\right\}$, with two outputs for tip event $\ev{A}$
    \item lowerset $\left\{\hist{A/1}\right\}$ is common to contexts $\left\{\hist{A/0}, \hist{A/1}\right\}$, $\left\{\hist{A/1}, \hist{B/0}\right\}$ and $\left\{\hist{A/1}, \hist{A/1,B/0}\right\}$, with two outputs for tip event $\ev{A}$
    \item lowerset $\left\{\hist{B/0}\right\}$ is common to contexts $\left\{\hist{A/i_A}, \hist{B/0}\right\}$ and $\left\{\hist{A/i_A}, \hist{A/i_A,B/0}\right\}$, with two outputs for tip event $\ev{B}$
\end{itemize}
Causality and quasi-normalisation equations for all three covers are depicted below.
\begin{center}
    \begin{tabular}{ccc}
    \includegraphics[width=5cm]{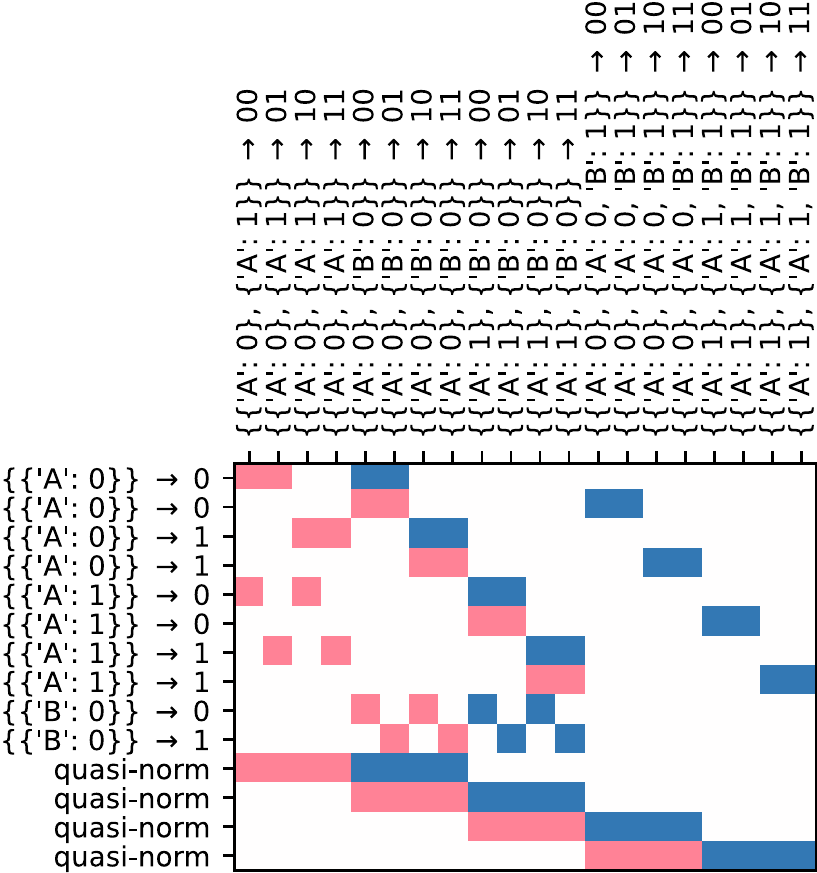}
    &
    \includegraphics[width=5cm]{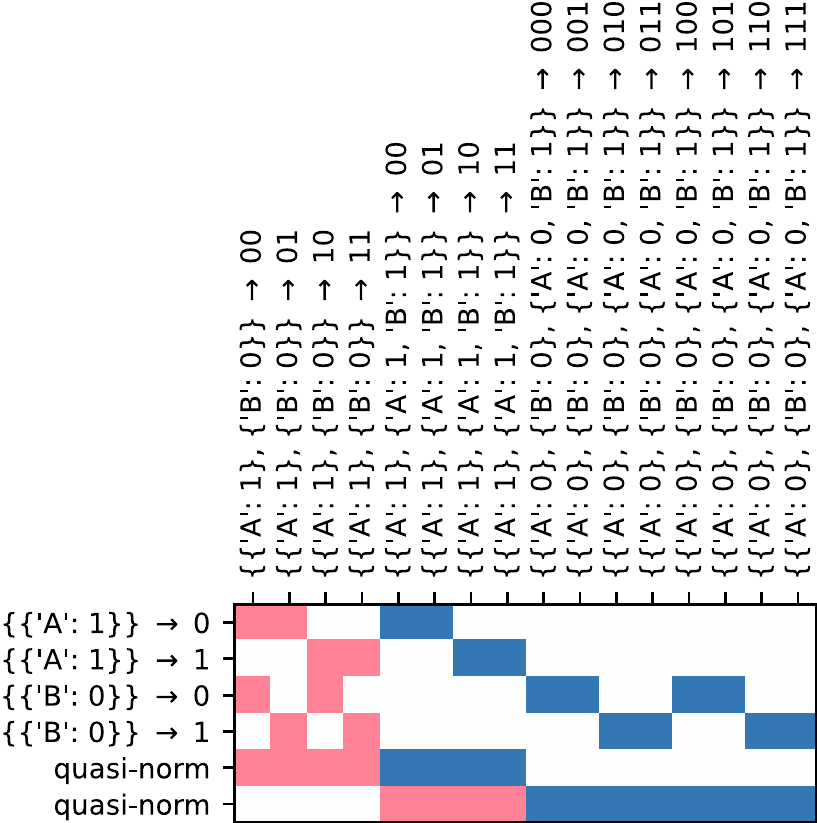}
    &
    \includegraphics[width=5cm]{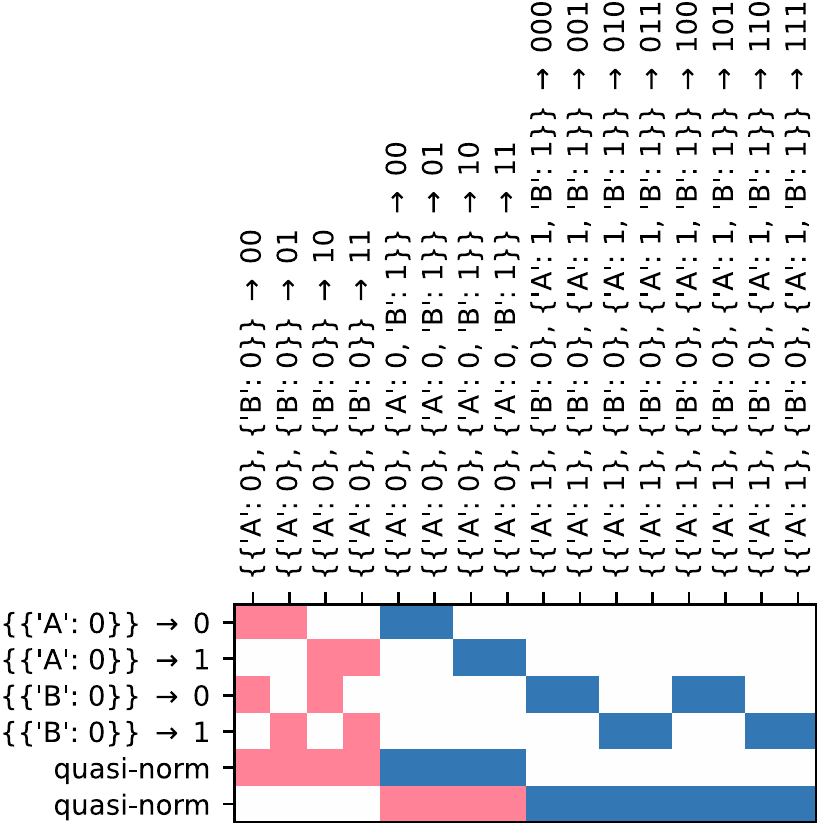}
    \\
    cover \#11
    &
    cover \#12
    &
    cover \#13
    \end{tabular}
\end{center}
There are 3 contexts in each one of covers \#12 and \#13:
\[
\begin{array}{l}
    \left\{\hist{A/i_A}, \hist{B/0}\right\}
    \\
    \left\{\hist{A/i_A}, \hist{A/i_A,B/1}\right\}
    \\
    \left\{\hist{A/1-i_A}, \hist{B/0}, \hist{A/1-i_A,B/1}\right\}
\end{array}
\]
where $i_A = 1$ for cover \#12 and $i_A = 0$ for cover \#13.
There are 4 causality equations for each:
\begin{itemize}
    \item lowerset $\left\{\hist{A/i_A}\right\}$ is common to contexts $\left\{\hist{A/i_A}, \hist{B/0}\right\}$ and $\left\{\hist{A/i_A}, \hist{A/i_A,B/1}\right\}$, with two outputs for tip event $\ev{A}$
    \item lowerset $\left\{\hist{B/0}\right\}$ is common to contexts $\left\{\hist{A/1-i_A}, \hist{B/0}, \hist{A/1-i_A,B/1}\right\}$ and $\left\{\hist{A/i_A}, \hist{B/0}\right\}$, with two outputs for tip event $\ev{B}$
\end{itemize}
Causality and quasi-normalisation equations for all three covers are depicted at the bottom of the previous page.
As our final example for this space, below are the causality and quasi-normalisation equations for cover \#87, the coarsest non-classical cover.
\begin{center}
    \begin{tabular}{c}
    \includegraphics[width=12cm]{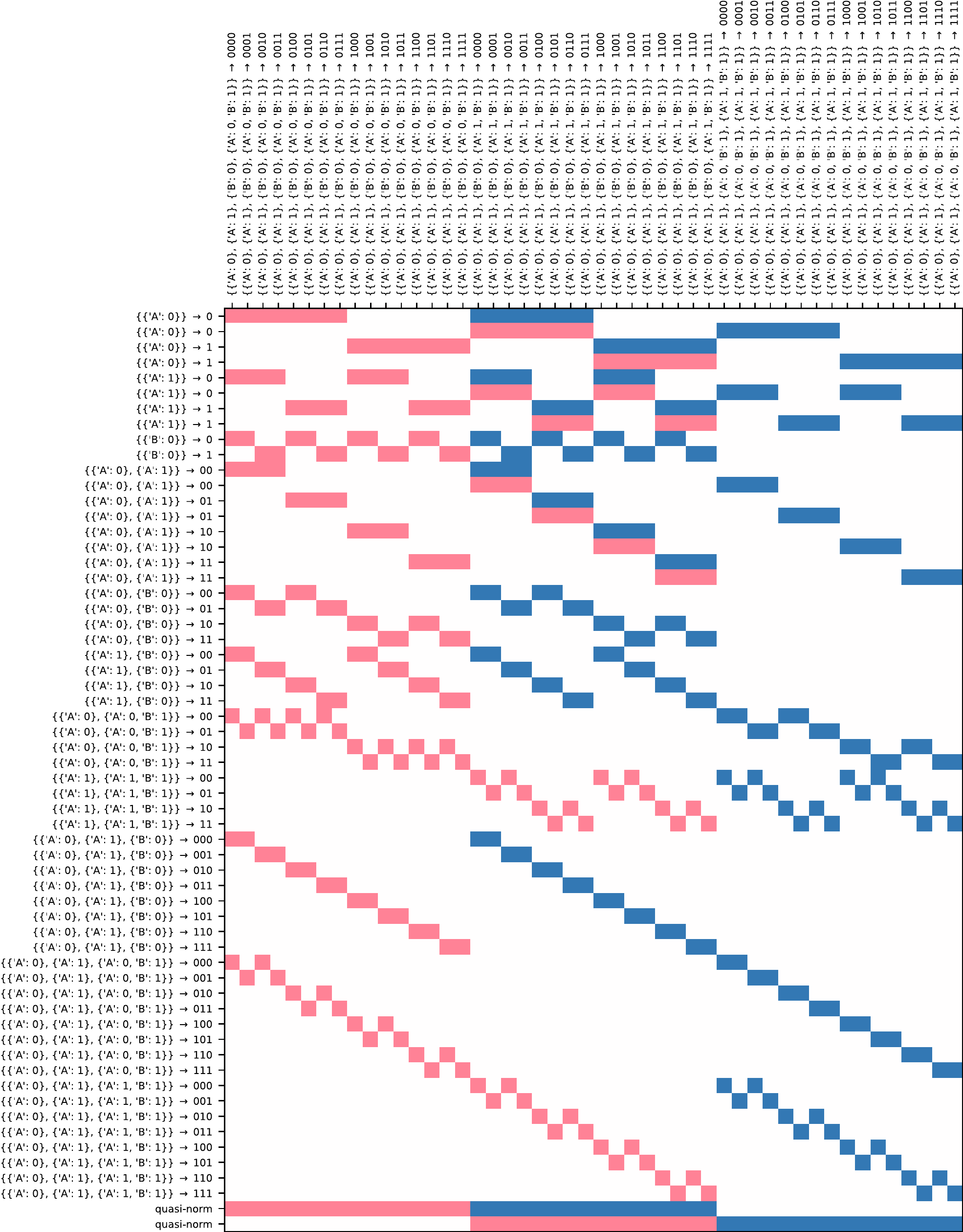}
    \end{tabular}
\end{center}
Cover \#87 has 3 contexts:
\[
\begin{array}{l}
    \left\{
        \hist{A/0},
        \hist{A/1},
        \hist{B/0},
        \hist{A/0,B/1}
    \right\}
    \\
    \left\{
        \hist{A/0},
        \hist{A/1},
        \hist{B/0},
        \hist{A/1,B/1}
    \right\}
    \\
    \left\{
        \hist{A/0},
        \hist{A/1},
        \hist{A/0,B/1},
        \hist{A/1,B/1}
    \right\}
\end{array}
\]
There are 58 causality equations:
\begin{itemize}
    \item lowerset $\left\{\hist{A/0}\right\}$ is common to all three contexts, with two outputs for $\ev{A}$
    \item lowerset $\left\{\hist{A/1}\right\}$ is common to all three contexts, with two outputs for $\ev{A}$
    \item lowerset $\left\{\hist{B/0}\right\}$ is common to the first two contexts, with two outputs for $\ev{B}$
    \item lowerset $\left\{\hist{A/0},\hist{A/1}\right\}$ is common to all three contexts, with two pairs of outputs for $\ev{A}$ (one output when input is 0, one output when input is 1)
    \item lowerset $\left\{\hist{A/0},\hist{B/0}\right\}$ is common to the first two contexts, with two outputs for $\ev{A}$ and two outputs for $\ev{B}$
    \item lowerset $\left\{\hist{A/1},\hist{B/0}\right\}$ is common to the first two contexts, with two outputs for $\ev{A}$ and two outputs for $\ev{B}$
    \item lowerset $\left\{\hist{A/0},\hist{A/0,B/1}\right\}$ is common to the first and third contexts, with two outputs for $\ev{A}$ and two outputs for $\ev{B}$
    \item lowerset $\left\{\hist{A/1},\hist{A/1,B/1}\right\}$ is common to the second and third contexts, with two outputs for $\ev{A}$ and two outputs for $\ev{B}$
    \item lowerset $\left\{\hist{A/0}, \hist{A/1},\hist{B/0}\right\}$ is common to the first and second contexts, with two pairs of outputs for $\ev{A}$ and two outputs for $\ev{B}$
    \item lowerset $\left\{\hist{A/0}, \hist{A/1},\hist{A/0,B/1}\right\}$ is common to the first and third contexts, with two pairs of outputs for $\ev{A}$ and two outputs for $\ev{B}$
    \item lowerset $\left\{\hist{A/0}, \hist{A/1},\hist{A/1,B/1}\right\}$ is common to the second and third contexts, with two pairs of outputs for $\ev{A}$ and two outputs for $\ev{B}$
\end{itemize}
Cover \#88, finally, is the classical cover: it has 1 context, no causality equations and no quasi-normalisation equations.

As our final example, we consider the following causally incomplete space on 3 events, which will later be used to support an empirical model describing a classical switch controlled by the output of a contextual triangle.
\begin{center}
    \begin{tabular}{c}
    \includegraphics[width=8cm]{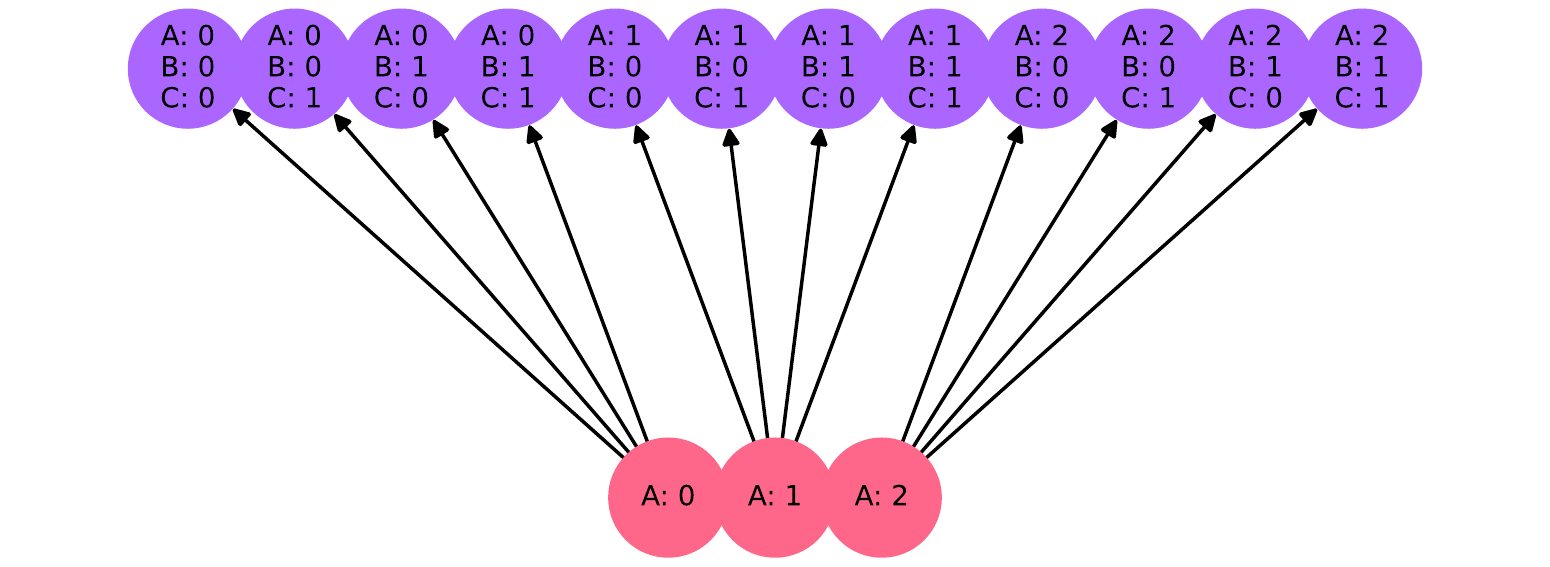}
    \end{tabular}
\end{center}
On this space, we consider the following cover, consisting of the 12 contexts indexed by all possible combinations of $i_A, j_A \in \{0, 1, 2\}$ and $i_B, i_C \in \{0,1\}$, where $i_A \neq j_A$:
\[
\scalebox{0.8}{$
    \lambda_{\{i_A,j_A\},i_B,i_C}:=
    \left\{
    \hist{A/i_A},
    \hist{A/j_A},
    \hist{A/i_A,B/i_B,C/i_C},
    \hist{A/j_A,B/i_B,C/i_C}
    \right\}
$}
\]
Below is a depiction of the causality and quasi-normalisation equations for this cover, where $\ev{A}$ has output fixed to $0$ (for simplicity), while $\ev{B}$ and $\ev{C}$ have binary output $\{0, 1\}$.
\begin{center}
    \includegraphics[width=\textwidth]{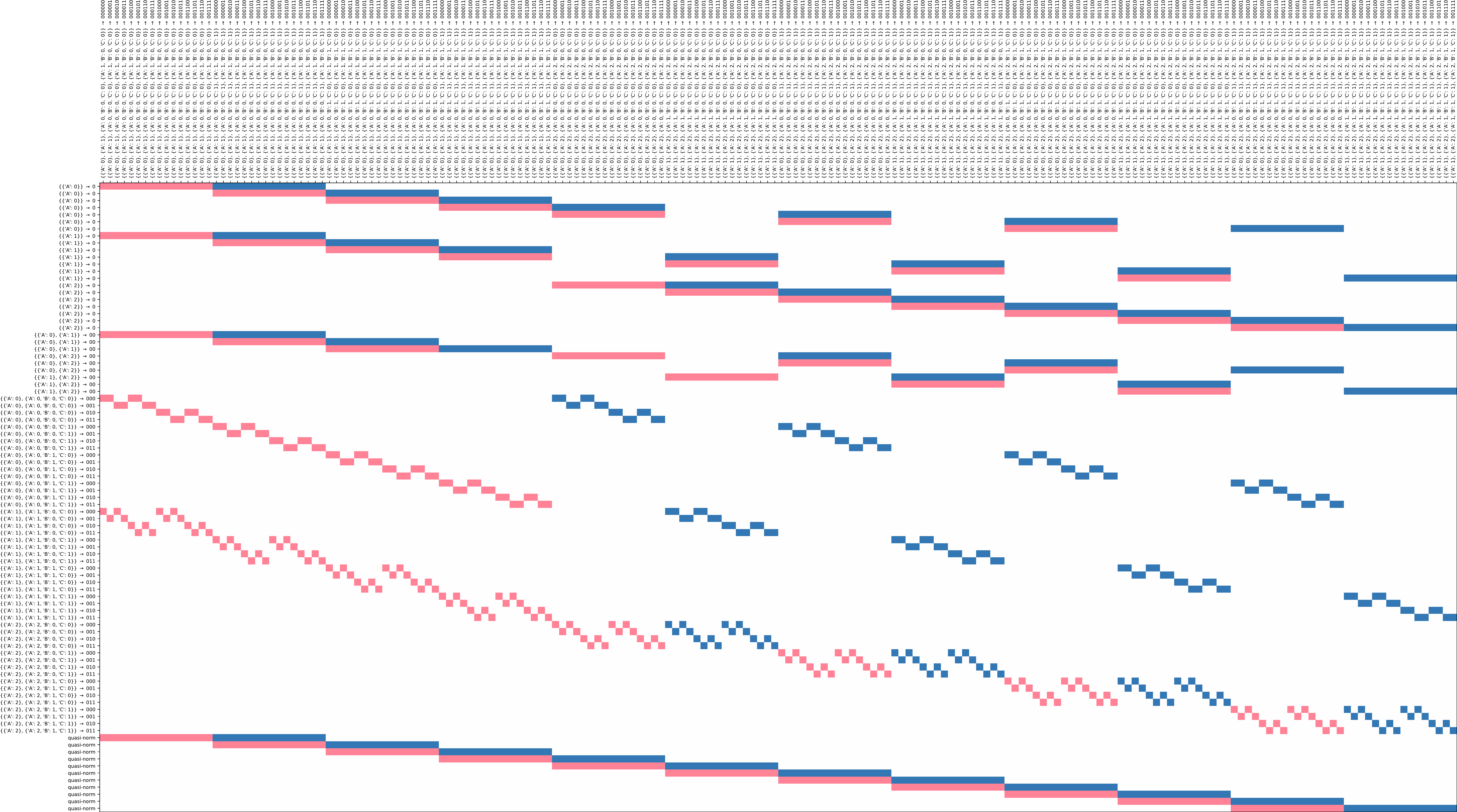}
\end{center}
There are 78 causality equations:
\begin{itemize}
    \item for each $i_A \in \{0, 1, 2\}$, lowerset $\left\{\hist{A/i_A}\right\}$ is common to the 8 contexts in the form $\lambda_{\{i_A,j_A\},i_B,i_C}$, with output for $\ev{A}$ fixed to 0
    \item for each $i_A, j_A \in \{0, 1, 2\}$ such that $i_A \neq j_A$, lowerset $\left\{\hist{A/i_A}, \hist{A/j_A}\right\}$ is common to the 4 contexts in the form $\lambda_{\{i_A,j_A\},i_B,i_C}$, with both outputs for $\ev{A}$ fixed to 0
    \item for each $i_A \in \{0, 1, 2\}$ and $i_B, i_C \in \{0, 1\}$, lowerset $\left\{\hist{A/i_A}, \hist{A/i_A,B/i_B,C/i_C}\right\}$ is common to the 2 contexts in the form $\lambda_{\{i_A,j_A\},i_B,i_C}$, with output for $\ev{A}$ fixed to 0 and binary output for $\ev{B}$ and $\ev{C}$
\end{itemize}
Columns are indexed with the following pattern, where $o_{B,i}$ (resp. $o_{C,i}$) is the output for $\ev{B}$ (resp. $\ev{C}$) when the input at $\ev{A}$ is $i \in \{0, 1, 2\}$:
\[
\lambda_{\{i_A,j_A\},i_B,i_C} \rightarrow 00o_{B,i_A}o_{B,j_A}o_{C,i_A}o_{C,j_A}
\]
The first two entries $o_{A,i}$ are fixed to $0$ because the output at $\ev{A}$ is fixed to $0$.

\subsection{Causal inseparability for non-standard empirical models}
\label{subsection:causal-inseparability-nonstd}

An important aspect of the definition of causal separability is the requirement for empirical models living in different causaltopes to be put into contact: the components $\underline{v}^{(z)}$ appearing in the causal decomposition $\underline{u} = \sum_{z \in Z}\underline{v}^{(z)}$ of an empirical model according to Definition \ref{definition:decomposition-supported-fraction-causal} (p.\pageref{definition:decomposition-supported-fraction-causal}) are empirical models which live in sub-spaces $\Theta^{(z)}$ of the space upon which $\underline{u}$ is defined, and it is the job of Proposition \ref{proposition:subspace-hierarchy-causaltopes} (p.\pageref{proposition:subspace-hierarchy-causaltopes}) to guarantee that the causaltope for each sub-space $\Theta^{(z)}$ is a sub-causaltope of the causaltope for the main space $\Theta$.

Except, we now have a problem: Proposition \ref{proposition:subspace-hierarchy-causaltopes} only mentions the standard cover, and all subsequent definitions explicitly require that the standard cover is used for both the main space $\Theta$ and each sub-space $\Theta^{(z)}$.
In order to define, and study, the causal (in)separability of empirical models over non-standard covers, we must generalise Proposition \ref{proposition:subspace-hierarchy-causaltopes}, and this requires the introduction of some new mathematical machinery.

As a concrete example, we consider the non-standard empirical model for a scenario where Alice controls the causal order between Bob and Charlie by using a device which implements the contextual triangle. Specifically:
\begin{enumerate}
    \item Alice is in possession of a device implementing the empirical model for the contextual triangle: her input $i_A \in \{0, 1, 2\}$ is fed into the device, which spits out a binary value.
    \item The value yielded Alice's device is encoded into the Z basis of a qubit and fed to the control of a quantum switch with the state $|+\rangle$ as its input. As her output, Alice returns $0$, regardless of the binary value yielded by her device.
    \item Bob and Charlie are in the quantum switch, and they do the same thing: they perform an X measurement on the incoming qubit, using the measurement outcome as their individual output, and encode their individual input into the X basis of the outgoing qubit.
\end{enumerate}
The resulting empirical model is as follows:
\begin{center}
    \includegraphics[width=14cm]{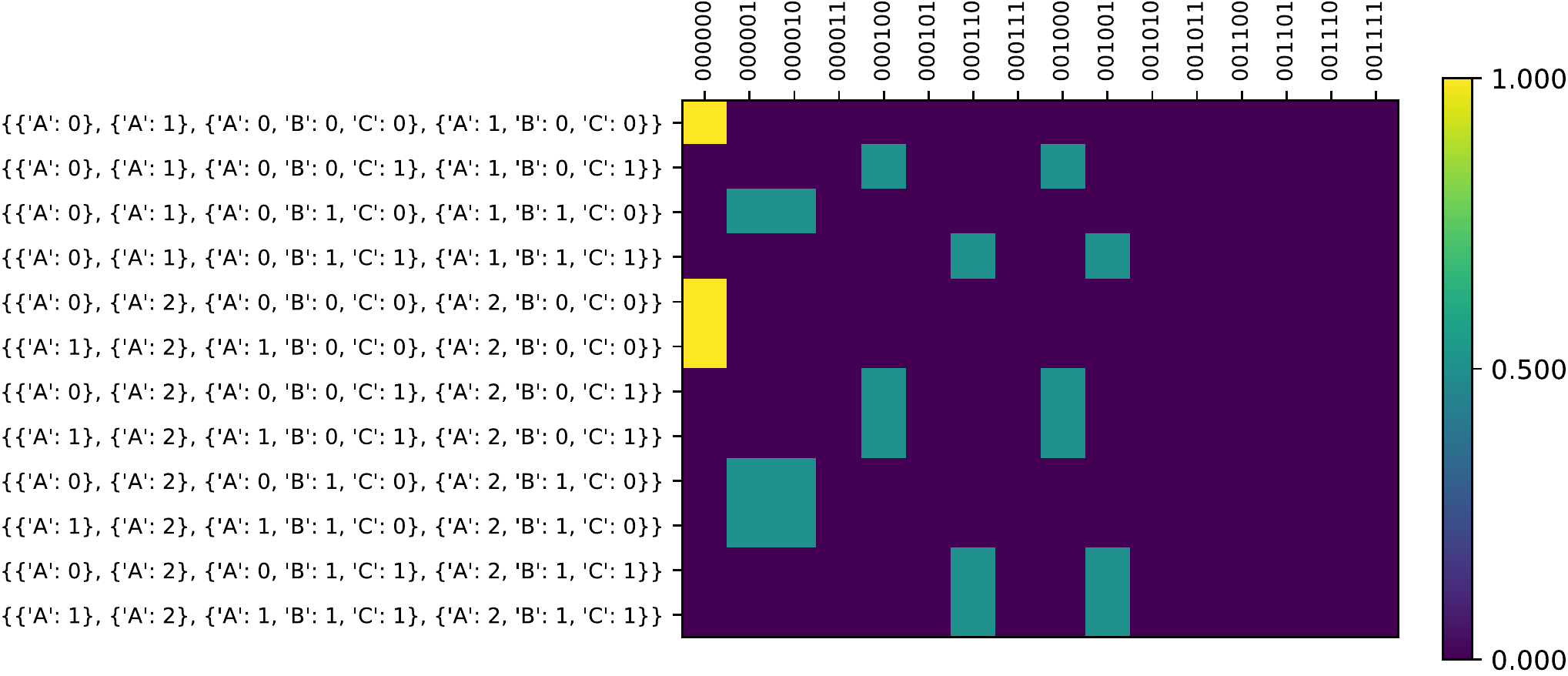}
\end{center}
A figure depicting the scenario is shown below on the left, while the causally incomplete space of input histories $\Theta$ supporting the empirical model is shown below on the right:
\begin{center}
    \begin{tabular}{cc}
    \raisebox{15mm}{\scalebox{1.25}{
    \tikzfig{triangle-switch}
    }}
    &
    \includegraphics[width=8cm]{svg-inkscape/2303-causeqs-space-ctxtri.pdf}
    \end{tabular}
\end{center}
The cover $\mathcal{C}$ upon which the empirical model is defined consists of 12 contexts, indexed by all possible combinations of $i_A, j_A \in \{0, 1, 2\}$ and $i_B, i_C \in \{0,1\}$, where $i_A \neq j_A$:
\[
\scalebox{0.8}{$
    \lambda_{\{i_A,j_A\},i_B,i_C}:=
    \left\{
    \hist{A/i_A},
    \hist{A/j_A},
    \hist{A/i_A,B/i_B,C/i_C},
    \hist{A/j_A,B/i_B,C/i_C}
    \right\}
$}
\]
Columns of the empirical model are indexed with the following pattern, where $o_{B,i}$ (resp. $o_{C,i}$) is the output for $\ev{B}$ (resp. $\ev{C}$) when the input at $\ev{A}$ is $i \in \{0, 1, 2\}$:
\[
00o_{B,i_A}o_{B,j_A}o_{C,i_A}o_{C,j_A}
\]
The first two entries $o_{A,i}$ are fixed to $0$ because the output at $\ev{A}$ is fixed to $0$.
The 78 causality equations for this cover were discussed at the end of the previous Subsection.

The first step in our generalisation is the observation that empirical models for the space $\Theta$ and the subspace $\Theta'$ will be defined, in general, on different covers.
Given a cover on $\Theta$, the first step of our journey is to define what the corresponding ``induced'' cover on $\Theta'$ should be.

\begin{definition}
\label{definition:cover-restriction}
Let $\Theta' \leq \Theta$ be a space of input histories such that $\Events{\Theta} = \Events{\Theta'}$ and $\Inputs{\Theta} = \Inputs{\Theta'}$.
Let $\underline{O} = (O_\omega)_{\omega \in \Events{\Theta}}$ be a family of non-empty sets of outputs and let $\mathcal{C} \in \Covers{\Theta}$ be any cover.
The lowerset $\restrict{\lambda}{\Theta'}$ \emph{induced onto subspace $\Theta'$} by a lowerset $\lambda \in \Lsets{\Theta}$ is defined as follows:
\begin{equation}
    \restrict{\lambda}{\Theta'}
    := \bigcup_{k \in \lambda} \downset{k} \!\cap \,\Theta'
    = \suchthat{
        h \in \Theta'
    }{
        \exists k \in \lambda.\, h \leq k
    }
\end{equation}
where we exploited the fact that $\Ext{\Theta'} \supseteq \Ext{\Theta}$ to view each $k \in \lambda \subseteq \Ext{\Theta}$ as a $k \in \Ext{\Theta'}$.
The cover $\restrict{\mathcal{C}}{\Theta'}$ \emph{induced onto subspace $\Theta'$} by the cover $\mathcal{C}$ is then defined as follows:
\begin{equation}
    \restrict{\mathcal{C}}{\Theta'}
    := \max\suchthat{
        \restrict{\lambda}{\Theta'}
    }{
        \lambda \in \mathcal{C}
    }
\end{equation}
where $\max$ indicates that we only keep the maximal induced lowersets under inclusion, ensuring that $\restrict{\mathcal{C}}{\Theta'}$ is a cover of $\Theta'$ (i.e. an antichain of lowersets).
\end{definition}

\begin{lemma}
\label{lemma:context-restriction-subspace}
Let $\Theta' \leq \Theta$ be a space of input histories such that $\Events{\Theta} = \Events{\Theta'}$ and $\Inputs{\Theta} = \Inputs{\Theta'}$.
For every lowerset $\lambda \in \Lsets{\Theta}$, the induced lowerset $\restrict{\lambda}{\Theta'}$ is a subspace of $\lambda$:
\[
    \begin{array}{rcl}
        \restrict{\lambda}{\Theta'} &\leq& \lambda
        \\
        \Ext{\restrict{\lambda}{\Theta'}} &\supseteq& \Ext{\lambda}
    \end{array}
\]
Furthermore, $\dom{\lambda} = \dom{\restrict{\lambda}{\Theta'}}$.
\end{lemma}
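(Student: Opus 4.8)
The plan is to unwind the definitions of $\restrict{\lambda}{\Theta'}$, of the subspace relation $\leq$, and of $\Ext{-}$, and verify each of the three claims in turn using the standing hypotheses $\Events{\Theta} = \Events{\Theta'}$, $\Inputs{\Theta} = \Inputs{\Theta'}$, and $\Ext{\Theta'} \supseteq \Ext{\Theta}$ (which is what $\Theta' \leq \Theta$ buys us once events and inputs agree). First I would record the key set-theoretic identity underlying everything:
\[
    \restrict{\lambda}{\Theta'}
    = \left(\bigcup_{k \in \lambda} \downset{k}\right) \cap \Theta'
    = \lambda \cap \Theta',
\]
where the last step uses that $\lambda$, being a lowerset in $\Lsets{\Theta}$, already equals $\bigcup_{k \in \lambda} \downset{k}$ (computed inside $\Theta$), and that the downsets $\downset{k}$ for $k \in \lambda \subseteq \Ext{\Theta} \subseteq \Ext{\Theta'}$ may be taken inside $\Theta'$. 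Having $\restrict{\lambda}{\Theta'} = \lambda \cap \Theta'$ makes all three claims much more transparent.

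For $\restrict{\lambda}{\Theta'} \leq \lambda$: a subspace relation between spaces of input histories requires (i) containment of input histories and (ii) reverse containment of extended input histories. Containment $\restrict{\lambda}{\Theta'} = \lambda \cap \Theta' \subseteq \lambda$ is immediate. Both are lowersets (intersection of lowersets with the lowerset $\Theta'$), so the combinatorial structure needed for ``$\leq$'' is inherited; I would cite the definition of $\leq$ from \cite{gogioso2022combinatorics} and check the (few) axioms it demands, which in this setting amount to verifying that every extended input history of $\lambda$ remains an extended input history of $\restrict{\lambda}{\Theta'}$ — exactly claim (ii). So the proof of claim (i) reduces to claim (ii).

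For $\Ext{\restrict{\lambda}{\Theta'}} \supseteq \Ext{\lambda}$: let $k \in \Ext{\lambda}$, i.e. $k$ witnesses consistency of the input histories below it within $\lambda$. Since $\Ext{\Theta'} \supseteq \Ext{\Theta}$, any such $k$ (which lies in $\Ext{\Theta}$) also lies in $\Ext{\Theta'}$, and $\downset{k} \cap \Theta' \subseteq \restrict{\lambda}{\Theta'}$ by definition; the witnessing property of $k$ is preserved when we pass to the sub-lowerset $\restrict{\lambda}{\Theta'}$ because $\restrict{\lambda}{\Theta'} \subseteq \lambda$ and fewer histories only makes a consistency witness easier to maintain. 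The one point requiring care — and the main obstacle — is checking that $k$ is still \emph{maximal} among the relevant extended histories of $\restrict{\lambda}{\Theta'}$ in whatever precise sense $\Ext{-}$ demands, and that no \emph{new} obstruction arises; this is where I would need to invoke the exact definition of $\Ext{-}$ and of ``extended input history'' from the earlier instalments rather than argue informally.

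Finally, for $\dom{\lambda} = \dom{\restrict{\lambda}{\Theta'}}$: by definition $\dom{\lambda} = \bigcup_{h \in \lambda} \dom{h}$ and similarly for $\restrict{\lambda}{\Theta'}$. The inclusion $\dom{\restrict{\lambda}{\Theta'}} \subseteq \dom{\lambda}$ is trivial since $\restrict{\lambda}{\Theta'} \subseteq \lambda$. For the reverse, take $\omega \in \dom{\lambda}$, so $\omega \in \dom{h}$ for some $h \in \lambda$; I would show some history of $\restrict{\lambda}{\Theta'}$ also has $\omega$ in its domain. Here I would use that $\Theta$ and $\Theta'$ have the same events and inputs: an extended input history $k \in \Ext{\Theta}$ above $h$ (which exists since $\lambda$ is covered by downsets of its extended histories, or appealing to the cover structure) restricts to give histories in $\Theta' = \Ext{\Theta'} \cap (\text{histories})$ that still cover all of $\dom{\lambda}$, because passing to $\Theta'$ only adds extended histories, never removes events from domains. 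I expect this last step to be routine once the bookkeeping of $\dom{-}$ versus $\Ext{-}$ is set up correctly, and I would close by citing the relevant lemma on domains from \cite{gogioso2022combinatorics}.
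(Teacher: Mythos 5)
Your opening identity $\restrict{\lambda}{\Theta'} = \lambda \cap \Theta'$ is false in general, and the error propagates through the rest of the argument. The subspace relation $\Theta' \leq \Theta$ is defined by $\Ext{\Theta'} \supseteq \Ext{\Theta}$; it does \emph{not} imply $\Theta' \subseteq \Theta$ as sets of histories --- typically $\Theta'$ consists of smaller histories that do not belong to $\Theta$ at all. Concretely, take $\Theta$ indiscrete and $\Theta'$ discrete on two events with binary inputs, and $\lambda = \downset{\hist{A/0,B/0}} = \left\{\hist{A/0,B/0}\right\}$: then $\lambda \cap \Theta' = \emptyset$, while $\restrict{\lambda}{\Theta'} = \left\{\hist{A/0},\hist{B/0}\right\}$. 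For the same reason, your reading of $\restrict{\lambda}{\Theta'} \leq \lambda$ as requiring the set containment $\restrict{\lambda}{\Theta'} \subseteq \lambda$ misconstrues the subspace relation: the lemma's first claim just \emph{is} the second claim $\Ext{\restrict{\lambda}{\Theta'}} \supseteq \Ext{\lambda}$, not a containment of histories (which fails in the example above).

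That second claim is the entire content of the lemma, and it is exactly the step you flag as ``the main obstacle'' and leave unresolved. There is no maximality condition to check: $\Ext{-}$ is closure under joins of consistent families, so one writes any $k \in \Ext{\lambda}$ as the join of the $h \in \lambda$ with $h \leq k$, and then expands each such $h$ --- which lies in $\Theta \subseteq \Ext{\Theta} \subseteq \Ext{\Theta'}$ --- as the join of the $h' \in \Theta'$ with $h' \leq h$, all of which belong to $\restrict{\lambda}{\Theta'}$ by definition; hence $k \in \Ext{\restrict{\lambda}{\Theta'}}$. This double-join decomposition is the paper's whole proof, and it also yields $\dom{h} = \bigcup\suchthat{\dom{h'}}{h' \in \Theta',\, h' \leq h}$ and therefore $\dom{\lambda} = \dom{\restrict{\lambda}{\Theta'}}$ immediately, without the detour through covers that you sketch for the domain claim.
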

\begin{proof}
See \ref{proof:lemma:context-restriction-subspace}
\end{proof}

To exemplify our newly defined notion of induced cover, we consider the following subspace $\Theta'$, one of the 8 causal completions of $\Theta$:
\begin{center}
    \begin{tabular}{c}
    \includegraphics[width=8cm]{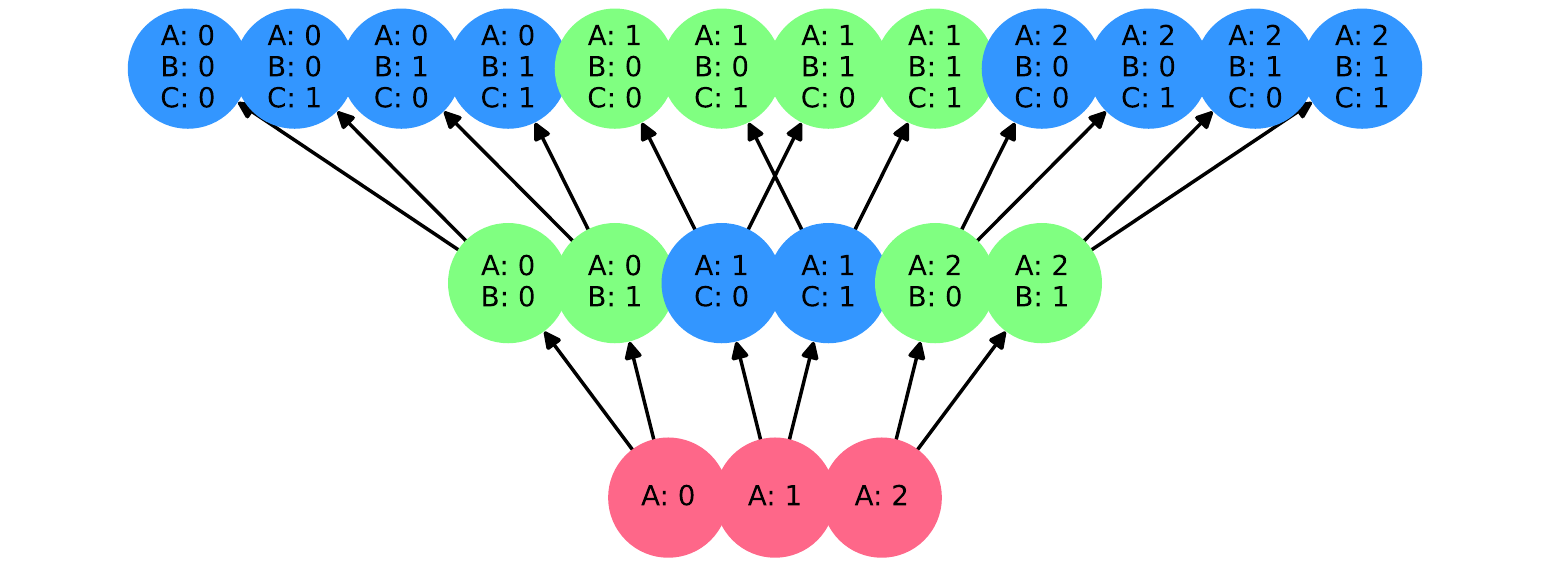}
    \end{tabular}
\end{center}
For each context $\lambda_{\{i_A,j_A\},i_B,i_C}$, the induced context $\restrict{\lambda_{\{i_A,j_A\},i_B,i_C}}{\Theta'}$ additionally contains $\hist{A/k,B/i_B}$ when $k \in \{0, 2\}$ and $\hist{A/k,C/i_C}$ when $k = 1$, for both $k:=i_A$ and $k:=j_A$. The induced cover $\restrict{\mathcal{C}}{\Theta'}$ then takes the following form:
\[
\scalebox{0.6}{$
\begin{array}{rcl}
    \restrict{\lambda_{\{0,1\},i_B,i_C}}{\Theta'} &=&
    \left\{
    \hist{A/0},
    \hist{A/1},
    \hist{A/0,B/i_B},
    \hist{A/1,C/i_C},
    \hist{A/0,B/i_B,C/i_C},
    \hist{A/1,B/i_B,C/i_C}
    \right\}
    \\
    \restrict{\lambda_{\{0,2\},i_B,i_C}}{\Theta'} &=&
    \left\{
    \hist{A/0},
    \hist{A/2},
    \hist{A/0,B/i_B},
    \hist{A/2,B/i_B},
    \hist{A/0,B/i_B,C/i_C},
    \hist{A/2,B/i_B,C/i_C}
    \right\}
    \\
    \restrict{\lambda_{\{1,2\},i_B,i_C}}{\Theta'} &=&
    \left\{
    \hist{A/1},
    \hist{A/2},
    \hist{A/1,C/i_C},
    \hist{A/2,B/i_B},
    \hist{A/1,B/i_B,C/i_C},
    \hist{A/2,B/i_B,C/i_C}
    \right\}
\end{array}
$}
\]
This is a particularly fortuitous case, where the contexts in the induced cover correspond bijectively to the contexts of the original cover, but note that this may not hold in general.

The assumptions of Proposition \ref{proposition:subspace-hierarchy-causaltopes} (p.\pageref{proposition:subspace-hierarchy-causaltopes}) include the requirement that $\Theta$ and $\Theta'$ have the same maximal input histories.
This assumption implies that the two spaces have exactly the same polytope of pseudo-empirical models, significantly simplifying the proof of the result:
\[
    \PsEmpModels{\StdCov{\Theta}, \underline{O}}
    \simeq
    \prod_{k \in \max\Ext{\Theta}}
    \Dist{
        \prod_{\omega \in \dom{k}}
        O_\omega
    }
\]
The assumption also implies that the cover induced onto $\Theta'$ by the standard cover $\StdCov{\Theta}$ on $\Theta$ is the standard cover $\StdCov{\Theta'}$ on $\Theta'$.
This implies backward-compatibility of Proposition \ref{proposition:subspace-hierarchy-causaltopes-general} (p.\pageref{proposition:subspace-hierarchy-causaltopes-general}) below, stating inclusions of causaltopes for general covers, with the original Proposition \ref{proposition:subspace-hierarchy-causaltopes}, restricted to the special case of standard covers.

For covers other than the standard cover, a relationship between the pseudo-empirical models of the original cover $\mathcal{C}$ and those of the induced cover $\restrict{\mathcal{C}}{\Theta'}$ still exists, but turns out to be somewhat more complicated.
Recall the definition of pseudo-empirical models for a generic cover $\mathcal{C}$:
\[
    \PsEmpModels{\mathcal{C}, \underline{O}}
    =
    \prod_{\lambda \in \mathcal{C}}
    \Dist{\prod\limits_{\omega \in \dom{\lambda}} \left(O_\omega\right)^{\TipEqCls{\lambda}{\omega}}}
\]
where $\TipEqCls{\lambda}{\omega}$ is the set of equivalence classes of tip histories for $\omega$ under the equivalence relation $\histconstrSym{\omega}$, constraining histories to yield the same output value at $\omega$ for all extended functions on $\lambda$:
\[
\begin{array}{rl}
    \TipEqCls{\lambda}{\omega}
    &=
    \suchthat{\histconstreqcls{h}{\omega}}{h \in \TipHists{\lambda}{\omega}}\vspace{1mm}
    \\
    &=
    \suchthat{\histconstreqcls{h}{\omega}}{h \in \lambda \,\wedge\, \omega \in \tips{\lambda}{h}}
\end{array}
\]
Pseudo-empirical models are conditional probability distributions: the $\prod_{\lambda \in \mathcal{C}}$ part of the definition of $\PsEmpModels{\mathcal{C}, \underline{O}}$ above is the ``conditional'' part, indexing the distributions $\Dist{J^{(\lambda)}}$ on joint outputs corresponding to each individual context $\lambda \in \mathcal{C}$ in the cover.
Expanding a bit on the definition, the set $J^{(\lambda)}$ of joint outputs for a given context $\lambda \in \mathcal{C}$ is seen to take the following form:
\[
    J^{(\lambda)}
    :=
    \hspace{-8mm}
    \prod\limits_{
        \scriptsize
        \begin{array}{c}
            \omega \in \dom{\lambda}
            \\
            \hspace{1mm}\histconstreqcls{h}{\omega}\hspace{-1.5mm}\in \TipEqCls{\lambda}{\omega}
        \end{array}
    }
    \hspace{-9mm}
    O_\omega
\]
It is convenient to think of the $\omega \in \dom{\lambda}$ and $\histconstreqcls{h}{\omega}\hspace{-1.5mm}\in \TipEqCls{\lambda}{\omega}$ as ``indexes'' of individual output values within a vector $o$ of joint output values, each individual component $o_{(\omega, \histconstreqcls{h}{\omega})}$ freely chosen within the set $O_\omega$.

Our goal is now to connect the pseudo-empirical models for a cover $\mathcal{C}$ on a space $\Theta$ to those for the cover induced on a sub-space $\Theta'$:
\[
    \PsEmpModels{\restrict{\mathcal{C}}{\Theta'}, \underline{O}}
    \lhook\joinrel\longrightarrow
    \PsEmpModels{\mathcal{C}, \underline{O}}
\]
We used an inclusion symbol $\lhook\joinrel\longrightarrow$, rather than the subset symbol $\subseteq$, to highlight that the indexing sets for individual outputs in pseudo-empirical models for the original cover $\mathcal{C}$ and for the induced cover $\restrict{\mathcal{C}}{\Theta'}$ are distinct. (Note: in the case of the standard cover, they were the same.) This is evident when we expand the two polytopes into their definition:
\[
    \prod_{\restrict{\lambda}{\Theta'} \text{ maximal}}
    \Dist{
        \hspace{-3mm}
        \prod\limits_{
            \scriptsize
            \begin{array}{c}
                \omega \in \dom{\lambda}
                \\
                \hspace{1mm}\histconstreqcls{k}{\omega}\hspace{-1.5mm}\in \TipEqCls{\restrict{\lambda}{\Theta'}}{\omega}
            \end{array}
        }
        \hspace{-9mm}
        O_\omega
        \hspace{2mm}
    }
    \hspace{8mm}
    \lhook\joinrel\longrightarrow
    \hspace{5mm}
    \prod_{\lambda \in \mathcal{C}}
    \Dist{
        \hspace{-3mm}
        \prod\limits_{
            \scriptsize
            \begin{array}{c}
                \omega \in \dom{\lambda}
                \\
                \hspace{1mm}\histconstreqcls{h}{\omega}\hspace{-1.5mm}\in \TipEqCls{\lambda}{\omega}
            \end{array}
        }
        \hspace{-9mm}
        O_\omega
        \hspace{2mm}
    }
\]
where we exploited the fact that $\dom{\lambda}=\dom{\restrict{\lambda}{\Theta'}}$.
We already have a mapping from the contexts $\lambda \in \mathcal{C}$ of the original cover to lowersets $\restrict{\lambda}{\Theta'}$; the maxima of the latter, under inclusion, are the contexts of the induced cover $\restrict{\mathcal{C}}{\Theta'}$.
One more ingredient is needed to describe our desired connection: for a given event $\omega \in \dom{\lambda}$ and ``tip historyset'' $\histconstreqcls{h}{\omega}\hspace{-1.5mm}\in \TipEqCls{\lambda}{\omega}$, we need to define a correspoding ``induced tip historyset'' on $\restrict{\lambda}{\Theta'}$, and prove that all tip historysets on $\restrict{\lambda}{\Theta'}$ arise this way.

\begin{proposition}
\label{proposition:induced-tip-historysets}
    Let $\Theta' \leq \Theta$ be a space of input histories such that $\Events{\Theta} = \Events{\Theta'}$ and $\Inputs{\Theta} = \Inputs{\Theta'}$.
    Let $\underline{O} = (O_\omega)_{\omega \in \Events{\Theta}}$ be a family of non-empty sets of outputs and let $\mathcal{C} \in \Covers{\Theta}$ be any cover.
    For each event $\omega \in \dom{\lambda}$ and each \emph{tip historyset} $\histconstreqcls{h}{\omega}\hspace{-1.5mm}\in \TipEqCls{\lambda}{\omega}$, the \emph{induced tip historyset} $\restrict{\histconstreqcls{h}{\omega}\!\!}{\Theta'} \in \TipEqCls{\restrict{\lambda}{\Theta'}}{\omega}$ is defined as follows:
    \begin{equation}
        \restrict{\histconstreqcls{h}{\omega}\!\!}{\Theta'}
        :=
        \histconstreqcls{k}{\omega} \text{ for any } k \in \Theta' \text{ s.t. } k \leq h
    \end{equation}
    Induced tip historyset are well-defined, and every tip historyset in $\TipEqCls{\restrict{\lambda}{\Theta'}}{\omega}$ arises this way.
\end{proposition}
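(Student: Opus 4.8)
The plan is to prove the three claims of the proposition one after another: (i) for a tip historyset $\histconstreqcls{h}{\omega}\in\TipEqCls{\lambda}{\omega}$ there exists a $k\in\Theta'$ with $k\leq h$ that is itself a tip history for $\omega$ in $\restrict{\lambda}{\Theta'}$, so that the defining formula for $\restrict{\histconstreqcls{h}{\omega}\!\!}{\Theta'}$ makes sense; (ii) the class $\histconstreqcls{k}{\omega}\in\TipEqCls{\restrict{\lambda}{\Theta'}}{\omega}$ does not depend on the choice of such a $k$; (iii) every class in $\TipEqCls{\restrict{\lambda}{\Theta'}}{\omega}$ is obtained in this way. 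I would reduce all three to statements about how tip histories and the constraint relation $\histconstrSym{\omega}$ transform when passing from a context $\lambda$ to the induced lowerset $\restrict{\lambda}{\Theta'}$. The standing tools are Lemma~\ref{lemma:context-restriction-subspace}, which gives $\restrict{\lambda}{\Theta'}\leq\lambda$ — hence $\Ext{\restrict{\lambda}{\Theta'}}\supseteq\Ext{\lambda}$ and $\dom{\restrict{\lambda}{\Theta'}}=\dom{\lambda}$ — and the defining formula $\restrict{\lambda}{\Theta'}=\suchthat{h'\in\Theta'}{\exists k'\in\lambda.\,h'\leq k'}$, which yields $\downset{h}\cap\Theta'\subseteq\restrict{\lambda}{\Theta'}$ for every $h\in\lambda$. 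I would open by recording these, together with the fact that $h\in\TipHists{\lambda}{\omega}$ means $h\in\lambda$ and $\omega\in\tips{\lambda}{h}$, and the compatibility that every extended causal function on $\restrict{\lambda}{\Theta'}$ restricts to one on $\lambda$ (from $\Ext{\restrict{\lambda}{\Theta'}}\supseteq\Ext{\lambda}$).

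\textbf{Existence.} Given $h\in\TipHists{\lambda}{\omega}$, I would take $k$ to be a maximal element of $\downset{h}\cap\TipHists{\restrict{\lambda}{\Theta'}}{\omega}$; the content of the step is that this set is non-empty. The intuition is that, since $\restrict{\lambda}{\Theta'}$ is more constrained than $\lambda$, the part of $h$ that is causally relevant to $\omega$ can only shrink, so the restriction of $h$ to the relevant past prefix lies in $\Theta'$, is below $h$, and is a tip for $\omega$ in $\restrict{\lambda}{\Theta'}$. Making this precise requires the combinatorial machinery of tips from \cite{gogioso2022combinatorics}: one must combine $\omega\in\tips{\lambda}{h}$ with $\dom{\restrict{\lambda}{\Theta'}}=\dom{\lambda}$ and the downward-closure of $\Theta'$ as a space of input histories. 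This matching of the tips operator across the subspace relation is where I expect the real work to lie; the other two parts are short once it is settled.

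\textbf{Independence.} Suppose $k,k'\in\Theta'$ are both $\leq h$ and both lie in $\TipHists{\restrict{\lambda}{\Theta'}}{\omega}$; then both lie in $\restrict{\lambda}{\Theta'}$ by the observation above. I would show $\histconstreqcls{k}{\omega}=\histconstreqcls{k'}{\omega}$ in $\TipEqCls{\restrict{\lambda}{\Theta'}}{\omega}$ by fixing an arbitrary extended causal function $\Ext{f}$ on $\restrict{\lambda}{\Theta'}$, restricting it to one on $\lambda$, and noting that since $h\in\TipHists{\lambda}{\omega}$ the output $\Ext{f}(h)_\omega$ assigned along the branch through $h$ is determined; because $k$ and $k'$ are tips for $\omega$ lying below this common $h$, causality of $\Ext{f}$ forces $\Ext{f}(k)_\omega=\Ext{f}(h)_\omega=\Ext{f}(k')_\omega$ — this is the same coherence of extended causal functions that underlies Lemma~\ref{lemma:output-history-injection}. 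Since $\Ext{f}$ was arbitrary, $k$ and $k'$ define the same class, so $\restrict{\histconstreqcls{h}{\omega}\!\!}{\Theta'}$ is well defined.

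\textbf{Surjectivity.} Let $\histconstreqcls{k}{\omega}\in\TipEqCls{\restrict{\lambda}{\Theta'}}{\omega}$, so $k\in\restrict{\lambda}{\Theta'}$ and $\omega\in\tips{\restrict{\lambda}{\Theta'}}{k}$; the defining formula for $\restrict{\lambda}{\Theta'}$ supplies $h\in\lambda$ with $k\leq h$. I would then argue that $h$ may be taken so that $\omega\in\tips{\lambda}{h}$ — i.e.\ that a tip for $\omega$ in the more constrained space lies below a tip for $\omega$ in $\lambda$ — which is the statement dual to the existence step and follows from the same analysis of the tips operator. For such an $h$ we get $\histconstreqcls{h}{\omega}\in\TipEqCls{\lambda}{\omega}$ with $\restrict{\histconstreqcls{h}{\omega}\!\!}{\Theta'}=\histconstreqcls{k}{\omega}$ by construction and by the independence already proved, which closes the argument. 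Throughout, the only genuinely new ingredient beyond Lemma~\ref{lemma:context-restriction-subspace} and standard properties of extended causal functions is the behaviour of tips under $\restrict{\lambda}{\Theta'}\leq\lambda$; once that is pinned down, each part is a short argument about extended causal functions.
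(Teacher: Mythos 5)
Your proposal correctly isolates the three things that need checking, but it does not actually prove the two that carry the mathematical content, and the well-definedness step it does argue is incomplete. Concretely: the existence step (non-emptiness of $\downset{h}\cap\TipHists{\restrict{\lambda}{\Theta'}}{\omega}$) and the dual claim you invoke for surjectivity are both deferred to an unspecified ``analysis of the tips operator'' from the combinatorics paper; you say yourself this is where the real work lies, and then do not do it. The paper needs no such analysis. Well-definedness is obtained by taking two representatives $h, h'\in\histconstreqcls{h}{\omega}$ and any $k,k'\in\Theta'$ with $k\leq h$, $k'\leq h'$, and observing that $\Ext{\restrict{\lambda}{\Theta'}}\supseteq\Ext{\lambda}$ (Lemma \ref{lemma:context-restriction-subspace}) forces $\histconstr{\omega}{h}{h'}$ --- and hence $\histconstr{\omega}{k}{k'}$ --- to hold in $\restrict{\lambda}{\Theta'}$; surjectivity is immediate from the defining formula $\restrict{\lambda}{\Theta'}=\suchthat{k\in\Theta'}{\exists h\in\lambda.\;k\leq h}$, which exhibits every $k\in\restrict{\lambda}{\Theta'}$ as lying below some $h\in\lambda$, with no detour through maximal elements or a matching of tips.

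Two further problems with the parts you do argue. First, your independence step only treats $k,k'$ below the \emph{same} representative $h$; since the object being defined is attached to the class $\histconstreqcls{h}{\omega}$ and not to $h$, you must also show that distinct representatives $h$ and $h'$ with $\histconstr{\omega}{h}{h'}$ in $\lambda$ induce the same class downstairs, which is exactly the case the paper's proof is built around. Second, you argue at the level of extended causal functions (``all $\Ext{f}$ agree at $\omega$ on $k$ and $k'$'') and then conclude $\histconstr{\omega}{k}{k'}$ in $\restrict{\lambda}{\Theta'}$; but $\histconstrSym{\omega}$ is a combinatorially defined relation which causal functions are merely required to respect, so agreement of all functions yields the relation only via a converse implication (true, by the freeness underlying Theorem \ref{theorem:topdist-extdist-generalised}, but an extra step you neither state nor justify). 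Working directly with the relation and the inclusion $\Ext{\restrict{\lambda}{\Theta'}}\supseteq\Ext{\lambda}$, as the paper does, avoids both detours and makes the whole argument a few lines.
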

\begin{proof}
See \ref{proof:proposition:induced-tip-historysets}
\end{proof}

Having established a connection from both the contexts (the ``rows'' of an empirical model ``table'') and events-tip historyset pairs (the ``columns'' of an empirical mode ``table''), we can now explicitly define the embedding of pseudo-empirical models for the induced cover into those for the original cover.

\begin{definition}
\label{definition:pseudo-empmodels-embedding-general}
    Let $\Theta' \leq \Theta$ be a space of input histories such that $\Events{\Theta'} = \Events{\Theta}$ and $\Inputs{\Theta'} = \Inputs{\Theta}$.
    Let $\underline{O} = (O_\omega)_{\omega \in \Events{\Theta}}$ be a family of non-empty sets of outputs and let $\mathcal{C} \in \Covers{\Theta}$ be any cover.
    First, we define a function $s_\lambda$, mapping the indexing set for output values of a context $\lambda$ surjectively onto the indexing set for output values of the corresponding induced lowerset $\restrict{\lambda}{\Theta'}$:
    \begin{equation}
    \begin{array}{rccc}
        s_\lambda:
        & \sum\limits_{\omega \in \dom{\lambda}} \TipEqCls{\lambda}{\omega}
        & \twoheadrightarrow
        & \sum\limits_{\omega \in \dom{\lambda}} \TipEqCls{\restrict{\lambda}{\Theta'}}{\omega}
        \\
        & \left(\omega, \histconstreqcls{h}{\omega}\right)
        & \mapsto
        & \left(\omega, \restrict{\histconstreqcls{h}{\omega}\!\!}{\Theta'}\right)
    \end{array}
    \end{equation}
    From $s_\lambda$, we derive a second function $\sigma_\lambda$, mapping the joint output values of a context $\restrict{\lambda}{\Theta'} \in \restrict{\mathcal{C}}{\Theta'}$ in the induced cover injectively into the joint output values of a corresponding context $\lambda$ in the original cover:
    \begin{equation}
        \begin{array}{rccc}
        \sigma_\lambda:
        &
            \hspace{-3mm}
            \prod\limits_{
                \scriptsize
                \begin{array}{c}
                    \omega \in \dom{\lambda}
                    \\
                    \hspace{1mm}\histconstreqcls{k}{\omega}\hspace{-1.5mm}\in \TipEqCls{\restrict{\lambda}{\Theta'}}{\omega}
                \end{array}
            }
            \hspace{-9mm}
            O_\omega
            \hspace{2mm}
        & \hookrightarrow
        &
            \hspace{-3mm}
            \prod\limits_{
                \scriptsize
                \begin{array}{c}
                    \omega \in \dom{\lambda}
                    \\
                    \hspace{1mm}\histconstreqcls{h}{\omega}\hspace{-1.5mm}\in \TipEqCls{\lambda}{\omega}
                \end{array}
            }
            \hspace{-9mm}
            O_\omega
            \hspace{2mm}
        \\
        & o
        & \mapsto
        & o \circ s_\lambda
    \end{array}
    \end{equation}
    Using $\sigma_\lambda$, we finally define the embedding $i_{\mathcal{C},\Theta',\Theta}$ of the polytope of pseudo-empirical models for the induced cover $\restrict{\mathcal{C}}{\Theta'}$ into the polytope of pseudo-empirical models for the original cover $\mathcal{C}$:
    \begin{equation}
    \begin{array}{rccc}
        i_{\mathcal{C},\Theta',\Theta}:
        & \PsEmpModels{\restrict{\mathcal{C}}{\Theta'}, \underline{O}}
        & \hookrightarrow
        & \PsEmpModels{\mathcal{C}, \underline{O}}
        \\
        & \underline{u}
        & \mapsto
        & \left(
            \Dist{\sigma_\lambda}\left(\underline{u}^{\left(\restrict{\lambda}{\Theta'}\right)}\right)
          \right)_{\lambda \in \mathcal{C}}
    \end{array}
    \end{equation}
    Explicitly, the embedding acts as follows on the pseudo-empirical models for the induced cover:
    \[
    \begin{array}{rccc}
        i_{\mathcal{C},\Theta',\Theta}:
        & \PsEmpModels{\restrict{\mathcal{C}}{\Theta'}, \underline{O}}
        & \hookrightarrow
        & \PsEmpModels{\mathcal{C}, \underline{O}}
        \\
        & \underline{u}
        & \mapsto
        & \left(
            \left(
                u^{(\restrict{\lambda}{\Theta'})}_{\omega, \restrict{\histconstreqcls{h}{\omega}\!\!}{\Theta'}}
            \right)_{\omega, \histconstreqcls{h}{\omega}}
          \right)_{\lambda}
    \end{array}
    \]
\end{definition}

\begin{remark}
\label{remark:pseudo-empmodels-embedding-vec-general}
    The embedding $i_{\mathcal{C},\Theta',\Theta}$ extends to a linear embedding between the vector spaces spaned by the pseudo-empirical models:
    \[
        \PsEmpModelsVec{\restrict{\mathcal{C}}{\Theta'}, \underline{O}}
        \stackrel{i_{\mathcal{C},\Theta',\Theta}}{\lhook\joinrel\longrightarrow}
        \PsEmpModelsVec{\mathcal{C}, \underline{O}}
    \]
\end{remark}

We are now, finally, in a position to generalise Proposition \ref{proposition:subspace-hierarchy-causaltopes-general} (p.\pageref{proposition:subspace-hierarchy-causaltopes-general}) from standard covers to arbitrary covers.
With that in hand, we can then proceed to generalise the relevant parts of Definition \ref{definition:component-supported-fraction-causal} (p.\pageref{definition:component-supported-fraction-causal}) and Definition \ref{definition:decomposition-supported-fraction-causal} (p.\pageref{definition:decomposition-supported-fraction-causal}), i.e. the notions of \emph{component} and \emph{causal decomposition} of an empirical model, to arbitrary covers.
The remaining machinery introduced by the two definitions, as well as that introduced by Definition \ref{definition:causal-separability} (p.\pageref{definition:causal-separability}), entirely depend on those two notions, and generalise accordingly.
Proposition \ref{proposition:csep-frac-bounded-below-by-csep-local-frac} (p.\pageref{proposition:csep-frac-bounded-below-by-csep-local-frac}) also straightforwardly generalises.

\begin{proposition}
\label{proposition:subspace-hierarchy-causaltopes-general}
    Let $\Theta' \leq \Theta$ be a space of input histories such that $\Events{\Theta'} = \Events{\Theta}$ and $\Inputs{\Theta'} = \Inputs{\Theta}$.
    Let $\underline{O} = (O_\omega)_{\omega \in \Events{\Theta}}$ be a family of non-empty sets of outputs and let $\mathcal{C} \in \Covers{\Theta}$ be any cover.
    The embedding $i_{\mathcal{C},\Theta',\Theta}$ of the polytope of pseudo-empirical models for the induced cover $\restrict{\mathcal{C}}{\Theta'}$ into the polytope of pseudo-empirical models for the original cover $\mathcal{C}$ induces an embedding of causality equations, and hence an embedding of causaltopes:
    \[
    \begin{array}{rcl}
        \CausEqs{\restrict{\mathcal{C}}{\Theta'}, \underline{O}}_{\restrict{\mu}{\Theta'}, \restrict{\lambda}{\Theta'}, \restrict{\lambda'}{\Theta'}}
        &\stackrel{i_{\mathcal{C},\Theta',\Theta}}{\lhook\joinrel\longrightarrow}&
        \CausEqs{\mathcal{C}, \underline{O}}_{\mu, \lambda, \lambda'}
        \\
        \CausEqs{\restrict{\mathcal{C}}{\Theta'}, \underline{O}}
        &\stackrel{i_{\mathcal{C},\Theta',\Theta}}{\lhook\joinrel\longrightarrow}&
        \CausEqs{\mathcal{C}, \underline{O}}
        \\
        \Causaltope{\restrict{\mathcal{C}}{\Theta'}, \underline{O}}
        &\stackrel{i_{\mathcal{C},\Theta',\Theta}}{\lhook\joinrel\longrightarrow}&
        \Causaltope{\mathcal{C}, \underline{O}}
    \end{array}
    \]
\end{proposition}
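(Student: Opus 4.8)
The plan is to deduce the second and third embeddings from the first, and to prove the first by reducing the compatibility of the causality equations with $i_{\mathcal{C},\Theta',\Theta}$ to a single commuting square relating output history restrictions to the relabelling maps $\sigma_\lambda$. Throughout I would use the (redundant) form of Definition~\ref{definition:caus-eqs}, so that $\CausEqs{\mathcal{C},\underline{O}}$ is literally the intersection of the per-triple subspaces $\CausEqs{\mathcal{C},\underline{O}}_{\mu,\lambda,\lambda'}$ over all $\mu \in \Lsets{\Theta}$ and $\lambda,\lambda' \in \mathcal{C}$ with $\mu \subseteq \lambda,\lambda'$, and analogously for $\restrict{\mathcal{C}}{\Theta'}$.

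First I would handle the reduction. The operation $\restrict{(\cdot)}{\Theta'}$ is monotone (immediate from its definition), so $\mu\subseteq\lambda$ gives $\restrict{\mu}{\Theta'}\subseteq\restrict{\lambda}{\Theta'}$, and by Lemma~\ref{lemma:context-restriction-subspace} the event sets $\dom{\lambda}$ agree with $\dom{\restrict{\lambda}{\Theta'}}$, so each triple indexing a $\mathcal{C}$-equation maps to a triple controlled by $\CausEqs{\restrict{\mathcal{C}}{\Theta'},\underline{O}}$ (when $\restrict{\lambda}{\Theta'}$ is not maximal in $\restrict{\mathcal{C}}{\Theta'}$ it is contained in a context, and, since Definition~\ref{definition:caus-eqs} ranges over all lowersets of $\Theta'$, the intermediate equations needed to make $\underline{u}^{(\restrict{\lambda}{\Theta'})}$ an unambiguous restriction of $\underline{u}$ are already present). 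Hence once the first displayed inclusion is known, linearity of $i_{\mathcal{C},\Theta',\Theta}$ (Remark~\ref{remark:pseudo-empmodels-embedding-vec-general}) carries it to the intersection $\CausEqs{\restrict{\mathcal{C}}{\Theta'},\underline{O}}\hookrightarrow\CausEqs{\mathcal{C},\underline{O}}$; and since each component $\Dist{\sigma_\lambda}$ takes distributions to distributions, $i_{\mathcal{C},\Theta',\Theta}$ also takes $\prod_{\nu}\Dist{J^{(\nu)}}$ into $\prod_{\lambda}\Dist{J^{(\lambda)}}$, so it takes the intersection $\Causaltope{\restrict{\mathcal{C}}{\Theta'},\underline{O}}$ into $\Causaltope{\mathcal{C},\underline{O}}$; injectivity is immediate from injectivity of the $\sigma_\lambda$ (Definition~\ref{definition:pseudo-empmodels-embedding-general}).

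The heart of the argument is the claim that, for $\mu\subseteq\lambda$ in $\Lsets{\Theta}$ with $\lambda$ a context (so $\restrict{\mu}{\Theta'}\subseteq\restrict{\lambda}{\Theta'}$), the square
\[
\Dist{\rho_{\lambda,\mu}} \circ \Dist{\sigma_\lambda}
=
\Dist{\sigma_\mu} \circ \Dist{\rho_{\restrict{\lambda}{\Theta'},\restrict{\mu}{\Theta'}}}
\]
commutes. Since $\Dist{-}$ is functorial, this reduces to the set-level identity $\rho_{\lambda,\mu}\circ\sigma_\lambda = \sigma_\mu\circ\rho_{\restrict{\lambda}{\Theta'},\restrict{\mu}{\Theta'}}$ between maps of joint output values. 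Unwinding the definitions — $\sigma_\lambda(o)=o\circ s_\lambda$, and $\rho_{\lambda,\mu}$ acts by precomposition with the index-set maps $\TipEqCls{\mu\subseteq\lambda}{\omega}$ of Lemma~\ref{lemma:output-history-injection} — this becomes the assertion that, for each $\omega$ and each tip historyset $\histconstreqcls{h}{\omega}\in\TipEqCls{\mu}{\omega}$, the induced tip historyset $\restrict{\histconstreqcls{h}{\omega}}{\Theta'}$ on $\restrict{\mu}{\Theta'}$ is sent by $\TipEqCls{\restrict{\mu}{\Theta'}\subseteq\restrict{\lambda}{\Theta'}}{\omega}$ to the induced tip historyset on $\restrict{\lambda}{\Theta'}$ of $\TipEqCls{\mu\subseteq\lambda}{\omega}(\histconstreqcls{h}{\omega})$. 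That is exactly what Proposition~\ref{proposition:induced-tip-historysets} delivers: any single witness $k\in\Theta'$ with $k\leq h$ simultaneously represents $\restrict{\histconstreqcls{h}{\omega}}{\Theta'}$ on $\restrict{\mu}{\Theta'}$, represents the induced class on $\restrict{\lambda}{\Theta'}$, and (via Lemma~\ref{lemma:output-history-injection}) sits in the image of $\TipEqCls{\restrict{\mu}{\Theta'}\subseteq\restrict{\lambda}{\Theta'}}{\omega}$, so all classes in question coincide. With the square in hand, the per-triple embedding is formal: if $\underline{u}\in\CausEqs{\restrict{\mathcal{C}}{\Theta'},\underline{O}}_{\restrict{\mu}{\Theta'},\restrict{\lambda}{\Theta'},\restrict{\lambda'}{\Theta'}}$, i.e.\ $\Dist{\rho_{\restrict{\lambda}{\Theta'},\restrict{\mu}{\Theta'}}}(\underline{u}^{(\restrict{\lambda}{\Theta'})}) = \Dist{\rho_{\restrict{\lambda'}{\Theta'},\restrict{\mu}{\Theta'}}}(\underline{u}^{(\restrict{\lambda'}{\Theta'})})$, then applying $\Dist{\sigma_\mu}$ and the square on each side gives $\Dist{\rho_{\lambda,\mu}}((i_{\mathcal{C},\Theta',\Theta}\underline{u})^{(\lambda)}) = \Dist{\rho_{\lambda',\mu}}((i_{\mathcal{C},\Theta',\Theta}\underline{u})^{(\lambda')})$, i.e.\ $\restrict{(i_{\mathcal{C},\Theta',\Theta}\underline{u})^{(\lambda)}}{\mu} = \restrict{(i_{\mathcal{C},\Theta',\Theta}\underline{u})^{(\lambda')}}{\mu}$, which is precisely membership in $\CausEqs{\mathcal{C},\underline{O}}_{\mu,\lambda,\lambda'}$.

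The main obstacle is the bookkeeping in the third paragraph: one must lay out the relabellings $s_\lambda$, the inclusions $\TipEqCls{\mu\subseteq\lambda}{\omega}$, and the induced-tip-historyset operation carefully enough that the square commutes with the correct orientation, and must keep track of the nuisance that $\restrict{\lambda}{\Theta'}$ need not itself be a context of $\restrict{\mathcal{C}}{\Theta'}$ — so $\underline{u}^{(\restrict{\lambda}{\Theta'})}$ has to be read as the restriction of $\underline{u}$ from an enclosing context, whose independence of that choice (on the causaltope) is itself exactly what the equations of $\CausEqs{\restrict{\mathcal{C}}{\Theta'},\underline{O}}$ guarantee. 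Everything else is routine diagram-chasing and the already-established functoriality of $\Dist{-}$.
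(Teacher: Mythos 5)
Your proposal is correct and follows essentially the same route as the paper's own proof: both hinge on the commuting square $\rho_{\lambda,\mu}\circ\sigma_\lambda = \sigma_\mu\circ\rho_{\restrict{\lambda}{\Theta'},\restrict{\mu}{\Theta'}}$, established at the index-set level via the well-definedness of induced tip historysets (Proposition \ref{proposition:induced-tip-historysets}) and the contravariant-functorial description of $\rho$ and $\sigma$ as precompositions, after which the per-triple inclusion is formal and the full inclusions follow by intersection and linearity. Your explicit attention to the case where $\restrict{\lambda}{\Theta'}$ is not itself a context of the induced cover is a bookkeeping point the paper treats more tersely, but it does not change the argument.
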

\begin{proof}
See \ref{proof:proposition:subspace-hierarchy-causaltopes-general}
\end{proof}

\begin{definition}
\label{definition:component-decomposition-general-covers}
    Let $\Theta$ be a space of input histories and let $\underline{O} = (O_\omega)_{\omega \in \Events{\Theta}}$ be a family of non-empty sets of outputs and let $\mathcal{C} \in \Covers{\Theta}$ be any cover.
    Let $\underline{u} \in \Causaltope{\mathcal{C}, \underline{O}}$ be an empirical model for cover $\mathcal{C}$.
    For any $\Theta' \leq \Theta$ such that $\Events{\Theta'} = \Events{\Theta}$ and $\Inputs{\Theta'} = \Inputs{\Theta}$, we give the following definitions:
    \begin{itemize}
        \item A \emph{component} of $\underline{u}$ in $\Theta'$ is a component of $\underline{u}$ in the sub-polytope of constrained conditional probability distributions $\Causaltope{\restrict{\mathcal{C}}{\Theta'}, \underline{O}}$ according to Definition \ref{definition:component-supported-fraction}.
        \item A \emph{(causal) decomposition} of $\underline{u}$ over the sub-spaces $\left(\Theta^{(z)}\right)_{z \in Z}$ is a decomposition of $\underline{u}$ in $\left(\Causaltope{\restrict{\mathcal{C}}{\Theta^{(z)}}, \underline{O}}\right)_{z \in Z}$ according to Definition \ref{definition:decomposition-supported-fraction}.
    \end{itemize}
\end{definition}

\begin{proposition}
\label{proposition:csep-frac-bounded-below-by-csep-local-frac-general}
    Let $\Theta$ be a space of input histories, let $\mathcal{C} \in \Covers{\Theta}$ be any cover and let $e$ be am empirical model for $\mathcal{C}$.
    The causally separable fraction of $e$ is bounded below by its separable non-contextual fraction.
    In particular, if $e$ is non-contextual in a causally separable way, then $e$ is causally separable.
\end{proposition}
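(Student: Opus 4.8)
The plan is to adapt the proof of Proposition \ref{proposition:csep-frac-bounded-below-by-csep-local-frac}, with the standard-cover inclusion of causaltopes replaced by its general-cover counterpart, Proposition \ref{proposition:subspace-hierarchy-causaltopes-general}. First I would transport $e$ to the corresponding point $\underline{u} := \left(\topdist{e_\lambda}\right)_{\lambda \in \mathcal{C}} \in \Causaltope{\mathcal{C}, \underline{O}}$ via Theorem \ref{theorem:causaltopes-emp-models}, and recall that (under the natural generalisation of Definition 4.38 of \cite{gogioso2022topology}) the separable non-contextual fraction of $e$ is the largest mass $\mass{\underline{v}}$ of a component $\underline{v}$ of $\underline{u}$ which is a non-negative combination $\underline{v} = \sum_{f} p_f \left(\topdist{\delta_f}\right)_{\lambda \in \mathcal{C}}$ ranging over those deterministic causal functions $f \in \CausFun{\Theta, \underline{O}}$ that are causally separable, i.e.\ that belong to $\CausFun{\Theta^{(z)}, \underline{O}}$ for at least one causal completion $\Theta^{(z)}$ of $\Theta$. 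Fix a maximal such component $\underline{v}$.

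Next I would regroup $\underline{v}$ by causal completion. Enumerate the causal completions $\left(\Theta^{(z)}\right)_{z \in Z}$ of $\Theta$, and for each $f$ occurring in $\underline{v}$ choose once and for all some $z(f) \in Z$ with $f \in \CausFun{\Theta^{(z(f))}, \underline{O}}$, which exists since $f$ is causally separable; then set $\underline{v}^{(z)} := \sum_{f : z(f) = z} p_f \left(\topdist{\delta_f}\right)_{\lambda \in \mathcal{C}}$, so that $\sum_{z \in Z} \underline{v}^{(z)} = \underline{v} \leq \underline{u}$ and $\sum_{z \in Z} \mass{\underline{v}^{(z)}} = \mass{\underline{v}}$. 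The crucial step is to verify that each $\underline{v}^{(z)}$ is a quasi-normalised point of the sub-causaltope $\Causaltope{\restrict{\mathcal{C}}{\Theta^{(z)}}, \underline{O}} \subseteq \Causaltope{\mathcal{C}, \underline{O}}$: each deterministic $f \in \CausFun{\Theta^{(z)}, \underline{O}}$ determines a deterministic empirical model for $\Theta^{(z)}$ on the induced cover $\restrict{\mathcal{C}}{\Theta^{(z)}}$, hence a point of $\Causaltope{\restrict{\mathcal{C}}{\Theta^{(z)}}, \underline{O}}$ by Theorem \ref{theorem:causaltopes-emp-models}; the embedding $i_{\mathcal{C},\Theta^{(z)},\Theta}$ of Proposition \ref{proposition:subspace-hierarchy-causaltopes-general} sends this point into $\Causaltope{\mathcal{C}, \underline{O}}$, and indeed sends it to $\left(\topdist{\delta_f}\right)_{\lambda \in \mathcal{C}}$, the deterministic empirical model of $f$ regarded as a causal function of $\Theta$; a non-negative combination of such points of total weight $\sum_{f : z(f) = z} p_f$ is therefore a quasi-normalised empirical model for $\Theta^{(z)}$ of that mass.

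With this in hand, $\left(\underline{v}^{(z)}\right)_{z \in Z}$ is by construction a causal decomposition of $\underline{u}$ over the causal completions of $\Theta$, in the sense of Definition \ref{definition:component-decomposition-general-covers}, and its mass equals $\mass{\underline{v}}$, the separable non-contextual fraction of $e$. Since the causally separable fraction of $e$ is, by the general-cover version of Definition \ref{definition:causal-separability}, the largest mass attained by such a decomposition, it is at least the separable non-contextual fraction, which is the claimed bound. For the final clause: if $e$ is non-contextual in a causally separable way then its separable non-contextual fraction equals $1$, so its causally separable fraction is $\geq 1$, hence exactly $1$, i.e.\ $e$ is causally separable.

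The main obstacle is the compatibility assertion in the middle paragraph, namely that $i_{\mathcal{C},\Theta^{(z)},\Theta}$ carries the deterministic empirical model of $f \in \CausFun{\Theta^{(z)}, \underline{O}}$ on the induced cover to the deterministic empirical model of $f$ on $\mathcal{C}$. Unwinding this amounts to checking that the reindexing maps $s_\lambda$ and $\sigma_\lambda$ of Definition \ref{definition:pseudo-empmodels-embedding-general}, together with the induced tip-historyset assignment of Proposition \ref{proposition:induced-tip-historysets}, intertwine the top-element-distribution bijections of Theorem \ref{theorem:topdist-extdist-generalised} for $\restrict{\lambda}{\Theta^{(z)}}$ and for $\lambda$; this bookkeeping is precisely the content established in the proof of Proposition \ref{proposition:subspace-hierarchy-causaltopes-general}, and once it is invoked the remainder of the argument is a direct transcription of the proof of Proposition \ref{proposition:csep-frac-bounded-below-by-csep-local-frac}.
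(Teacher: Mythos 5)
Your proposal is correct and follows essentially the same route as the paper: the paper's own proof simply asserts that the standard-cover argument (decomposing the separably non-contextual component over the causal completions and invoking the causaltope inclusions) ``only depends on the notion of causal decomposition and immediately generalises to arbitrary covers''. You have filled in exactly the details that assertion leaves implicit — in particular the compatibility of the embedding $i_{\mathcal{C},\Theta^{(z)},\Theta}$ with the deterministic empirical models of causal functions — so no further changes are needed.
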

\begin{proof}
See \ref{proof:proposition:csep-frac-bounded-below-by-csep-local-frac-general}
\end{proof}

Perhaps unsurprisingly, the empirical model for a quantum switch controlled by the output of a contextual triangle is causally inseparable.
In fact, it's causally separable fraction is zero: the causal order between Bob and Charlie can be inferred from relationship between their inputs and outputs, and the same causal order is a function of the output of a contextual triangle, which itself has zero non-contextual fraction.
This effect is analogous to the one previously observed on the two contextually controlled classical switches, where the causally separable fraction coincided with the local fraction of the controlling Bell scenario, but this time observed on a cover different from the standard one.

\newpage
\subsection{Proofs for Section \ref{section:geometry-causality}}

\subsubsection{Proof of Proposition \ref{proposition:slices-are-polytopes}}
\label{proof:proposition:slices-are-polytopes}
\begin{proof}
By definition of the polytope $K$, we have that $\underline{x} \in K$ if and only if $A \underline{x} = \underline{b}$ and $C \underline{x} = \underline{d}$.
By definition of the affine subspace $W$, we have that $\underline{x} \in W$ if and only if $A' \underline{x} = \underline{b}'$.
As a consequence, $\underline{x} \in K \cap W$ if and only if $A \underline{x} = \underline{b}$,  $A' \underline{x} = \underline{b}'$ and $C \underline{x} = \underline{d}$.
\end{proof}

\subsubsection{Proof of Proposition \ref{proposition:slicing-closed-under-iteration}}
\label{proof:proposition:slicing-closed-under-iteration}
\begin{proof}
This is simply associativity of intersection:
\[
\Slice{V}{\Slice{W}{K}} = V \cap \left(W \cap K\right)
= \left(V \cap W\right)\cap K = \Slice{V\cap W}{K}
\]
\end{proof}

\subsubsection{Proof of Proposition \ref{proposition:probdist-from-slicing-hypercube}}
\label{proof:proposition:probdist-from-slicing-hypercube}
\begin{proof}
Taking the normalisation equations together with the defining inequalities for $[0,1]^{\sqcup_{y \in Y} J^{(y)}} \cap \NormEqs{\underline{J}}$ yields the following system of equations and inequalities:
\[
    \forall y \in Y.\;
    \sum_{j \in J^{(y)}} x^{(y)}_j = 1
    \hspace{2cm}
    \forall y \in Y.\;
    \forall j \in J^{(y)}.\;
    0 \leq x^{(y)}_j \leq 1
\]
The normalisation equation $\sum_{j \in J^{(y)}} x^{(y)}_j = 1$ together with the inequalities $0 \leq x^{(y)}_j$ for all $j \in J^{(y)}$ implies the inequalities $x^{(y)}_j \leq 1$ for all $j \in J^{(y)}$, making them redundant.
We are thus left with the defining system of equations and inequalities for the polytope of conditional probability distributions $\prod_{y \in Y} \Dist{J^{(y)}}$, as claimed.
\end{proof}

\subsubsection{Proof of Proposition \ref{proposition:qnorm-eqs-independent-of-order}}
\label{proof:proposition:qnorm-eqs-independent-of-order}
\begin{proof}
By reflexive-transitive closure, the quasi-normalisation equations are equivalent to the following set of equations, which is independent of the choice of total order on $Y$
\[
\forall y, y' \in Y.\;
\sum_{j \in J^{(y)}} x^{(y)}_j = \sum_{j \in J^{(y')}} x^{(y')}_j
\]
\end{proof}

\subsubsection{Proof of Proposition \ref{proposition:mass-of-qnorm-dist}}
\label{proof:proposition:mass-of-qnorm-dist}
\begin{proof}
Let $y_0 \in Y$ be any element and define $m := \sum_{j \in J^{(y_0)}} x^{(y_0)}_j$.
The quasi-normalisation equations imply that:
\[
\forall y \in Y.\;
\sum_{j \in J^{(y)}} x^{(y)}_j = m
\]
If $m = 0$, then $\underline{u} = m \underline{e}$ for all conditional probability distributions $\underline{e} \in \prod_{y \in Y} \Dist{J^{(y)}}$.
If $m > 0$, then the following $\underline{e}$ is a conditional probability distribution:
\[
    \underline{e} := \frac{1}{m} \underline{u}
\]
We have $m\underline{e} = \underline{u}$ by definition, and $m\underline{e'} = \underline{u} m\underline{e}$ implies $\underline{e'} = \underline{e}$, because $m \neq 0$.
Setting $\mass{\underline{u}} := m$ completes the proof.
\end{proof}

\subsubsection{Proof of Proposition \ref{proposition:hierarchy-of-ccpd-polytopes}}
\label{proof:proposition:hierarchy-of-ccpd-polytopes}
\begin{proof}
We prove point by point:
\begin{enumerate}
    \item   If $V \subseteq U$ then:
            \[
            V \cap \prod\limits_{y \in Y} \Dist{J^{(y)}}
            \subseteq U \cap \prod\limits_{y \in Y} \Dist{J^{(y)}}
            \]
            That is, $\CCPD{V,\underline{J}} \subseteq \CCPD{U,\underline{J}}$.
    \item   The previous point, together with $\langle\CCPD{V,\underline{J}}\rangle \subseteq V$, proves that:
            \[
            \CCPD{\langle\CCPD{V,\underline{J}}\rangle,\underline{J}} \subseteq \CCPD{V,\underline{J}}
            \]
            The equality then follows from the observation that $\CCPD{V,\underline{J}} \subseteq \langle\CCPD{V,\underline{J}}\rangle$.
    \item   If $\QNCCPD{V,\underline{J}} \subseteq \QNCCPD{U,\underline{J}}$, then
            \[
            \QNCCPD{V,\underline{J}}\cap\NormEqs{\underline{J}}
            \subseteq \QNCCPD{U,\underline{J}}\cap\NormEqs{\underline{J}}
            \]
            That is, again, $\CCPD{V,\underline{J}} \subseteq \CCPD{U,\underline{J}}$.
    \item   If $\CCPD{V,\underline{J}} \subseteq \CCPD{U,\underline{J}}$ then:
            \[
            \langle \CCPD{V,\underline{J}}\rangle \subseteq \langle\CCPD{U,\underline{J}} \rangle
            \]
    \item   Hence, if $\CCPD{V,\underline{J}} \subseteq \CCPD{U,\underline{J}}$ then:
            \[
            \QNCCPD{\langle\CCPD{V,\underline{J}}\rangle,\underline{J}}
            \subseteq \QNCCPD{\langle\CCPD{U,\underline{J}}\rangle,\underline{J}}
            \]
            That is, $\QNCCPD{V,\underline{J}} \subseteq \QNCCPD{U,\underline{J}}$.
\end{enumerate}
\end{proof}

\subsubsection{Proof of Proposition \ref{proposition:ccpd-polytopes-meet}}
\label{proof:proposition:ccpd-polytopes-meet}
\begin{proof}
By basic set algebra:
\[
\begin{array}{rcl}
&&\CCPD{V,\underline{J}} \cap \CCPD{U,\underline{J}}
\\
&=& \left(V \cap \prod\limits_{y \in Y} \Dist{J^{(y)}}\right) \cap \left(U \cap \prod\limits_{y \in Y} \Dist{J^{(y)}}\right)
\\
&=& \left(V \cap U \cap \prod\limits_{y \in Y} \Dist{J^{(y)}}\right)
= \CCPD{V\cap U,\underline{J}}
\end{array}
\]
\end{proof}

\subsubsection{Proof of Proposition \ref{proposition:ccpd-from-norm-of-qnccpd}}
\label{proof:proposition:ccpd-from-norm-of-qnccpd}
\begin{proof}
Both claims follow from closure of slicing under iteration.
For the first claim, we use the following observation:
\[
    \QNormEqs{\underline{J}}\cap \NormEqs{\underline{J}} = \NormEqs{\underline{J}}
\]
Now observe the $\sum_{j \in J^{(y)}} x_j^{(y)} = 1$ together with $\sum_{j \in J^{(y)}} x_j^{(y)} - \sum_{j \in J^{(y')}} x_j^{(y')} = 0$ implies $\sum_{j \in J^{(y')}} x_j^{(y')} = 1$, for every $y' \in Y$.
Hence we get the following, which in turn proves the second claim:
\[
    \QNormEqs{\underline{J}}\cap \NormEqs{\underline{J}}^{(y)} = \NormEqs{\underline{J}}
\]
\end{proof}

\subsubsection{Proof of Proposition \ref{proposition:difference-in-super-polytope}}
\label{proof:proposition:difference-in-super-polytope}
\begin{proof}
Because $\underline{u}, \underline{v} \in \prod_{y \in Y} \Dist{J^{(y)}}$ and $\underline{v} \leq \underline{u}$, we have that the difference $\underline{u}-\underline{v} \in \prod_{y \in Y} \Dist{J^{(y)}}$ is itself a conditional probability distribution.
Because $\underline{u} \in \CCPD{U,\underline{J}}$ and $\underline{v} \in \CCPD{V,\underline{J}} \subseteq \CCPD{U,\underline{J}}$, then $\underline{u}, \underline{v} \in U$ and hence the difference $\underline{u}-\underline{v}$ satisfies the constraints imposed by $U$.
We conclude that:
\[
\underline{u}-\underline{v} \in \prod_{y \in Y} \Dist{J^{(y)}} \cap U
= \CCPD{U,\underline{J}}
\]
\end{proof}

\subsubsection{Proof of Proposition \ref{corollary:difference-in-super-polytope-decomp}}
\label{proof:corollary:difference-in-super-polytope-decomp}
\begin{proof}
This follows by iterating Proposition \ref{proposition:difference-in-super-polytope} for each component in the decomposition.
\end{proof}

\subsubsection{Proof of Theorem \ref{theorem:topdist-extdist}}
\label{proof:theorem:topdist-extdist}
\begin{proof}
By the consistency condition, $\Ext{f}(h) = \restrict{\Ext{f}(k)}{\dom{h}}$ for all $h \in \downset{k}$, so the function $\Ext{f} \mapsto \Ext{f}(k)$ is injective.
For every $o \in \prod_{\omega \in \dom{k}}O_\omega$, setting $f(h) := \restrict{o}{\dom{h}}$ defines a causal function, since we automatically have $f(h)_\omega = o_\omega = f(h')_\omega$ whenever $\omega \in \dom{h} \cap \dom{h'}$: since $\Ext{f}(k) = o$, the function $\Ext{f} \mapsto \Ext{f}(k)$ is surjective.
Since $\Ext{f} \mapsto \Ext{f}(k)$ is a bijection, by Lemma 4.37 p.60 of ``The Topology of Causality'' \cite{gogioso2022topology} so is $\Dist{\Ext{f} \mapsto \Ext{f}(k)}$:
\[
\Dist{\Ext{f} \mapsto \Ext{f}(k)}
=
d \mapsto \sum\limits_{\Ext{f}} d(\Ext{f}) \delta_{\Ext{f}(k)} = \topdist{d}
\]
\end{proof}

\subsubsection{Proof of Theorem \ref{theorem:topdist-extdist-generalised}}
\label{proof:theorem:topdist-extdist-generalised}
\begin{proof}
By definition, $\Ext{f}(h')_\omega$ takes a constant value for all $h' \in \histconstreqcls{h}{\omega}$, making $\topdist{\Ext{f}}$ well-defined. By Observation 4.12 p.48 of ``The Topology of Causality'' \cite{gogioso2022topology}, the correspondence is bijective (because extended causal functions are in bijection with causal functions).
Analogously to Theorem \ref{theorem:topdist-extdist}, the bijection lifts to a bijection between the corresponding spaces of distributions: the latter bijection is defined by taking convex-linear combinations, and hence it is a convex-linear function.
\end{proof}

\subsubsection{Proof of Lemma \ref{lemma:output-history-injection}}
\label{proof:lemma:output-history-injection}
\begin{proof}
If $\histconstr{\omega}{h}{h'}$ in $\mu$, and $\mu \subseteq \lambda$, then necessarily $\histconstr{\omega}{h}{h'}$ in $\lambda$.
\end{proof}

\subsubsection{Proof of Proposition \ref{proposition:caus-eqs-chain}}
\label{proof:proposition:caus-eqs-chain}
\begin{proof}
Fix $\mu \in \Lsets{\Theta}$.
If $n_\mu = 0$, then there are no equations associated with $\mu$, so we can restrict our attention to the $\mu$ s.t. $n_\mu \geq 1$.
Consider the following subspace:
\[
    \bigcap_{\lambda \in \mathcal{C}\cap\upset{\mu} }
    \bigcap_{\lambda' \in \mathcal{C}\cap\upset{\mu} }
    \CausEqs{\mathcal{C}, \underline{O}}_{\mu, \lambda, \lambda'}
\]
The linear constraints are exactly those enforcing $\restrict{\underline{u}^{(\lambda)}}{\mu}=\restrict{\underline{u}^{(\lambda')}}{\mu}$ for all $\lambda, \lambda' \in \mathcal{C} \cap \upset{\mu}$.
If we impose a total order on $\mathcal{C} \cap \upset{\mu}$, the exact same constraints can be enforced by a chain of $n_\mu-1$ equations, as follows:
\[
    \bigcap_{i=1}^{n_\mu-1}
    \CausEqs{\mathcal{C}, \underline{O}}_{\mu, \lambda_{\mu,i}, \lambda_{\mu,i+1}}
\]
This concludes our proof.
\end{proof}

\subsubsection{Proof of Proposition \ref{proposition:caus-eqs-chain-std}}
\label{proof:proposition:caus-eqs-chain-std}
\begin{proof}
We build upon the result of Proposition \ref{proposition:caus-eqs-chain}.
Consider $\mu \in \Lsets{\Theta}$ with $n_\mu \geq 1$ and note that the associated $\lambda_{\mu, i} \in \mathcal{C}$ take the form $\downset{k_{\mu,i}}$ for some $k_{\mu,i} \in \Ext{\Theta}$.
Consider any $i \in \{1,...,n_\mu-1\}$, so that $\mu \subseteq \downset{k_{\mu,i}}\cap\downset{k_{\mu,i+1}}$: because the intersection of a downset is a downset, we have the following, for some $h_{\mu,i} \in \Ext{\Theta}$:
\[
\mu \subseteq \downset{h_{\mu,i}}
\subseteq \downset{k_{\mu,i}}\cap\downset{k_{\mu,i+1}}
\]
Since we included the relevant equations for all such $h_{\mu,i}$, we can infer the equations for $\mu$ by composing restrictions:
\[
    \restrict{\underline{u}^{(\lambda)}}{\mu}
    = \restrict{\left(\restrict{\underline{u}^{(\lambda)}}{\downset{h_{\mu, i}}}\right)}{\mu}
    = \restrict{\left(\restrict{\underline{u}^{(\lambda')}}{\downset{h_{\mu, i}}}\right)}{\mu}
    = \restrict{\underline{u}^{(\lambda')}}{\mu}
\]
This concludes our proof.
\end{proof}

\subsubsection{Proof of Theorem \ref{theorem:causaltopes-emp-models}}
\label{proof:theorem:causaltopes-emp-models}
\begin{proof}
    By applying Theorem \ref{theorem:topdist-extdist-generalised} to the individual components $e_\lambda \mapsto \topdist{e_\lambda}$ we conclude that the map is a convex-linear injection, where convex-linearity follows from the fact that all components of the empirical model are weighted equally in convex combinations.
    To prove that the map is surjective, we consider any arbitrary $\underline{u} \in \Causaltope{\mathcal{C}, \underline{O}}$ and define:
    \[
        e_\lambda := \extdist{\underline{u}^{\lambda}}{\lambda}
    \]
    The causality equations guarantee that restrictions from cover lowersets to arbitrary lowersets coincide:
    \[
    \restrict{\topdist{e_\lambda}}{\mu}
    =\restrict{\topdist{e_{\lambda'}}}{\mu}
    \]
    We now show that the restrictions of probability distributions above are exactly the same as the restrictions of empirical model components.
    To do so, it suffices to expand the top-element distribution and its restriction into their definitions:
    \[
    \begin{array}{rcl}
    \restrict{\topdist{e_\lambda}}{\mu}
    &=&
    \Dist{\rho_{\lambda, \mu}}
    \left(
    \sum\limits_{\Ext{f}} e_\lambda\left(\Ext{f}\right) \delta_{\topdist{\Ext{f}}}
    \right)
    \\
    &=&
    \sum\limits_{\Ext{f}} e_\lambda\left(\Ext{f}\right)
    \Dist{\rho_{\lambda, \mu}}
    \left(
    \delta_{\topdist{\Ext{f}}}
    \right)
    \\
    &=&
    \sum\limits_{\Ext{f}} e_\lambda\left(\Ext{f}\right)
    \delta_{\rho_{\lambda, \mu}\left(\topdist{\Ext{f}}\right)}
    \\
    &=&
    \sum\limits_{\Ext{f}} e_\lambda\left(\Ext{f}\right)
    \delta_{\restrict{\topdist{\Ext{f}}}{\mu}}
    \\
    &=&\Ext{f'}
    \mapsto
    \sum\limits_{f \text{ s.t. } \restrict{f}{\mu}=f'} e_\lambda\left(\Ext{f}\right)
    \end{array}
    \]
    The last line is the definition of restriction for empirical model components from $\lambda$ to $\mu$, completing our proof.
\end{proof}

\subsubsection{Proof of Proposition \ref{proposition:subspace-hierarchy-causaltopes}}
\label{proof:proposition:subspace-hierarchy-causaltopes}
\begin{proof}
Because $\Theta$ and $\Theta'$ have the same events and inputs and because $\max\Ext{\Theta}=\max\Ext{\Theta'}$, the two spaces have the same pseudo-empirical models:
\[
\prod_{k \in \max\Ext{\Theta}}
\Dist{
    \prod_{\omega \in \dom{k}}
    O_\omega
}
\]
where we used the fact that $\TipEqCls{\downset{k}}{\omega}$ is always a singleton.
Hence, it makes sense to compare the associated linear sub-spaces of causality equations.
By Proposition \ref{proposition:caus-eqs-chain-std}, the causality equations for the standard cover are generated by extended input histories: since $\Theta' \leq \Theta$ is defined to mean $\Ext{\Theta'} \supseteq \Ext{\Theta}$, the causality equations for $\Theta'$ are a superset of those for $\Theta$.
This concludes our proof.
\end{proof}

\subsubsection{Proof of Proposition \ref{proposition:csep-frac-bounded-below-by-csep-local-frac}}
\label{proof:proposition:csep-frac-bounded-below-by-csep-local-frac}
\begin{proof}
Recall from Definition 4.38 p.70 \cite{gogioso2022topology} that the separable local fraction for $e$ is the largest $p \in [0, 1]$ such that $e$ can be decomposed as $e = p \cdot e^{NC} + (1-p) \cdot e'$, where $e^{NC}$ is an empirical model which is separably local.
Recall from Definition 4.37 p.70 \cite{gogioso2022topology} that an empirical model $e^{NC}$ is separably local if it arises as the restriction to the standard cover of a classical empirical $\hat{e}$ which is entirely supported by causal functions which are separable.

Let $e = p \cdot e^{NC} + (1-p) \cdot e'$ be an empirical model for a space of input histories $\Theta$, where $p \in [0, 1]$ is the separable local fraction and $e^{NC}$ is separably local.
We can fully decompose $e^{NC}$ over the causal completions of $\Theta$: it is a convex mixture of separable causal functions, and every separable causal function for $\Theta$ is also a causal function for at least one of its causal completions.
Rescaling the decomposition of $e^{NC}$ by $p$ yields a causal decomposition of $e$ over the causal completions of $\Theta$ of mass $p$, proving that the causally separable fraction is bounded below by $p$ (the separable local fraction).
\end{proof}

\subsubsection{Proof of Lemma \ref{lemma:context-restriction-subspace}}
\label{proof:lemma:context-restriction-subspace}
\begin{proof}
If $k \in \Ext{\lambda}$, then we can write:
\[
k = \bigvee_{
    \scriptsize
    \begin{array}{c}
        h \in \lambda
        \\
        h \leq k
    \end{array}
} \hspace{-3mm} h
= \bigvee_{
    \scriptsize
    \begin{array}{c}
        h \in \lambda
        \\
        h \leq k
    \end{array}
} \hspace{-3mm} \bigvee_{
    \scriptsize
    \begin{array}{c}
        h' \in \lambda'
        \\
        h' \leq h
    \end{array}
} \hspace{-3mm} h' \in \Ext{\restrict{\lambda}{\Theta'}}
\]
This proves that $\Ext{\restrict{\lambda}{\Theta'}} \supseteq \Ext{\lambda}$, which is the definition of $\restrict{\lambda}{\Theta'} \leq \lambda$. It also shows that $\dom{\lambda} = \dom{\restrict{\lambda}{\Theta'}}$.
\end{proof}

\subsubsection{Proof of Proposition \ref{proposition:induced-tip-historysets}}
\label{proof:proposition:induced-tip-historysets}
\begin{proof}
Let $h, h' \in \histconstreqcls{h}{\omega}$, where the equivalence class is taken with respect to the equivalence relation $\histconstrSym{\omega}$ in $\lambda$.
Let $k, k' \in \Theta'$ be such that $k \leq h$ and $k' \leq h'$: because $\Ext{\restrict{\lambda}{\Theta'}} \supseteq \Ext{\lambda}$, we have that $\histconstr{\omega}{h}{h'}$ in $\restrict{\lambda}{\Theta'}$, and hence that $\histconstr{\omega}{k}{k'}$ in $\restrict{\lambda}{\Theta'}$. This means that $\restrict{\histconstreqcls{h}{\omega}\!\!}{\Theta'}$ is well-defined.
Observing that every $k \in \restrict{\lambda}{\Theta'}$ arises as $k \leq h$ for some $h \in \lambda$ allows us to conclude that $\histconstreqcls{k}{\omega} = \restrict{\histconstreqcls{h}{\omega}\!\!}{\Theta'}$, so that every tip historyset in $\restrict{\lambda}{\Theta'}$ arises as an induced tip history from $\lambda$.
\end{proof}

\subsubsection{Proof of Proposition \ref{proposition:subspace-hierarchy-causaltopes-general}}
\label{proof:proposition:subspace-hierarchy-causaltopes-general}
\begin{proof}
Recall the definition of the causality equations:
\[
    \CausEqs{\mathcal{C}, \underline{O}}_{\mu, \lambda, \lambda'}
    :=
    \suchthat{
        \underline{u}
        \in
        \PsEmpModelsVec{\mathcal{C}, \underline{O}}
    }{
        \restrict{\underline{u}^{(\lambda)}}{\mu}
        =\restrict{\underline{u}^{(\lambda')}}{\mu}
    }
\]
where the restriction $\restrict{\underline{u}^{(\lambda)}}{\mu}$ of distributions is defined as follows:
\[
    \restrict{\underline{u}^{(\lambda)}}{\mu}
    := \Dist{\rho_{\lambda, \mu}}\left(\underline{u}^{(\lambda)}\right)
\]
and the map $\rho_{\lambda, \mu}$ is defined as follows:
\[
\begin{array}{rccc}
    \rho_{\lambda, \mu}:
    &\prod\limits_{\omega \in \dom{\lambda}} \left(O_\omega\right)^{\TipEqCls{\lambda}{\omega}}
    &\longrightarrow
    &\prod\limits_{\omega \in \dom{\mu}} \left(O_\omega\right)^{\TipEqCls{\mu}{\omega}}
    \\
    &o
    &\mapsto
    &\left(
        \left(\omega, \histconstreqcls{h}{\omega}\right)
        \mapsto
        o_{\omega,\histconstreqcls{h}{\omega}}
    \right)
\end{array}
\]
The corresponding causality equations under the embedding of pseudo-empirical models take the following form:
\[
\scalebox{0.8}{$
    \CausEqs{\restrict{\mathcal{C}}{\Theta'}, \underline{O}}_{\restrict{\mu}{\Theta'}, \restrict{\lambda}{\Theta'}, \restrict{\lambda'}{\Theta'}}
    =
    \suchthat{
        \underline{u}
        \in
        \PsEmpModelsVec{\mathcal{C}, \underline{O}}
    }{
        \restrict{\underline{u}^{\left(\restrict{\lambda}{\Theta'}\right)}}{\restrict{\mu}{\Theta'}}
        =\restrict{\underline{u}^{\left(\restrict{\lambda'}{\Theta'}\right)}}{\restrict{\mu}{\Theta'}}
    }
$}
\]
We start by observing that, analogously to $\sigma_{\lambda}$, $\rho_{\lambda, \mu}$ can be defined in a contravariantly functorial way, in terms of pre-composition of $o$ by the following $r_{\lambda, \mu}$:
\[
\begin{array}{rccc}
    r_{\lambda, \mu}:
    & \sum\limits_{\omega \in \dom{\mu}} \TipEqCls{\mu}{\omega}
    & \hookrightarrow
    & \sum\limits_{\omega \in \dom{\lambda}} \TipEqCls{\lambda}{\omega}
    \\
    & \left(\omega, \histconstreqcls{h}{\omega}\right)
    & \mapsto
    & \left(\omega, \histconstreqcls{h}{\omega}\right)
\end{array}
\]
Using $r_{\lambda, \mu}$, we get the following alternative formulation for $\rho_{\lambda, \mu}$:
\[
    \begin{array}{rccc}
    \rho_{\lambda, \mu}:
    &
        \hspace{-3mm}
        \prod\limits_{
            \scriptsize
            \begin{array}{c}
                \omega \in \dom{\lambda}
                \\
                \hspace{1mm}\histconstreqcls{k}{\omega}\hspace{-1.5mm}\in \TipEqCls{\lambda}{\omega}
            \end{array}
        }
        \hspace{-9mm}
        O_\omega
        \hspace{2mm}
    & \twoheadrightarrow
    &
        \hspace{-3mm}
        \prod\limits_{
            \scriptsize
            \begin{array}{c}
                \omega \in \dom{\mu}
                \\
                \hspace{1mm}\histconstreqcls{h}{\omega}\hspace{-1.5mm}\in \TipEqCls{\mu}{\omega}
            \end{array}
        }
        \hspace{-9mm}
        O_\omega
        \hspace{2mm}
    \\
    & o
    & \mapsto
    & o \circ r_{\lambda, \mu}
\end{array}
\]
We now show that $r$ and $s$ commute, in the following sense:
\[
r_{\lambda, \mu}\circ s_{\lambda}
= s_{\mu} \circ r_{\restrict{\lambda}{\Theta'}, \restrict{\mu}{\Theta'}}
\]
For the LHS, we have:
\[
\begin{array}{rcl}
r_{\lambda, \mu}
\left(s_{\lambda}\left(\omega, \histconstreqcls{h}{\omega}\right)\right)
&=& r_{\lambda, \mu}
   \left(\omega, \restrict{\histconstreqcls{h}{\omega}\!\!}{\Theta'}\right) \\
&=& \left(\omega, \restrict{\histconstreqcls{h}{\omega}\!\!}{\Theta'}\right)
\end{array}
\]
For the RHS, we have:
\[
\begin{array}{rcl}
s_{\mu}
\left(r_{\restrict{\lambda}{\Theta'}, \restrict{\mu}{\Theta'}}\left(\omega, \histconstreqcls{h}{\omega}\right)\right)
&=& s_{\mu}
   \left(\omega, \histconstreqcls{h}{\omega}\right) \\
&=& \left(\omega, \restrict{\histconstreqcls{h}{\omega}\!\!}{\Theta'}\right)
\end{array}
\]
Because $r$ and $s$ commute, so do $\rho$ and $\sigma$, because they are obtained in a controvariantly functorial way from $r$ and $s$ respectively:
\[
\rho_{\lambda, \mu}\circ \sigma_{\lambda}
= \sigma_{\mu} \circ \rho_{\restrict{\lambda}{\Theta'}, \restrict{\mu}{\Theta'}}
\]
From the commutation above, we conclude that restriction of distributions commutes with embedding of pseudo-empirical models:
\[
\scalebox{0.9}{$
\begin{array}{rcl}
\restrict{
    i_{\mathcal{C},\Theta',\Theta}\left(\underline{u}\right)^{(\lambda)}
}{\mu}
&=& \Dist{\rho_{\lambda, \mu}}
    \Dist{\sigma_{\lambda}}
    \left(\underline{u}^{\left(\restrict{\lambda}{\Theta'}\right)}\right)\\
&=& \Dist{\rho_{\lambda, \mu}\circ \sigma_{\lambda}}
    \left(\underline{u}^{\left(\restrict{\lambda}{\Theta'}\right)}\right)\\
&=& \Dist{\sigma_{\mu} \circ \rho_{\restrict{\lambda}{\Theta'}, \restrict{\mu}{\Theta'}}}
    \left(\underline{u}^{\left(\restrict{\lambda}{\Theta'}\right)}\right)\\
&=& \Dist{\sigma_{\mu}}
    \Dist{\rho_{\restrict{\lambda}{\Theta'}, \restrict{\mu}{\Theta'}}}
    \left(\underline{u}^{\left(\restrict{\lambda}{\Theta'}\right)}\right)\\
&=& \Dist{\sigma_{\mu}}
    \left(\restrict{\left(\underline{u}^{\left(\restrict{\lambda}{\Theta'}\right)}\right)}{\restrict{\mu}{\Theta'}}\right)\\
&=& i_{\mathcal{C},\restrict{\mu}{\Theta'},\Theta}
    \left(\restrict{\left(\underline{u}^{\left(\restrict{\lambda}{\Theta'}\right)}\right)}{\restrict{\mu}{\Theta'}}\right)
\end{array}
$}
\]
Hence, we get the following implications:
\[
\begin{array}{rl}
&\restrict{\left(\underline{u}^{\left(\restrict{\lambda}{\Theta'}\right)}\right)}{\restrict{\mu}{\Theta'}}
= \restrict{\left(\underline{u}^{\left(\restrict{\lambda'}{\Theta'}\right)}\right)}{\restrict{\mu}{\Theta'}}
\\
\Rightarrow &
i_{\mathcal{C},\restrict{\mu}{\Theta'},\Theta}
\left(\restrict{\left(\underline{u}^{\left(\restrict{\lambda}{\Theta'}\right)}\right)}{\restrict{\mu}{\Theta'}}\right)
= i_{\mathcal{C},\restrict{\mu}{\Theta'},\Theta}
\left(\restrict{\left(\underline{u}^{\left(\restrict{\lambda'}{\Theta'}\right)}\right)}{\restrict{\mu}{\Theta'}}\right)
\\
\Rightarrow &
\restrict{
    i_{\mathcal{C},\Theta',\Theta}\left(\underline{u}\right)^{(\lambda)}
}{\mu}
= \restrict{
    i_{\mathcal{C},\Theta',\Theta}\left(\underline{u}\right)^{(\lambda')}
}{\mu}
\end{array}
\]
The above implies the desired inclusion for the subspace of a single causality equation:
\[
    \CausEqs{\restrict{\mathcal{C}}{\Theta'}, \underline{O}}_{\restrict{\mu}{\Theta'}, \restrict{\lambda}{\Theta'}, \restrict{\lambda'}{\Theta'}}
    \stackrel{i_{\mathcal{C},\Theta',\Theta}}{\lhook\joinrel\longrightarrow}
    \CausEqs{\mathcal{C}, \underline{O}}_{\mu, \lambda, \lambda'}
\]
By taking intersection of subspaces for causality equations, and observing that causality equations in the form $\CausEqs{\restrict{\mathcal{C}}{\Theta'}, \underline{O}}_{\restrict{\mu}{\Theta'}, \restrict{\lambda}{\Theta'}, \restrict{\lambda'}{\Theta'}}$ are a subset of all causality equations for the induced cover $\restrict{\mathcal{C}}{\Theta'}$, we conclude that:
\[
    \CausEqs{\restrict{\mathcal{C}}{\Theta'}, \underline{O}}
    \stackrel{i_{\mathcal{C},\Theta',\Theta}}{\lhook\joinrel\longrightarrow}
    \CausEqs{\mathcal{C}, \underline{O}}
\]
From the above, and using Remark \ref{remark:pseudo-empmodels-embedding-vec-general} (p.\pageref{remark:pseudo-empmodels-embedding-vec-general}) on inclusion of spaces for pseudo-empirical models, we conclude that:
\[
    \Causaltope{\restrict{\mathcal{C}}{\Theta'}, \underline{O}}
    \stackrel{i_{\mathcal{C},\Theta',\Theta}}{\lhook\joinrel\longrightarrow}
    \Causaltope{\mathcal{C}, \underline{O}}
\]
\end{proof}

\subsubsection{Proof of Proposition \ref{proposition:csep-frac-bounded-below-by-csep-local-frac-general}}
\label{proof:proposition:csep-frac-bounded-below-by-csep-local-frac-general}
\begin{proof}
The argument laid out in Proof \ref{proof:proposition:csep-frac-bounded-below-by-csep-local-frac} for Proposition \ref{proposition:csep-frac-bounded-below-by-csep-local-frac} (p.\pageref{proposition:csep-frac-bounded-below-by-csep-local-frac}) only depends on the notion of causal decomposition, and immediately generalises to arbitrary covers.
\end{proof}

% === COMMENT BELOW BEFORE COMPILING MAIN FILE ===

% \newpage

% % == Dummy Biblio ==

% \bibliographystyle{unsrt} % iopart-num
% \bibliography{biblio}
% \nocite{*}

% % == Dummy Appendix ==

% \appendix

% \end{document}

\ack
Financial support from EPSRC, the Pirie-Reid Scholarship and Hashberg Ltd is gratefully acknowledged.
This publication was made possible through the support of the ID\#62312 grant from the John Templeton Foundation, as part of the project `The Quantum Information Structure of Spacetime' (QISS), https://www.templeton.org/grant/the-quantum-information-structure-ofspacetime-qiss-second-phase.
The opinions expressed in this project/publication are those of the author(s) and do not necessarily reflect the views of the John Templeton Foundation.

\section*{Bibliography}

\bibliographystyle{unsrt}
\bibliography{biblio}

\end{document}